\documentclass[twoside]{article}

\PassOptionsToPackage{round}{natbib}
\usepackage[preprint]{aistats2027}
\usepackage[utf8]{inputenc}
\usepackage[T1]{fontenc}
\usepackage{url}
\usepackage{booktabs}
\usepackage{nicefrac}
\usepackage{microtype}

\usepackage{amsmath}
\usepackage{amssymb}
\usepackage{amsthm}
\usepackage{bbm}
\usepackage{graphicx}
\usepackage{placeins}
\graphicspath{{figures/}}
\usepackage{subcaption}
\usepackage{algorithm}
\usepackage{algpseudocode}
\usepackage{hyperref}
\usepackage{natbib}
\usepackage{dsfont}
\usepackage{xcolor}
\usepackage{enumitem}
\usepackage{tikz}

\algrenewcommand\algorithmicindent{1.0em}%
\newtheorem{theorem}{Theorem}[section]
\newtheorem{lemma}[theorem]{Lemma}
\newtheorem{corollary}[theorem]{Corollary}

\newtheorem{remark}[theorem]{Remark}

\definecolor{airforceblue}{rgb}{0.36, 0.54, 0.66}
\definecolor{aliceblue}{rgb}{0.94, 0.97, 1.0}
\definecolor{bluegray}{rgb}{0.4, 0.6, 0.8}
\definecolor{bluefill}{rgb}{0.5, 0.7, 0.9}
\definecolor{cornflowerblue}{rgb}{0.39, 0.58, 0.93}
\definecolor{coolblack}{rgb}{0.0, 0.18, 0.39}
\definecolor{coolblack2}{rgb}{0.05, 0.23, 0.49}
\definecolor{beaublue}{rgb}{0.74, 0.83, 0.9}
\hypersetup{
  colorlinks,
  linkcolor={blue},
  citecolor={blue},
  urlcolor={blue}
}

\allowdisplaybreaks[1]

\newcommand{\argmax}{\mathop{\rm arg~max}\limits}
\newcommand{\argmin}{\mathop{\rm arg~min}\limits}
\newcommand{\R}{\mathbb{R}}
\newcommand{\E}{\mathbb{E}}
\newcommand{\Pp}{\mathbb{P}}
\newcommand{\1}{\mathbf{1}}

\newcommand{\cS}{\mathcal S}
\newcommand{\simplex}{\Delta}
\newcommand{\Gap}{\operatorname{Gap}}
\newcommand{\NE}{\operatorname{NE}}
\newcommand{\normmax}[1]{\left\lVert #1\right\rVert_{\max}}
\newcommand{\normone}[1]{\left\lVert #1\right\rVert_1}
\newcommand{\trans}{\mathsf T}

\newcommand{\supp}{\operatorname{supp}}

\title{Thompson Sampling for Learning in Zero-Sum Games}

\makeatletter
\let\papertitle\@title
\makeatother

\begin{document}

\twocolumn[
  \aistatstitle{\papertitle}
  \aistatsauthor{
  Kenshi Abe$^1$ \And
  Kaito Ariu$^{1,2}$ \And
  Po-An Wang$^3$
}
\aistatsaddress{
  $^1$CyberAgent \And
  $^2$Tohoku University \And
  $^3$National Tsing Hua University
}

]

\begin{abstract}
Thompson sampling (TS) achieves logarithmic and asymptotically optimal regret in classical stochastic bandits, but its behavior in multi-agent learning remains less understood.
We study TS in an unknown finite two-player zero-sum matrix game with Gaussian bandit feedback.
Under centralized self-play, TS samples a single payoff matrix from the posterior and plays a Nash equilibrium of the sampled game.
For every fixed zero-sum game with a unique Nash equilibrium, we show that the expected cumulative duality gap is $O(\sqrt T)$.
The rate improves to $O(\log T)$ for pure equilibria.
We then identify the exact coefficient of the $\sqrt T$ term, which depends only on the equilibrium strategies and the noise level.
We also establish a minimax lower bound matching the square-root dependence on $T$ and an instance-dependent lower bound matching the logarithmic dependence on $T$ for a class of games with a unique pure equilibrium.
Experiments support the predicted rates and coefficient.
\end{abstract}

\section{Introduction}
\label{sec:intro}

Thompson sampling (TS) samples a model from the posterior and selects an optimal action for that model \citep{thompson1933likelihood,russo2018tutorial}.
Its logarithmic regret in Bernoulli bandits \citep{agrawal2012analysis} and asymptotic optimality in one-dimensional exponential-family bandits \citep{kaufmann2012thompson,korda2013thompson} motivate applying posterior sampling to equilibrium learning in unknown games.

Learning in multiplayer games under bandit feedback has been extensively studied \citep{odonoghue2021matrix,ito2026adversarial}.
Recent methods provide guarantees for learning approximate Nash equilibria using algorithms that explicitly encourage exploration of uncertain payoffs \citep{yang2025incentivize,nayak2026logarithmic,xiang2026policy}.
Yet the behavior of plain TS, which simply samples a payoff matrix from the posterior and plays a Nash equilibrium of the sampled game, remains poorly understood.
Therefore, we ask:
\begin{quote}
  \emph{Can plain Thompson sampling learn equilibrium play in an unknown zero-sum game?}
\end{quote}

We answer this question affirmatively for games with a unique Nash equilibrium under centralized self-play, where a single posterior draw jointly determines both players’ strategies.
The main technical challenge is to establish that posterior sampling alone generates enough information for equilibrium learning.
While playing an arm in a single-player bandit provides information about its reward, playing a sampled equilibrium in a game need not provide the information needed to correct inaccurate payoff estimates.

We evaluate the played strategies through the cumulative duality gap, which sums both players' gains from unilateral best responses over all rounds.
For every fixed game with a unique Nash equilibrium, we show that the expected cumulative duality gap is $O(\sqrt T)$ and improves to $O(\log T)$ when the equilibrium is pure.
These rates reflect two contributions, a square-root term for estimating payoffs within the equilibrium supports and a logarithmic term for excluding actions outside them.
The former vanishes for pure equilibria, while the latter vanishes for fully mixed equilibria.
Beyond these rates, we identify the exact coefficient of the $\sqrt T$ term.
This coefficient depends only on the equilibrium strategies and the noise level.
Consequently, payoff entries outside the equilibrium supports contribute only lower-order terms.

We complement these results with two lower bounds.
The square-root dependence on $T$ is unavoidable in the worst case, even for games with a unique fully mixed Nash equilibrium.
For a subclass of games with a unique pure equilibrium, we also show that the logarithmic dependence is unavoidable on each fixed instance for uniformly efficient policies.

Our contributions are summarized as follows.
First, we show that plain TS achieves an $O(\sqrt T)$ expected cumulative duality gap for every fixed game with a unique Nash equilibrium, improving to $O(\log T)$ for pure equilibria (Theorem~\ref{thm:unique-rate}).
Second, we identify the exact coefficient of the $\sqrt T$ term, which depends only on the equilibrium strategies and the noise level (Theorem~\ref{thm:exact-core-asymptotics}).
Finally, we establish that the square-root dependence on $T$ is unavoidable in the worst case, even for games with a unique fully mixed Nash equilibrium (Theorem~\ref{thm:unique-minimax-lower}).
For uniformly efficient policies, the logarithmic dependence on $T$ is also unavoidable on fixed instances in a subclass of games with a unique pure equilibrium (Theorem~\ref{thm:pure-separable-lower}).

\section{Related Literature}
\label{sec:related}

\paragraph{Thompson sampling in stochastic bandits.}
\citet{russo2014learning} establish Bayesian regret bounds for posterior sampling in bandit problems with dependent rewards across actions.
For Bernoulli bandits, TS admits logarithmic and problem-independent regret bounds \citep{agrawal2012analysis,agrawal2013further}.
It is also asymptotically optimal for Bernoulli and one-dimensional exponential-family bandits \citep{kaufmann2012thompson,korda2013thompson}.
Our work studies TS in unknown games, where the learner plays a Nash equilibrium of a sampled payoff matrix.

\paragraph{Learning in games with bandit feedback.}
Learning in zero-sum matrix games with bandit feedback has been studied both against arbitrary opponents \citep{odonoghue2021matrix,odonoghue2021variational,maiti2026limitations,ito2026adversarial} and in self-play \citep{ito2025instance}.
In self-play, \citet{ito2025instance} establish instance-dependent regret bounds, achieving logarithmic regret when a pure Nash equilibrium exists.
Other studies address equilibrium identification through active entry sampling \citep{maiti2023instance,maiti2024pure} or last-iterate convergence without opponent-action observations \citep{cai2023uncoupled,fiegel2025harder,fiegel2026optimal}.
More closely related to our criterion, \citet{yang2025incentivize,nayak2026logarithmic} establish a $\widetilde O(\sqrt T)$ cumulative duality gap bound, where $\widetilde O$ hides logarithmic factors.
The last-iterate guarantee of \citet{hait2026near} also yields a high-probability $\widetilde O(\sqrt T)$ cumulative duality gap bound by summation.

\paragraph{Posterior sampling for games.}
In matrix games, \citet{odonoghue2021matrix} show that TS can suffer linear regret against an opponent whose play prevents learning the unknown payoffs.
Posterior sampling has also been studied in zero-sum Markov games \citep{xiong2022selfplay,qiu2023posterior}, including methods with Bayesian regret guarantees \citep{jafarnia2024bayesian,yueh2026bayesian}.
Most recently, the TS variant of \citet{xiang2026policy} for matrix games combines regularized equilibrium computation with continued explicit exploration and admits a cumulative duality gap bound of order $\widetilde O(T^{3/4})$ in probability, allowing multiple equilibria.
In contrast, Nash-TS uses neither regularization nor continued forced exploration after initialization.
We jointly select both strategies from one posterior sample and, for a fixed game with a unique equilibrium, obtain $O(\sqrt T)$ expected cumulative duality gap and $O(\log T)$ for pure equilibria.
Furthermore, we identify the exact leading coefficient of the square-root term for Nash-TS.

\section{Preliminaries}
\label{sec:setting}

\paragraph{Two-player zero-sum matrix games.}
In this study, we consider a two-player zero-sum matrix game with a fixed but unknown payoff matrix $A^\star\in\R^{m\times n}$.
The row player and the column player choose their mixed strategies $x\in\simplex_m$ and $y\in\simplex_n$ over their $m$ and $n$ actions, respectively, where $\simplex_d:=\{z\in\R_+^d:\sum_{i=1}^d z_i=1\}$ denotes the probability simplex.
We write $[d]:=\{1,\ldots,d\}$ for the set of action indices.
The expected payoffs to the row player and the column player are $x^\trans A^\star y$ and $-x^\trans A^\star y$, respectively.
Thus, the row player aims to maximize $x^\trans A^\star y$, while the column player aims to minimize it, giving the max-min problem
\[
  \max_{x\in\simplex_m}\min_{y\in\simplex_n} x^\trans A^\star y.
\]
We refer to $(x,y)\in\simplex_m\times\simplex_n$ as a \emph{strategy profile}.
A strategy profile is called \emph{pure} if each player assigns probability one to a single action, and \emph{fully mixed} if each player assigns positive probability to every action.

\paragraph{Nash equilibrium and duality gap.}
A \emph{Nash equilibrium} \citep{nash1951non} is a strategy profile at which neither player can improve its expected payoff by changing its own strategy.
Formally, for a payoff matrix $A\in\R^{m\times n}$, a strategy profile $(x^\star,y^\star)\in\simplex_m\times\simplex_n$ is a Nash equilibrium if and only if
\[
  \begin{aligned}
    x^\trans A y^\star
     & \le
    (x^\star)^\trans A y^\star
    \le
    (x^\star)^\trans A y,                                               \\
     & \hspace{2em}\text{for every }(x,y)\in\simplex_m\times\simplex_n.
  \end{aligned}
\]
We denote the set of Nash equilibria of $A$ by $\NE(A)$.

To quantify the proximity of a strategy profile $(x,y)$ to a Nash equilibrium, we use the \emph{duality gap} (also called the Nash gap), defined by
\begin{equation*}
  \Gap_A(x,y)
  :=
  \max_{i\in[m]} e_i^\trans A y
  -
  \min_{j\in[n]} x^\trans A e_j,
\end{equation*}
where $e_i$ denotes the standard basis vector whose dimension is determined by context.
The duality gap is the sum of the two players' gains from switching to a best response against the opponent's current strategy.
It is nonnegative for every $(x,y)\in\simplex_m\times\simplex_n$ and equals zero if and only if $(x,y)\in\NE(A)$.

\paragraph{Problem setting.}
As in \citet{yang2025incentivize,nayak2026logarithmic,hait2026near,xiang2026policy}, we consider online learning with \emph{bandit feedback} and observed actions of both players.
We study the \emph{centralized} setting, where a single learner chooses the strategies of both players.
At each round $t=1,\ldots,T$, the learner chooses both $x_t\in\simplex_m$ and $y_t\in\simplex_n$ based on previously observed actions and payoffs.
Then, the learner draws actions $I_t\sim x_t$ and $J_t\sim y_t$ independently.
The learner observes the selected action pair $(I_t,J_t)$ and receives only its noisy payoff
\begin{equation}
  Y_t=A^\star_{I_tJ_t}+\xi_t,
  \qquad
  \xi_t\stackrel{\mathrm{i.i.d.}}{\sim}\mathcal N(0,\sigma^2),
  \label{eq:observation}
\end{equation}
where the noise variance $\sigma^2>0$ is known.
The payoffs of all unselected action pairs remain unobserved.

We use the \emph{cumulative duality gap} to evaluate the quality of the strategy profiles played throughout learning.
For any payoff matrix $A\in\R^{m\times n}$, we define the cumulative duality gap by
\begin{equation}
  \mathcal R_T(A)
  :=
  \sum_{t=1}^T \Gap_A(x_t,y_t).
  \label{eq:regret}
\end{equation}
For the true payoff matrix $A^\star$, we write $\mathcal R_T:=\mathcal R_T(A^\star)$.
Our objective is to minimize $\E[\mathcal R_T]$, where the expectation is taken over the observation noise and the learner's randomization, with $A^\star$ fixed.
A sublinear bound on $\E[\mathcal R_T]$ ensures that the duality gap of the played strategy profiles vanishes on average in expectation.
It also guarantees convergence of the averaged strategies: since $\Gap_{A^\star}(x,y)$ is jointly convex in $(x,y)$, they satisfy
\[
  \E\!
  \left[
    \Gap_{A^\star}\!\left(
    \frac1T\sum_{t=1}^T x_t,
    \frac1T\sum_{t=1}^T y_t
    \right)
    \right]
  \le
  \frac{\E[\mathcal R_T]}{T}.
\]

The cumulative duality gap is also related to standard regret measures.
It equals the sum of the two players' individual dynamic regrets, where each player compares its strategy with a best response to the opponent's current strategy \citep{cai2023doubly}.
In the special case where either player has only one action, it reduces to the stochastic-bandit pseudo-regret of the other player.

\paragraph{Other notation.}
For a matrix $B$, we write $\normmax{B}:=\max_{i,j}|B_{ij}|$.
For $B\in\R^{m\times n}$ and index sets $I\subseteq[m]$ and $J\subseteq[n]$, we write $B_{I,J}:=(B_{ij})_{i\in I,j\in J}$ for the corresponding submatrix, with rows and columns in increasing index order.
Similarly, $z_I:=(z_i)_{i\in I}$ denotes the restriction of a vector $z$ to $I$ in increasing index order.
Every indexed tuple $(z_i)_{i\in I}$ is regarded as a column vector.

\section{Nash Thompson Sampling}
\label{sec:algorithm}

This section describes Nash Thompson Sampling (Nash-TS), the plain Thompson-sampling algorithm analyzed in this paper.
At each round, the learner samples a payoff matrix from the posterior and selects a Nash equilibrium of the sampled game.
It then draws the two players' actions independently from the selected strategies and observes the noisy payoff at the realized action pair.
Using this observation, it updates only the posterior distribution of the observed entry.

\subsection{Posterior Sampling}

We place an independent Jeffreys prior on each payoff entry.
Under the Gaussian observation model in \eqref{eq:observation}, the prior on $A_{ij}$ is flat for each $(i,j)\in[m]\times[n]$ and satisfies
\begin{equation}
  \pi_J(A_{ij})\propto 1,
  \qquad A_{ij}\in\R.
  \label{eq:jeffreys}
\end{equation}
To obtain a proper posterior under \eqref{eq:jeffreys}, the learner plays each pure strategy pair $(e_i,e_j)$ once in a fixed order during the first $mn$ rounds.
For each entry $a=(i,j)\in[m]\times[n]$, let $\widehat A_{a,1}$ denote the noisy payoff observed in these rounds.
After these $mn$ rounds, the learner repeats the following procedure for $s=1,2,\ldots$.
We write $Y_s=A^\star_{I_sJ_s}+\xi_s$ for the noisy payoff observed at action pair $(I_s,J_s)$ in round $s$, with the noise specified in \eqref{eq:observation}.

To sample a payoff matrix at round $s$, the learner uses the observations collected so far.
The observation count and empirical mean of entry $a$ at this round are
\[
  \begin{aligned}
    N_a(s) & :=1+\sum_{r=1}^{s-1}\1_{\{(I_r,J_r)=a\}},              \\
    \widehat A_{a,N_a(s)}
           & :=\frac{\widehat A_{a,1}
                 +\sum_{r=1}^{s-1}Y_r\1_{\{(I_r,J_r)=a\}}}{N_a(s)}.
  \end{aligned}
\]
Using these empirical means, the learner samples the entries of $\widetilde A_s$ independently from their posteriors:
\begin{equation}
  \widetilde A_{a,s}
  \sim
  \mathcal N\!
  \left(\widehat A_{a,N_a(s)},\frac{\sigma^2}{N_a(s)}\right).
  \label{eq:posterior}
\end{equation}

\subsection{Equilibrium Selection and Posterior Update}

Given the sampled payoff matrix $\widetilde A_s$, the learner selects $(x_s,y_s)\in\NE(\widetilde A_s)$ using an arbitrary but fixed equilibrium selection rule.
It then draws $I_s\sim x_s$ and $J_s\sim y_s$ independently and observes $Y_s$.
Because the two actions are drawn independently, their conditional distribution satisfies
\[
  \Pp\big((I_s,J_s)=(i,j)\mid\widetilde A_s\big)=x_{s,i}\,y_{s,j}.
\]
The learner uses $Y_s$ to update only the posterior distribution of entry $(I_s,J_s)$.
Nash-TS is an anytime algorithm and does not require knowledge of the horizon.
Algorithm~\ref{alg:ts-ne} summarizes the complete procedure.

\begin{algorithm}[t]
  \caption{Nash-TS}
  \label{alg:ts-ne}
  \begin{algorithmic}[1]
    \Require noise variance $\sigma^2$
    \Statex \textbf{Phase I: observe each entry once}
    \State observe every entry $(i,j)\in[m]\times[n]$ exactly once, in an arbitrary fixed order
    \Statex \textbf{Phase II: Thompson sampling}
    \For{$s=1,2,\ldots$}
    \State $N_{ij}(s)\gets$ observation count of entry $(i,j)$ before this round
    \State sample $\widetilde A_{ij,s}\sim\mathcal N\!\big(\widehat A_{ij,N_{ij}(s)},\,\sigma^2/N_{ij}(s)\big)$ independently over all $(i,j)$
    \State compute a Nash equilibrium $(x_s,y_s)\in\NE(\widetilde A_s)$
    \State draw $I_s\sim x_s$ and $J_s\sim y_s$ independently
    \State observe $Y_s=A^\star_{I_sJ_s}+\xi_s$
    \State update only the posterior of entry $(I_s,J_s)$
    \EndFor
  \end{algorithmic}
\end{algorithm}

\section{Cumulative Gap Upper Bounds}
\label{sec:upper-bounds}
In this section, we establish upper bounds on the expected cumulative duality gap of Nash-TS in Algorithm~\ref{alg:ts-ne}.
Throughout this section, we assume that $A^\star$ has a unique Nash equilibrium $(x^\star,y^\star)$.
We write $I^\star:=\supp(x^\star)$ and $J^\star:=\supp(y^\star)$.

\subsection{Finite-Horizon Bound}
\label{sec:finite-horizon-regret}

We first show that, under the uniqueness assumption, Nash-TS achieves an expected cumulative duality gap of $\mathcal O(\sqrt T)$ in general and $\mathcal O(\log T)$ when the equilibrium is pure.

\begin{theorem}
  \label{thm:unique-rate}
  Suppose that $m,n\ge2$, $T>mn$, and $A^\star$ has a unique Nash equilibrium.
  Then, there exist finite nonnegative constants $C_{\rm mix},C_{\rm out},C_0$, depending only on $(A^\star,\sigma,m,n)$, such that Nash-TS satisfies
  \begin{equation*}
    \E[\mathcal R_T]
    \le C_{\rm mix}\sqrt T+C_{\rm out}\log(eT)+C_0.
  \end{equation*}
  Moreover, $C_{\rm mix}=0$ when the unique equilibrium is pure, and $C_{\rm out}=0$ when it is fully mixed.
\end{theorem}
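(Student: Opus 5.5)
We bound $\E[\mathcal R_T]$ by splitting the rounds into three types: rounds where the sampled matrix $\widetilde A_s$ is \emph{structurally wrong}, rounds where it is \emph{structurally correct}, and rounds where posterior concentration fails. The key structural input is that a unique Nash equilibrium of a finite zero-sum game is nondegenerate. Its core $A^\star_{I^\star,J^\star}$ is square and nonsingular, and the best-response conditions outside the supports are strict: $e_i^\trans A^\star y^\star < v^\star$ for $i\notin I^\star$ and $(x^\star)^\trans A^\star e_j > v^\star$ for $j\notin J^\star$. This follows from the Shapley--Snow / Bohnenblust--Karlin--Shapley characterization. Nondegeneracy gives a neighborhood $\mathcal U=\{A:\normmax{A-A^\star}\le\delta\}$ on which every $A$ has a unique equilibrium. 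That equilibrium has the same supports $I^\star,J^\star$, is given by the explicit formula $x_{I^\star}^\trans\propto \1^\trans A_{I^\star,J^\star}^{-1}$ and $y_{J^\star}\propto A_{I^\star,J^\star}^{-1}\1$, and depends smoothly on the entries of $A_{I^\star,J^\star}$ alone. Consequently, on $\mathcal U$ we have
\[
\Gap_{A^\star}(x,y)\le L\,\normmax{(A-A^\star)_{I^\star,J^\star}},
\]
and this bound vanishes identically when the equilibrium is pure, since then the core is a single entry and the equilibrium is constant on $\mathcal U$. When $\widetilde A_s\notin\mathcal U$, the gap is bounded by $2\normmax{A^\star}$.

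\textbf{Step 1 (core rounds, the $\sqrt T$ term).} Whenever $\widetilde A_s\in\mathcal U$, the played strategies put mass at least $c>0$ on every entry of $I^\star\times J^\star$. Once we show that $\widetilde A_s\in\mathcal U$ for all but $O(\log T)$ rounds in expectation, each core entry satisfies $N_a(s)\gtrsim c\, s$ with high probability by a martingale (Freedman) argument. Then $\E\normmax{(\widetilde A_s-A^\star)_{I^\star,J^\star}}\lesssim \sigma\sqrt{\log(mn)/s}$, using Gaussian concentration of both the empirical mean and the posterior noise. Summing over $s$ gives $C_{\rm mix}\sqrt T$, and this term is zero in the pure case.

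\textbf{Step 2 (excluding off-support actions, the $\log T$ term).} Pull $i\notin I^\star$ only when the sample is outside $\mathcal U$ or lies near its boundary. We follow the standard TS analysis of \citet{agrawal2013further} and \citet{kaufmann2012thompson}. An off-support row $i$ can be played only if some sampled entry deviates from $A^\star$ enough to overturn the strict margin $\Delta_i=v^\star-e_i^\trans A^\star y^\star>0$. There are two sub-cases. If the deviation is in an entry that is observed often, its posterior is concentrated, so such events occur with probability $e^{-cN\Delta^2/\sigma^2}$ and contribute $O(\log T)$ in total. If instead the offending entry is rarely observed, we need the anti-concentration argument below. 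Playing an off-support action costs $O(1)$ gap per round, so the total is $C_{\rm out}\log(eT)$, and $C_{\rm out}=0$ in the fully mixed case because there is no outside action.

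\textbf{Step 3 (main obstacle: insufficient exploration).} The hard part is that a rarely-pulled entry $(i,j)$ with $i\notin I^\star$ or $j\notin J^\star$ could have a badly wrong empirical mean. That mean might persistently shift the sampled equilibrium without the algorithm ever pulling the entry to correct it, which is the informational issue the introduction highlights. I plan to handle this with the Agrawal--Goyal ``probability of being optimistic'' trick, adapted to games. We condition on the history and use Gaussian anti-concentration: with probability at least $p_s\gtrsim \Phi(-\text{const}\cdot\sqrt{N})$, all poorly estimated entries are sampled inside the neighborhood simultaneously, because the entries are independent. On that event $\widetilde A_s\in\mathcal U$. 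We also need to show that rounds with $\widetilde A_s\notin\mathcal U$ force pulls of the culprit entries. If the culprit has small $N$, an equilibrium moved by it must put weight on its row and column; this step needs a careful case analysis using nondegeneracy, and it is where I expect the most work. The expected number of bad rounds between consecutive pulls is then bounded by $\E[1/p-1]$, and summing over $N$ gives $O(\log T)$ for off-support entries and $O(1)$ for the initial phase and constants. Together with the $mn$ initial rounds absorbed into $C_0$, Steps 1--3 give the stated bound.
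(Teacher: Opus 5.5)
Your plan has a genuine gap, and it starts in the decomposition itself. You charge $2\normmax{A^\star}$ to every round with $\widetilde A_s\notin\mathcal U$ and assume there are only $O(\log T)$ such rounds. But $\mathcal U$ constrains all $mn$ entries, including those with an index outside the supports. Such an entry is pulled only when the sampled equilibrium leaves the supports, so its count grows at most like $\log s$. Entries of $I^{\star c}\times J^{\star c}$, which require both players to leave at once, may be pulled far less often. The probability that such a draw leaves a ball whose radius is below the margins therefore decays only like a small negative power of $s$, or not at all. Hence $\sum_s\Pp(\widetilde A_s\notin\mathcal U)$ is polynomial in $T$, even for a pure equilibrium, and your Step~1 premise fails for the same reason.

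The fix is to localize only the core, $\normmax{\widetilde A_{I^\star,J^\star,s}-A^\star_{I^\star,J^\star}}\le\rho$, and to split rounds by whether the played profile is the zero extension of the core equilibrium. By Lemma~\ref{proof:lem:margin-embedding}, this is decided by the signs of the outside margins $U_i,V_j$, which never involve $I^{\star c}\times J^{\star c}$. Inaccurate outside entries therefore cost nothing unless they make some margin positive. (A minor point: the core need not be nonsingular, e.g.\ rock--paper--scissors. Uniqueness gives invertibility of the bordered matrix, and your explicit formulas need a nonzero value, e.g.\ after adding a constant.)

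The more serious gap is Step~3, which is where the entire $O(\log T)$ term lives. The claim that an equilibrium moved by an under-sampled culprit must put weight on its row and column is false. In a $2\times2$ game with pure equilibrium $(1,1)$, let $\widetilde A_{21}$ be sampled very high. The sampled game then has a mixed equilibrium with $x_2\to0$ and $y_1\to0$, so the culprit $(2,1)$ is essentially never observed. With several positive margins, the row player may also use a different outside row. So a bad round need not reveal any entry with $N<k$. The Agrawal--Goyal $\E[1/p-1]$ device handles core localization (that is essentially Lemma~\ref{proof:lem:general-support-nonlocal-occupation}), but it does not supply this link.

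What is missing is a restriction to a good sub-event together with a probability comparison, as follows.
\begin{itemize}
\item Conditionally on the noise sequences, the past, and the core, the outside margins are independent Gaussians, since they depend on disjoint entries.
\item When each conditional margin law puts mass close to one below $-\Delta/4$, a union bound reduces $\Pp(\mathsf D_t(k))$ to $\sum_i\Pp(U_i>0)+\sum_j\Pp(V_j>0)$.
\item The negative conditional mean then gives $\Pp(U_i>0)\le c^{-1}\Pp(0<U_i\le L)$ for a small window $L$.
\item On the event that $U_i\in(0,L]$ and every other outside margin is at most $-\Delta/4$, the equilibrium inequalities and the gradient margins $\kappa_x,\kappa_y$ force $x_i\ge c'$ and $y_j\ge p/2$ for all $j\in J^\star$ (Lemma~\ref{proof:lem:unified-reveal}). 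Here $c'$ degrades with $\max_j|\widetilde A_{ij,s}|$, which must be controlled in conditional expectation.
\item Rows or columns whose conditional margin law is not so concentrated need a separate summability argument (Lemma~\ref{proof:prop:two-sided-bad-center-occupation}).
\end{itemize}
Without this chain, the link between bad rounds and pulls of under-sampled entries is unproved, and the $O(\log T)$ bound does not follow.
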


For games with a unique pure Nash equilibrium, our bound achieves the same logarithmic dependence on $T$ as the external regret bound of \citet{ito2025instance}.
The cumulative duality gap upper-bounds the sum of the two players' external regrets and thus provides a stronger performance guarantee.
Notably, Nash-TS achieves this bound without explicit regularization or exploration after initialization.

The different rates for mixed and pure equilibria arise from how payoff estimation errors affect the chosen strategies.
For a mixed equilibrium, these errors can change the probabilities assigned to its actions even when the sampled equilibrium uses the same supports, and the square-root term accounts for this discrepancy.
For a pure equilibrium, if the sampled equilibrium uses the same supports as the true equilibrium, both players necessarily choose their equilibrium actions with probability one, so the square-root term vanishes.
Thus, for a pure equilibrium, a positive duality gap arises only when the sampled equilibrium uses an action outside the true supports.
The logarithmic term accounts for such use of actions outside the supports.

This separation also clarifies how the constants scale with the number of actions.
Write $d:=|I^\star|=|J^\star|$ for the equilibrium support size.
\footnote{The equality of the support sizes follows from uniqueness by Lemma~\ref{proof:lem:local-geometry-from-uniqueness}.}
With the instance-specific geometric quantities and $\sigma$ fixed, the explicit constants satisfy
\begin{equation*}
  \begin{aligned}
    C_{\rm mix} & =\mathcal O\!
    \left(d\sqrt{\log(ed)}\right),                                   \\
    C_{\rm out} & =\mathcal O\!\left(d(m+n-2d)(1+\log(mn))^3\right).
  \end{aligned}
\end{equation*}
Thus, the square-root coefficient depends on the support size, while the logarithmic coefficient also accounts for actions outside the supports.
Appendix~\ref{proof:sec:proof-unique-rate} gives the proof and explicit constants in Equations~\eqref{proof:eq:mix-constant}--\eqref{proof:eq:additive-constant}.

\subsection{Exact Asymptotics}
\label{sec:exact-asymptotics}

Theorem~\ref{thm:unique-rate} bounds the square-root contribution for non-pure equilibria but does not identify its exact leading coefficient.
We next show that Nash-TS actually incurs this order of cumulative duality gap and identify its leading coefficient.
Let $\mathsf W^{\rm core}$ have independent entries indexed by
$I^\star\times J^\star$ with
\begin{equation}
  \mathsf W_{ij}^{\rm core}
  \sim
  \mathcal N\!\left(0,\frac{2\sigma^2}{x_i^\star y_j^\star}\right).
  \label{eq:core-limit-matrix}
\end{equation}

The matrix $\mathsf W^{\rm core}$ describes the limiting $\sqrt{s}$-scaled errors in the sampled support submatrix.
Each entry $(i,j)$ is observed asymptotically at frequency $x_i^\star y_j^\star$, so less frequently observed entries have larger limiting error variances.
The factor $2$ combines the equal contributions of empirical-mean error and fresh posterior-sampling noise.
Define
\begin{equation}
  \Lambda_{\rm core}
  :=
  \E\!\left[
    \max_{j\in J^\star}(x^\star)_{I^\star}^\trans
    \mathsf W^{\rm core}e_j
    -\min_{i\in I^\star}e_i^\trans
    \mathsf W^{\rm core}(y^\star)_{J^\star}
    \right].
  \label{eq:core-asymptotic-constant}
\end{equation}
The quantity $\Lambda_{\rm core}$ is the expected first-order duality gap arising when the equilibrium of a locally perturbed support submatrix is evaluated in the true game.
The following theorem identifies $2\Lambda_{\rm core}$ as the exact leading coefficient of $\E[\mathcal R_T]$ on the square-root scale.

\begin{theorem}
  \label{thm:exact-core-asymptotics}
  Suppose that $m,n\ge2$ and $A^\star$ has a unique Nash equilibrium.
  Then, Nash-TS satisfies
  \begin{equation*}
    \lim_{T\to\infty}
    \frac{\E[\mathcal R_T]}{\sqrt T}
    =2\Lambda_{\rm core}.
  \end{equation*}
  Moreover, $\Lambda_{\rm core}=0$ when the unique Nash equilibrium is pure, whereas $\Lambda_{\rm core}>0$ otherwise.
\end{theorem}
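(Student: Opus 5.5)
We decompose the per-round gap according to whether the sampled equilibrium $(x_s,y_s)$ has the true supports $(I^\star,J^\star)$. On the complementary event $\{\supp(x_s)\neq I^\star \text{ or } \supp(y_s)\neq J^\star\}$, the argument behind Theorem~\ref{thm:unique-rate} controls the expected contribution. There, the support-deviation rounds produce the $C_{\rm out}\log(eT)+C_0$ terms. Dividing by $\sqrt T$, these vanish in the limit. It therefore suffices to show
\[
\frac{1}{\sqrt T}\sum_{s\le T}\E\big[\Gap_{A^\star}(x_s,y_s)\1\{\text{correct supports}\}\big]\to 2\Lambda_{\rm core}.
\]
The $mn$ initial rounds are also negligible at this scale.

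\textbf{Local linearization.} By uniqueness (Lemma~\ref{proof:lem:local-geometry-from-uniqueness}), $|I^\star|=|J^\star|=d$ and $A^\star_{I^\star,J^\star}$ is nonsingular, with $v^\star$ the game value. Strict complementary slackness gives a margin: off-support rows strictly below $v^\star$ against $y^\star$, and off-support columns strictly above. When the supports are correct, the sampled equilibrium solves the square linear system on the support, so $(x_s,y_s)$ is a smooth function of $\widetilde A_{I^\star,J^\star}$. Write $\widetilde A=A^\star+\Delta$.
\begin{itemize}
\item The indifference equations give $A^\star_{I^\star J^\star}(y_s-y^\star)_{J^\star} = -\Delta y^\star_{J^\star} + (\tilde v-v^\star)\mathbf 1 + O(\|\Delta\|^2)$.
\item Hence $\max_{i\in I^\star} e_i^\trans A^\star y_s - v^\star = \max_i\big(-\Delta y^\star\big)_i + (\tilde v - v^\star) + O(\|\Delta\|^2)$.
\item For $\|\Delta\|$ small, the margin ensures that off-support rows never attain the max.
\end{itemize}
Symmetrically for the column player. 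Adding the two sides, the $\tilde v$ terms cancel, giving
\[
\Gap_{A^\star}(x_s,y_s) = \max_{j} (x^\star)^\trans_{I^\star}\Delta e_j - \min_i e_i^\trans\Delta (y^\star)_{J^\star} + O(\|\Delta_{I^\star J^\star}\|^2).
\]
This linear functional is exactly the one inside $\Lambda_{\rm core}$, with the sign conventions matching after checking. Note that $\Delta_{I^\star J^\star}$ involves only core entries, so off-support entries drop out at first order.

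\textbf{Distributional limit.} Next we show that $\sqrt s\,\Delta_{I^\star J^\star,s}\Rightarrow \mathsf W^{\rm core}$, with convergence of first moments. The decomposition is
\[
\Delta=\big(\widehat A - A^\star\big)+\big(\widetilde A-\widehat A\big).
\]
The second term is, conditionally, $\mathcal N(0,\sigma^2/N_a(s))$. The first is a Gaussian mean of $N_a(s)$ independent noises; since the noise is independent of the sampling choices, $\widehat A_{a}-A^\star_a$ is $\mathcal N(0,\sigma^2/N_a)$ given the count, via a stack-of-rewards argument. The key input is $N_a(s)/s\to x_i^\star y_j^\star$ in probability, with uniform integrability of $s/N_a(s)$ on core entries. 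This follows because $x_s\to x^\star$ and $y_s\to y^\star$ in mean, a consequence of Theorem~\ref{thm:unique-rate}'s bound $\E[\Gap]\to0$ together with uniqueness and a compactness argument that gap small implies near $(x^\star,y^\star)$. A martingale law of large numbers for $\sum_r x_{r,i}y_{r,j}$ then gives the counts. Combining, $\sqrt s\,\E[\Gap_s]\to\Lambda_{\rm core}$, and summing $\sum_{s\le T} s^{-1/2}\sim 2\sqrt T$ yields $2\Lambda_{\rm core}$. The quadratic remainder and the small-$\Delta$ failure event contribute $O(\log T)$ and $o(\sqrt T)$ respectively, using Gaussian tails and the moment bounds from Theorem~\ref{thm:unique-rate}.

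\textbf{Positivity and the main obstacle.} If the equilibrium is pure, then $d=1$, both max and min are over a single entry, and the gap is $W_{11}-W_{11}=0$. If $d\ge2$, the first term is a maximum of $d$ non-degenerate jointly Gaussian variables, namely $\max_j Z_j$ with $Z_j=\sum_i x_i^\star W_{ij}$, which are independent across $j$. Its expectation is strictly positive since each $Z_j$ is centered and they are not a.s. equal. The second term is $-\E\min=\E\max$ of the symmetric counterpart, also positive. So $\Lambda_{\rm core}>0$. I expect the main obstacle to be the uniform integrability step: justifying exchange of limit and expectation for $\sqrt s\,\Gap_s$. This needs lower-tail control on core counts $N_a(s)\gtrsim s$ with polynomially small failure probability. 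I would first try to extract this from the concentration lemmas underlying Theorem~\ref{thm:unique-rate}; otherwise I would prove it directly via a self-correcting argument, in which undersampled core entries inflate posterior variance and thereby push sampled equilibria toward them.
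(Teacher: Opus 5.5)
Your outline matches the paper's. You discard the rounds outside $\mathsf{Supp}_s$ at cost $O(\log T)$ using the machinery behind Theorem~\ref{thm:unique-rate}, write the core gap as a linear functional of the core errors, prove a $\sqrt s$-CLT with frequencies $x_i^\star y_j^\star$, sum $s^{-1/2}$, and get positivity from $\E\max_jZ_j\ge\frac12\E|Z_j-Z_{j'}|$. Your route to $N_a(s)/s\to x_i^\star y_j^\star$ in probability also works and differs from the paper's, which goes through core consistency $C_t\to A^\star_{I^\star,J^\star}$. Your route combines Ces\`aro convergence of $\E[\Gap_{A^\star}(x_s,y_s)]$ (all that Theorem~\ref{thm:unique-rate} gives), a uniqueness/compactness modulus, and a martingale LLN.

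\textbf{The gap: uniform integrability.} This is the step you flag yourself, and your count argument cannot close it. On rounds where some core count is below $cs$, the gap is only bounded by $2\normmax{A^\star}$. Passing from the distributional limit to expectations therefore needs $\Pp(\min_{a\in I^\star\times J^\star}N_a(s)<cs)=o(s^{-1/2})$. Convergence in probability gives no rate. Markov applied to $\E[\mathcal R_s]=O(\sqrt s)$ gives at best $O(s^{-1/2})$, i.e.\ an error of order $\sqrt T$, which is exactly the order you are trying to identify. The missing argument is concrete:
\begin{itemize}
\item On $\mathsf{Supp}_r$, every core entry is observed with conditional probability $x_i^{C_r}y_j^{C_r}\ge p^2$.
\item The proof (not the statement) of Theorem~\ref{thm:unique-rate} gives $\sum_{r\le t}\Pp(\mathsf{Supp}_r^c)=O(\log t)$.
\item An exponential supermartingale in the number of observations of $a$ on $\mathsf{Supp}$-rounds, plus Markov on $\sum_{r\le t}\1_{\mathsf{Supp}_r^c}$, yields $\Pp(\min_aN_a(t)<p^2t/8)\le d^2e^{-p^2t/8}+\frac4t\sum_{r\le t}\Pp(\mathsf{Supp}_r^c)=O(\log t/t)$.
\item On the complementary event, a maximal inequality over all $N\ge p^2t/8$ bounds the second moment of $\sqrt t\,\1_{\mathsf{Near}_t}\Gap_{A^\star}(\bar x^{C_t},\bar y^{C_t})$ uniformly in $t$.
\end{itemize}
Your fallback ``self-correcting'' mechanism does not describe what happens. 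A wide posterior on an undersampled core entry mostly destroys $\mathsf{Near}_s$; it does not steer play toward that entry.

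\textbf{Conditional Gaussianity is false.} Given the count, $\widehat A_a-A^\star_a$ is not $\mathcal N(0,\sigma^2/N_a)$. The count $N_a(s)$ depends on entry $a$'s own past rewards through $\widetilde A$ and the selected equilibrium, and the stack-of-rewards model makes $S_{a,N}$ Gaussian only for each \emph{fixed} $N$. The limit $\sqrt s\,(C_s-A^\star_{I^\star,J^\star})\Rightarrow\mathsf W^{\rm core}$ still holds, but via an Anscombe-type random-index argument as in Lemma~\ref{proof:lem:gaussian-random-index-limit}, which needs only $N_a(s)/s\to\nu_a$ in probability. Moment bounds likewise must come from maximal inequalities, not conditional Gaussianity.

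\textbf{Target the proxy.} Work with $\1_{\mathsf{Near}_s}\Gap_{A^\star}(\bar x^{C_s},\bar y^{C_s})$ rather than $\Gap_{A^\star}(x_s,y_s)\1\{\text{correct supports}\}$. You control $\Pp(\mathsf{Supp}_s^c)$ only in summed form, so a pointwise limit $\sqrt s\,\E[\Gap_{A^\star}(x_s,y_s)]\to\Lambda_{\rm core}$ is not available. The proxy does have a pointwise limit, and it differs from your term by $O(\log T)$ in total.

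\textbf{Minor.} $A^\star_{I^\star,J^\star}$ need not be nonsingular; matching pennies is a counterexample. Uniqueness gives invertibility of the bordered matrix $\bigl(\begin{smallmatrix}A^\star_{I^\star,J^\star}&-\1\\ \1^\trans&0\end{smallmatrix}\bigr)$ instead. Also, with $F=C-A^\star_{I^\star,J^\star}$, the identity $\Gap_{A^\star}(\bar x^{C},\bar y^{C})=\max_j(x^C)^\trans Fe_j-\min_ie_i^\trans Fy^C$ is exact on $\mathsf{Near}_s$, so no quadratic remainder is needed.
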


Thus, for every fixed game with a unique non-pure equilibrium, Nash-TS has an expected cumulative duality gap of $\Theta(\sqrt T)$.
With $\sigma$ fixed, $\Lambda_{\rm core}$ depends on $A^\star$ only through its equilibrium strategies and supports.
The following corollary bounds $\Lambda_{\rm core}$ in terms of the equilibrium probabilities and support size $d$ and evaluates it for equilibria that are uniform on their supports.

\begin{corollary}
  \label{cor:core-coefficient-bound}
  Under the assumptions of Theorem~\ref{thm:exact-core-asymptotics}, when the unique Nash equilibrium is not pure,
  \begin{equation*}
    \Lambda_{\rm core}
    \le
    2\sigma\sqrt{\log d}
    \left[
      \Big(\min_{j\in J^\star}y_j^\star\Big)^{-1/2}
      +\Big(\min_{i\in I^\star}x_i^\star\Big)^{-1/2}
      \right].
  \end{equation*}
  Moreover, if the equilibrium is uniform on its supports, so that $x_i^\star=y_j^\star=1/d$ for all $i\in I^\star$ and $j\in J^\star$,
  \begin{equation*}
    \Lambda_{\rm core}
    =
    2\sigma\sqrt{2d}\,
    \E\!\left[\max_{1\le k\le d}
      G_k\right],
  \end{equation*}
  where $G_1,\ldots,G_d$ are independent standard normal variables, and
  \begin{equation*}
    \frac{3\sqrt2}{8}\sigma\sqrt{d\log d}
    \le\Lambda_{\rm core}\le4\sigma\sqrt{d\log d}.
  \end{equation*}
\end{corollary}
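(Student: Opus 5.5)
The plan is to work directly from the definition \eqref{eq:core-asymptotic-constant}. Split $\Lambda_{\rm core}=\E[\max_{j\in J^\star}U_j]+\E[\max_{i\in I^\star}V_i]$, where $U_j:=(x^\star)_{I^\star}^\trans\mathsf W^{\rm core}e_j$ and $V_i:=-e_i^\trans\mathsf W^{\rm core}(y^\star)_{J^\star}$.

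First we identify the laws of these vectors. Since the entries of $\mathsf W^{\rm core}$ are independent centered Gaussians, the $U_j$ are independent across $j$, because distinct columns involve disjoint entries. Each $U_j$ has variance
\[
\sum_{i\in I^\star}(x_i^\star)^2\frac{2\sigma^2}{x_i^\star y_j^\star}=\frac{2\sigma^2}{y_j^\star},
\]
using $\sum_{i\in I^\star}x_i^\star=1$. Symmetrically, the $V_i$ are independent with variance $2\sigma^2/x_i^\star$.

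\textbf{General upper bound.} Apply the standard maximal inequality for $d$ centered Gaussians with variances at most $v$, namely $\E[\max_k Z_k]\le\sqrt{2v\log d}$, which follows from Jensen's inequality applied to $\log\E e^{\lambda\max_k Z_k}$ with an optimized $\lambda$. With $v=2\sigma^2/\min_j y_j^\star$ this gives
\[
\E[\max_j U_j]\le 2\sigma\sqrt{\log d}\,\Big(\min_{j\in J^\star}y_j^\star\Big)^{-1/2},
\]
and the same argument handles $V$. Summing the two bounds yields the claimed inequality. We use that $d\ge2$ here, since the equilibrium is not pure and $|I^\star|=|J^\star|=d$.

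\textbf{Uniform case.} When $x^\star$ and $y^\star$ are uniform on their supports, each $U_j$ and each $V_i$ is distributed as $\sqrt{2\sigma^2 d}\,G$ with $G$ standard normal, and each family is i.i.d. Hence
\[
\Lambda_{\rm core}=2\sigma\sqrt{2d}\,\E\Big[\max_{k\le d}G_k\Big].
\]
The upper bound $4\sigma\sqrt{d\log d}$ then follows from $\E[\max_k G_k]\le\sqrt{2\log d}$.

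For the lower bound it suffices to show $\E[\max_{k\le d}G_k]\ge\tfrac{3}{16}\sqrt{\log d}$. For $d=2$ we have $\E[\max(G_1,G_2)]=1/\sqrt\pi\approx0.564$, while $\tfrac3{16}\sqrt{\log 2}\approx0.156$, so the bound holds. For general $d$, one route is a Sudakov-type bound. Another is the direct estimate $\E[\max_k G_k]\ge\E[\max_k G_k^+]-\E[G_1^-]$ combined with $\Pp(\max_k G_k\ge\sqrt{\log d})\ge 1-(1-\Pp(G\ge\sqrt{\log d}))^d$, together with a Gaussian tail lower bound $\Pp(G\ge u)\ge\frac{u}{1+u^2}\varphi(u)$. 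A cleaner alternative is the known explicit inequality $\E[\max_k G_k]\ge\frac{1}{\sqrt{\pi\log 2}}\sqrt{\log d}\cdot c$, applied separately for small and large $d$. I would first try the tail-based approach: bound $\E[\max_k G_k]\ge \sqrt{\log d}\,\Pp(\max_k G_k\ge\sqrt{\log d})-\E[(\max_k G_k)^-]$, and use $\E[(\max_k G_k)^-]\le \E[G_1^-]2^{-(d-1)}$-type estimates.

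\textbf{Main obstacle.} This last step is the delicate one: getting the explicit constant $3/16$ to hold uniformly in $d\ge2$. This will likely require a small-$d$ numerical check combined with an asymptotic argument for large $d$.
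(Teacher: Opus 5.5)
Your argument is the paper's. You use the same split of $\Lambda_{\rm core}$ into two expected maxima and the same variances $2\sigma^2/y_j^\star$ and $2\sigma^2/x_i^\star$, with independence within each family. You also use the same MGF/Jensen maximal inequality for the general upper bound and reach the same identity $\Lambda_{\rm core}=2\sigma\sqrt{2d}\,m_d$ with $m_d=\E[\max_{k\le d}G_k]$ in the uniform case.

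The one step you leave open is $m_d\ge\frac3{16}\sqrt{\log d}$ for all $d\ge2$, so as written the lower bound is not yet proved. This is exactly the lower bound in Lemma~\ref{proof:lem:gaussian-maxima}, which the paper's proof of the corollary simply cites. The paper proves it as follows:
\begin{itemize}
\item it writes $m_k=\int_0^\infty[1-\Phi(x)^k-(1-\Phi(x))^k]\,dx$;
\item it uses the Mills-ratio bound to get $k[1-\Phi(\sqrt{\log k}/2)]\ge1$ for $k\ge16$, so the integrand is at least $1-e^{-1}-2^{-k}$ on $[0,\sqrt{\log k}/2]$;
\item it covers $2\le k<16$ by monotonicity, $m_k\ge m_2=1/\sqrt\pi$.
\end{itemize}

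Your preferred tail route also closes, with no large-$d$ asymptotics and no numerics beyond your exact $d=2$ value. Take $u=\sqrt{\log d}$. The Mills-ratio bound gives $d\,\Pp(G\ge u)\ge\sqrt{d\log d}/\bigl((1+\log d)\sqrt{2\pi}\bigr)$. This quantity is increasing in $d$ and exceeds $0.345$ at $d=3$, so $\Pp(\max_kG_k\ge u)\ge1-e^{-0.345}>0.29$ for every $d\ge3$. You also have $\E[(\max_kG_k)^-]\le2^{-(d-1)}\E[G_1^-]<0.1$ for $d\ge3$. Together these give $m_d\ge0.29\sqrt{\log d}-0.1$, which is at least $\frac3{16}\sqrt{\log d}$ whenever $\sqrt{\log d}\ge0.98$, and that holds for all $d\ge3$.

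Drop the other two options you list. A Sudakov-type bound gives only an unspecified absolute constant, not $3/16$. The ``known explicit inequality'' as you wrote it, with an extra unspecified factor $c$, is not a usable statement.
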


For a uniform equilibrium on the supports, the leading coefficient therefore scales as $\sigma\sqrt{d\log d}$, independently of the number of actions outside the supports.
Appendix~\ref{proof:sec:proof-exact-core-asymptotics} gives the proofs.

\subsection{Proof Sketch}
\label{sec:proof-overview}

We sketch the proofs of Theorems~\ref{thm:unique-rate} and \ref{thm:exact-core-asymptotics}.
The complete proofs are given in Appendices~\ref{proof:sec:proof-unique-rate} and \ref{proof:sec:proof-exact-core-asymptotics}.

Uniqueness of the true equilibrium implies that there exists $\rho>0$ such that every matrix $B\in\R^{d\times d}$ satisfying
\[
  \normmax{B-A^\star_{I^\star,J^\star}}\le\rho
\]
has a unique fully mixed equilibrium, as shown in Lemma~\ref{proof:lem:local-geometry-from-uniqueness}.
We use this fact to separate rounds according to the accuracy of the sampled support submatrix.
Let $\mathsf{Near}_s$ denote the event that the above condition holds for $B=\widetilde A_{I^\star,J^\star,s}$, that is,
\[
  \mathsf{Near}_s
  :=\left\{\normmax{\widetilde A_{I^\star,J^\star,s}-A^\star_{I^\star,J^\star}}\le\rho\right\}.
\]
On $\mathsf{Near}_s$, write $(\bar x_s,\bar y_s)$ for the corresponding equilibrium extended by zeros outside $I^\star$ and $J^\star$.
However, in the sampled game $\widetilde A_s$, an action outside $I^\star$ or $J^\star$ may be profitable against $(\bar x_s,\bar y_s)$.
To distinguish whether the learner plays this pair in the full game, we define
\[
  \mathsf{Supp}_s
  :=\mathsf{Near}_s\cap\{(x_s,y_s)=(\bar x_s,\bar y_s)\}.
\]
Splitting online rounds according to these events gives
\begin{align}
   & \E[\mathcal R_T]\le2mn\normmax{A^\star}
  +2\normmax{A^\star}
  \sum_{s=1}^{T-mn}\Pp(\mathsf{Near}_s^c)
  \notag                                                                     \\
   & \quad+\E\!\left[
                 \sum_{s=1}^{T-mn}\1_{\mathsf{Supp}_s}\Gap_{A^\star}(x_s,y_s)
                 \right]
  \notag                                                                     \\
   & \quad+2\normmax{A^\star}\sum_{s=1}^{T-mn}
  \Pp(\mathsf{Near}_s\cap\{(x_s,y_s)\ne(\bar x_s,\bar y_s)\}),
  \label{eq:proof-sketch-decomposition}
\end{align}
where the first term bounds the cumulative duality gap of the initial $mn$ rounds.
The following three steps bound the remaining terms in order.

\paragraph{(1) Rounds with inaccurate sampled matrices.}
If the observation counts $N_{ij}(s)$ grow for all $(i,j)\in I^\star\times J^\star$, Gaussian concentration makes the probability of $\mathsf{Near}_s^c$ tend to zero.
However, Nash-TS has no explicit exploration mechanism after initialization, so we cannot assume that these counts grow.
Instead, we show that, conditional on the observation-noise sequences, posterior draws satisfying
\begin{align*}
  |\widetilde A_{ij,s}-A^\star_{ij}|\le\rho
  ~~\text{for all }(i,j)\in([m]\times J^\star)\cup(I^\star\times[n])
\end{align*}
occur with probability at least a positive constant that does not depend on the observation counts or $T$.
Such draws ensure that the learner plays $(\bar x_s,\bar y_s)$ with $\bar x_{s,i},\bar y_{s,j}\ge p>0$, so each entry in $I^\star\times J^\star$ is observed with probability at least $p^2$.
Combining these observation probability bounds with Gaussian error estimates gives
\begin{equation}
  \sum_{s=1}^{T-mn}\Pp(\mathsf{Near}_s^c)=\mathcal O(1).
  \label{eq:proof-sketch-localization-bound}
\end{equation}

\paragraph{(2) Contribution from the equilibrium supports.}
Using $\bar x_{s,i},\bar y_{s,j}\ge p$ and the observation probabilities $\bar x_{s,i}\bar y_{s,j}$ on $\mathsf{Supp}_s$, we bound the contribution of these rounds to the expected cumulative duality gap as
\begin{equation*}
  \begin{aligned}
     & \E\!\left[\sum_{s=1}^{T-mn}
             \1_{\mathsf{Supp}_s}\Gap_{A^\star}(x_s,y_s)\right]                \\
     & \quad\le\frac2p\E\!\left[\sum_{s=1}^{T-mn}
                            \1_{\{(I_s,J_s)\in I^\star\times J^\star\}}
                            |\widetilde A_{I_sJ_s,s}-A^\star_{I_sJ_s}|\right].
  \end{aligned}
\end{equation*}
Thus, it is enough to bound the errors $|\widetilde A_{ij,s}-A^\star_{ij}|$ for observed entries only within the equilibrium supports, $(i,j)\in I^\star\times J^\star$.
Each time entry $(i,j)$ is observed, its observation count $N_{ij}(s)$ increases by one.
As this count increases, the typical size of $|\widetilde A_{ij,s}-A^\star_{ij}|$ decreases at rate $N_{ij}(s)^{-1/2}$.
Summing this $N^{-1/2}$ rate over successive observation counts and using the above inequality gives
\begin{equation}
  \E\!
  \left[\sum_{s=1}^{T-mn}
    \1_{\mathsf{Supp}_s}\Gap_{A^\star}(x_s,y_s)\right]
  =\mathcal O(\sqrt T).
  \label{eq:proof-sketch-core-bound}
\end{equation}
Note that, for $d=1$, the learner always plays the true equilibrium $(x^\star,y^\star)$ when $\mathsf{Supp}_s$ holds, so $\sum_{s=1}^{T-mn}\1_{\mathsf{Supp}_s}\Gap_{A^\star}(x_s,y_s)=0$.

\paragraph{(3) Contribution from outside the equilibrium supports.}
When $\mathsf{Near}_s$ holds but $(x_s,y_s)\ne(\bar x_s,\bar y_s)$, at least one entry in $\cS_\times:=(I^{\star c}\times J^\star)\cup(I^\star\times J^{\star c})$ has not been observed sufficiently often, except in rounds with rare large posterior-sampling errors.
After $b_T=\mathcal O(\log T)$ observations of an entry, Gaussian error bounds make such large errors rare enough to contribute only $\mathcal O(1)$ rounds in expectation.
Thus, apart from these rare errors, every outside action that appears profitable still has an entry with $N_{ij}(s)<b_T$.
It therefore remains to bound $\E[\sum_{s=1}^{T-mn}\1_{\mathsf D_s(b_T)}]$, where $\mathsf D_s(b_T)$ denotes $\mathsf{Near}_s\cap\{(x_s,y_s)\ne(\bar x_s,\bar y_s)\}$ with this observation-count condition.
The technical difficulty is that $\mathsf D_s(b_T)$ does not guarantee that an entry in $\cS_\times$ with $N_{ij}(s)<b_T$ is observed.
Nevertheless, we bound the expected number of these rounds using the number of observations of such entries, as shown in Lemma~\ref{proof:lem:outside-action-occupation}.
Since each entry can be observed while $N_{ij}(s)<b_T$ at most $b_T-1$ times, we obtain
\[
  \E\!
  \left[
    \sum_{s=1}^{T-mn}\1_{\mathsf D_s(b_T)}
    \right]
  =\mathcal O(b_T)
  =\mathcal O(\log T).
\]
Adding the $\mathcal O(1)$ contribution from rounds with large posterior-sampling errors, we obtain
\begin{equation}
  \sum_{s=1}^{T-mn}\Pp(\mathsf{Near}_s\cap\{(x_s,y_s)\ne(\bar x_s,\bar y_s)\})
  =\mathcal O(\log T).
  \label{eq:proof-sketch-outside-bound}
\end{equation}

\paragraph{(4) Cumulative gap bound and exact leading coefficient.}
Substituting \eqref{eq:proof-sketch-localization-bound}, \eqref{eq:proof-sketch-core-bound}, and \eqref{eq:proof-sketch-outside-bound} into \eqref{eq:proof-sketch-decomposition} proves Theorem~\ref{thm:unique-rate}.
Note that the square-root contribution vanishes for $d=1$ by step~(2), while the logarithmic contribution vanishes for full supports because $\mathsf{Supp}_s=\mathsf{Near}_s$.

Steps (1) and (3) give
\begin{equation*}
  \begin{aligned}
    \E[\mathcal R_T]
    =
    \sum_{s=1}^{T-mn}
    \E\!\left[
          \mathbf 1_{\mathrm{Near}_s}
          \operatorname{Gap}_{A^\star}(\bar x_s,\bar y_s)
          \right]
    +O(\log(eT)).
  \end{aligned}
\end{equation*}
It therefore suffices to identify the leading expected gap of
the restricted equilibrium on $\mathrm{Near}_s$.

To derive the exact leading coefficient, we refine the analysis of the expected gap of $(\bar x_s,\bar y_s)$ on $\mathsf{Near}_s$.
On $\mathsf{Near}_s$, we can express the gap of $(\bar x_s,\bar y_s)$ in terms of the sampled matrix errors.
To characterize these errors, we use the convergence of $(\bar x_s,\bar y_s)$ to $(x^\star,y^\star)$, which gives the limiting observation frequency $x_i^\star y_j^\star$ for each support entry $(i,j)$.
Consequently, the scaled errors $\sqrt{s}(\widetilde A_{I^\star,J^\star,s}-A^\star_{I^\star,J^\star})$ converge in distribution to $\mathsf W^{\rm core}$.
By using the error and equilibrium limits in the gap expression, we obtain
\[
  \sqrt{s}\,\E\!
  \left[
    \1_{\mathsf{Near}_s}
    \Gap_{A^\star}(\bar x_s,\bar y_s)\right]
  \longrightarrow\Lambda_{\rm core}.
\]
Hence, summing the expected gaps and using $\sum_{s=1}^{T-mn} s^{-1/2}=2\sqrt T+o(\sqrt T)$ proves Theorem~\ref{thm:exact-core-asymptotics}:
\[
  \E[\mathcal R_T]=2\Lambda_{\rm core}\sqrt T+o(\sqrt T).
\]

\begin{remark}[Technical difficulty compared to single-player stochastic bandits]
  \normalfont
  In a stochastic bandit, selecting a suboptimal arm provides an observation that helps correct its estimate.
  In contrast, the corresponding bad event $\mathsf D_s(b_T)$ in our two-player setting does not ensure an observation of an insufficiently observed payoff entry.
  To recover this link, we compare the bad-event probability with the probability that only one outside action appears slightly profitable while the other outside actions remain unprofitable.
  On this alternative event, the equilibrium conditions force the learner to play the profitable outside action and actions in the opponent's equilibrium support with probabilities bounded below, ensuring a chance to observe an insufficiently observed entry.
\end{remark}

\section{Cumulative Gap Lower Bounds}
\label{sec:lower-bounds}

In this section, we establish lower bounds on the expected cumulative duality gap.
These bounds match the square-root and logarithmic dependence on $T$ in Section~\ref{sec:upper-bounds}, respectively, in the minimax sense for fully mixed games and on fixed separable instances with a pure equilibrium.

\subsection{Minimax Lower Bound}
We first consider games with a unique fully mixed Nash equilibrium.
Fix a payoff radius $R>0$ and suppose that $m=n\ge4$.
Consider any possibly randomized, nonanticipating policy $\pi$ for the learning protocol in Section~\ref{sec:setting}.
The policy may depend on $(m,R,\sigma,T)$ but not on the unknown payoff matrix.
We write $\E_A^\pi$ for expectation under policy $\pi$ and payoff matrix $A$, taken over the observation noise and all randomization in the learning protocol.

\begin{theorem}
  \label{thm:unique-minimax-lower}
  For every policy $\pi$ described above and every $T\ge1$, there exists a matrix $A_{\pi,T}\in[-R,R]^{m\times m}$ with a unique fully mixed Nash equilibrium such that
  \begin{equation*}
    \E_{A_{\pi,T}}^\pi[\mathcal R_T(A_{\pi,T})]
    \ge
    \frac1{8192}
    \min\left\{
    \frac{RT}{m},
    \sigma\sqrt{mT\log m}
    \right\}.
  \end{equation*}
\end{theorem}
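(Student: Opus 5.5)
The proof will be a Bayesian (averaging) argument. I pick a prior over a family of payoff matrices $A=A_0+W$ in $[-R,R]^{m\times m}$, each with a unique fully mixed equilibrium, and lower bound the Bayes risk $\E_W\E^\pi_{A_0+W}[\mathcal R_T(A_0+W)]$. Some matrix in the support of the prior then does at least as badly as this average, and that matrix is $A_{\pi,T}$. The scale $\sigma\sqrt{mT\log m}$ comes from a heuristic I want to make rigorous, which mirrors Corollary~\ref{cor:core-coefficient-bound}:
\begin{itemize}
\item after $t$ rounds, a typical entry has been observed about $t/m^2$ times, so its residual uncertainty is about $\sigma m/\sqrt t$;
\item for near-uniform $x$, each column payoff $x^\trans W e_j$ is an average of $m$ such errors, with fluctuation about $\sigma\sqrt m/\sqrt t$;
\item the maximum over $m$ columns adds a factor $\sqrt{\log m}$;
\item summing $\sigma\sqrt{m\log m}/\sqrt t$ over $t\le T$ gives $\sigma\sqrt{mT\log m}$.
\end{itemize}
The alternative term $RT/m$ in the minimum covers the regime where the required perturbation size would exceed the admissible payoff range.

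\textbf{Construction.} I take $A_0$ to be a circulant matrix with entries of order $R$ (for instance a rock--paper--scissors-type generalization, which is where $m\ge4$ gives enough room). It should have the uniform profile $(u,u)$ as its unique equilibrium, together with a quantitative stability property: $\Gap_{A_0}(x,y)\ge\kappa\,(\normone{x-u}+\normone{y-u})$ with $\kappa$ of order $R/m$. The perturbation $W$ has i.i.d.\ entries $\pm\varepsilon$, with
\[
\varepsilon\asymp\min\Bigl\{\tfrac{R}{m},\ \sigma m\sqrt{\log m/T}\Bigr\}.
\]
Since $\varepsilon\lesssim\kappa$, every $A_0+W$ still has a unique fully mixed equilibrium, near $u$. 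This follows from a perturbation argument in the spirit of Lemma~\ref{proof:lem:local-geometry-from-uniqueness} applied to the full square matrix.

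\textbf{Gap lower bound.} For $(x,y)$ chosen from the history at round $t$, I use linearity of the two best-response terms:
\[
\Gap_{A_0+W}(x,y)\;\ge\;\max_i e_i^\trans W u-\min_j u^\trans W e_j-O\bigl((\normmax{A_0}+\varepsilon)(\normone{x-u}+\normone{y-u})\bigr).
\]
The error term is paid for by the $\kappa$-stability of $A_0$, after splitting into the cases where $(x,y)$ is near $u$ and far from $u$. It therefore suffices to show that
\[
\E\bigl[\max_i e_i^\trans W u-\min_j u^\trans W e_j\bigm|\text{history}\bigr]\gtrsim\varepsilon\sqrt{\log m/m}
\]
on a constant fraction of rounds. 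The point is that this max-of-row-averages statistic of $W$ stays nearly unpredictable from the data.

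\textbf{Main obstacle.} The hard step is the information argument, because the learner chooses which entries to sample adaptively. My plan is as follows.
\begin{itemize}
\item Only $t$ observations exist, so at most a constant fraction of the $m^2$ entries can have been observed more than $c\sigma^2/\varepsilon^2$ times. With the choice of $\varepsilon$ above, $c\sigma^2/\varepsilon^2\gtrsim T/(m^2\log m)$.
\item Each entry observed fewer times than this retains posterior variance at least $\varepsilon^2/2$. This follows from a two-point KL/Pinsker bound for $\pm\varepsilon$ under Gaussian noise.
\item Conditioned on the history, the row averages $e_i^\trans Wu$ over rows with many under-observed entries are independent. Each has a symmetric residual part with variance of order $\varepsilon^2/m$.
\item An anti-concentration lower bound for the maximum of such sums then yields the $\sqrt{\log m}$ factor.
\end{itemize}
This needs a Gaussian-comparison bound for sums of Rademacher-like variables, for example via Berry--Esseen or a direct $\E\max$ lower bound. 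A further subtlety is that the learner can concentrate its observations on a few rows. I handle this by noting that at least half the rows or columns remain under-observed. Summing the per-round bound over $t\le T$ and substituting $\varepsilon$ gives the stated $\min\{RT/m,\sigma\sqrt{mT\log m}\}/8192$; the absolute constant is absorbed in the choices above.
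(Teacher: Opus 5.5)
There is a genuine gap in your \emph{gap lower bound} step. The statistic $\max_i e_i^\trans Wu-\min_j u^\trans We_j$ does not involve the learner's $(x,y)$ at all. It is of order $\varepsilon\sqrt{\log m/m}$ even when $W$ is known exactly and the learner plays the equilibrium of $A_0+W$ with zero gap. Showing that its conditional expectation is large therefore says nothing about the learner.

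Your inequality is consistent with this only because the error term is then large. For $A_0=(R/2)I_m$, the perturbed equilibrium satisfies $y^A-u\approx-(2/R)(Wu-\bar r\1)$ with $\bar r=u^\trans Wu$, so $\normone{y^A-u}\asymp\varepsilon\sqrt m/R$. The linearization is nonvacuous only for $\ell_1$-distances $D\lesssim\varepsilon\sqrt{\log m/m}/R$. The stability route, e.g.\ $\Gap_{A_0+W}\ge\Gap_{A_0}-2\varepsilon\ge\kappa D-2\varepsilon$ with $\kappa\asymp R/m$, needs $D\gtrsim m\varepsilon/R$. The window in between contains the perturbed equilibrium and any plug-in estimate of it, and it is not covered, so the near/far split cannot close.

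What is missing is to keep the learner's strategy inside the maximum. You need to lower-bound $\E[\max_i\{(R/2)(y_{t,i}-1/m)+s_i\}\mid\text{history}]$, where $s_i$ is the unknown centred shift of row $i$'s payoff, using that the learner-controlled shifts sum to zero. This is exactly what Lemma~\ref{proof:lem:main-shifted-gaussian-maximum} does. The paper makes this exact by using row offsets, $A^\mu=(R/2)I_m+\mu\1^\trans$. Then $e_i^\trans A^\mu y=(R/2)y_i+\mu_i$ has no $y$-dependent cross term, and the gap is bounded below by the row player's term $\ell_\mu(y)$ alone.

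Even with that repaired, your construction cannot give the stated bound, and your final substitution step is wrong. Because $(u,u)$ equalizes $A_0$, the policy that always plays $(u,u)$ has Bayes risk $T\,\E[\max_i e_i^\trans Wu-\min_j u^\trans We_j]=O(T\varepsilon\sqrt{\log m/m})$ under your prior. Averaging over that prior can therefore certify at most this amount. With your $\varepsilon\lesssim R/m$, which is needed for every $A_0+W$ to stay fully mixed, that is $O(RT\sqrt{\log m}/m^{3/2})$. This is a factor $\sqrt{m/\log m}$ below $RT/m$. Likewise, the $\sigma\sqrt{mT\log m}$ term is reached only for $T\gtrsim\sigma^2m^4/R^2$, not already at $T\gtrsim\sigma^2m^3\log m/R^2$. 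Entrywise independent perturbations move each row payoff by only $\varepsilon/\sqrt m$, whereas row offsets spend the whole admissible amplitude $R/m$ on it.

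The information step is also inconsistent. With $\varepsilon\asymp\sigma m\sqrt{\log m/T}$, all $m^2$ entries can exceed the threshold $c\sigma^2/\varepsilon^2\asymp T/(m^2\log m)$ using only $T/\log m$ samples. This is why your choice would overshoot to $\sigma\sqrt{mT}\log m$; the consistent scale is $\varepsilon\asymp\sigma m/\sqrt T$. Moreover, under a $\pm\varepsilon$ prior the posterior variance depends on the observed values. An adaptive sampler can therefore defeat a pathwise claim that under-observed entries keep variance at least $\varepsilon^2/2$. The paper's Gaussian prior avoids this: the posterior variance $(\tau^{-2}+N_i(t-1)/\sigma^2)^{-1}$ is a function of counts alone. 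It then gives an exact Gaussian-maximum bound with explicit constants for all $m\ge4$, which a Berry--Esseen comparison would not deliver.
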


For fixed $(m,R,\sigma)$, Theorem~\ref{thm:unique-minimax-lower} shows that an $\Omega(\sqrt T)$ cumulative duality gap is unavoidable in the worst case as $T\to\infty$.
This matches the dependence on $T$ in Theorem~\ref{thm:unique-rate} for fully mixed equilibria.
In the proof, we relate the duality gap to the expected maximum of Gaussian row perturbations.
The full proof is given in Appendix~\ref{proof:sec:unique-minimax-lower}.

\subsection{Instance-Dependent Lower Bound}
We next consider a fixed separable game with a unique pure Nash equilibrium.
Fix a separable matrix $A^\star$ with $A^\star_{ij}=u_i+v_j$, where $u_1>u_i$ for every $i>1$ and $v_1<v_j$ for every $j>1$.
Then, $(e_1,e_1)$ is the unique Nash equilibrium.

Consider any online policy $\pi$ after the one-observation-per-cell initialization of Nash-TS (Algorithm~\ref{alg:ts-ne}).
We call a policy \emph{uniformly efficient} on a family of games if, for each fixed payoff matrix $A$ in the family, its expected cumulative duality gap satisfies $\E_A^\pi[\mathcal R_T(A)]=o(T^\alpha)$ as $T\to\infty$ for every $\alpha>0$.
Suppose that the policy is uniformly efficient on $A^\star$ and, for every sufficiently small $\varepsilon>0$, on each separable alternative obtained by replacing $u_i$ with $u_1+\varepsilon$ for one $i>1$ or replacing $v_j$ with $v_1-\varepsilon$ for one $j>1$.

\begin{theorem}
  \label{thm:pure-separable-lower}
  Every policy $\pi$ described above satisfies
  \begin{equation*}
    \begin{aligned}
      \!\!\liminf_{T\to\infty}
      \frac{\E_{A^\star}^\pi[\mathcal R_T(A^\star)]}{\log T}
      \!\ge\!
      2\sigma^2\left(\!
      \sum_{i=2}^m\frac1{u_1-u_i}
      \!+\!\sum_{j=2}^n\frac1{v_j-v_1}
      \!\!\right)\!\!.
    \end{aligned}
  \end{equation*}
\end{theorem}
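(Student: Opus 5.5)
We use the standard change-of-measure argument of Lai–Robbins type, adapted to the duality gap in separable games.

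\emph{Step 1: lower-bound the gap by row and column deviations.}
For $A^\star_{ij}=u_i+v_j$ we have $x^\trans A^\star y=x^\trans u+y^\trans v$. Hence
\[
\Gap_{A^\star}(x,y)=\max_i u_i-x^\trans u+y^\trans v-\min_j v_j
=\sum_{i\ge2}x_i(u_1-u_i)+\sum_{j\ge2}y_j(v_j-v_1).
\]
Let $N_i^{\rm row}(T):=\sum_t \1_{\{I_t=i\}}$ and $N_j^{\rm col}(T):=\sum_t\1_{\{J_t=j\}}$. Since $\E[x_{t,i}]$ summed over $t$ equals $\E[N_i^{\rm row}(T)]$, we get
\[
\E[\mathcal R_T]=\sum_{i\ge2}(u_1-u_i)\E[N_i^{\rm row}(T)]+\sum_{j\ge2}(v_j-v_1)\E[N_j^{\rm col}(T)].
\]
It therefore suffices to show, for each $i\ge2$,
\[
\liminf_{T\to\infty}\frac{\E[N_i^{\rm row}(T)]}{\log T}\ge\frac{2\sigma^2}{(u_1-u_i)^2},
\]
and the symmetric statement for columns.

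\emph{Step 2: change of measure.}
Fix $i\ge2$ and $\varepsilon>0$. Let $A'$ replace $u_i$ by $u_1+\varepsilon$, so that row $i$ of $A'$ shifts by $\Delta_i+\varepsilon$ with $\Delta_i:=u_1-u_i$. The laws of the observations differ only on rounds with $I_t=i$. Each such round contributes a Gaussian KL divergence of $(\Delta_i+\varepsilon)^2/(2\sigma^2)$, and the initialization rounds add only constants. By the divergence decomposition and the data-processing inequality (Bretagnolle–Huber, or the $\mathrm{kl}$ form of Garivier–Ménard–Stoltz) applied to the event $E=\{N_i^{\rm row}(T)\ge T/2\}$,
\[
\E_{A^\star}[N_i^{\rm row}(T)]\frac{(\Delta_i+\varepsilon)^2}{2\sigma^2}+O(1)\ge \log\frac{1}{\Pp_{A^\star}(E)+\Pp_{A'}(E^c)}-\log 2.
\]

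\emph{Step 3: use uniform efficiency in both instances.}
Under $A^\star$, Step 1 gives $\E[N_i^{\rm row}]\le \E[\mathcal R_T]/\Delta_i=o(T^\alpha)$, so Markov's inequality yields $\Pp_{A^\star}(E)=o(T^{\alpha-1})$.

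Under $A'$, the unique best row is $i$ (with margin at least $\varepsilon$), and $A'$ is still separable. The gap of $A'$ is at least $\varepsilon\sum_t(1-x_{t,i})$, so $\E_{A'}[T-N_i^{\rm row}]=o(T^\alpha)/\varepsilon$. Markov's inequality then gives $\Pp_{A'}(E^c)=o(T^{\alpha-1})$.

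Hence the right-hand side of Step 2 is at least $(1-\alpha)\log T-o(\log T)$. Dividing by $\log T$ and letting $\alpha,\varepsilon\to0$ yields the per-row bound.

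\emph{Step 4: columns and summation.}
The columns are handled symmetrically via $v_j\mapsto v_1-\varepsilon$. Multiplying each count bound by its gap and summing over $i$ and $j$ gives the constant $2\sigma^2\bigl(\sum_{i\ge2}1/\Delta_i+\sum_{j\ge2}1/(v_j-v_1)\bigr)$.

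\emph{Main obstacle.}
The delicate step is the KL accounting in Step 2. Observations are per cell $(I_t,J_t)$, not per row, so the divergence must be written as $\sum_j \E[N_{ij}(T)]\,(\Delta_i+\varepsilon)^2/(2\sigma^2)$. This works because every entry of row $i$ shifts by the same amount, and it is exactly where separability is used; the sum then collapses to $N_i^{\rm row}$. We also need the initialization observations to contribute only an additive constant, and the policy is allowed to depend on them. This is handled by including them in the divergence decomposition: they contribute $n(\Delta_i+\varepsilon)^2/(2\sigma^2)=O(1)$.
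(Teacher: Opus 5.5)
Your proposal is correct and follows essentially the same route as the paper: the separable gap identity reduces $\E[\mathcal R_T]$ to gap-weighted action counts, and a Gaussian change of measure to the alternative that lifts row $i$ (or lowers column $j$) by $\Delta_i+\varepsilon$, combined with uniform efficiency under both models and Markov's inequality, gives $\liminf_{T\to\infty}\E[N_i^{\rm row}(T)]/\log T\ge 2\sigma^2/\Delta_i^2$. The only cosmetic difference is that you apply Bretagnolle--Huber to the event $\{N_i^{\rm row}(T)\ge T/2\}$, whereas the paper uses the $\mathrm{kl}$ bound of Kaufmann et al.\ (Lemma~1) in the form $\mathrm{kl}(p_T,q_T)\ge p_T\log(1/q_T)-\log 2$; both give $(1-\alpha)\log T-O(1)$ on the right-hand side.
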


Theorem~\ref{thm:unique-rate} ensures that Nash-TS satisfies the efficiency condition when $m,n\ge2$, since the base and alternative games have unique pure Nash equilibria.
Thus, Theorem~\ref{thm:pure-separable-lower} matches its $O(\log T)$ upper bound in the dependence on $T$ over this separable family.

To prove Theorem~\ref{thm:pure-separable-lower}, we construct alternative games in which one suboptimal row or column becomes uniquely optimal.
Uniform efficiency then requires logarithmically many plays of every suboptimal action by a Gaussian change-of-measure argument \citep{lai1985asymptotically}.
The full proof is given in Appendix~\ref{proof:sec:pure-separable-lower-proof}.

For separable games with a pure Nash equilibrium, \citet{ito2025instance} prove a related lower bound on the sum of the players' external regrets under Bernoulli feedback.
Our result gives an explicit asymptotic lower bound for cumulative duality gap under Gaussian feedback.

\section{Experiments}
\label{sec:experiments}

In this section, we empirically evaluate Nash-TS on games with a unique pure equilibrium, a unique fully mixed equilibrium, a unique mixed equilibrium on the boundary, and multiple equilibria.
Although our analysis assumes a unique equilibrium, we include the game with multiple equilibria to examine the behavior of Nash-TS beyond this assumption.
We also use random payoff games with $25$ and $50$ actions per player, where each payoff entry is drawn independently and uniformly from $[-1,1]$.

\begin{figure*}[t]
  \centering
  \includegraphics[width=\textwidth]{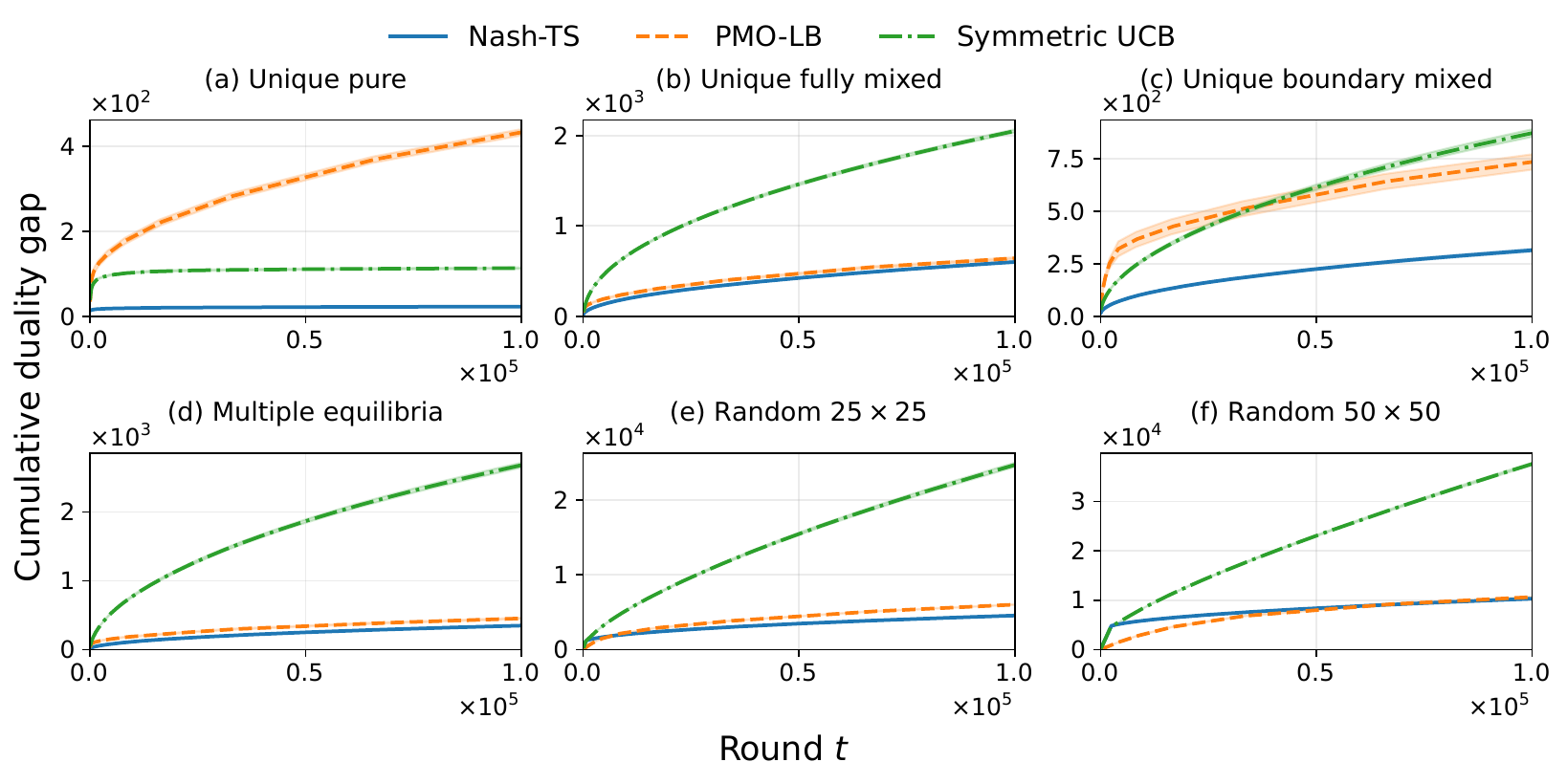}
  \caption{Cumulative duality gap under Gaussian bandit feedback with $\sigma=1/5$.
    Each curve is the mean over $100$ trials, and the shaded areas represent standard errors.}
  \label{fig:regret}
\end{figure*}

\begin{figure*}[t]
  \centering
  \includegraphics[width=\textwidth]{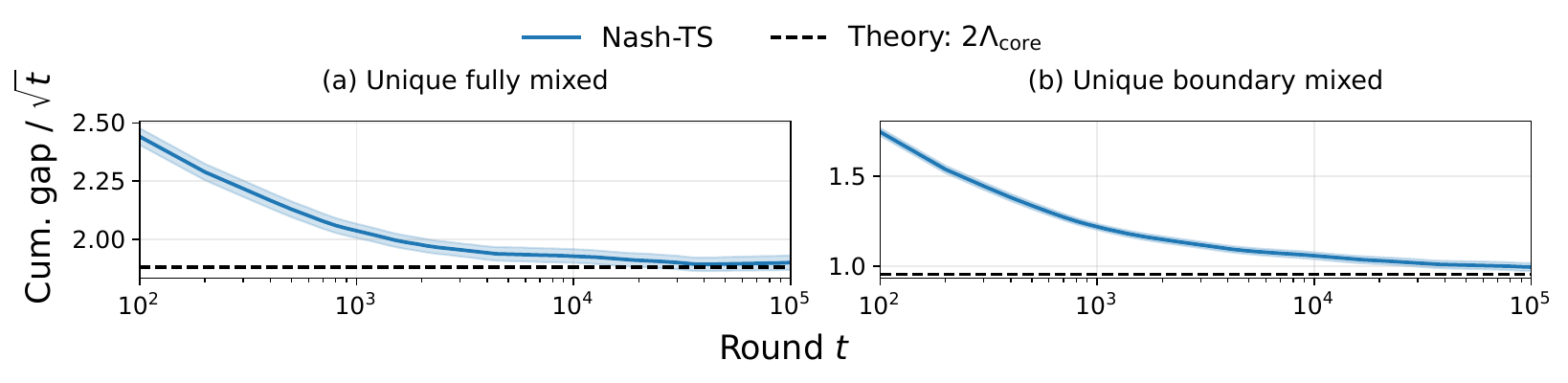}
  \caption{Cumulative duality gap of Nash-TS divided by $\sqrt t$.
    Each curve is the mean over $100$ trials, and the shaded areas represent standard errors.
    Dashed lines show $2\Lambda_{\rm core}$ from Theorem~\ref{thm:exact-core-asymptotics}.}
  \label{fig:scaled-regret}
\end{figure*}

We compare Nash-TS with PMO-LB \citep{hait2026near} and the matrix-game UCB of \citet{odonoghue2021matrix} applied to both players, which we call symmetric UCB.
\footnote{\citet{yang2025incentivize} also analyze cumulative duality gap under bandit feedback, but the strategy pairs whose gaps are evaluated are not those actually played.
  Their regret is therefore not directly comparable to cumulative duality gap under our definition.
  Appendix~\ref{app:experimental-details} reports an additional comparison in which their cumulative gap is computed over the strategy pairs evaluated in their analysis.}
All runs use $T=10^5$ payoff observations with Gaussian noise of standard deviation $\sigma=1/5$.
We evaluate the cumulative duality gap in \eqref{eq:regret}, including initialization.
For each game, we average the results over $100$ trials.
The payoff matrices, implementation details, and hyperparameter tuning for PMO-LB are provided in Appendix~\ref{app:experimental-details}.

Figure~\ref{fig:regret} illustrates the cumulative duality gap for the three methods.
Nash-TS attains the smallest final mean gap on all six benchmarks.
On the games with unique pure and boundary mixed equilibria, it performs substantially better than PMO-LB.
In contrast, the two methods perform similarly on the game with a unique fully mixed equilibrium.
The results on the game with multiple equilibria suggest that Nash-TS can also perform well when the uniqueness assumption does not hold.
We also observe favorable performance in the larger random games with $25$ and $50$ actions per player.

As an additional experiment, Figure~\ref{fig:scaled-regret} shows the cumulative gap of Nash-TS divided by $\sqrt t$ on the two mixed-equilibrium games.
Both curves approach the coefficient predicted by Theorem~\ref{thm:exact-core-asymptotics}.
These results suggest that the asymptotic expression in Theorem~\ref{thm:exact-core-asymptotics} provides a useful approximation to the expected cumulative duality gap even at finite horizons.

\section{Conclusion}
\label{sec:conclusion}

This paper analyzes Nash-TS for equilibrium learning in unknown two-player zero-sum matrix games under Gaussian bandit feedback.
For every fixed game with a unique Nash equilibrium, we establish an $O(\sqrt T)$ bound on the expected cumulative duality gap, which improves to $O(\log T)$ when the equilibrium is pure.
We further identify the exact coefficient of the $\sqrt T$ term, which depends only on the equilibrium strategies and the noise level.
Our lower bounds show that the dependence on $T$ is optimal in the minimax sense for fully mixed games and on fixed separable instances with a pure equilibrium.
Experiments support the predicted rates and coefficient.
Extending the analysis to games with multiple equilibria and narrowing the gap between the finite-horizon upper and lower bounds are important directions for future work.

\bibliographystyle{icml2026}
\bibliography{references}

\clearpage
\onecolumn
\appendix
\section{Equilibrium geometry under uniqueness}
\label{proof:sec:equilibrium-geometry}
Uniqueness ensures that nearby support submatrices have fully mixed equilibria and strictly negative outside margins.
We first record these properties and their consequences for the duality gap.
We then give conditions for extending these equilibria to the full game.

\begin{lemma}
  \label{proof:lem:local-geometry-from-uniqueness}
  Suppose that $A^\star$ has a unique Nash equilibrium $(x^\star,y^\star)$.
  Then, $|I^\star|=|J^\star|$.
  Moreover, there exist constants $\rho,p>0$ and $\Delta_{\rm row},\Delta_{\rm col}\in(0,+\infty]$ with the following properties.
  Every $C\in\R^{d\times d}$ with $\normmax{C-A^\star_{I^\star,J^\star}}\le\rho$ has a unique Nash equilibrium $(x^C,y^C)$, this equilibrium is fully mixed, and the map $C\mapsto(x^C,y^C)$ is continuous on the closed $\rho$-ball around $A^\star_{I^\star,J^\star}$.
  We index the coordinates of $x^C$ by $I^\star$ and those of $y^C$ by $J^\star$, and we write $v^C:=(x^C)^\trans Cy^C$ for the value of $C$.
  For every such $C$,
  \[
    \min\Bigl\{\min_{i'\in I^\star}x_{i'}^C,\ \min_{j'\in J^\star}y_{j'}^C\Bigr\}\ge p.
  \]
  The outside margins satisfy
  \begin{align*}
    v^C-\sum_{j\in J^\star}
    A^\star_{ij}y_j^C
     & \ge\Delta_{\rm row} &  & (i\notin I^\star), \\
    \sum_{i\in I^\star}x_i^CA^\star_{ij}-v^C
     & \ge\Delta_{\rm col} &  & (j\notin J^\star).
  \end{align*}
  Here, $\Delta_{\rm row}=+\infty$ if $I^\star=[m]$, $\Delta_{\rm col}=+\infty$ if $J^\star=[n]$, and
  \begin{equation*}
    \rho\le\frac{\Delta_{\rm row}}2,
    \qquad
    \rho\le\frac{\Delta_{\rm col}}2.
  \end{equation*}
\end{lemma}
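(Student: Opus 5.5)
Start from the classical structure theory of unique equilibria in zero-sum games (Bohnenblust–Karlin–Shapley / Kaplansky). The key fact is strict complementarity: if the equilibrium is unique, every action outside $I^\star$ is strictly suboptimal against $y^\star$, that is, $v^\star-\sum_{j\in J^\star}A^\star_{ij}y^\star_j>0$ for $i\notin I^\star$, and symmetrically for columns. To prove it, suppose some row $i\notin I^\star$ is a best response to $y^\star$. By the LP/Farkas characterization (Goldman–Tucker strict complementarity for the game LP), there is an optimal $x$ for the row player with $x_i>0$. This contradicts uniqueness of the row player's optimal strategy, since uniqueness of the NE means both optimal strategy sets are singletons. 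Next, the row player's optimal set is $\{x^\star\}$, so the linear system $x^\trans A^\star_{I^\star,J^\star}=v^\star\1^\trans$, $\1^\trans x=1$, with $x$ supported on $I^\star$, has the unique solution $x^\star_{I^\star}$. Otherwise a small perturbation along the null direction would stay feasible: it keeps positivity on $I^\star$, keeps the equalities on $J^\star$, and keeps the strict inequalities on columns outside $J^\star$, giving another optimal $x$. Hence the bordered matrix $M=\begin{pmatrix}A^\star_{I^\star,J^\star}&-\1\\ \1^\trans&0\end{pmatrix}$ has trivial left kernel, and symmetrically trivial right kernel. The first condition forces the $d_I+1$ rows of $M$ to be independent and the second forces its $d_J+1$ columns to be independent. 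Since $M$ is $(d_I+1)\times(d_J+1)$, both together give $d_I=d_J=d$ and $M$ nonsingular.

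For the local statements, define for $C$ near $A^\star_{I^\star,J^\star}$ the candidate $(x^C,y^C,v^C)$ as the solution of the linear systems $M(C)(y;v)=(0;1)$ and $(x;v)^\trans M(C)=(0;1)^\trans$. These are well defined and continuous, indeed rational, in $C$ wherever $\det M(C)\neq0$, which holds on a neighborhood by continuity of the determinant. At $C=A^\star_{I^\star,J^\star}$ they equal $(x^\star,y^\star,v^\star)$, which are strictly positive. Choose $\rho$ small enough that on the closed ball $\det M(C)\ne0$ and all coordinates of $x^C$ and $y^C$ stay at least $p:=\tfrac12\min_{I^\star\cup J^\star}\{x^\star_i,y^\star_j\}$. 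Then $(x^C,y^C)$ is a fully mixed profile equalizing all payoffs, so it is a Nash equilibrium of $C$.

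Uniqueness for $C$ is the main obstacle. Let $(x',y')$ be any equilibrium of $C$. Because $x^C$ is fully mixed and optimal, every column is a best response for the column player, so $(x')^\trans C=v^C\1^\trans$. Likewise $Cy'=v^C\1$. By nonsingularity of $M(C)$ this gives $x'=x^C$ and $y'=y^C$. Continuity of $C\mapsto(x^C,y^C)$ has already been established.

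For the outside margins, the map $C\mapsto v^C-\sum_{j}A^\star_{ij}y^C_j$ is continuous and equals the positive strict-complementarity gap $\delta_i$ at the center. Shrinking $\rho$ further ensures it stays at least $\delta_i/2$ on the ball. Set $\Delta_{\rm row}:=\min_{i\notin I^\star}\delta_i/2$, or $+\infty$ if $I^\star=[m]$, and define $\Delta_{\rm col}$ analogously. Finally replace $\rho$ by $\min\{\rho,\Delta_{\rm row}/2,\Delta_{\rm col}/2\}$. Shrinking $\rho$ preserves every earlier property, and it secures the final inequalities.
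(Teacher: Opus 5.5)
Your argument is correct up to three small slips noted below, but its structural half takes a different route from the paper. The paper imports two facts from \citet[Lemma~2]{wu2024minimally}: strict complementarity of the outside actions, and invertibility of the bordered matrix $\begin{pmatrix}A^\star_{I^\star,J^\star}&-\1\\ \1^\trans&0\end{pmatrix}$. It then reads off $|I^\star|=|J^\star|$ from the fact that an invertible matrix is square. You instead derive strict complementarity from Goldman--Tucker, using that uniqueness of the equilibrium makes both optimal-strategy sets singletons. You obtain trivial left and right kernels by a null-direction perturbation argument, and get squareness from ``full row rank and full column rank'' rather than assuming it. This makes your argument self-contained and shows exactly where uniqueness of each player's optimal strategy enters; the paper's citation is shorter. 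From the linear systems onward, your argument matches the paper's. This covers continuity of the inverse, the fully mixed equalizer being an equilibrium, uniqueness via complementary slackness, and the margins via continuity followed by shrinking $\rho$. The only difference there is that you define $p$ and $\Delta_{\rm row},\Delta_{\rm col}$ from values at the center rather than as minima over the compact ball.

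Three points to tighten. First, uniqueness of the solution of $x^\trans A^\star_{I^\star,J^\star}=v^\star\1^\trans$, $\1^\trans x=1$ with $v^\star$ fixed only excludes left-kernel vectors $(u,s)$ of the bordered matrix with $s=0$. To exclude $s\ne0$, note that $-s=u^\trans A^\star_{I^\star,J^\star}y^\star_{J^\star}=v^\star\1^\trans u=0$; the right kernel is handled symmetrically using $x^\star$. Second, your system $(x;v)^\trans M(C)=(0;1)^\trans$ forces $\1^\trans x=-1$. Use the transposed system as the paper does, or write $(x;-v)^\trans M(C)=(0;-1)^\trans$. Third, in the uniqueness step the roles are swapped. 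It is $y^C>0$, together with $(x')^\trans C\ge v^C\1^\trans$ and $(x')^\trans Cy^C=v^C$, that forces $(x')^\trans C=v^C\1^\trans$. Likewise $x^C>0$ is what forces $Cy'=v^C\1$.
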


\begin{lemma}
  \label{proof:lem:core-gap-bound}
  With the constants of Lemma~\ref{proof:lem:local-geometry-from-uniqueness}, let $C\in\R^{d\times d}$ satisfy $\normmax{C-A^\star_{I^\star,J^\star}}\le\rho$.
  Let $(\bar x^C,\bar y^C)\in\simplex_m\times\simplex_n$ denote its unique equilibrium $(x^C,y^C)$ extended by zeros outside $I^\star$ and $J^\star$.
  Then,
  \[
    \Gap_{A^\star}(\bar x^C,\bar y^C)
    =\max_{j\in J^\star}(x^C)^\trans(C-A^\star_{I^\star,J^\star})e_j
    -\min_{i\in I^\star}e_i^\trans(C-A^\star_{I^\star,J^\star})y^C.
  \]
  Consequently, there is $\chi_{\rm core}\in[0,2]$ such that every such $C$ satisfies
  \begin{equation*}
    \Gap_{A^\star}(\bar x^C,\bar y^C)
    \le
    \chi_{\rm core}\normmax{C-A^\star_{I^\star,J^\star}}.
  \end{equation*}
  Moreover, the probabilities $x_i^Cy_j^C$ give
  \[
    \Gap_{A^\star}(\bar x^C,\bar y^C)
    \le\min\{\chi_{\rm core}/p^2,2/p\}
    \sum_{(i,j)\in I^\star\times J^\star}x_i^Cy_j^C|C_{ij}-A^\star_{ij}|.
  \]
  Moreover, $\chi_{\rm core}=0$ when $d=1$.
\end{lemma}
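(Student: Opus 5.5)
We write $A^\star_\star:=A^\star_{I^\star,J^\star}$ and $E:=C-A^\star_\star$, and prove the three claims of Lemma~\ref{proof:lem:core-gap-bound} in turn: the exact gap formula, the linear bound, and the weighted bound.

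\textbf{Step 1: exact gap formula.} By Lemma~\ref{proof:lem:local-geometry-from-uniqueness}, $C$ has a fully mixed equilibrium $(x^C,y^C)$ with value $v^C$. Full support and the equilibrium conditions give the indifference identities $e_i^\trans Cy^C=v^C$ for every $i\in I^\star$ and $(x^C)^\trans Ce_j=v^C$ for every $j\in J^\star$. For a row $i\in I^\star$, this yields
\[
e_i^\trans A^\star\bar y^C=e_i^\trans Cy^C-e_i^\trans Ey^C=v^C-e_i^\trans Ey^C .
\]
For a row $i\notin I^\star$, the outside margin gives $e_i^\trans A^\star\bar y^C\le v^C-\Delta_{\rm row}$. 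Since $|e_i^\trans Ey^C|\le\normmax{E}\le\rho\le\Delta_{\rm row}/2$, every inside row satisfies $v^C-e_i^\trans Ey^C\ge v^C-\rho>v^C-\Delta_{\rm row}$. Hence the maximum over $[m]$ is attained inside $I^\star$, and
\[
\max_{i\in[m]} e_i^\trans A^\star\bar y^C=v^C-\min_{i\in I^\star}e_i^\trans Ey^C .
\]
The symmetric argument for columns uses $\Delta_{\rm col}$ and gives $\min_{j\in[n]}(\bar x^C)^\trans A^\star e_j=v^C-\max_{j\in J^\star}(x^C)^\trans Ee_j$. Subtracting the two expressions cancels $v^C$ and produces the stated formula.

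\textbf{Step 2: linear bound.} Each term in the formula is at most $\normmax{E}$ in absolute value, because $x^C$ and $y^C$ are probability vectors. So the gap is at most $2\normmax{E}$, and $\chi_{\rm core}=2$ always works. When $d=1$, we have $x^C=y^C=1$, and the formula becomes $E_{11}-E_{11}=0$, so we may take $\chi_{\rm core}=0$. The lemma only asserts the existence of some $\chi_{\rm core}\in[0,2]$, so we define it as $2$ if $d\ge2$ and $0$ if $d=1$.

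\textbf{Step 3: weighted bound.} Set $S:=\sum_{(i,j)\in I^\star\times J^\star}x_i^Cy_j^C|E_{ij}|$. We prove two separate bounds on the gap and then take the minimum.

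First, every weight satisfies $x_i^Cy_j^C\ge p^2$, so $S\ge p^2\normmax{E}$. Combined with Step 2, this gives $\Gap\le\chi_{\rm core}\normmax{E}\le(\chi_{\rm core}/p^2)S$.

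Second, take $j^\ast$ attaining the maximum in the first term. Then
\[
(x^C)^\trans Ee_{j^\ast}\le\sum_i x_i^C|E_{ij^\ast}|\le\frac1p\sum_i x_i^Cy_{j^\ast}^C|E_{ij^\ast}|\le\frac{S}{p},
\]
using $y_{j^\ast}^C\ge p$. Symmetrically, $-\min_i e_i^\trans Ey^C\le S/p$, using $x_i^C\ge p$. Adding the two bounds gives $\Gap\le(2/p)S$, and therefore $\Gap\le\min\{\chi_{\rm core}/p^2,2/p\}\,S$.

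The only delicate point is in Step 1: showing that outside rows and columns never attain the best response. This is exactly where the assumption $\rho\le\Delta/2$ from Lemma~\ref{proof:lem:local-geometry-from-uniqueness} is needed.
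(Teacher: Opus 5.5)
Your proposal is correct and follows essentially the same route as the paper. Both arguments use the core indifference identities, localize the best responses inside the supports via the outside margins and $\rho\le\Delta/2$, and bound the resulting formula by $2\normmax{E}$ (which collapses to $0$ when $d=1$). Both then obtain the weighted bound from $x_i^Cy_j^C\ge p^2$ and from $x_i^C,y_j^C\ge p$ separately.
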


\begin{lemma}
  \label{proof:lem:gradient-margins}
  With the constants of Lemma~\ref{proof:lem:local-geometry-from-uniqueness}, there are $\kappa_x,\kappa_y>0$ such that every $C$ with $\normmax{C-A^\star_{I^\star,J^\star}}\le\rho$ and every $u\in\R^d$ with $\1^\trans u=0$ satisfy
  \begin{equation*}
    \max_{i'\in I^\star}(Cu)_{i'}\ge\kappa_y\normone{u},\qquad
    \max_{j'\in J^\star}(-u^\trans C)_{j'}\ge\kappa_x\normone{u}.
  \end{equation*}
\end{lemma}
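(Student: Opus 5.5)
The plan is a compactness argument. First I show that the quantity $\max_{i'\in I^\star}(Cu)_{i'}$ is strictly positive for every admissible $C$ and every nonzero zero-sum $u$. Then I use continuity and homogeneity in $u$ to get a uniform constant $\kappa_y$. The bound for $\kappa_x$ follows by the symmetric argument applied to $C^\trans$ and the row player.

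\emph{Step 1 (equalization).} Fix $C$ with $\normmax{C-A^\star_{I^\star,J^\star}}\le\rho$. By Lemma~\ref{proof:lem:local-geometry-from-uniqueness}, $C$ has a unique Nash equilibrium $(x^C,y^C)$, and it is fully mixed. Standard indifference then gives $Cy^C=v^C\1$ and $(x^C)^\trans C=v^C\1^\trans$, since every pure strategy in the support of a best response must attain the value. For $u\in\R^d$ with $\1^\trans u=0$ this yields
\[
  (x^C)^\trans Cu=v^C\1^\trans u=0 .
\]
Because $x^C$ is a probability vector, $\max_{i'}(Cu)_{i'}\ge (x^C)^\trans Cu=0$.

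\emph{Step 2 (strict positivity).} Suppose $\max_{i'}(Cu)_{i'}=0$. Then every entry of $Cu$ is nonpositive, and its $x^C$-weighted average vanishes. Since $x^C_{i'}\ge p>0$ for every $i'$, this forces $Cu=0$. In that case $y^C+tu$ lies in $\simplex_d$ for all $|t|\le p/\normone{u}$, because $y^C_{j'}\ge p$ and $\1^\trans u=0$. It also satisfies $C(y^C+tu)=v^C\1$. Hence $(x^C,y^C+tu)$ is again a Nash equilibrium of $C$: the row player remains indifferent, and the column player's best-response condition holds because $(x^C)^\trans C=v^C\1^\trans$. Uniqueness then forces $u=0$. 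So for $u\neq0$ we have $\max_{i'}(Cu)_{i'}>0$. The same argument with $u^\trans C$ and $y^C$ shows that $\max_{j'}(-u^\trans C)_{j'}>0$.

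\emph{Step 3 (uniform constant).} Define $f(C,u):=\max_{i'}(Cu)_{i'}$ on the compact set
\[
  K:=\{C:\normmax{C-A^\star_{I^\star,J^\star}}\le\rho\}\times\{u:\1^\trans u=0,\ \normone{u}=1\}.
\]
The function $f$ is continuous, being a maximum of bilinear functions, and it is strictly positive on $K$ by Step~2. Hence its minimum $\kappa_y$ over $K$ is attained and positive. Because $f(C,\cdot)$ is positively homogeneous of degree one, we get $f(C,u)\ge\kappa_y\normone{u}$ for all zero-sum $u$; the case $u=0$ is trivial. The constant $\kappa_x$ is obtained in the same way.

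The only delicate point is Step~2, namely turning uniqueness into injectivity of $C$ on the zero-sum hyperplane. It needs both the lower bound $p$ on the equilibrium probabilities, which keeps the perturbed strategy inside the simplex, and the fact that uniqueness holds for every $C$ in the closed ball, not just at $A^\star_{I^\star,J^\star}$. Both are supplied by Lemma~\ref{proof:lem:local-geometry-from-uniqueness}.
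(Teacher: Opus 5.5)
Your proposal is correct and follows the paper's proof essentially step for step. Both arguments use equalization from the fully mixed equilibrium, then the perturbation $y^C+\varepsilon u$ (resp.\ $x^C+\varepsilon u$) to contradict uniqueness when $Cu=0$ (resp.\ $u^\trans C=0$), then compactness of the $\rho$-ball times the unit zero-sum $\ell_1$-sphere, and finally homogeneity. The one gap is the degenerate case $d=1$, which the paper needs for pure equilibria: there the unit sphere is empty, so no minimum is attained and your $\kappa_y$ is undefined. Since the only zero-sum $u$ is then $0$, you can simply set $\kappa_x=\kappa_y=1$, as the paper does.
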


The next statement concerns an arbitrary restricted game.
It distinguishes equilibrium validity from uniqueness when an outside margin is zero.

\begin{lemma}
  \label{proof:lem:margin-embedding}
  Let $B\in\R^{m\times n}$, let $\varnothing\ne I\subseteq[m]$ and $\varnothing\ne J\subseteq[n]$, and suppose that $C=B_{I,J}$ has a unique Nash equilibrium $(x^C,y^C)$ with value $v^C$.
  Let $(\bar x^C,\bar y^C)$ be its zero extension, and define
  \[
    U_i(B)=e_i^\trans B\bar y^C-v^C\quad(i\notin I),\qquad
    V_j(B)=v^C-(\bar x^C)^\trans Be_j\quad(j\notin J).
  \]
  Then,
  \[
    (\bar x^C,\bar y^C)\in\NE(B)
    \quad\Longleftrightarrow\quad
    U_i(B)\le0\ (i\notin I),\quad V_j(B)\le0\ (j\notin J).
  \]
  If all existing outside margins are strictly negative, then
  \[
    \NE(B)=\{(\bar x^C,\bar y^C)\}.
  \]
  Conditions over empty outside index sets are vacuous.
  Nonpositive margins alone do not imply uniqueness when an outside margin is zero.
\end{lemma}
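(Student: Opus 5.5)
The plan is to verify the equilibrium inequalities directly in terms of the outside margins. Write $x=\bar x^C$, $y=\bar y^C$. First I will compute the row payoffs against $y$. For $i\in I$, $e_i^\trans By=e_i^\trans Cy^C$. For $i\notin I$, $e_i^\trans By=v^C+U_i(B)$. Similarly, for $j\in J$, $x^\trans Be_j=(x^C)^\trans Ce_j$, and for $j\notin J$, $x^\trans Be_j=v^C-V_j(B)$. Also $x^\trans By=(x^C)^\trans Cy^C=v^C$.

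Since $(x^C,y^C)\in\NE(C)$, we have $\max_{i\in I}e_i^\trans Cy^C=v^C$ and $\min_{j\in J}(x^C)^\trans Ce_j=v^C$. By linearity, the Nash condition for $(x,y)$ in $B$ is equivalent to $\max_{i\in[m]}e_i^\trans By\le v^C\le\min_{j\in[n]}x^\trans Be_j$. Combining this with the formulas above gives the stated equivalence directly, and the conditions over empty outside index sets are vacuous.

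For uniqueness under strictly negative margins, I take any $(x',y')\in\NE(B)$. Zero-sum equilibria are interchangeable and all have the common value $v^C$, because $(x,y)$ is an equilibrium with value $v^C$.
\begin{itemize}
\item Since $(x,y')$ is also an equilibrium, every row in $\supp(x)$ is a best response to $y'$. More usefully, $(x',y)$ is an equilibrium, so $x'$ is a best response to $y$. The payoff of row $i\notin I$ against $y$ is $v^C+U_i<v^C$, so $x'$ puts no mass outside $I$.
\item Symmetrically, $y'$ is a best response to $x$, and columns $j\notin J$ give $v^C-V_j>v^C$, so $y'$ is supported in $J$.
\end{itemize}
Then $(x'_I,y'_J)$ is an equilibrium of $C$, since any deviation within $I$ or $J$ is a deviation in $B$. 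By uniqueness it equals $(x^C,y^C)$.

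The main point needing care is the interchangeability and equal-value property of zero-sum equilibria, which I will cite from the minimax theorem. The final remark on zero margins needs a small example. Take $B=\begin{pmatrix}0&0\end{pmatrix}$-type data: let $m=2$, $n=1$, $B=(0,0)^\trans$, $I=J=\{1\}$. Then $U_2=0$, and every $x$ is an equilibrium, so uniqueness fails.
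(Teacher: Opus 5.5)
Your proposal is correct and follows essentially the paper's route. The equivalence comes from the restricted equilibrium inequalities together with the margin identities. Your interchangeability/best-response step is the same as the paper's sandwich $v^C\le (x')^\trans B\bar y^C\le v^C+\sum_{i\notin I}x'_iU_i(B)\le v^C$, which likewise uses the common value of zero-sum equilibria, and uniqueness in $C$ then finishes the argument.

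Your $2\times1$ zero-payoff example also justifies the closing remark about zero margins, which the paper states without proof.
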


The next statement requires accuracy only on entries with at least one index in the equilibrium supports.
No bound is required on entries whose two indices both lie outside the equilibrium supports.

\begin{lemma}
  \label{proof:lem:relevant-equilibrium-embedding}
  With the constants of Lemma~\ref{proof:lem:local-geometry-from-uniqueness}, let $B\in\R^{m\times n}$ satisfy $|B_{ij}-A^\star_{ij}|\le\rho$ for all $(i,j)\in([m]\times J^\star)\cup(I^\star\times[n])$.
  Set $C=B_{I^\star,J^\star}$.
  Let $(\bar x^C,\bar y^C)\in\simplex_m\times\simplex_n$ denote its unique equilibrium $(x^C,y^C)$ extended by zeros outside $I^\star$ and $J^\star$.
  Then, $B$ has the unique equilibrium $(\bar x^C,\bar y^C)$.
  Moreover, $\bar x_i^C\bar y_j^C\ge p^2$ for every $(i,j)\in I^\star\times J^\star$.
\end{lemma}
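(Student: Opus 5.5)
The plan is to combine Lemma~\ref{proof:lem:local-geometry-from-uniqueness} with Lemma~\ref{proof:lem:margin-embedding}, applied to the submatrix $C=B_{I^\star,J^\star}$ with $I=I^\star$ and $J=J^\star$.

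First, the hypothesis restricted to $I^\star\times J^\star$ gives $\normmax{C-A^\star_{I^\star,J^\star}}\le\rho$. Lemma~\ref{proof:lem:local-geometry-from-uniqueness} then shows that $C$ has a unique, fully mixed equilibrium $(x^C,y^C)$ with value $v^C$. Its coordinates satisfy $x^C_i\ge p$ and $y^C_j\ge p$. The product bound $\bar x_i^C\bar y_j^C\ge p^2$ on $I^\star\times J^\star$ follows immediately. Uniqueness for $C$ is also exactly what Lemma~\ref{proof:lem:margin-embedding} requires.

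Next, by Lemma~\ref{proof:lem:margin-embedding}, it suffices to show that every outside margin is strictly negative. For $i\notin I^\star$, the zero extension is supported on $J^\star$, so
\[
U_i(B)=\sum_{j\in J^\star}B_{ij}y_j^C-v^C.
\]
The entries $(i,j)$ with $j\in J^\star$ lie in $[m]\times J^\star$, where $|B_{ij}-A^\star_{ij}|\le\rho$. Since $y^C$ is a probability vector, this gives
\[
U_i(B)\le\sum_{j\in J^\star}A^\star_{ij}y_j^C-v^C+\rho\le-\Delta_{\rm row}+\rho\le-\Delta_{\rm row}/2<0.
\]
This uses the outside row margin from Lemma~\ref{proof:lem:local-geometry-from-uniqueness} together with $\rho\le\Delta_{\rm row}/2$. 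Here $\Delta_{\rm row}<\infty$, because an index $i\notin I^\star$ exists. Note also that $v^C$ is the value of $C=B_{I^\star,J^\star}$ itself, which matches the value appearing in Lemma~\ref{proof:lem:local-geometry-from-uniqueness}.

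The columns are handled symmetrically. For $j\notin J^\star$,
\[
V_j(B)=v^C-\sum_{i\in I^\star}x_i^CB_{ij}\le v^C-\sum_{i\in I^\star}x_i^CA^\star_{ij}+\rho\le-\Delta_{\rm col}/2<0,
\]
because the entries used lie in $I^\star\times[n]$. If $I^\star=[m]$ or $J^\star=[n]$, the corresponding conditions are vacuous.

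With all margins strictly negative, Lemma~\ref{proof:lem:margin-embedding} gives $\NE(B)=\{(\bar x^C,\bar y^C)\}$. There is no real obstacle. The only point needing care is that the accuracy hypothesis covers precisely the entries used in the margins and in $C$. Entries in $I^{\star c}\times J^{\star c}$ never enter, because the zero extension assigns them weight zero.
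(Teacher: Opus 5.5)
Your proposal is correct and matches the paper's proof essentially step for step. Both arguments use Lemma~\ref{proof:lem:local-geometry-from-uniqueness} to obtain the unique fully mixed core equilibrium and the coordinate bound $p$, then use entrywise accuracy together with the outside margins and $\rho\le\Delta_{\rm row}/2$, $\rho\le\Delta_{\rm col}/2$ to make every outside margin strictly negative, and finally invoke Lemma~\ref{proof:lem:margin-embedding} for uniqueness of the embedded equilibrium.
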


\subsection{Proof of Lemma~\ref{proof:lem:local-geometry-from-uniqueness}}
\begin{proof}
  We use the following consequence of \citet[Lemma~2]{wu2024minimally}.
  Suppose that the two-player zero-sum game $A^\star$ has a unique Nash equilibrium $(x^\star,y^\star)$.
  Let $I^\star=\operatorname{supp}(x^\star)$, $J^\star=\operatorname{supp}(y^\star)$, and $v^\star=(x^\star)^\trans A^\star y^\star$.
  Then,
  \begin{align*}
    \sum_{j\in J^\star}A^\star_{ij}y_j^\star  & <v^\star\quad(i\notin I^\star), \\
    \sum_{i\in I^\star}x_i^\star A^\star_{ij} & >v^\star\quad(j\notin J^\star),
  \end{align*}
  and the matrix
  \[
    \begin{pmatrix}
      A^\star_{I^\star,J^\star} & -\1 \\
      \1^\trans                 & 0
    \end{pmatrix}
  \]
  is invertible.

  Its numbers of rows and columns are equal, so $|I^\star|=|J^\star|=:d$.
  Transposing this matrix and multiplying its last row and column by $-1$ gives
  \[
    \begin{pmatrix}
      (A^\star_{I^\star,J^\star})^\trans & -\1 \\
      \1^\trans                          & 0
    \end{pmatrix},
  \]
  which is also invertible.
  By continuity of the determinants, both matrices remain invertible for $\normmax{C-A^\star_{I^\star,J^\star}}\le\rho_0$ with some $\rho_0>0$.
  Consider the linear systems for $(y^C,v_y^C)$ and $(x^C,v_x^C)$:
  \[
    \begin{aligned}
      \sum_{j\in J^\star}
      C_{ij}y_j^C                     & =v_y^C\quad(i\in I^\star),
                                      & \sum_{j\in J^\star}y_j^C   & =1, \\
      \sum_{i\in I^\star}x_i^C C_{ij} & =v_x^C\quad(j\in J^\star),
                                      & \sum_{i\in I^\star}x_i^C   & =1.
    \end{aligned}
  \]
  Equivalently,
  \begin{equation}
    \begin{aligned}
      \begin{pmatrix}C&-\1\\\1^\trans&0\end{pmatrix}
      \begin{pmatrix}y^C\\v_y^C\end{pmatrix}
       & =\begin{pmatrix}0\\1\end{pmatrix}, \\
      \begin{pmatrix}C^\trans&-\1\\\1^\trans&0\end{pmatrix}
      \begin{pmatrix}x^C\\v_x^C\end{pmatrix}
       & =\begin{pmatrix}0\\1\end{pmatrix}.
    \end{aligned}
    \label{proof:eq:core-equilibrium-systems}
  \end{equation}
  Since both matrices are invertible, the two systems have unique solutions:
  \begin{align*}
    \begin{pmatrix}y^C\\v_y^C\end{pmatrix}
     & =\begin{pmatrix}C&-\1\\\1^\trans&0\end{pmatrix}^{-1}
    \begin{pmatrix}0\\1\end{pmatrix},                              \\
    \begin{pmatrix}x^C\\v_x^C\end{pmatrix}
     & =\begin{pmatrix}C^\trans&-\1\\\1^\trans&0\end{pmatrix}^{-1}
    \begin{pmatrix}0\\1\end{pmatrix}.
  \end{align*}
  By \eqref{proof:eq:core-equilibrium-systems}, the two scalar components agree:
  \begin{align*}
    v_y^C & =(x^C)^\trans(v_y^C\1)
    =(x^C)^\trans Cy^C
    =(v_x^C\1)^\trans y^C=v_x^C=:v^C.
  \end{align*}
  The inverses, and hence the solutions, are continuous in $C$.
  At $C=A^\star_{I^\star,J^\star}$, the original equilibrium satisfies these systems, so uniqueness gives $x^C=(x^\star)_{I^\star}>0$ and $y^C=(y^\star)_{J^\star}>0$.
  By continuity and the strict inequalities from the cited result, we can choose $0<\rho\le\rho_0$ such that every $C$ with $\normmax{C-A^\star_{I^\star,J^\star}}\le\rho$ satisfies
  \begin{align*}
    x_i^C & >0\quad(i\in I^\star),
    \qquad y_j^C>0\quad(j\in J^\star),                                               \\
    v^C-\sum_{j\in J^\star}
    A^\star_{ij}y_j^C
          & \ge\frac12\left(v^\star-\sum_{j\in J^\star}A^\star_{ij}y_j^\star\right)
    \ge2\rho\quad(i\notin I^\star),                                                  \\
    \sum_{i\in I^\star}x_i^C A^\star_{ij}-v^C
          & \ge\frac12\left(\sum_{i\in I^\star}x_i^\star A^\star_{ij}-v^\star\right)
    \ge2\rho\quad(j\notin J^\star).
  \end{align*}
  Together with $\1^\trans x^C=\1^\trans y^C=1$, positivity gives $x^C,y^C\in\operatorname{ri}\simplex_d$.
  Moreover, \eqref{proof:eq:core-equilibrium-systems} gives, for every $x,y\in\simplex_d$,
  \[
    x^\trans Cy^C
    \le (x^C)^\trans Cy^C
    =v^C
    \le (x^C)^\trans Cy,
  \]
  so $(x^C,y^C)$ is a fully mixed Nash equilibrium of $C$ with value $v^C=(x^C)^\trans Cy^C$.

  To establish uniqueness, let $(x',y')$ be any Nash equilibrium of $C$.
  Optimality of $y'$ and the equal-payoff identity for $x^C$ give
  \begin{align*}
    v^C\1-Cy' & \ge0,                    \\
    \sum_{i\in I^\star}x_i^C\bigl(v^C-(Cy')_i\bigr)
              & =v^C-(x^C)^\trans Cy'=0.
  \end{align*}
  Nonnegativity of each term gives
  \[
    x_i^C\bigl(v^C-(Cy')_i\bigr)=0
    \quad\Longrightarrow\quad
    (Cy')_i=v^C
    \qquad(i\in I^\star),
  \]
  where the implication follows from $x_i^C>0$.
  Therefore,
  \[
    \begin{pmatrix}C&-\1\\\1^\trans&0\end{pmatrix}
    \begin{pmatrix}y'\\v^C\end{pmatrix}
    =\begin{pmatrix}0\\1\end{pmatrix}.
  \]
  Thus, $(y',v^C)$ solves the first system in \eqref{proof:eq:core-equilibrium-systems}, giving $y'=y^C$ by uniqueness.
  The second system similarly gives $x'=x^C$.

  If $I^\star\ne[m]$, define
  \[
    \Delta_{\rm row}
    =\min_{\substack{\normmax{C-A^\star_{I^\star,J^\star}}\le\rho\\i\notin I^\star}}
    \left\{v^C-\sum_{j\in J^\star}A^\star_{ij}y_j^C\right\}
    \ge2\rho>0.
  \]
  Similarly, if $J^\star\ne[n]$, define
  \[
    \Delta_{\rm col}
    =\min_{\substack{\normmax{C-A^\star_{I^\star,J^\star}}\le\rho\\j\notin J^\star}}
    \left\{\sum_{i\in I^\star}x_i^C A^\star_{ij}-v^C\right\}
    \ge2\rho>0,
  \]
  where compactness and continuity ensure that the minima exist.
  Set $\Delta_{\rm row}=+\infty$ if $I^\star=[m]$ and $\Delta_{\rm col}=+\infty$ if $J^\star=[n]$.
  Thus, $\rho\le\Delta_{\rm row}/2$ and $\rho\le\Delta_{\rm col}/2$.
  Compactness, positivity and continuity also give
  \[
    p=\min_{\normmax{C-A^\star_{I^\star,J^\star}}\le\rho}
    \min\Bigl\{\min_{i\in I^\star}x_i^C,\min_{j\in J^\star}y_j^C\Bigr\}>0.
  \]
\end{proof}

\subsection{Proof of Lemma~\ref{proof:lem:core-gap-bound}}
\begin{proof}
  Fix the constants of Lemma~\ref{proof:lem:local-geometry-from-uniqueness} and let $C$ satisfy $\normmax{C-A^\star_{I^\star,J^\star}}\le\rho$.
  Its fully mixed equilibrium satisfies $Cy^C=v^C\1$ and $(x^C)^\trans C=v^C\1^\trans$.
  With $F=C-A^\star_{I^\star,J^\star}$, these identities give
  \begin{equation}
    e_i^\trans A^\star\bar y^C=v^C-e_i^\trans Fy^C\ (i\in I^\star),
    \qquad
    (\bar x^C)^\trans A^\star e_j=v^C-(x^C)^\trans Fe_j\ (j\in J^\star).
    \label{proof:eq:core-gap-identities}
  \end{equation}
  Since $y^C$ is a probability vector and $\normmax F\le\rho$, \eqref{proof:eq:core-gap-identities} gives $e_i^\trans A^\star\bar y^C\ge v^C-\rho$ for $i\in I^\star$.
  For $i\notin I^\star$, the row margin in Lemma~\ref{proof:lem:local-geometry-from-uniqueness} ($\Delta_{\rm row}\ge2\rho$) gives $e_i^\trans A^\star\bar y^C=\sum_{j\in J^\star}A^\star_{ij}y_j^C\le v^C-\Delta_{\rm row}\le v^C-2\rho$.
  The analogous bounds for columns follow by symmetry.
  Combining these bounds, we obtain
  \begin{equation}
    \max_{i\notin I^\star}e_i^\trans A^\star\bar y^C\le v^C-2\rho\le\min_{i\in I^\star}e_i^\trans A^\star\bar y^C,
    \qquad
    \min_{j\notin J^\star}(\bar x^C)^\trans A^\star e_j\ge v^C+2\rho\ge\max_{j\in J^\star}(\bar x^C)^\trans A^\star e_j,
    \label{proof:eq:core-gap-localization}
  \end{equation}
  where a bound over an empty outside index set ($I^\star=[m]$ or $J^\star=[n]$) holds trivially.
  By \eqref{proof:eq:core-gap-localization}, the maximizing row and minimizing column lie in the core, so the duality gap reduces to
  \[
    \Gap_{A^\star}(\bar x^C,\bar y^C)
    =\max_{i\in I^\star}e_i^\trans A^\star\bar y^C-\min_{j\in J^\star}(\bar x^C)^\trans A^\star e_j.
  \]
  Substituting \eqref{proof:eq:core-gap-identities} cancels $v^C$ and leaves
  \[
    \Gap_{A^\star}(\bar x^C,\bar y^C)
    =\max_{j\in J^\star}(x^C)^\trans Fe_j-\min_{i\in I^\star}e_i^\trans Fy^C
    \le2\normmax F.
  \]
  This establishes $\chi_{\rm core}=2$ for $d\ge2$.
  For $d=1$, $x^C=y^C=1$ and the two terms equal the single entry of $F$, so they cancel and $\chi_{\rm core}=0$.

  It remains to express these bounds using the probabilities $x_i^Cy_j^C$.
  The inequalities $x_i^C,y_j^C\ge p$ hold by Lemma~\ref{proof:lem:local-geometry-from-uniqueness}.
  Applying the maximum-norm bound just established and using $x_i^Cy_j^C\ge p^2$, we obtain
  \begin{equation}
    \begin{aligned}
      \Gap_{A^\star}(\bar x^C,\bar y^C)
       & \le\chi_{\rm core}\max_{(i,j)\in I^\star\times J^\star}|F_{ij}| \\
       & \le\frac{\chi_{\rm core}}{p^2}
      \max_{(i,j)\in I^\star\times J^\star}x_i^Cy_j^C|F_{ij}|
      \le\frac{\chi_{\rm core}}{p^2}
      \sum_{(i,j)\in I^\star\times J^\star}x_i^Cy_j^C|F_{ij}|.
    \end{aligned}
    \label{proof:eq:geometry-core-action-uniform-bound}
  \end{equation}

  For the second bound, the triangle inequality and $y_j^C\ge p$ give, for every $j\in J^\star$,
  \begin{align*}
    (x^C)^\trans Fe_j
     & \le\sum_{i\in I^\star} x_i^C|F_{ij}|
    \le\frac1p\sum_{i\in I^\star} x_i^Cy_j^C|F_{ij}|
    \le\frac1p\sum_{(i',j')\in I^\star\times J^\star}x_{i'}^Cy_{j'}^C|F_{i'j'}|.
  \end{align*}
  Similarly, using $x_i^C\ge p$, we obtain, for every $i\in I^\star$,
  \begin{align*}
    -e_i^\trans Fy^C
     & \le\sum_{j\in J^\star} y_j^C|F_{ij}|
    \le\frac1p\sum_{j\in J^\star} x_i^Cy_j^C|F_{ij}|
    \le\frac1p\sum_{(i',j')\in I^\star\times J^\star}x_{i'}^Cy_{j'}^C|F_{i'j'}|.
  \end{align*}
  Substituting these bounds into the gap identity above, we obtain
  \begin{equation}
    \begin{aligned}
      \Gap_{A^\star}(\bar x^C,\bar y^C)
       & =\max_{j\in J^\star}(x^C)^\trans Fe_j
      +\max_{i\in I^\star}(-e_i^\trans Fy^C)
      \le\frac2p\sum_{(i,j)\in I^\star\times J^\star}x_i^Cy_j^C|F_{ij}|.
    \end{aligned}
    \label{proof:eq:geometry-core-action-weighted-bound}
  \end{equation}
  Combining \eqref{proof:eq:geometry-core-action-uniform-bound} and~\eqref{proof:eq:geometry-core-action-weighted-bound} gives the claimed coefficient $\min\{\chi_{\rm core}/p^2,2/p\}$.
\end{proof}

\subsection{Proof of Lemma~\ref{proof:lem:gradient-margins}}
\begin{proof}
  Fix the constants of Lemma~\ref{proof:lem:local-geometry-from-uniqueness} and a $C$ with $\normmax{C-A^\star_{I^\star,J^\star}}\le\rho$.
  Its unique fully mixed equilibrium satisfies $(x^C)^\trans C=v^C\1^\trans$ and $Cy^C=v^C\1$.
  For $d\ge2$, let $\mathcal U=\{u:\1^\trans u=0,\normone{u}=1\}$ and fix $u\in\mathcal U$.
  Suppose $\max_{i'\in I^\star}(Cu)_{i'}\le0$.
  Then, $(x^C)^\trans Cu=v^C\1^\trans u=0$.
  Since $x^C>0$ and every $(Cu)_{i'}\le0$, we obtain $Cu=0$.
  For small $\varepsilon>0$, the vector $y^C+\varepsilon u$ lies in $\simplex_d$ and satisfies $C(y^C+\varepsilon u)=v^C\1$.
  Together with $(x^C)^\trans C=v^C\1^\trans$ and $Cu=0$, this gives, for every $x,y\in\simplex_d$,
  \[
    x^\trans C(y^C+\varepsilon u)
    = x^\trans Cy^C
    \le (x^C)^\trans Cy^C
    = (x^C)^\trans C(y^C+\varepsilon u)
    =v^C
    \le (x^C)^\trans Cy.
  \]
  Hence, $(x^C,y^C+\varepsilon u)$ is a Nash equilibrium of $C$ distinct from $(x^C,y^C)$, contradicting uniqueness.
  Therefore, $\max_{i'\in I^\star}(Cu)_{i'}>0$, and the transposed argument with $x^C+\varepsilon u$ gives $\max_{j'\in J^\star}(-u^\trans C)_{j'}>0$.
  Hence, the compact minima
  \[
    \kappa_y=\min_{\substack{\normmax{C-A^\star_{I^\star,J^\star}}\le\rho\\u\in\mathcal U}}\max_{i'\in I^\star}(Cu)_{i'}>0,\qquad
    \kappa_x=\min_{\substack{\normmax{C-A^\star_{I^\star,J^\star}}\le\rho\\u\in\mathcal U}}\max_{j'\in J^\star}(-u^\trans C)_{j'}>0
  \]
  are positive.
  Homogeneity extends both inequalities to all zero-sum $u$.
  For $d=1$, the only such $u$ is zero, and we take $\kappa_x=\kappa_y=1$.
\end{proof}

\subsection{Proof of Lemma~\ref{proof:lem:margin-embedding}}
\begin{proof}
  The restricted equilibrium inequalities give
  \[
    e_i^\trans B\bar y^C\le v^C\quad(i\in I),\qquad
    (\bar x^C)^\trans Be_j\ge v^C\quad(j\in J).
  \]
  Thus, the zero extension satisfies the full-game equilibrium inequalities exactly when all outside margins are nonpositive.
  Its value is then $v^C$.
  Suppose that all existing outside margins are strictly negative, and let $(x,y)\in\NE(B)$.
  Optimality in the full game and the restricted equilibrium inequalities imply
  \[
    \begin{aligned}
      v^C\le x^\trans B\bar y^C
       & \le v^C+\sum_{i\notin I}x_iU_i(B)\le v^C, \\
      v^C\ge(\bar x^C)^\trans By
       & \ge v^C-\sum_{j\notin J}y_jV_j(B)\ge v^C.
    \end{aligned}
  \]
  Strict negativity therefore forces $x_i=0$ for $i\notin I$ and $y_j=0$ for $j\notin J$.
  The restricted pair $(x_I,y_J)$ is an equilibrium of $C$, so its uniqueness gives $(x,y)=(\bar x^C,\bar y^C)$.
  The same argument applies when an outside index set is empty.
\end{proof}

\subsection{Proof of Lemma~\ref{proof:lem:relevant-equilibrium-embedding}}
\begin{proof}
  Since $C=B_{I^\star,J^\star}$ satisfies $\normmax{C-A^\star_{I^\star,J^\star}}\le\rho$, Lemma~\ref{proof:lem:local-geometry-from-uniqueness} gives its unique fully mixed equilibrium $(x^C,y^C)$ with value $v^C$.
  On the core, $Cy^C=v^C\1$ and $(x^C)^\trans C=v^C\1^\trans$.
  For outside actions, the entrywise accuracy and the margins of that lemma give
  \begin{align*}
    e_i^\trans B\bar y^C-v^C
     & \le\sum_{j\in J^\star}
    A^\star_{ij}y_j^C+\rho-v^C
    \le-\Delta_{\rm row}+\rho<0,\quad i\notin I^\star,   \\
    v^C-(\bar x^C)^\trans Be_j
     & \le v^C-\sum_{i\in I^\star}x_i^CA^\star_{ij}+\rho
    \le-\Delta_{\rm col}+\rho<0,\quad j\notin J^\star.
  \end{align*}
  Lemma~\ref{proof:lem:margin-embedding} now gives the unique equilibrium $(\bar x^C,\bar y^C)$.
  The product bound follows from $x_i^C\ge p$ and $y_j^C\ge p$.
\end{proof}

\section{Gaussian estimates}
\label{proof:sec:gaussian-estimates}
We first bound finite Gaussian maxima and sample means, then interval probabilities and conditional expectations.
Throughout this section, $\phi$ and $\Phi$ denote the density and distribution function of $\mathcal N(0,1)$, and $u_+=\max\{u,0\}$.

The following bounds control maxima of finitely many Gaussian variables.
For independent $\mathcal N(0,1)$ variables, a lower bound of the same order also holds.
\begin{lemma}
  \label{proof:lem:gaussian-maxima}
  Let $D\ge1$ be an integer and $v>0$.
  Let $Z_1,\ldots,Z_D$ be random variables, not necessarily independent, with $Z_j\sim\mathcal N(0,v_j^2)$ and $0\le v_j\le v$ for every $j\in[D]$.
  Then,
  \[
    \E\!
    \left[\max_{1\le j\le D}
      Z_j\right]
    \le v\sqrt{2\log D},
    \qquad
    \E\!
    \left[\max_{1\le j\le D}|Z_j|^2\right]
    \le 2v^2\{\log(2D)+1\}.
  \]
  For independent $G_j\sim\mathcal N(0,1)$, $j\in[k]$, put $m_k=\E[\max_{1\le j\le k}
      G_j]$.
  Then $m_1=0$ and, for every $k\ge2$,
  \[
    \frac{3}{16}\sqrt{\log k}
    \le m_k\le\sqrt{2\log k}.
  \]
\end{lemma}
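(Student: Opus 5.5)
The plan is to prove the three claims separately, each by a standard Gaussian argument.

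\emph{First bound: the maximum.} I use the moment generating function. For any $\lambda>0$, Jensen's inequality and the bound $\max_j e^{\lambda Z_j}\le\sum_j e^{\lambda Z_j}$ give
\[
\exp\bigl(\lambda\E[\max_j Z_j]\bigr)\le\E\Bigl[\sum_j e^{\lambda Z_j}\Bigr]=\sum_j e^{\lambda^2v_j^2/2}\le De^{\lambda^2v^2/2}.
\]
Taking logarithms yields $\E[\max_j Z_j]\le \log D/\lambda+\lambda v^2/2$. Choosing $\lambda=\sqrt{2\log D}/v$ gives $v\sqrt{2\log D}$; when $D=1$ the bound reads $\E[Z_1]=0$ directly. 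Independence is never used.

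\emph{Second bound: the squared maximum.} I write $M=\max_j|Z_j|^2$ and use $\E[M]=\int_0^\infty\Pp(M>u)\,du$. Truncate at $u_0=2v^2\log(2D)$. For $u\ge u_0$, the union bound and the Gaussian tail $\Pp(|Z_j|>r)\le 2e^{-r^2/(2v^2)}$ (which follows from Chernoff, or is trivial if $v_j=0$) give
\[
\Pp(M>u)\le 2De^{-u/(2v^2)}.
\]
Integrating from $u_0$ gives $2D\cdot 2v^2e^{-u_0/(2v^2)}=2v^2$. Hence $\E[M]\le u_0+2v^2=2v^2\{\log(2D)+1\}$.

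\emph{Independent maxima.} Here $m_1=\E[G_1]=0$, and the upper bound is the first bound with $v=1$. The lower bound $m_k\ge\frac3{16}\sqrt{\log k}$ is the only real obstacle, because a constant must be tracked. The plan is as follows.
\begin{itemize}
\item Write $M_k=\max_j G_j$ and use $M_k\ge G_1$, so $M_k^-\le G_1^-$ with $\E[G_1^-]=1/\sqrt{2\pi}$. For a threshold $a>0$ this gives
\[
m_k\ge a\,\Pp(M_k\ge a)-\E[M_k^-]\ge a\bigl(1-(1-\bar\Phi(a))^k\bigr)-\frac1{\sqrt{2\pi}},
\]
where $\bar\Phi=1-\Phi$.
\item Take $a=\sqrt{\log k}$. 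The Mills-ratio bound $\bar\Phi(a)\ge\frac{a}{1+a^2}\phi(a)$ gives $k\bar\Phi(a)\ge \frac{a}{(1+a^2)\sqrt{2\pi}}\sqrt k$, so $1-(1-\bar\Phi(a))^k\ge1-e^{-k\bar\Phi(a)}$ is bounded below by an explicit constant for large $k$.
\item This makes $m_k\ge c\sqrt{\log k}-0.4$ with $c$ comfortably above $3/16$ once $k$ exceeds some moderate $k_0$. For $2\le k\le k_0$ I use monotonicity $m_k\ge m_2=1/\sqrt\pi\approx0.564$ together with $\frac3{16}\sqrt{\log k_0}\le0.564$, which requires $k_0\le e^{9.05}$.
\end{itemize}
The main work is checking that the two regimes overlap. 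If $a=\sqrt{\log k}$ makes the bracket too small at intermediate $k$, I would switch to $a=\sqrt{\log k}/2$, where $k\bar\Phi(a)$ grows like $k^{7/8}$, and verify numerically that $a(1-e^{-k\bar\Phi(a)})-1/\sqrt{2\pi}\ge\frac3{16}\sqrt{\log k}$ for all $k$ beyond some $k_0\le 8000$; the generous constant $3/16$ leaves room for this.
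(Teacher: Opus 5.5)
Your proposal is correct and follows the paper's proof. The first two bounds use the same MGF--Jensen argument and the same truncated tail integral. For $m_k$ you use the same Mills-ratio estimate at $a=\sqrt{\log k}/2$ (where $k\bar\Phi(a)\ge 2.2$ for $k\ge16$ and is increasing in $k$), together with $m_k\ge m_2=1/\sqrt\pi$ for $k<16$.

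The one difference is the negative part. You pay $\E[M_k^-]\le 1/\sqrt{2\pi}$, whereas the paper integrates the nonnegative function $1-\Phi(x)^k-(1-\Phi(x))^k$ over $[0,a]$, which costs only $2^{-k}$. Your cruder bound still closes at $k_0=16$, since $0.88a-0.4\ge\frac38 a$ for $a\ge0.8$. To cover all $k\ge16$ rigorously, use a monotonicity-in-$k$ argument rather than numerical checks.
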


The next result bounds the absolute deviations in decreasing order, without requiring independence.
\begin{lemma}
  \label{proof:lem:gaussian-order-statistics}
  Let $M\ge1$ be an integer and let $G_1,\ldots,G_M$ be random variables, not necessarily independent, with $G_j\sim\mathcal N(0,1)$ for every $j\in[M]$.
  Let $W_{(1)}\ge\cdots\ge W_{(M)}$ be the decreasing
  rearrangement of $|G_1|,\ldots,|G_M|$.

  For $1\le j\le M$ and $u\ge0$,
  \[
    \Pp(W_{(j)}>u)
    \le\min\left\{1,\frac{2M}{j}e^{-u^2/2}\right\},
    \qquad
    \E[W_{(j)}]\le\sqrt{2\log(2M/j)}+1.
  \]
  Consequently, for every integer $1\le L\le M$,
  \[
    \E\!\left[\sum_{j=1}^{L}\frac{W_{(j)}}{\sqrt j}\right]
    \le10\sqrt{L\log(eM/L)}.
  \]
\end{lemma}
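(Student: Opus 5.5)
The tail bound comes from a union-type argument over the events $\{|G_i|>u\}$ that needs no independence. The expectation bound then follows by integrating the tail, and the weighted sum by summing the expectation bounds and comparing with an integral.

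\emph{Tail bound.} Fix $u\ge0$ and $j\in[M]$. On the event $\{W_{(j)}>u\}$, at least $j$ of the variables $|G_i|$ exceed $u$. Hence
\[
  j\,\1_{\{W_{(j)}>u\}}\le\sum_{i=1}^M\1_{\{|G_i|>u\}}.
\]
Taking expectations gives $\Pp(W_{(j)}>u)\le\frac{M}{j}\Pp(|G_1|>u)$. We then apply the standard Gaussian bound $\Pp(|G|>u)\le 2e^{-u^2/2}$, which follows from the Chernoff bound on each tail. This yields $\frac{2M}{j}e^{-u^2/2}$, and the minimum with $1$ is trivial.

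\emph{Expectation bound.} Set $u_0=\sqrt{2\log(2M/j)}$, which is well defined since $2M/j\ge2$. Write
\[
  \E[W_{(j)}]=\int_0^\infty\Pp(W_{(j)}>u)\,du\le u_0+\int_{u_0}^\infty\frac{2M}{j}e^{-u^2/2}\,du.
\]
Substitute $u=u_0+t$ and use $e^{-(u_0+t)^2/2}\le e^{-u_0^2/2}e^{-t^2/2}$. Since $\frac{2M}{j}e^{-u_0^2/2}=1$, the remaining integral is at most $\int_0^\infty e^{-t^2/2}\,dt=\sqrt{\pi/2}$. This exceeds $1$, so the argument needs a sharpening. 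I would use $e^{-(u_0+t)^2/2}\le e^{-u_0^2/2}e^{-u_0t}e^{-t^2/2}$ when $u_0$ is large. Alternatively, I would replace the crude tail $2e^{-u^2/2}$ by the Mills-ratio form $\Pp(|G|>u)\le 2\phi(u)/u$ above $u_0$, which gives a remainder $\le1$ once $u_0\ge\sqrt{2\log 2}$. Matching the constant exactly $1$ is the only fiddly point, and I expect it to be the main obstacle. A clean route is the following. For $u_0\ge1$, the Mills form gives integrand $\le \frac{2M}{j}\phi(u)$, and then $\frac{2M}{j}\int_{u_0}^\infty\phi\le\frac{2M}{j}\phi(u_0)/u_0=\frac{1}{\sqrt{2\pi}u_0}\le1$. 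For $u_0<1$, no such case arises, because $2M/j\ge2$ already gives $u_0\ge\sqrt{2\log2}\approx1.18$.

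\emph{Weighted sum.} Sum the expectation bound:
\[
  \E\Big[\sum_{j\le L}W_{(j)}/\sqrt j\Big]\le\sum_{j\le L}\frac{\sqrt{2\log(2M/j)}+1}{\sqrt j}.
\]
The term $\sum_{j\le L} j^{-1/2}\le2\sqrt L$ is absorbed, since $\log(eM/L)\ge1$. For the main term, note that $j\mapsto j^{-1/2}\sqrt{\log(2M/j)}$ is decreasing, so the sum is bounded by $\int_0^L x^{-1/2}\sqrt{\log(2M/x)}\,dx$. Substituting $x=Lw$ gives $\sqrt L\int_0^1 w^{-1/2}\sqrt{\log(2M/L)+\log(1/w)}\,dw$. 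Using $\sqrt{a+b}\le\sqrt a+\sqrt b$ together with $\int_0^1 w^{-1/2}\sqrt{\log(1/w)}\,dw=\sqrt{2\pi}\cdot\tfrac12\cdot\sqrt2$, which is a finite constant of about $1.77$, bounds this by $\sqrt L\,(2\sqrt{\log(2M/L)}+c)$. Finally, $\log(2M/L)\le(1+\log2)\log(eM/L)$ and $1\le\log(eM/L)$, so collecting constants gives at most $10\sqrt{L\log(eM/L)}$.
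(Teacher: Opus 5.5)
Your proposal is correct and follows essentially the paper's route. You use Markov's inequality on the exceedance count $\sum_a\1_{\{|G_a|>u\}}\ge j$, integrate the tail with a Mills-type bound that keeps the remainder at most $1$ because $u_0\ge\sqrt{2\log2}>1$, and finish with integral comparison together with $\sqrt{a+b}\le\sqrt a+\sqrt b$. One arithmetic slip: $\int_0^1 w^{-1/2}\sqrt{\log(1/w)}\,dw=\sqrt{2\pi}\approx2.51$, not $\approx1.77$. The final constant still stays below $10$: using $\log(2M/L)\le\log(eM/L)$ directly gives $2\sqrt2+2\sqrt\pi+2\approx8.37$, and your looser factor $1+\log2$ gives about $9.23$.
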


If deterministic shifts sum to zero, Gaussian variation in a subset of coordinates still gives a lower bound on the expected maximum.
\begin{lemma}
  \label{proof:lem:main-shifted-gaussian-maximum}
  Let $m\ge2$ be an integer, let $\theta\in\R^m$ satisfy
  $\sum_{i=1}^m\theta_i=0$, and let $Z_1,\ldots,Z_m$
  be independent with $Z_i\sim\mathcal N(0,s_i^2)$.

  Suppose that $S\subseteq[m]$, $1\le|S|<m$, and
  $0\le s\le s_i$ for every $i\in S$.
  With $m_k$ defined in Lemma~\ref{proof:lem:gaussian-maxima},
  \[
    \E\!
    \left[\max_{i\in[m]}(\theta_i+Z_i)\right]
    \ge\frac{|S|}{m}\,s\,m_{|S|}.
  \]
\end{lemma}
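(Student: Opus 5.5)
The plan is to reduce to a symmetric, equal-variance comparison problem and then average out the shifts by convexity. Throughout, write $k:=|S|$, so $1\le k<m$.

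\textbf{Step 1 (equalize variances on $S$).} For $i\in S$ we have $s_i\ge s$, so we can realize $Z_i=sG_i+W_i$ with $G_i\sim\mathcal N(0,1)$ and $W_i\sim\mathcal N(0,s_i^2-s^2)$. We take all $G_i,W_i$ ($i\in S$) and $Z_i$ ($i\notin S$) mutually independent. Conditioning on $(G_i)_{i\in S}$ and applying Jensen's inequality to the convex map $w\mapsto\max_i(\cdot)$ in the variables $(W_i)_{i\in S}$ and $(Z_i)_{i\notin S}$, which all have mean zero, gives
\[
  \E\!\left[\max_{i\in[m]}(\theta_i+Z_i)\right]\ \ge\ g(\theta):=\E\!\left[\max\Bigl\{\max_{i\in S}(\theta_i+sG_i),\ \max_{i\notin S}\theta_i\Bigr\}\right].
\]

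\textbf{Step 2 (symmetrize the shifts).} The function $g$ is convex in $\theta$, being an expectation of maxima of affine functions. It is also invariant under permutations of the coordinates inside $S$, because the $G_i$ are i.i.d., and under permutations inside $S^c$. Averaging $\theta$ over all such permutations and applying Jensen's inequality once more gives $g(\theta)\ge g(\bar\theta)$. Here $\bar\theta$ equals $a:=\frac1k\sum_{i\in S}\theta_i$ on $S$ and $b:=\frac1{m-k}\sum_{i\notin S}\theta_i$ on $S^c$, and $S^c\neq\varnothing$ because $k<m$. The constraint $\sum_i\theta_i=0$ becomes $ka+(m-k)b=0$. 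Writing $M:=\max_{i\in S}G_i$, so that $\E[M]=m_k$, we get
\[
  g(\bar\theta)=\E\bigl[\max\{a+sM,\ b\}\bigr].
\]

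\textbf{Step 3 (the convex-combination trick).} We bound the maximum below by the convex combination with weight $\lambda=k/m$:
\[
  \max\{a+sM,b\}\ \ge\ \tfrac km(a+sM)+\tfrac{m-k}{m}\,b\ =\ \tfrac km\,sM,
\]
where the last equality uses $ka+(m-k)b=0$. Taking expectations yields $g(\bar\theta)\ge\frac km\,s\,m_k$. Chaining Steps 1--3 proves the claim.

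The only delicate point is Step 1. The selected coordinates have unequal variances, which would break the exchangeability needed in Step 2. It must therefore be checked that conditioning on $G$ and pushing the expectation over the remaining noise inside the maximum goes in the correct direction; it does, since the maximum is convex and that noise has mean zero. The convex-combination trick in Step 3 is exactly what uses the zero-sum condition on $\theta$. It avoids any case split on the sign of $a$, and it is where the factor $k/m$ originates.
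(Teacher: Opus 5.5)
Your proof is correct and follows essentially the same route as the paper: decompose $Z_i=sG_i+W_i$ on $S$, remove the extra noise by Jensen, average the shifts over permutations using convexity, and combine with weights $|S|/m$ and $(m-|S|)/m$ so that the zero-sum condition cancels the shifts. The differences are cosmetic. The paper does not symmetrize $S^c$; it uses the direct bound $\E\max_i(\theta_i+Z_i)\ge\max_{i\notin S}\theta_i\ge-\sum_{i\in S}\theta_i/(m-|S|)$. It also combines two scalar lower bounds on the expectation, whereas you apply the convex combination pointwise inside the expectation before taking it.
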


The following estimate controls sample means simultaneously over all sample sizes beyond a given minimum.
\begin{lemma}
  \label{proof:lem:complete-table-envelope}
  Let $D\ge1$ be an integer.
  Let $(\xi_{a,r})_{r\ge1}$, $a\in[D]$, be sequences, not necessarily independent of one another, with $\xi_{a,r}\stackrel{\mathrm{i.i.d.}}{\sim}\mathcal N(0,1)$ over $r$ for each $a$.
  Set $S_{a,N}=\sum_{r=1}^N\xi_{a,r}$.

  For every integer $k\ge1$ and $u\ge0$,
  \[
    \begin{gathered}
      \Pp\!\left(\max_{a\in[D]}\sup_{N\ge k}\frac{|S_{a,N}|}{N}>u\right)
      \le\min\{1,2D\exp(-ku^2/2)\},\\
      \E\!\left[\max_{a\in[D]}\sup_{N\ge k}\frac{|S_{a,N}|}{N}\right]
      \le2\sqrt{2\log(2D)/k},
      \qquad
      k\E\!\left[\max_{a\in[D]}\sup_{N\ge k}\frac{|S_{a,N}|^2}{N^2}\right]
      \le2\{\log(2D)+1\}.
    \end{gathered}
  \]
  For every integer $r\ge1$, there is a finite constant $C_r$,
  depending only on $r$, such that
  \[
    \E\!\left[\left(1+\max_{a\in[D]}\sup_{N\ge1}\frac{|S_{a,N}|}{N}\right)^r\right]
    \le C_r(1+\log(2D))^{r/2}
    \qquad(D\ge1).
  \]
\end{lemma}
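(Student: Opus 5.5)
The statement to prove is Lemma~\ref{proof:lem:complete-table-envelope}. The natural tool is a maximal inequality for the Gaussian random walk, applied dyadically or through a supermartingale, followed by a union bound over $a\in[D]$.

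\textbf{Tail bound.} Fix $a$ and $\lambda>0$. The process $M_N=\exp(\lambda S_{a,N}-\lambda^2N/2)$ is a nonnegative martingale with mean one. The event $\{\sup_{N\ge k}S_{a,N}/N>u\}$ is contained in the event that $S_{a,N}>uN$ for some $N\ge k$. Take $\lambda=u$. On that event,
\[
  M_N\ge\exp(u^2N-u^2N/2)=\exp(u^2N/2)\ge\exp(ku^2/2).
\]
Ville's maximal inequality then gives $\Pp(\sup_{N\ge k}S_{a,N}/N>u)\le e^{-ku^2/2}$. The event $\{S_{a,N}<-uN\}$ is handled by symmetry, which produces the factor $2$. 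A union bound over $a\in[D]$ gives the factor $D$, and the $\min\{1,\cdot\}$ is trivial.

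\textbf{First and second moments.} Integrate the tail. Put $X=\max_a\sup_{N\ge k}|S_{a,N}|/N$ and $u_0=\sqrt{2\log(2D)/k}$. Then
\[
  \E[X]\le u_0+\int_{u_0}^\infty 2De^{-ku^2/2}\,du\le u_0+\frac{2D}{ku_0}e^{-ku_0^2/2}=u_0+\frac1{ku_0}.
\]
If $\log(2D)\ge1/2$, which holds because $D\ge1$, then $1/(ku_0)\le u_0$, so $\E[X]\le2u_0$ as claimed.

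For the squared quantity, apply the tail bound to $kX^2$ directly:
\[
  \Pp(kX^2>w)\le\min\{1,2De^{-w/2}\}.
\]
Hence $\E[kX^2]\le\int_0^\infty\min\{1,2De^{-w/2}\}\,dw$. Split at $w_0=2\log(2D)$. The part below $w_0$ contributes $w_0$, and the part above contributes $\int_{w_0}^\infty 2De^{-w/2}\,dw=2$. The total is $2\log(2D)+2$, exactly the stated bound.

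\textbf{Higher moments.} Use $k=1$ in the tail bound, so that $X_1:=\max_a\sup_{N\ge1}|S_{a,N}|/N$ satisfies $\Pp(X_1>u)\le\min\{1,2De^{-u^2/2}\}$. Start from
\[
  \E[(1+X_1)^r]\le2^{r-1}\bigl(1+\E[X_1^r]\bigr).
\]
Then compute
\[
  \E[X_1^r]=\int_0^\infty ru^{r-1}\Pp(X_1>u)\,du,
\]
splitting at $u_0=\sqrt{2\log(2D)}$. The part below $u_0$ is at most $u_0^r$. For the part above, substitute $u=u_0+t$ and use $e^{-u^2/2}\le e^{-u_0^2/2}e^{-t^2/2}$ together with $2De^{-u_0^2/2}=1$. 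This leaves
\[
  \int_0^\infty r(u_0+t)^{r-1}e^{-t^2/2}\,dt\le C'_r(1+u_0)^{r-1}.
\]
Altogether $\E[(1+X_1)^r]\le C_r(1+\log(2D))^{r/2}$.

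\textbf{Main obstacle.} The only delicate point is getting the uniform-in-$N$ bound with no logarithmic loss, which is why I use Ville's inequality on the exponential martingale rather than a union bound over $N$. One should check that the thresholds align: $\lambda S_{a,N}-\lambda^2N/2\ge ku^2/2$ must hold whenever $S_{a,N}>uN$ and $N\ge k$, and with $\lambda=u$ the left side is at least $u^2N/2\ge ku^2/2$. No independence across $a$ is needed, since only a union bound over $a$ is used.
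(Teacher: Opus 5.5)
Your proposal is correct and follows essentially the paper's proof: Ville's maximal inequality for the exponential martingale $\exp(uS_{a,N}-u^2N/2)$ plus a union bound over signs and $a\in[D]$, then tail integration split at $\sqrt{2\log(2D)/k}$ (respectively $2\log(2D)$) for the first and second moments. For the $r$th moment, the paper bounds $\E[(X_1-\sqrt{2\log(2D)})_+^r]$ rather than splitting the integral at $\sqrt{2\log(2D)}$; this is the same argument and gives the explicit constant $C_r=2^{r-1}\{2^r+2^{r/2}\Gamma(1+r/2)\}$.
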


We next control how far Gaussian partial sums exceed a linear boundary.
The exponential moment is needed when taking expectations of reciprocal interval probabilities.
\begin{lemma}
  \label{proof:lem:gaussian-drift-moments}
  Let $\xi_r\sim\mathcal N(0,1)$, $r\ge1$, be independent, and let $S_N=\sum_{r=1}^N\xi_r$.
  For every $\alpha>0$ and $u\ge0$,
  \[
    \Pp\!
    \left(\sup_{N\ge1}\frac{(|S_N|-\alpha N)_+}{\sqrt N}>u\right)
    \le\frac{2\exp(-u^2/2-\alpha u)}{\exp(\alpha^2/2)-1}.
  \]
  Moreover, for every $\alpha>0$ and nonnegative integer $r$,
  \[
    \E\!\left[
      \left(1+\sup_{N\ge1}\frac{(|S_N|-\alpha N)_+}{\sqrt N}\right)^r
      \exp\!\left\{\frac12\sup_{N\ge1}\frac{(|S_N|-\alpha N)_+^2}{N}\right\}
      \right]<\infty.
  \]
\end{lemma}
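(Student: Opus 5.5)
The plan is to handle the two claims separately and to work with one fixed $\alpha>0$ throughout. For the tail bound, set $M_N:=|S_N|-\alpha N$. The event $\{(M_N)_+>u\sqrt N\}$ is contained in $\{|S_N|>\alpha N+u\sqrt N\}$. The Gaussian tail $\Pp(|S_N|>t)\le 2e^{-t^2/(2N)}$ with $t=\alpha N+u\sqrt N$ gives
\[
  \Pp\bigl((M_N)_+>u\sqrt N\bigr)\le 2\exp\!\Bigl(-\frac{u^2}{2}-\alpha u\sqrt N-\frac{\alpha^2N}{2}\Bigr)\le 2\exp\!\Bigl(-\frac{u^2}{2}-\alpha u\Bigr)e^{-\alpha^2N/2},
\]
where we used $\sqrt N\ge1$ and $u\ge0$. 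A union bound over $N\ge1$ then sums the geometric series $\sum_{N\ge1}e^{-\alpha^2N/2}=1/(e^{\alpha^2/2}-1)$, which is exactly the claimed constant. This part is routine, and no maximal inequality is needed because the linear drift already makes the union bound summable.

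For the moment bound, put $Z:=\sup_{N\ge1}(M_N)_+/\sqrt N\ge0$. Since $\sup_N (M_N)_+^2/N=Z^2$, the quantity to control is $\E[(1+Z)^re^{Z^2/2}]$. I will write it via the layer-cake formula,
\[
  \E[g(Z)]=g(0)+\int_0^\infty g'(u)\,\Pp(Z>u)\,du,
\]
with $g(u)=(1+u)^re^{u^2/2}$, which is increasing and $C^1$ on $[0,\infty)$. Its derivative is $g'(u)=\bigl(r(1+u)^{r-1}+u(1+u)^r\bigr)e^{u^2/2}\le (r+1)(1+u)^{r+1}e^{u^2/2}$. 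Inserting the tail bound from the first part, the factor $e^{u^2/2}$ cancels exactly against $e^{-u^2/2}$. What remains is
\[
  \frac{2(r+1)}{e^{\alpha^2/2}-1}\int_0^\infty (1+u)^{r+1}e^{-\alpha u}\,du<\infty,
\]
and this integral is finite because $\alpha>0$.

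The only delicate point, and the reason the first bound must retain the extra $e^{-\alpha u}$ factor, is this exact cancellation of the Gaussian factor. A tail of the form $e^{-u^2/2}$ alone would leave a polynomial integrand and diverge. Keeping the cross term $\alpha u\sqrt N\ge\alpha u$ in the first step is therefore essential. If one wants to avoid differentiability questions at $r=0$, the Stieltjes form $\E[g(Z)]=g(0)+\int_0^\infty \Pp(Z>u)\,dg(u)$ handles it uniformly.
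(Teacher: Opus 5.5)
Your proposal is correct and follows essentially the same route as the paper: a union bound with Gaussian tails that retains the $e^{-\alpha u}$ cross term and sums a geometric series, followed by layer-cake integration against the derivative of $(1+u)^r e^{u^2/2}$. In that integral the $e^{u^2/2}$ factor cancels against the tail bound, leaving a polynomial times $e^{-\alpha u}$. The only cosmetic difference is that you bound $g'(u)$ by $(r+1)(1+u)^{r+1}e^{u^2/2}$, whereas the paper keeps the exact expression $r(1+u)^{r-1}+u(1+u)^r$.
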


Data-dependent sample sizes have the usual Gaussian limit when their ratios converge to positive deterministic limits.
\begin{lemma}
  \label{proof:lem:gaussian-random-index-limit}
  Let $D\ge1$ be an integer.
  Let $\xi_{a,r}\sim\mathcal N(0,1)$, $a\in[D]$, $r\ge1$, be independent.
  Set $S_{a,N}=\sum_{r=1}^N\xi_{a,r}$.

  For each integer $t\ge1$, let $N_a(t)$ be positive
  integer-valued random variables such that,
  for deterministic $\nu_a>0$,
  \[
    N_a(t)/t\longrightarrow\nu_a
    \quad\text{in probability},\qquad a\in[D].
  \]
  Independence between the indices and the partial sums
  is not required.
  Then,
  \[
    \left(\sqrt t\,\frac{S_{a,N_a(t)}}{N_a(t)}\right)_{a\in[D]}
    \Rightarrow
    \mathcal N\!
    \left(
    0,\operatorname{diag}(\nu_a^{-1})_{a\in[D]}
    \right).
  \]

  For each $t$, let
  $G_t=(G_{a,t})_{a\in[D]}\sim\mathcal N(0,I_D)$.
  Assume that $G_t$ is independent of the joint collection
  of all $\xi_{a,r}$ and the current index vector
  $(N_a(t))_{a\in[D]}$.
  Then, we also have
  \[
    \left[
      \sqrt t\left\{
      \frac{S_{a,N_a(t)}}{N_a(t)}
      +\frac{G_{a,t}}{\sqrt{N_a(t)}}
      \right\}
      \right]_{a\in[D]}
    \Rightarrow
    \mathcal N\!
    \left(
    0,\operatorname{diag}(2\nu_a^{-1})_{a\in[D]}
    \right).
  \]
\end{lemma}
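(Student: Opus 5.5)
Two claims are to be shown: the joint weak convergence of the normalized means, and then of the means plus the independent posterior noise. The plan is to reduce the random-index statement to deterministic indices via a maximal (Kolmogorov/Doob) inequality, i.e.\ an Anscombe-type argument. For each $a$, write $n_a(t):=\lceil \nu_a t\rceil$. Then
\[
\sqrt t\,\frac{S_{a,N_a(t)}}{N_a(t)}
=\frac{\sqrt t}{N_a(t)}\,S_{a,n_a(t)}+\frac{\sqrt t}{N_a(t)}\bigl(S_{a,N_a(t)}-S_{a,n_a(t)}\bigr).
\]
Since $N_a(t)/t\to\nu_a$ in probability, the prefactor satisfies $\sqrt t\cdot\sqrt{n_a(t)}/N_a(t)\to\nu_a^{-1/2}$ in probability. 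With deterministic indices, independence across $a$ gives $(S_{a,n_a(t)}/\sqrt{n_a(t)})_a\sim\mathcal N(0,I_D)$ exactly for every $t$. Slutsky's lemma, applied jointly over $a$ with a deterministic diagonal scaling limit, then makes the first term converge to $\mathcal N(0,\operatorname{diag}(\nu_a^{-1}))$.

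The remaining work is to show that the second term is $o_{\Pp}(1)$; this is the only step needing care, because $N_a(t)$ may depend arbitrarily on the $\xi$'s. Fix $\varepsilon,\eta>0$ and choose $\delta>0$. On the event $\{|N_a(t)-n_a(t)|\le\delta t\}$, whose probability tends to one, we have
\[
|S_{a,N_a(t)}-S_{a,n_a(t)}|\le\max_{|k-n_a(t)|\le\delta t}|S_{a,k}-S_{a,n_a(t)}|.
\]
By Kolmogorov's maximal inequality applied to each side of $n_a(t)$, the probability that this maximum exceeds $\eta\sqrt t$ is at most $2(\delta t+1)/(\eta^2 t)$, so $\limsup$ is at most $2\delta/\eta^2$. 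This bound holds uniformly in how $N_a(t)$ is chosen. Also $\sqrt t/N_a(t)\cdot\sqrt t=O_{\Pp}(1)$. Letting $\delta\to0$ gives $o_{\Pp}(1)$. A union bound over the finitely many $a$ yields joint negligibility, proving the first display. No independence between indices and sums is used anywhere; this is exactly why the argument goes through deterministic reference indices.

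For the second display, write $\sqrt t\,G_{a,t}/\sqrt{N_a(t)}=(\sqrt{t/N_a(t)})G_{a,t}$, where $\sqrt{t/N_a(t)}\to\nu_a^{-1/2}$ in probability. Let $X_t$ denote the vector of scaled means from the first part. Since $G_t$ is independent of $(\xi,N(t))$, and hence of $X_t$, characteristic functions factor:
\[
\E\!\left[e^{i\langle s,X_t+H_t\rangle}\right]
=\E\!\left[e^{i\langle s,X_t\rangle}\exp\!\left(-\tfrac12\textstyle\sum_a s_a^2\, t/N_a(t)\right)\right],
\]
with $H_{a,t}:=\sqrt{t/N_a(t)}\,G_{a,t}$. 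The second factor converges in probability to the constant $\exp(-\tfrac12\sum_a s_a^2\nu_a^{-1})$ and is bounded by $1$. Therefore, by the first part and bounded convergence (or Slutsky applied to the product), the expectation converges to
\[
\exp\!\left(-\tfrac12\textstyle\sum_a s_a^2\nu_a^{-1}\right)^2,
\]
which is the characteristic function of $\mathcal N(0,\operatorname{diag}(2\nu_a^{-1}))$. The main obstacle is the uniform control of the random-index fluctuation in the second step, which the maximal inequality handles.
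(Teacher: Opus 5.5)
Your proposal is correct and follows essentially the paper's route: you fix deterministic reference indices $n_a(t)\approx\nu_a t$, show $(S_{a,N_a(t)}-S_{a,n_a(t)})/\sqrt t\to0$ in probability via Kolmogorov's maximal inequality on a $\delta t$-window (which needs no independence between $N_a(t)$ and the sums), and conclude by Slutsky using that $(S_{a,n_a(t)}/\sqrt{n_a(t)})_{a}$ is exactly $\mathcal N(0,I_D)$. The only difference is in the second display. You factor characteristic functions by conditioning on $(\xi,N(t))$, whereas the paper reuses the same replacement to write the vector as $\sqrt{t/N_a(t)}\,[S_{a,n_a(t)}/\sqrt{n_a(t)}+G_{a,t}]+o_{\Pp}(1)$, with the bracket exactly $\mathcal N(0,2)$ and independent across $a$, and then applies Slutsky again; this is slightly shorter and needs only independence of $G_t$ from the $\xi$'s, not also from $N(t)$.
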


The probability that a scalar Gaussian lies in an interval can be bounded using the interval width and its distance from the mean.
When the mean is nonpositive, a positive interval also captures a fixed proportion of the positive-tail probability.
\begin{lemma}
  \label{proof:lem:gaussian-interval-probabilities}
  Let $Z\sim\mathcal N(\mu,\tau^2)$ with $\tau>0$.
  If $a<b$ and $\ell>0$ satisfy $b-a\ge\ell\tau$, then
  \[
    \begin{aligned}
      \Pp(a<Z\le b)
       & \ge\frac{\min\{1,\ell\}}{\sqrt{2\pi}(1+\operatorname{dist}(\mu,[a,b])/\tau)} \\
       & \qquad\cdot\exp\!\left\{
      -\frac{\operatorname{dist}(\mu,[a,b])^2}{2\tau^2}
      -\min\{1,\ell\}-\frac{\min\{1,\ell\}^2}{2}
      \right\}.
    \end{aligned}
  \]

  In addition, if $\mu\le0$, $0<\tau\le v$, and $L>0$, then
  \[
    \Pp(0<Z\le L)
    \ge
    \left(1-e^{-\min\{1,L/v\}^2/2}\right)\Pp(Z>0).
  \]
\end{lemma}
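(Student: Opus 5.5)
We prove both parts by standardizing to a standard normal. Put $W=(Z-\mu)/\tau\sim\mathcal N(0,1)$ and $[a',b']=[(a-\mu)/\tau,(b-\mu)/\tau]$, so that $b'-a'\ge\ell$. Write $\delta:=\operatorname{dist}(\mu,[a,b])/\tau=\operatorname{dist}(0,[a',b'])$ and $h:=\min\{1,\ell\}\le b'-a'$. In both parts the idea is to restrict to a suitable piece of the interval and bound $\phi$ (or a Gaussian tail) there by elementary means.

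\emph{First part.} We bound $\Pp(a'<W\le b')$ from below by the probability of a subinterval of length $h$ that is as close to $0$ as possible. There are two cases.
\begin{itemize}
\item If $\delta=0$, then $0\in[a',b']$. Since the interval has length at least $h$, it contains a subinterval of length $h$ that contains $0$. Every point $u$ of this subinterval satisfies $|u|\le h$, so $\phi(u)\ge e^{-h^2/2}/\sqrt{2\pi}$. The probability is therefore at least $h e^{-h^2/2}/\sqrt{2\pi}$, which already dominates the claimed bound at $\delta=0$.
\item If $\delta>0$, by symmetry we may take $a'=\delta$ and use the subinterval $[\delta,\delta+h]$. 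Substituting $u=\delta+w$ gives
\[
\int_\delta^{\delta+h}\phi(u)\,du=\phi(\delta)\int_0^h e^{-\delta w-w^2/2}\,dw\ge\phi(\delta)\,e^{-h^2/2}\,\frac{1-e^{-\delta h}}{\delta}.
\]
With $x=\delta h$, the elementary inequality $(1-e^{-x})/x\ge1/(1+x)$ and $x\le\delta$ (since $h\le1$) give $(1-e^{-\delta h})/\delta\ge h/(1+\delta)$. Combining, and using $e^{-h}\le1$, we obtain
\[
\Pp(a<Z\le b)\ge\frac{h}{\sqrt{2\pi}(1+\delta)}\exp\Bigl\{-\frac{\delta^2}{2}-h-\frac{h^2}{2}\Bigr\}.
\]
\end{itemize}
The only delicate point in this part is avoiding the naive pointwise bound $\phi\ge\phi(\delta+h)$. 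That bound produces a factor $e^{-\delta h}$, which is not of the stated form, so we integrate exactly near $\delta$ instead.

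\emph{Second part.} Write $c=-\mu/\tau\ge0$ and $t=L/\tau\ge L/v$. Then
\[
\Pp(0<Z\le L)=\bar\Phi(c)-\bar\Phi(c+t),\qquad \bar\Phi:=1-\Phi.
\]
The key step is the tail-ratio bound $\bar\Phi(c+t)\le e^{-ct-t^2/2}\,\bar\Phi(c)\le e^{-t^2/2}\,\bar\Phi(c)$. It follows from
\[
\bar\Phi(c+t)=\int_c^\infty\phi(u+t)\,du=\int_c^\infty\phi(u)\,e^{-ut-t^2/2}\,du
\]
together with $e^{-ut}\le e^{-ct}$ for $u\ge c$ and $c\ge0$. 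Since $t\ge L/v\ge\min\{1,L/v\}$, we get $e^{-t^2/2}\le e^{-\min\{1,L/v\}^2/2}$. Hence
\[
\Pp(0<Z\le L)\ge\bigl(1-e^{-\min\{1,L/v\}^2/2}\bigr)\bar\Phi(c),
\]
and $\bar\Phi(c)=\Pp(Z>0)$, which is the claim.
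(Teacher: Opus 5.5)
Your proof is correct and follows the paper's plan. Both arguments standardize and bound the mass of a subinterval next to the point of the standardized interval closest to $0$; in the second part both write the shifted tail as $\int_c^\infty\phi(u)e^{-ut-t^2/2}\,du$ (the paper first lowers $t$ to $\min\{1,L/v\}$ by monotonicity, which is the same computation).

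The only technical difference is in the first part. The paper shrinks the subinterval to length $\min\{1,\ell\}/(1+\delta)$ and uses the value of $\phi$ at its far endpoint, which is where its factor $e^{-\min\{1,\ell\}}$ comes from. You keep length $\min\{1,\ell\}$ and integrate $e^{-\delta w}$ exactly via $(1-e^{-x})/x\ge 1/(1+x)$, which gives a slightly sharper bound without that factor. You also handle the case $0\in[a',b']$ by a direct pointwise bound, whereas the paper merges the two half-integrals using monotonicity of $\phi$.
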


The next result bounds the size of a Gaussian vector when a positive weighted average is conditioned to lie in an interval.
\begin{lemma}
  \label{proof:lem:window-regression}
  Let
  $\mathbf Z\sim
    \mathcal N(\mu,\operatorname{diag}(v_1^2,\ldots,v_d^2))$,
  where $\sigma>0$ and $0<v_j\le\sigma$ for every $j$.
  Suppose that $w\in\simplex_d$, $w_j\ge p>0$ for every $j$,
  and $|h|\le M$ with $M\ge0$.

  For every $L>0$,
  \[
    \E[\|\mathbf Z\|_\infty\mid0<w^\trans\mathbf Z-h\le L]
    \le
    \frac{M+L}{p}
    +(1+p^{-1})\|\mu\|_\infty
    +\sigma\sqrt{2\log(2d)}.
  \]
\end{lemma}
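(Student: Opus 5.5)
The plan is a Gaussian regression decomposition: split $\mathbf Z$ into the scalar $S:=w^\trans\mathbf Z$ and a residual independent of $S$. The conditioning event depends only on $S$, so it leaves the residual's law unchanged, while $S$ itself is pinned inside a bounded window.

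\textbf{Step 1 (the decomposition).} The scalar $S$ is Gaussian with mean $w^\trans\mu$ and variance $\tau^2:=\sum_k w_k^2v_k^2$. This variance is strictly positive because $v_k>0$ and $w\in\simplex_d$. Hence the event $E:=\{0<S-h\le L\}$ has positive probability and the conditional expectation is well defined. Define $\beta_j:=w_jv_j^2/\tau^2$ and
\[
  R:=\mathbf Z-\mu-\beta\,(S-w^\trans\mu).
\]
Then $\operatorname{Cov}(R_j,S)=w_jv_j^2-\beta_j\tau^2=0$. Since $(R,S)$ is jointly Gaussian, $R$ is independent of $S$, and therefore independent of $E$. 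Each $R_j$ is centered with variance $v_j^2-w_j^2v_j^4/\tau^2\le v_j^2\le\sigma^2$.

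\textbf{Step 2 (bounding the pieces).} We have $\beta_j\ge0$, and dropping all terms but the $j$th in $\tau^2$ gives
\[
  \beta_j\le\frac{w_jv_j^2}{w_j^2v_j^2}=\frac1{w_j}\le\frac1p .
\]
On $E$ we have $S\in(h,h+L]$, so $|S|\le M+L$. Also $|w^\trans\mu|\le\|\mu\|_\infty$ because $w$ is a probability vector. Together these give $|S-w^\trans\mu|\le M+L+\|\mu\|_\infty$ on $E$. Consequently, on $E$,
\[
  \|\mathbf Z\|_\infty\le\|\mu\|_\infty+\frac{M+L+\|\mu\|_\infty}{p}+\|R\|_\infty .
\]

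\textbf{Step 3 (the residual term).} By independence of $R$ and $E$, we have $\E[\|R\|_\infty\mid E]=\E[\|R\|_\infty]$. Write $\|R\|_\infty=\max\{R_1,\dots,R_d,-R_1,\dots,-R_d\}$, a maximum of $2d$ centered Gaussians with standard deviations at most $\sigma$. The first bound of Lemma~\ref{proof:lem:gaussian-maxima}, which does not require independence, then gives $\E\|R\|_\infty\le\sigma\sqrt{2\log(2d)}$.

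Taking conditional expectations in the display of Step 2 and adding this bound yields exactly
\[
  \frac{M+L}{p}+(1+p^{-1})\|\mu\|_\infty+\sigma\sqrt{2\log(2d)} .
\]
The only delicate points are the independence claim and the bound $\beta_j\le1/p$. The first follows from joint Gaussianity of $(R,S)$ and $\operatorname{Cov}(R_j,S)=0$. The second uses only the nonnegativity of the terms in $\tau^2$.
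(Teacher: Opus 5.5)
Your proposal is correct and follows essentially the same route as the paper: regress $\mathbf Z$ on $w^\trans\mathbf Z$ to get a residual independent of the conditioning event, bound the regression coefficients by $1/w_j\le 1/p$, and apply the first bound of Lemma~\ref{proof:lem:gaussian-maxima} to the $2d$ signed residual coordinates. The only difference is cosmetic: you center the residual from the start (your $R$ is the paper's $W-\E W$), so the paper's separate estimate $\|\E W\|_\infty\le(1+p^{-1})\|\mu\|_\infty$ becomes part of your pathwise bound on $E$.
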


For a fixed realization of a Gaussian sequence, we now consider interval probabilities over all sample sizes.
The minimum probability stays positive, and its reciprocal remains integrable even after multiplication by the sum of failure probabilities.
\begin{lemma}
  \label{proof:lem:all-count-localization}
  Let $\xi_r\sim\mathcal N(0,1)$, $r\ge1$, be independent, put $S_N=\sum_{r=1}^N\xi_r$, and let $G\sim\mathcal N(0,1)$ be independent of this sequence.
  In the definition below, probability is taken over $G$, with $(\xi_r)_{r\ge1}$ held fixed.

  For $c>0$, define
  \[
    q_N(c)=\Pp\!
    \left(
    \left|\frac{S_N}{N}+\frac{G}{\sqrt N}\right|\le c
    \right).
  \]
  Then, almost surely,
  \[
    \inf_{N\ge1}q_N(c)>0,
    \qquad\sum_{N\ge1}(1-q_N(c))<\infty.
  \]
  Moreover,
  \[
    \E\!\left[\frac{1+\sum_{N\ge1}(1-q_N(c))}{\inf_{N\ge1}q_N(c)}\right]<\infty.
  \]
\end{lemma}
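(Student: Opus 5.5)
The plan is to control everything through the single random variable
\[
  D:=\sup_{N\ge1}\frac{(|S_N|-\alpha N)_+}{\sqrt N},\qquad \alpha:=c/4,
\]
which Lemma~\ref{proof:lem:gaussian-drift-moments} shows is almost surely finite; its tail bound tends to $0$ as $u\to\infty$. The same lemma shows that $D$ has the integrability we need: $\E[(1+D)^r e^{D^2/2}]<\infty$ for every $r$. By definition, $|S_N|/N\le\alpha+D/\sqrt N$ for every $N$. Conditionally on $(\xi_r)$, the variable $S_N/N+G/\sqrt N$ is $\mathcal N(S_N/N,1/N)$, so $q_N(c)$ is the probability that a Gaussian with mean $\mu_N=S_N/N$ and standard deviation $\tau_N=N^{-1/2}$ falls in $[-c,c]$.

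\emph{Lower bound on $\inf_N q_N(c)$.} I will apply the first part of Lemma~\ref{proof:lem:gaussian-interval-probabilities} with $[a,b]=[-c,c]$ and $\ell=2c$. This is admissible because $b-a=2c\ge 2c\,\tau_N$ for $N\ge1$; the lemma is stated for a half-open interval, which does not matter since $q_N$ involves a continuous distribution. The key estimate is
\[
  \operatorname{dist}(\mu_N,[-c,c])\le(|S_N|/N-c)_+\le(D/\sqrt N-3c/4)_+\le D/\sqrt N,
\]
so that $\operatorname{dist}/\tau_N\le D$ and $\operatorname{dist}^2/(2\tau_N^2)\le D^2/2$. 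This yields, uniformly in $N$,
\[
  q_N(c)\ge \frac{\min\{1,2c\}\,e^{-3/2}}{\sqrt{2\pi}\,(1+D)}\,e^{-D^2/2}=:\frac{c_1}{(1+D)e^{D^2/2}}.
\]
Since $D<\infty$ almost surely, it follows that $\inf_N q_N(c)>0$ almost surely.

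\emph{Summability of failure probabilities.} I split according to whether $N\ge N_0:=\lceil 16D^2/c^2\rceil$. For such $N$ we have $D/\sqrt N\le c/4$, hence $|\mu_N|\le c/2$. Therefore
\[
  1-q_N(c)\le\Pp(|G|>c\sqrt N/2)\le 2e^{-c^2N/8}.
\]
For $N<N_0$ I use the trivial bound $1$. Combining the two ranges gives
\[
  \sum_N(1-q_N(c))\le 1+16D^2/c^2+K_c,\qquad K_c:=\sum_N 2e^{-c^2N/8}<\infty.
\]
This sum is almost surely finite.

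\emph{Integrability.} Multiplying the two bounds, the target ratio is at most
\[
  C_c\,(1+D^2)(1+D)e^{D^2/2}\le C_c'(1+D)^3 e^{D^2/2}.
\]
This is integrable by the exponential-moment statement of Lemma~\ref{proof:lem:gaussian-drift-moments} with $r=3$, because $D^2=\sup_N(|S_N|-\alpha N)_+^2/N$ is exactly the quantity appearing in the exponent there.

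The main obstacle is that the reciprocal $1/\inf_N q_N$ grows like $e^{D^2/2}$. A plain union or concentration bound on $\sup_N|S_N/N|$ would therefore not suffice. The proof hinges on matching the Gaussian interval lower bound precisely with the drift-corrected supremum, so that the exponent is exactly $D^2/2$. This is why the drift $\alpha$ is taken strictly smaller than $c$: subtracting $\alpha N$ absorbs the $O(1)$ part of the mean into the interval width, which prevents a divergent factor $e^{Nc^2/2}$.
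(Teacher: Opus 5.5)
Your proposal is correct and follows essentially the same route as the paper. Both arguments lower-bound $q_N(c)$ via Lemma~\ref{proof:lem:gaussian-interval-probabilities} with $\ell=2c$ and a drift-corrected supremum, bound $\sum_N(1-q_N(c))$ by $O(D_{c/4}^2)$ plus a geometric series, and conclude with Lemma~\ref{proof:lem:gaussian-drift-moments} at $(\alpha,r)=(c/4,3)$. The differences are cosmetic: you use the single variable $D_{c/4}$ throughout, whereas the paper uses $D_c$ for the infimum and then invokes $D_c\le D_{c/4}$; and you cut the sum at the $D$-dependent index $N_0$, whereas the paper uses the indicator $\1_{\{|S_N|>cN/2\}}$.
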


Finally, suppose that a weighted mean lies below a threshold by a fixed amount.
If the conditional Gaussian distribution does not yet put high probability below that threshold, both the minimum sample size and the probability of a small positive deviation can be controlled uniformly.
\begin{lemma}
  \label{proof:lem:gaussian-negative-mean-interval}
  Fix an integer $d\ge1$, $\mu\in\R^d$, $\sigma,\Delta>0$,
  $0<\varepsilon<1/2$, and $0<L\le\Delta/2$.
  Let $\xi_{j,r}\sim\mathcal N(0,1)$, $j\in[d]$, $r\ge1$, be independent, and put $S_{j,N}=\sum_{r=1}^N\xi_{j,r}$.

  Define
  \[
    H_\Delta=\left\{
    \sum_{j=1}^d\left(
    \sup_{N\ge1}\frac{(|S_{j,N}|-3\Delta N/(8\sigma))_+}{\sqrt N}
    \right)^2
    \right\}^{1/2}.
  \]
  Let $G_j\sim\mathcal N(0,1)$, $j\in[d]$, be independent of each other and of all $\xi_{j,r}$.
  For $\mathbf n\in\{1,2,\ldots\}^d$, define
  \[
    \mathbf Z_{\mathbf n}
    =\left(
    \mu_j+\frac{\sigma S_{j,n_j}}{n_j}
    +\frac{\sigma G_j}{\sqrt{n_j}}
    \right)_{j=1}^d.
  \]

  All probabilities involving $\mathbf Z_{\mathbf n}$ below are taken over $G_1,\ldots,G_d$, with $(\xi_{j,r})_{j\in[d],\,r\ge1}$ held fixed.
  There exists $c_0>0$, depending only on $(\sigma,L,\varepsilon)$, such that the following holds almost surely, simultaneously for every $\mathbf n\in\{1,2,\ldots\}^d$ and every $(w,h)\in\simplex_d\times\R$ satisfying $w^\trans\mu-h\le-\Delta$.
  If $\Pp(w^\trans\mathbf Z_{\mathbf n}-h\le-\Delta/4)<1-\varepsilon$, then
  \[
    \begin{gathered}
      \frac{3\Delta}{8\sigma}\sqrt{\min_j n_j}<H_\Delta+\Phi^{-1}(1-\varepsilon),\\
      \Pp(0<w^\trans\mathbf Z_{\mathbf n}-h\le L)
      \ge\frac{c_0}{(1+H_\Delta)e^{H_\Delta^2/2}}.
    \end{gathered}
  \]
  Moreover, for every nonnegative integer $r$,
  \[
    \E\!\left[(1+H_\Delta)^r e^{H_\Delta^2/2}\right]<\infty.
  \]
\end{lemma}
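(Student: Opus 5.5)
The plan is to reduce everything to a one-dimensional Gaussian in the auxiliary draws $G_1,\ldots,G_d$, with the noise sequences held fixed. For fixed $(\mathbf n,w,h)$, set $X:=w^\trans\mathbf Z_{\mathbf n}-h$. Conditionally on $(\xi_{j,r})$, this variable satisfies $X\sim\mathcal N(m,\tau^2)$, where
\[
  m=w^\trans\mu-h+\sigma\sum_j w_j\frac{S_{j,n_j}}{n_j},
  \qquad
  \tau^2=\sigma^2\sum_j\frac{w_j^2}{n_j}.
\]
Write $\alpha:=3\Delta/(8\sigma)$ and $H_j:=\sup_{N\ge1}(|S_{j,N}|-\alpha N)_+/\sqrt N$, so that $H_\Delta=\|(H_j)_j\|_2$ and $|S_{j,n_j}|/n_j\le\alpha+H_j/\sqrt{n_j}$. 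The key step is to control the drift by Cauchy--Schwarz against $\tau$ rather than against $\min_j n_j$:
\[
  \sigma\sum_j w_j\frac{H_j}{\sqrt{n_j}}
  \le\sigma H_\Delta\Bigl(\sum_j\frac{w_j^2}{n_j}\Bigr)^{1/2}=H_\Delta\tau .
\]
Since $\sum_j w_j=1$ and $w^\trans\mu-h\le-\Delta$, this yields $m\le-5\Delta/8+H_\Delta\tau$. This is why $H_\Delta$ is defined as an $\ell_2$ norm, and I expect it to be the crux: the naive bound by $\sigma H_\Delta/\sqrt{\min_j n_j}$ cannot be compared with $\tau$ when the weights concentrate on heavily sampled coordinates.

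Next I use the hypothesis. Put $q:=\Phi^{-1}(1-\varepsilon)>0$. The condition $\Pp(X\le-\Delta/4)<1-\varepsilon$ means $\Phi((m+\Delta/4)/\tau)>\varepsilon$, so $m>-\Delta/4-q\tau$. Combined with the upper bound on $m$, this gives $3\Delta/8<(H_\Delta+q)\tau$. Since $\tau\le\sigma/\sqrt{\min_j n_j}$, the first claim $\alpha\sqrt{\min_j n_j}<H_\Delta+q$ follows. The same two inequalities also give $\Delta/\tau<\tfrac83(H_\Delta+q)$ and hence
\[
  -\frac m\tau<\frac{\Delta}{4\tau}+q\le\frac23H_\Delta+\frac53q .
\]

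For the interval probability I split on the sign of $m$. If $m\le0$, I apply the second part of Lemma~\ref{proof:lem:gaussian-interval-probabilities} with $v=\sigma$, which is valid because $\tau\le\sigma$. This gives $\Pp(0<X\le L)\ge(1-e^{-\min\{1,L/\sigma\}^2/2})\,\bar\Phi(-m/\tau)$. A Mills-ratio lower bound then yields $\bar\Phi(\tfrac23H_\Delta+\tfrac53q)\ge c/\bigl((1+H_\Delta)e^{H_\Delta^2/2}\bigr)$, because $(\tfrac23H+c')^2\le H^2+c''$ with $c',c''$ depending only on $q$. If $m>0$, then $\operatorname{dist}(m,[0,L])\le m\le H_\Delta\tau$. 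The first part of Lemma~\ref{proof:lem:gaussian-interval-probabilities} with $\ell=L/\tau\ge L/\sigma$ then gives a bound of the same form. In both cases the constant depends only on $(\sigma,L,\varepsilon)$, and none of these estimates depends on $(\mathbf n,w,h)$. The bound therefore holds simultaneously for all of them on a single almost-sure event, namely the event where each $H_j<\infty$, which holds by Lemma~\ref{proof:lem:gaussian-drift-moments}.

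Finally, for the moment bound, independence across $j$ gives $e^{H_\Delta^2/2}=\prod_je^{H_j^2/2}$ and $(1+H_\Delta)^r\le\prod_j(1+H_j)^r$. Hence
\[
  \E\bigl[(1+H_\Delta)^re^{H_\Delta^2/2}\bigr]
  \le\prod_j\E\bigl[(1+H_j)^re^{H_j^2/2}\bigr].
\]
Each factor is finite by the second statement of Lemma~\ref{proof:lem:gaussian-drift-moments}, applied with $\alpha=3\Delta/(8\sigma)$.
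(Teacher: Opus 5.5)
Your proposal is correct and follows the paper's proof essentially step by step: the same Cauchy--Schwarz bound $m\le-5\Delta/8+\tau H_\Delta$, the same derivation of the count inequality and of $-m/\tau<\tfrac23H_\Delta+\tfrac53\Phi^{-1}(1-\varepsilon)$, the same almost-sure event, and the same product argument for the moment bound. The only deviation is in the interval probability. For $m\le0$ you use the tail-comparison part of Lemma~\ref{proof:lem:gaussian-interval-probabilities} plus a Mills-ratio bound, whereas the paper uses the distance-based part for both signs with $\ell=L/(2\sigma)$. In your $m>0$ case, take $\ell=L/\sigma$ (allowed because $\tau\le\sigma$) rather than $\ell=L/\tau$. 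Since $\delta\mapsto\delta e^{-\delta-\delta^2/2}$ is not monotone on $[0,1]$, this choice is what makes the constant visibly independent of $\tau$.
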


\subsection{Proof of Lemma~\ref{proof:lem:gaussian-maxima}}
\begin{proof}
  For $D=1$, the expectation of $Z_1$ is zero.
  For $D\ge2$ and $\lambda>0$, the Gaussian moment generating function and Jensen's inequality give
  \[
    \exp\!
    \left\{\lambda\E\max_{1\le j\le D}
    Z_j\right\}
    \le\E\exp\!
    \left\{\lambda\max_{1\le j\le D}
    Z_j\right\}
    \le\sum_{j=1}^D\E e^{\lambda Z_j}
    \le D e^{\lambda^2v^2/2}.
  \]
  Taking logarithms and setting $\lambda=\sqrt{2\log D}/v$ proves the mean bound.
  The Gaussian tail bound and a union bound yield, for $u\ge0$,
  \[
    \Pp\!
    \left(\max_{1\le j\le D}|Z_j|>v\sqrt u\right)
    \le\min\{1,2D e^{-u/2}\}.
  \]
  Integrating gives
  \[
    \E\!\left[\max_{1\le j\le D}|Z_j|^2\right]
    \le v^2\left[2\log(2D)+\int_{2\log(2D)}^\infty2D e^{-u/2}\,du\right]
    =2v^2\{\log(2D)+1\}.
  \]
  Applying the mean bound to the independent variables $G_j\sim\mathcal N(0,1)$ gives $m_k\le\sqrt{2\log k}$, and $m_1=0$.
  Integration by parts gives the Gaussian tail lower bound
  \[
    1-\Phi(x)=\frac{\phi(x)}x-\int_x^\infty\frac{\phi(u)}{u^2}\,du
    \ge\frac{x\phi(x)}{1+x^2}\qquad(x>0).
  \]
  For $k\ge16$, it implies
  \[
    k\bigl[1-\Phi(\sqrt{\log k}/2)\bigr]
    \ge\frac{2k^{7/8}\sqrt{\log k}}{(\log k+4)\sqrt{2\pi}}\ge1.
  \]
  The last expression is increasing: its logarithmic derivative with respect to $\log k$ is $7/8+1/(2\log k)-1/(\log k+4)>0$, and its value at $k=16$ is $8\sqrt{\log2}/[(1+\log2)\sqrt\pi]\ge\sqrt2$.
  Integrating the nonnegative maximum-tail integrand over $[0,\sqrt{\log k}/2]$ now gives
  \[
    \begin{aligned}
      m_k & =\int_0^\infty[1-\Phi(x)^k-(1-\Phi(x))^k]\,dx \\
          & \ge\tfrac12\sqrt{\log k}\,(1-e^{-1}-2^{-k})
      >\tfrac3{16}\sqrt{\log k}.
    \end{aligned}
  \]
  For $2\le k<16$, monotonicity gives $m_k\ge m_2=1/\sqrt\pi\ge1/2>3\sqrt{\log k}/16$.
\end{proof}

\subsection{Proof of Lemma~\ref{proof:lem:gaussian-order-statistics}}
\begin{proof}
  For $1\le j\le M$ and $u\ge0$, Markov's inequality and the Gaussian tail bound give
  \[
    \begin{aligned}
      \Pp(W_{(j)}>u)
       & =\Pp\!\left(\sum_{a=1}^M\1_{\{|G_a|>u\}}\ge j\right)    \\
       & \le\min\left\{1,\frac1j\sum_{a=1}^M\Pp(|G_a|>u)\right\}
      \le\min\left\{1,\frac{2M}{j}e^{-u^2/2}\right\}.
    \end{aligned}
  \]
  Integrating this bound and using $\int_v^\infty e^{-u^2/2}\,du\le v^{-1}e^{-v^2/2}$ for $v>0$ yields
  \[
    \begin{aligned}
      \E[W_{(j)}]
       & \le\sqrt{2\log(2M/j)}
      +\frac{2M}{j}\int_{\sqrt{2\log(2M/j)}}^\infty e^{-u^2/2}\,du \\
       & \le\sqrt{2\log(2M/j)}+\frac1{\sqrt{2\log(2M/j)}}
      \le\sqrt{2\log(2M/j)}+1.
    \end{aligned}
  \]
  Here, $\sqrt{2\log(2M/j)}\ge\sqrt{2\log2}>1$.
  Consequently, for $1\le L\le M$, we have
  \[
    \begin{aligned}
      \E\!\left[\sum_{j=1}^L\frac{W_{(j)}}{\sqrt j}\right]
       & \le\sum_{j=1}^L\frac{\sqrt{2\log(2M/j)}+1}{\sqrt j} \\
       & \le(\sqrt{2\log(2M/L)}+1)\sum_{j=1}^Lj^{-1/2}
      +\sqrt2\sum_{j=1}^L\frac{\sqrt{\log(L/j)}}{\sqrt j}    \\
       & \le2(\sqrt{2\log(2M/L)}+1)\sqrt L
      +\sqrt2\int_0^L\frac{\sqrt{\log(L/x)}}{\sqrt x}\,dx    \\
       & =(2\sqrt2\sqrt{\log(2M/L)}+2\sqrt\pi+2)\sqrt L      \\
       & \le(2\sqrt2+2\sqrt\pi+2)\sqrt{L\log(eM/L)}
      <10\sqrt{L\log(eM/L)}.
    \end{aligned}
  \]
  The second inequality uses $\sqrt{a+b}\le\sqrt a+\sqrt b$ for $a,b\ge0$.
  The third uses integral comparison, and the equality follows from
  $\int_0^Lx^{-1/2}\sqrt{\log(L/x)}\,dx=\sqrt{2\pi L}$.
\end{proof}

\subsection{Proof of Lemma~\ref{proof:lem:main-shifted-gaussian-maximum}}
\begin{proof}
  \label{proof:sec:main-shifted-gaussian-maximum-proof}
  For $i\in S$, decompose $Z_i$ into independent variables with distributions $\mathcal N(0,s^2)$ and $\mathcal N(0,s_i^2-s^2)$.
  Jensen's inequality removes the latter variables.
  Averaging all permutations of the shifts and using convexity then gives
  \[
    \E\max_{i\in[m]}(\theta_i+Z_i)
    \ge\frac1{|S|}\sum_{i\in S}\theta_i+sm_{|S|}.
  \]
  The expectation is also at least $\max_{i\notin S}\theta_i\ge-\sum_{i\in S}\theta_i/(m-|S|)$.
  Combining the two bounds with weights $|S|/m$ and $(m-|S|)/m$ cancels the shifts and proves the claim.
\end{proof}

\subsection{Proof of Lemma~\ref{proof:lem:complete-table-envelope}}
\begin{proof}
  For every integer $k\ge1$, we write $U_{D,k}=\max_{a\in[D]}\sup_{N\ge k}|S_{a,N}|/N$.
  Put $S_{a,0}=0$.
  For $u>0$, $\exp(uS_{a,N}-u^2N/2)$ is a nonnegative martingale with respect to $(\xi_{a,r})_{r\ge1}$, with initial value one.
  By \citet[Theorem~3.9]{lattimore2020bandit},
  \[
    \Pp(\exists N\ge k:S_{a,N}>uN)
    \le\Pp\left(\sup_{N\ge0}e^{uS_{a,N}-u^2N/2}\ge e^{ku^2/2}\right)
    \le e^{-ku^2/2}.
  \]
  A union bound over both signs and all $D$ sequences proves the tail bound, also trivially valid at $u=0$.
  Integrating gives
  \[
    \begin{aligned}
      \E[U_{D,k}]    & \le\sqrt{\frac{2\log(2D)}k}+\frac1{\sqrt{2k\log(2D)}}
      \le2\sqrt{\frac{2\log(2D)}k},                                          \\
      \E[kU_{D,k}^2] & \le2\log(2D)+\int_{2\log(2D)}^\infty2De^{-v/2}\,dv
      =2\{\log(2D)+1\}.
    \end{aligned}
  \]
  The same tail bound gives $\Pp(U_{D,1}>\sqrt{2\log(2D)}+v)\le e^{-v^2/2}$ and hence
  \[
    \E[(U_{D,1}-\sqrt{2\log(2D)})_+^r]
    \le\int_0^\infty rv^{r-1}e^{-v^2/2}\,dv
      =2^{r/2}\Gamma(1+r/2).
  \]
  Combining $(a+b)^r\le2^{r-1}(a^r+b^r)$ and $(1+\sqrt{2\log(2D)})^2\le4(1+\log(2D))$ gives
  \[
    C_r=2^{r-1}\{2^r+2^{r/2}\Gamma(1+r/2)\},\qquad C_4=192.
  \]
\end{proof}

\subsection{Proof of Lemma~\ref{proof:lem:gaussian-drift-moments}}
\begin{proof}
  For $\alpha>0$, we define $D_\alpha=\sup_{N\ge1}(|S_N|-\alpha N)_+/\sqrt N$.
  For $u\ge0$, the Gaussian tail bound and a union bound give
  \[
    \begin{aligned}
      \Pp(D_\alpha>u)
       & \le2\sum_{N\ge1}e^{-(\alpha\sqrt N+u)^2/2}          \\
       & \le2e^{-u^2/2-\alpha u}\sum_{N\ge1}e^{-\alpha^2N/2}
      =\frac{2e^{-u^2/2-\alpha u}}{e^{\alpha^2/2}-1}.
    \end{aligned}
  \]
  Integrating this tail bound against the derivative of $(1+u)^r e^{u^2/2}$ gives
  \[
    \E[(1+D_\alpha)^r e^{D_\alpha^2/2}]
    \le1+\frac2{e^{\alpha^2/2}-1}\int_0^\infty
    [r(1+u)^{r-1}+u(1+u)^r]e^{-\alpha u}\,du.
  \]
  The first term in the integrand is zero when $r=0$.
  The integral is finite because its integrand is a polynomial times $e^{-\alpha u}$.
\end{proof}

\subsection{Proof of Lemma~\ref{proof:lem:gaussian-random-index-limit}}
\begin{proof}
  For each $a\in[D]$, put $n_a(t)=\max\{1,\lfloor\nu_a t\rfloor\}$ and fix $0<\delta<\nu_a/2$.
  In either direction from $n_a(t)$, the first $\lceil\delta t\rceil$ increments are independent and centered, with total variance $\lceil\delta t\rceil$.
  Kolmogorov's maximal inequality and a union bound give, for sufficiently large $t$ and $\varepsilon>0$,
  \[
    \Pp(|S_{a,N_a(t)}-S_{a,n_a(t)}|>\varepsilon\sqrt t)
    \le\Pp(|N_a(t)-n_a(t)|>\delta t)
    +\frac{2\lceil\delta t\rceil}{\varepsilon^2t}.
  \]
  Letting $t\to\infty$ and then $\delta\downarrow0$ gives
  $(S_{a,N_a(t)}-S_{a,n_a(t)})/\sqrt t\to0$ in probability.
  Since $N_a(t)/t\to\nu_a>0$ and $n_a(t)/t\to\nu_a$, it follows jointly over the finite set $[D]$ that
  \[
    \frac{S_{a,N_a(t)}}{\sqrt{N_a(t)}}
    -\frac{S_{a,n_a(t)}}{\sqrt{n_a(t)}}
    \longrightarrow0\quad\text{in probability}.
  \]
  For every $t$, $S_{a,n_a(t)}/\sqrt{n_a(t)}\sim\mathcal N(0,1)$ independently across $a$, because the indices are deterministic.
  Thus, Slutsky's theorem applied to
  \[
    \sqrt t\,\frac{S_{a,N_a(t)}}{N_a(t)}
    =\sqrt{t/N_a(t)}\,
    \frac{S_{a,n_a(t)}}{\sqrt{n_a(t)}}+o_{\Pp}(1)
  \]
  proves the first limit.
  For the second, the same replacement gives
  \[
    \sqrt t\left\{\frac{S_{a,N_a(t)}}{N_a(t)}
    +\frac{G_{a,t}}{\sqrt{N_a(t)}}\right\}
    =\sqrt{t/N_a(t)}
    \left[\frac{S_{a,n_a(t)}}{\sqrt{n_a(t)}}+G_{a,t}\right]+o_{\Pp}(1).
  \]
  The bracketed variables are independent $\mathcal N(0,2)$ across $a$ at every $t$.
  Another application of Slutsky's theorem proves the second limit.
\end{proof}

\subsection{Proof of Lemma~\ref{proof:lem:gaussian-interval-probabilities}}
\begin{proof}
  Standardize the interval and write $z=\operatorname{dist}(\mu,[a,b])/\tau$.
  If the standardized interval avoids zero, symmetry and the monotonicity of $\phi$ on $[0,\infty)$ give
  $\Pp(a<Z\le b)\ge\int_z^{z+\ell}\phi(v)\,dv$.
  If it contains zero, let $u_1,u_2\ge0$ be its endpoint distances.
  Then, $u_1+u_2=(b-a)/\tau\ge\ell$, $z=0$, and
  \[
    \Pp(a<Z\le b)
    =\int_0^{u_1}\phi(v)\,dv+\int_0^{u_2}\phi(v)\,dv
    \ge\int_0^{u_1+u_2}\phi(v)\,dv
    \ge\int_0^\ell\phi(v)\,dv.
  \]
  In both cases, with $\delta_\ell=\min\{1,\ell\}$, we obtain
  \[
    \begin{aligned}
      \Pp(a<Z\le b)
       & \ge\int_z^{z+\delta_\ell/(1+z)}\phi(v)\,dv                    \\
       & \ge\frac{\delta_\ell}{\sqrt{2\pi}(1+z)}
      \exp\!\left\{-\frac{z^2}{2}
      -\frac{z\delta_\ell}{1+z}-\frac{\delta_\ell^2}{2(1+z)^2}\right\} \\
       & \ge\frac{\delta_\ell}{\sqrt{2\pi}(1+z)}
      \exp\!\left\{-\frac{z^2}{2}-\delta_\ell-\frac{\delta_\ell^2}{2}\right\}.
    \end{aligned}
  \]
  For the second claim, put $z=-\mu/\tau\ge0$ and $b_0=\min\{1,L/v\}>0$.
  Since $L/\tau\ge b_0$, we have
  \[
    1-\Phi(z+L/\tau)
    \le\int_z^\infty\phi(u+b_0)\,du
    \le e^{-b_0^2/2}[1-\Phi(z)].
  \]
  Subtracting from $\Pp(Z>0)=1-\Phi(z)$ proves the claim.
\end{proof}

\subsection{Proof of Lemma~\ref{proof:lem:window-regression}}
\begin{proof}
  We write $U=w^\trans\mathbf Z-h$.
  Set $\Sigma=\operatorname{diag}(v_1^2,\ldots,v_d^2)$, $\tau^2=w^\trans\Sigma w>0$ and $W=\mathbf Z-\Sigma w(w^\trans\mathbf Z)/\tau^2$.
  Direct calculation gives
  \[
    \begin{gathered}
      \operatorname{Cov}(W,w^\trans\mathbf Z)=0,
      0\le\frac{(\Sigma w)_j}{\tau^2}\le\frac1{w_j}\le\frac1p,
      \qquad\operatorname{Var}(W_j)\le\sigma^2,\\
      \qquad\|\E W\|_\infty\le(1+p^{-1})\|\mu\|_\infty.
    \end{gathered}
  \]
  Thus, $W$ and $U$ are independent Gaussians.
  Apply Lemma~\ref{proof:lem:gaussian-maxima} to the $2d$ centered coordinates $W_j-\E W_j$ and their negatives to obtain
  \[
    \E\|W-\E W\|_\infty\le\sigma\sqrt{2\log(2d)}.
  \]
  On $0<U\le L$, we have $\|\mathbf Z\|_\infty\le\|W\|_\infty+(M+L)/p$.
  Independence and the bound on $\|\E W\|_\infty$ consequently give
  \[
    \begin{aligned}
      \E[\|\mathbf Z\|_\infty\mid0<U\le L]
       & \le\|\E W\|_\infty+\E\|W-\E W\|_\infty+\frac{M+L}{p} \\
       & \le\frac{M+L}{p}
      +(1+p^{-1})\|\mu\|_\infty+\sigma\sqrt{2\log(2d)}.
    \end{aligned}
  \]
\end{proof}

\subsection{Proof of Lemma~\ref{proof:lem:all-count-localization}}
\begin{proof}
  We fix $c>0$.
  We write $q_*(c)=\inf_{N\ge1}q_N(c)$ and $R(c)=\sum_{N\ge1}(1-q_N(c))$.
  For $\alpha>0$, we write $D_\alpha=\sup_{N\ge1}(|S_N|-\alpha N)_+/\sqrt N$.
  For each fixed realization of $(\xi_r)_{r\ge1}$, the variable $S_N/N+G/\sqrt N$ is Gaussian with mean $S_N/N$ and standard deviation $N^{-1/2}$.
  Apply Lemma~\ref{proof:lem:gaussian-interval-probabilities} to $[-c,c]$ with $\ell=2c$.
  Its standardized distance from the mean is $(|S_N|-cN)_+/\sqrt N\le D_c$, so
  \[
    \frac1{q_*(c)}
    \le\frac{\sqrt{2\pi}}{\min\{1,2c\}}
    \exp\!
    \left\{\min\{1,2c\}+\frac{\min\{1,2c\}^2}{2}\right\}
    (1+D_c)e^{D_c^2/2}.
  \]
  In particular, Lemma~\ref{proof:lem:gaussian-drift-moments} gives $\E[1/q_*(c)]<\infty$.
  For the sum of failure probabilities, the Gaussian tail bound gives
  \[
    1-q_N(c)\le\1_{\{|S_N|>cN/2\}}+e^{-c^2N/8}.
  \]
  If $|S_N|>cN/2$, then $D_{c/4}>c\sqrt N/4$.
  Consequently,
  \[
    R(c)\le\frac{16}{c^2}D_{c/4}^2+\frac1{e^{c^2/8}-1}.
  \]
  Since $D_c\le D_{c/4}$ and $u^2(1+u)\le(1+u)^3$ for $u\ge0$, these bounds give
  \[
    \begin{aligned}
      \E\!\left[\frac{R(c)}{q_*(c)}\right]
       & \le\frac{\sqrt{2\pi}}{\min\{1,2c\}}
      \exp\!\left\{\min\{1,2c\}+\frac{\min\{1,2c\}^2}{2}\right\}    \\
       & \qquad\cdot\left(\frac{16}{c^2}+\frac1{e^{c^2/8}-1}\right)
      \E[(1+D_{c/4})^3 e^{D_{c/4}^2/2}]<\infty.
    \end{aligned}
  \]
  Finiteness follows from Lemma~\ref{proof:lem:gaussian-drift-moments} with $(\alpha,r)=(c/4,3)$.
  Combining the two expectations proves the stated bound.
  It also implies $q_*(c)>0$ and $R(c)<\infty$ almost surely, since $q_*(c)\le1$.
\end{proof}

\subsection{Proof of Lemma~\ref{proof:lem:gaussian-negative-mean-interval}}
\begin{proof}
  Set $\alpha=3\Delta/(8\sigma)$ and write $D_{\alpha,j}=\sup_{N\ge1}(|S_{j,N}|-\alpha N)_+/\sqrt N$ and $z_\varepsilon=\Phi^{-1}(1-\varepsilon)$.
  We fix a realization of $(\xi_{j,r})_{j\in[d],\,r\ge1}$ for which every $D_{\alpha,j}$ is finite.
  We also fix an arbitrary index vector $\mathbf n$ and $(w,h)$ satisfying $w^\trans\mu-h\le-\Delta$.
  With these sequences held fixed, $w^\trans\mathbf Z_{\mathbf n}-h$ is Gaussian with mean and variance
  \[
    \theta=w^\trans\mu-h+\sigma\sum_{j=1}^d w_j S_{j,n_j}/n_j,
    \qquad
    \tau^2=\sigma^2\sum_{j=1}^d w_j^2/n_j>0.
  \]
  By the definition of $D_{\alpha,j}$ and Cauchy--Schwarz,
  \[
    \theta\le-\Delta+\sigma\alpha
    +\sigma\sum_{j=1}^d\frac{w_jD_{\alpha,j}}{\sqrt{n_j}}
    \le-5\Delta/8+\tau H_\Delta,
    \qquad
    \tau^2\le\frac{\sigma^2}{\min_j n_j}.
  \]
  If $\Pp(w^\trans\mathbf Z_{\mathbf n}-h\le-\Delta/4)<1-\varepsilon$, then
  $\theta>-\Delta/4-\tau z_\varepsilon$, so
  \[
    \alpha\sqrt{\min_j n_j}
    \le\frac{3\Delta}{8\tau}<H_\Delta+z_\varepsilon.
  \]
  The standardized distance from $\theta$ to $[0,L]$ therefore satisfies
  \[
    \frac{\operatorname{dist}(\theta,[0,L])}{\tau}
    \le\begin{cases}
      2H_\Delta/3+5z_\varepsilon/3, & \theta<0,   \\
      H_\Delta,                     & \theta\ge0.
    \end{cases}
  \]
  For the second case, use $\theta\le-5\Delta/8+\tau H_\Delta$.
  For the first, use $-\theta/\tau<\Delta/(4\tau)+z_\varepsilon$ and the preceding bound on $3\Delta/(8\tau)$.
  The identity
  \[
    (2u/3+5v/3)^2-u^2=-5(u-2v)^2/9+5v^2
  \]
  shows that the squared distance is at most $H_\Delta^2+5z_\varepsilon^2$.
  Its linear factor is at most
  $(1+5z_\varepsilon/3)(1+H_\Delta)$.
  Apply Lemma~\ref{proof:lem:gaussian-interval-probabilities} with $\ell=L/(2\sigma)$, noting that $\tau\le\sigma$.
  With $\delta_\ell=\min\{1,\ell\}$, this gives the desired lower bound with
  \[
    c_0=\frac{\delta_\ell}{\sqrt{2\pi}(1+5z_\varepsilon/3)}
    \exp\!
    \left\{-\delta_\ell-\frac{\delta_\ell^2}{2}
    -\frac{5z_\varepsilon^2}{2}\right\}>0.
  \]
  These bounds hold for each fixed realization of the sequences and therefore simultaneously for all the stated index vectors and weights.
  For the moment claim, we use $1+(\sum_j h_j^2)^{1/2}\le\prod_j(1+h_j)$ for $h_j\ge0$ and independence across the sequences to obtain
  \[
    \E[(1+H_\Delta)^r e^{H_\Delta^2/2}]
    \le\prod_{j=1}^d\E[(1+D_{\alpha,j})^r e^{D_{\alpha,j}^2/2}]<\infty.
  \]
  Lemma~\ref{proof:lem:gaussian-drift-moments} proves finiteness of each factor.
\end{proof}

\section{Proofs for the Finite-Horizon Bound}
\label{proof:sec:proof-unique-rate}
\subsection{Notation}
Let $\xi_{a,r}\sim\mathcal N(0,1)$, $a\in[m]\times[n]$, $r\ge1$, be independent.
For $r,N,t\ge1$, including initialization at $r=1$, set
\[
  \begin{gathered}
    Y_{a,r}=A_a^\star+\sigma\xi_{a,r},\qquad
    S_{a,N}=\sum_{r=1}^N\xi_{a,r},\\
    N_a(t)=1+\sum_{r=1}^{t-1}\1_{\{(I_r,J_r)=a\}},\qquad
    \widehat A_{a,N}=\frac1N\sum_{r=1}^N Y_{a,r}.
  \end{gathered}
\]

Throughout this appendix, fix the unique-equilibrium instance of Theorem~\ref{thm:unique-rate}.
Write $I^\star=\supp(x^\star)$, $J^\star=\supp(y^\star)$, and $v^\star=(x^\star)^\trans A^\star y^\star$.
Fix the constants of Lemmas~\ref{proof:lem:local-geometry-from-uniqueness}, \ref{proof:lem:core-gap-bound}, and~\ref{proof:lem:gradient-margins}.
Define
\[
  \mathcal L_\rho=\{C:\normmax{C-A^\star_{I^\star,J^\star}}\le\rho\}.
\]
For $C\in\mathcal L_\rho$, let $(x^C,y^C)$ be its unique equilibrium and $(\bar x^C,\bar y^C)$ its zero extension.
Set
\begin{gather*}
  C_t=(\widetilde A_t)_{I^\star,J^\star},\qquad
  \mathsf{Near}_t=\{C_t\in\mathcal L_\rho\},\qquad
  \mathsf{Supp}_t=\mathsf{Near}_t\cap\{(x_t,y_t)=(\bar x^{C_t},\bar y^{C_t})\}.
\end{gather*}
An indicator multiplied by a quantity defined only on its event denotes zero on the complement.
For a matrix $B$ with $B_{I^\star,J^\star}\in\mathcal L_\rho$, define the outside-action margins by
\begin{align*}
  U_i(B) & =\sum_{j\in J^\star}B_{ij}y_j^{B_{I^\star,J^\star}}-v^{B_{I^\star,J^\star}}\quad(i\notin I^\star), \\
  V_j(B) & =v^{B_{I^\star,J^\star}}-\sum_{i\in I^\star}x_i^{B_{I^\star,J^\star}}B_{ij}\quad(j\notin J^\star).
\end{align*}
For an integer $k\ge1$, define
\begin{align*}
  \mathsf{D}_t(k)=\mathsf{Near}_t\cap\mathsf{Supp}_t^c
   & \cap\bigcap_{i\notin I^\star}\bigl(\{U_i(\widetilde A_t)\le0\}\cup\{\min_{j\in J^\star}
  N_{ij}(t)<k\}\bigr)                                                                                            \\
   & \cap\bigcap_{j\notin J^\star}\bigl(\{V_j(\widetilde A_t)\le0\}\cup\{\min_{i\in I^\star}N_{ij}(t)<k\}\bigr).
\end{align*}
Empty intersections impose no condition.

For an integer $k\ge1$, define
\[
  \mathcal V_t^\times(k)
  =\{(i,j)\in
  (I^{\star c}\times J^\star)\cup(I^\star\times J^{\star c}):N_{ij}(t)<k\},
\]
where the complements are relative to $[m]$ and $[n]$.
Abbreviate the fixed-core conditional probability and expectation by
\begin{align*}
  \Pp_t^C(\cdot) & =\Pp(\cdot\mid(\xi_{a,r})_{a\in[m]\times[n],\,r\ge1},(I_r,J_r)_{r<t},C_t=C), \\
  \E_t^C[\cdot]  & =\E[\cdot\mid(\xi_{a,r})_{a\in[m]\times[n],\,r\ge1},(I_r,J_r)_{r<t},C_t=C],
\end{align*}
and write $\Pp_t^{C_t}$ for $\Pp_t^C$ evaluated at the realized core $C_t$.

\subsection{Proof of Theorem~\ref{thm:unique-rate}}
\begin{proof}
  \label{proof:sec:notation}
  Let $H=T-mn$.
  The initial decomposition is
  \[
    \mathcal R_T=\mathcal R_{\rm init}+\sum_{t=1}^H\Gap_{A^\star}(x_t,y_t),\qquad
    0\le\mathcal R_{\rm init}\le2\normmax{A^\star} mn.
  \]
  Since $\mathsf{Supp}_t\subseteq\mathsf{Near}_t$ and $\Gap_{A^\star}\le2\normmax{A^\star}$, the partition of online rounds gives
  \begin{align*}
    \mathcal R_T
     & \le 2\normmax{A^\star} mn+2\normmax{A^\star}\sum_{t=1}^H\1_{\mathsf{Near}_t^c} \\
     & \quad+\sum_{t=1}^H\1_{\mathsf{Supp}_t}\Gap_{A^\star}(x_t,y_t)
    +2\normmax{A^\star}\sum_{t=1}^H\1_{\mathsf{Near}_t\cap\mathsf{Supp}_t^c}.
  \end{align*}
  Taking expectations, we obtain
  \begin{equation}
    \begin{aligned}
      \E[\mathcal R_T]
       & \le2\normmax{A^\star} mn+2\normmax{A^\star}\sum_{t=1}^H\Pp(\mathsf{Near}_t^c) \\
       & \quad+\E\left[\sum_{t=1}^H\1_{\mathsf{Supp}_t}\Gap_{A^\star}(x_t,y_t)\right]
      +2\normmax{A^\star}\sum_{t=1}^H\Pp(\mathsf{Near}_t\cap\mathsf{Supp}_t^c).
    \end{aligned}
    \label{proof:eq:expected-regret-decomposition}
  \end{equation}
  We bound the three sums in \eqref{proof:eq:expected-regret-decomposition} in order, starting with the nonlocal rounds.
  \begin{lemma}
    \label{proof:lem:general-support-nonlocal-occupation}
    There is a finite constant $B_{\rm loc}$, depending only on $(A^\star,\sigma,m,n)$, such that, for every $H\ge1$,
    \[
      \sum_{t=1}^H\Pp(\mathsf{Near}_t^c)\le B_{\rm loc}.
    \]
  \end{lemma}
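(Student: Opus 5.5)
We need a plan for bounding $\sum_t \Pp(\mathsf{Near}_t^c)$ by a constant. We condition throughout on the full noise table $(\xi_{a,r})$, so that all empirical means $\widehat A_{a,N}$ are deterministic functions of the counts. The only randomness left at round $t$ is then the fresh posterior Gaussians, together with the action draws.

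\textbf{Step 1: a uniform good-draw probability.} Let $\mathcal S=([m]\times J^\star)\cup(I^\star\times[n])$ be the set of entries with at least one index in the supports. For each $a\in\mathcal S$, define
\[
  q_N^{(a)}=\Pp\bigl(|\sigma S_{a,N}/N+\sigma G/\sqrt N|\le\rho\bigr),
\]
as in Lemma~\ref{proof:lem:all-count-localization} with $c=\rho/\sigma$. Set $q_*^{(a)}=\inf_N q_N^{(a)}$ and $q_*=\prod_{a\in\mathcal S}q_*^{(a)}$. Since the posterior coordinates are drawn independently given the past, the event $\mathsf G_t:=\{|\widetilde A_{a,t}-A^\star_a|\le\rho\ \forall a\in\mathcal S\}$ satisfies $\Pp(\mathsf G_t\mid\mathcal F_{t-1},\xi)\ge q_*$, whatever the current counts are. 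On $\mathsf G_t$, Lemma~\ref{proof:lem:relevant-equilibrium-embedding} says the learner plays $(\bar x^{C_t},\bar y^{C_t})$, and every core entry is then observed with probability at least $p^2$. Hence each core entry $a\in I^\star\times J^\star$ is observed at round $t$ with conditional probability at least $p^2q_*$.

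\textbf{Step 2: converting failure probabilities into a bound.} Given the past, we have
\[
  \Pp(\mathsf{Near}_t^c\mid\mathcal F_{t-1},\xi)\le\sum_{a\in I^\star\times J^\star}\bigl(1-q^{(a)}_{N_a(t)}\bigr).
\]
Fix a core entry $a$ and a count value $N$. Let $\tau_N$ be the set of rounds at which $N_a(t)=N$. While the count stays at $N$, each such round ends the stay with conditional probability at least $p^2q_*$. A geometric-trials argument (optional stopping, using the fact that $q_*$ is $\xi$-measurable) therefore gives $\E[|\tau_N|\mid\xi]\le 1/(p^2q_*)$. The count for $N=1$ begins right after initialization, and counts increase by exactly one at each observation, so every value $N$ is visited in at most one contiguous block. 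Consequently,
\[
  \E\Bigl[\sum_t\1_{\mathsf{Near}_t^c}\Bigm|\xi\Bigr]\le\sum_{a\in I^\star\times J^\star}\frac{\sum_{N\ge1}(1-q_N^{(a)})}{p^2q_*}.
\]

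\textbf{Step 3: integrability, the main obstacle.} It remains to take expectation over $\xi$. Here $q_*$ is a product over all $a\in\mathcal S$, and the factor $\sum_N(1-q_N^{(a)})$ correlates with $q_*^{(a)}$. The entries are independent across $a$, so the expectation factorizes into
\[
  \E\Bigl[\tfrac{R^{(a)}}{q_*^{(a)}}\Bigr]\prod_{b\ne a}\E\Bigl[\tfrac1{q_*^{(b)}}\Bigr].
\]
Each factor is finite by Lemma~\ref{proof:lem:all-count-localization}. Summing over the $d^2$ core entries gives $B_{\rm loc}<\infty$, with a constant depending only on $(A^\star,\sigma,m,n)$ through $\rho$, $p$, and $|\mathcal S|$.

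The delicate point is Step 2. We must verify that the per-round success probability bound holds conditionally on $\xi$ uniformly in time, which Step 1 provides because $q_*$ is an infimum over all counts. We must also check that the ``one block per count value'' argument correctly handles rounds in which $\mathsf{Near}_t^c$ occurs but a different entry is observed. If the geometric-trials bound proves awkward, we can instead replace it by a supermartingale argument on the stopping times of successive observations.
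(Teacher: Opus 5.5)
Your proposal is correct and follows the paper's proof essentially step for step. It uses the same conditioning on the full noise table, the same uniform lower bound $p^2\prod_{b}\inf_N q_{b,N}(\rho/\sigma)$ on observing each core entry via Lemma~\ref{proof:lem:relevant-equilibrium-embedding}, the same partition by count values, and the same factorization over independent entries, finished off by Lemma~\ref{proof:lem:all-count-localization}. The only cosmetic difference is that the paper bounds the expected time spent at count $N$ directly from $\sum_t\1_{\{N_a(t)=N,\,(I_t,J_t)=a\}}\le1$ and the tower property. That avoids any stopping-time argument and makes your worry about rounds where a different entry is observed moot.
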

  Lemma~\ref{proof:lem:general-support-nonlocal-occupation} applied to the first sum in \eqref{proof:eq:expected-regret-decomposition} gives
  \begin{equation}
    2\normmax{A^\star}\sum_{t=1}^H\Pp(\mathsf{Near}_t^c)\le2\normmax{A^\star}
    B_{\rm loc}.
    \label{proof:eq:regret-term-nonlocal}
  \end{equation}

  For the second sum in \eqref{proof:eq:expected-regret-decomposition}, we convert the core-gap bound into posterior errors on observed entries using the following conditional law.
  \begin{lemma}
    \label{proof:lem:conditional-posterior-law}
    For every $t\ge1$, the sampled entries are conditionally independent given $(\xi_{a,r})_{a\in[m]\times[n],\,r\ge1},(I_r,J_r)_{r<t}$, with the following conditional distributions almost surely:
    \[
      \widetilde A_{a,t}\mid(\xi_{a,r})_{a\in[m]\times[n],\,r\ge1},(I_r,J_r)_{r<t}
      \sim\mathcal N\!
      \left(\widehat A_{a,N_a(t)},\frac{\sigma^2}{N_a(t)}\right),
      \qquad a\in[m]\times[n].
    \]
    Under the fixed equilibrium selection rule, for every $(i,j)\in[m]\times[n]$,
    \[
      \Pp((I_t,J_t)=(i,j)\mid(\xi_{a,r})_{a\in[m]\times[n],\,r\ge1},(I_r,J_r)_{r<t},\widetilde A_t)
      =x_{t,i}y_{t,j}.
    \]
  \end{lemma}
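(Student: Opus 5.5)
The plan is to set up a probabilistic representation in which the sampled matrix at round $t$ is a deterministic function of the stacked noise sequences, the past actions, and fresh independent Gaussian draws. The two claims then follow by conditioning. Let $\mathcal F_t^-$ denote the $\sigma$-field generated by $(\xi_{a,r})_{a,r}$ and $(I_r,J_r)_{r<t}$. Under the algorithm, the $N_a(t)$ observations of entry $a$ available before round $t$ are exactly $Y_{a,1},\ldots,Y_{a,N_a(t)}$. So $\widehat A_{a,N_a(t)}=A^\star_a+\sigma S_{a,N_a(t)}/N_a(t)$ is $\mathcal F_t^-$-measurable, because $N_a(t)$ is determined by the past actions.

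To make this precise, realize the algorithm on a product space that carries three independent ingredients. The first is the table $(\xi_{a,r})$. The second is a family of fresh posterior-noise vectors $(G_{a,t})_{a,t}$, i.i.d.\ $\mathcal N(0,1)$. The third is a sequence of independent uniforms $(\Upsilon_t)$ used to draw $(I_t,J_t)$. The $r$-th observation of entry $a$ is then defined as $A^\star_a+\sigma\xi_{a,r}$ (the skeleton or stack representation). We set $\widetilde A_{a,t}=\widehat A_{a,N_a(t)}+\sigma G_{a,t}/\sqrt{N_a(t)}$. The actions $(I_t,J_t)$ are obtained from $(x_t,y_t)$ and $\Upsilon_t$ by inverse-CDF sampling of the product distribution $x_t\otimes y_t$.

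Before using this representation, I will verify that it has the same law as Algorithm~\ref{alg:ts-ne} under \eqref{eq:observation}. The point is that the noise of the $r$-th observation of an entry is a fresh $\mathcal N(0,\sigma^2)$ variable independent of everything before it. This holds because the index $r$ is selected by a predictable count, so a standard induction over rounds shows that the joint laws of $(\widetilde A_t,I_t,J_t,Y_t)_{t}$ coincide. This identification is the only delicate point, and I will handle it through the induction. The conditional Gaussian law of the fresh noise noted earlier takes care of the rest.

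With the representation in hand, the first claim follows by induction on $t$. The past actions $(I_r,J_r)_{r<t}$ are measurable functions of $(\xi)$, $(G_{\cdot,r})_{r<t}$ and $(\Upsilon_r)_{r<t}$. Hence $G_{\cdot,t}$ is independent of $\mathcal F_t^-$, even jointly with the additional randomness of $(G_{\cdot,r},\Upsilon_r)_{r<t}$. Conditioning on $\mathcal F_t^-$ and integrating out the earlier $G$'s and $\Upsilon$'s, which do not enter $\widetilde A_t$ except through $N_a(t)$, we obtain the following. Given $\mathcal F_t^-$, the vector $(G_{a,t})_a$ is still i.i.d.\ standard normal, while $\widehat A_{a,N_a(t)}$ and $N_a(t)$ are fixed. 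This yields conditional independence across $a$ with the stated $\mathcal N(\widehat A_{a,N_a(t)},\sigma^2/N_a(t))$ laws.

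For the second claim, note that $(x_t,y_t)$ is a fixed (measurable) function of $\widetilde A_t$ through the equilibrium selection rule. Also, $\Upsilon_t$ is independent of $\sigma(\mathcal F_t^-,\widetilde A_t)$ together with the earlier auxiliary variables. Therefore
\[
\Pp\bigl((I_t,J_t)=(i,j)\mid\mathcal F_t^-,\widetilde A_t\bigr)=\Pp\bigl((I_t,J_t)=(i,j)\mid x_t,y_t\bigr)=x_{t,i}y_{t,j},
\]
the last equality because inverse-CDF sampling of the product distribution makes $I_t$ and $J_t$ independent with marginals $x_t$ and $y_t$. One measurability issue remains: the selection rule must be Borel. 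I will assume this as part of "fixed rule", or else define the selection via a measurable selection of the compact-valued correspondence $\NE(\cdot)$, which has a closed graph.
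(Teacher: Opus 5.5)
Your proposal is correct and follows essentially the same route as the paper. Both arguments write $\widetilde A_{a,t}=\widehat A_{a,N_a(t)}+\sigma G_{a,t}/\sqrt{N_a(t)}$ with fresh standard normals $G_{a,t}$ independent of the noise table and all past sampling randomness, and observe that conditioning on $(\xi_{a,r})$ and $(I_r,J_r)_{r<t}$ fixes $N_a(t)$ and $\widehat A_{a,N_a(t)}$. Both then obtain the action law from the fixed selection rule and independent sampling of $I_t\sim x_t$, $J_t\sim y_t$. You additionally check that the table (stack) representation reproduces the law of Algorithm~\ref{alg:ts-ne} and that the selection rule is Borel; the paper instead builds the first in by defining $Y_{a,r}=A^\star_a+\sigma\xi_{a,r}$ at the outset and takes the rule as given.
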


  Combining Lemmas~\ref{proof:lem:conditional-posterior-law} and~\ref{proof:lem:core-gap-bound}, and using $(x_t,y_t)=(\bar x^{C_t},\bar y^{C_t})$ on $\mathsf{Supp}_t$, we obtain
  \begin{equation}
    \begin{aligned}
      \1_{\mathsf{Supp}_t}\Gap_{A^\star}(x_t,y_t)
       & \le\min\{\chi_{\rm core}/p^2,2/p\}\1_{\mathsf{Supp}_t}
      \sum_{(i,j)\in I^\star\times J^\star}x_i^{C_t}y_j^{C_t}|\widetilde A_{ij,t}-A_{ij}^\star| \\
       & \le\min\{\chi_{\rm core}/p^2,2/p\}\sum_{(i,j)\in I^\star\times J^\star}
      \E\!\left[\1_{\{(I_t,J_t)=(i,j)\}}
            |\widetilde A_{ij,t}-A_{ij}^\star|\,\middle|\,\widetilde A_t\right].
    \end{aligned}
    \label{proof:eq:observed-core-gap}
  \end{equation}
  The following lemma bounds expected cumulative posterior error on observed entries.
  \begin{lemma}
    \label{proof:lem:selected-posterior-deviation}
    Let $\mathcal A\subseteq[m]\times[n]$ be a fixed set of $D\ge1$ entries.
    For every $a\in\mathcal A$ and every online round $t$, let $\omega_{a,t}\in\{0,1\}$ be a random variable with $\omega_{a,t}\le\1_{\{(I_t,J_t)=a\}}$.
    Then, for every $H\ge1$,
    \begin{equation*}
      \E\!\left[
        \sum_{a\in\mathcal A}\sum_{t=1}^H
        \omega_{a,t}|\widetilde A_{a,t}-A_a^\star|
        \right]
      \le
      20\sigma\sqrt{DH\log(eD)}.
    \end{equation*}
  \end{lemma}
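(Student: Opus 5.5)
We reduce the sum to a per-entry sum over successive observation counts, and then bound the resulting noise terms with the order-statistics estimate of Lemma~\ref{proof:lem:gaussian-order-statistics}.

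\textbf{Step 1: reindex by observation count.} Fix $a\in\mathcal A$. Whenever $\omega_{a,t}=1$, entry $a$ is observed at round $t$, so the count $N_a(t)$ takes each value $N\ge1$ at most once over such rounds. At such a round with $N_a(t)=N$, write
\[
\widetilde A_{a,t}-A_a^\star=\frac{\sigma S_{a,N}}{N}+\frac{\sigma G_{a,t}}{\sqrt N},
\]
where $G_{a,t}$ is the standardized posterior draw. Attach to each pair $(a,N)$ a variable $G'_{a,N}$, defined as the posterior noise used at the unique round (if any) where $a$ is observed with count $N$, and an independent fresh $\mathcal N(0,1)$ otherwise.

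\textbf{Step 2: the $G'_{a,N}$ are standard normal.} By Lemma~\ref{proof:lem:conditional-posterior-law}, the posterior draw at round $t$ is fresh $\mathcal N(0,1)$ given the past and all $\xi$. Whether $a$ is then observed depends on that draw, so selection could bias $G'_{a,N}$. This does not matter, because we only need a bound that holds without independence. Each $|S_{a,N}|/\sqrt N$ is exactly $\mathcal N(0,1)$ in absolute value. For the posterior term I would still claim $G'_{a,N}\sim\mathcal N(0,1)$ marginally by a stopping-time argument. If that fails, the fallback is to use the crude bound $\sum_N\max_{t\le H}|G_{a,t}|/\sqrt N$ only inside the order-statistic bound below.

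\textbf{Step 3: pathwise bound.} Let $n_a\le H$ be the number of observations of $a$, so that $\sum_a n_a\le H$. The sum is at most
\[
\sigma\sum_a\sum_{N\le n_a+1}\frac{|S_{a,N}|/\sqrt N+|G'_{a,N}|}{\sqrt N}.
\]
Now collect all normalized variables $|S_{a,N}|/\sqrt N$, one for each $(a,N)$ with $N\le H+1$, into a single family of $M=D(H+1)$ standard normals. The collection of pairs actually used has at most $H+D$ members. The weights $1/\sqrt N$ are rearrangement-dominated by $1/\sqrt j$ for $j\le L=H+D$. By the rearrangement inequality the sum is at most $\sum_{j\le L}W_{(j)}/\sqrt j$. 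This domination holds because at most $D$ used pairs share any given $N$, so the weight profile is dominated up to a factor $\sqrt D$. Lemma~\ref{proof:lem:gaussian-order-statistics} then gives $10\sqrt{L\log(eM/L)}$. Handling the $G'$ family in the same way doubles this. Adjusting constants with $L\asymp H$ and $M/L\lesssim D$ yields $20\sigma\sqrt{DH\log(eD)}$. The factor $\sqrt D$ comes from the weight-domination step, since $\sum_a\sum_{N\le n_a}N^{-1/2}\le 2\sqrt{DH}$.

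\textbf{Main obstacle.} The delicate step is the rearrangement domination. Multiple entries share the same small $N$, so the weights $1/\sqrt N$ are not distinct. I would handle this by grouping each weight level $N$ into $D$ slots and comparing with $\sqrt D/\sqrt j$. This yields $\sqrt D\cdot\sqrt{(H/D)\log(eD)}\cdot\sqrt D$-type bounds, and these must be checked to combine into $\sqrt{DH\log(eD)}$.
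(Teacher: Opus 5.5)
Your architecture is the paper's. You split $\widetilde A_{a,t}-A_a^\star$ into $\sigma S_{a,N_a(t)}/N_a(t)$ and $\sigma G_{a,t}/\sqrt{N_a(t)}$. You use that the $j$-th smallest selected count is at least $j/D$, so the weights are at most $\sqrt{D/j}$. You finish with the rearrangement inequality and Lemma~\ref{proof:lem:gaussian-order-statistics}, which needs only marginal normality. Your treatment of the $S$-term is fine, because for each fixed $(a,N)$ the variable $|S_{a,N}|/\sqrt N$ is exactly $|\mathcal N(0,1)|$.

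The genuine gap is Step~2. The reindexed variable $G'_{a,N}$, the posterior noise at the round where $a$ is observed with count $N$, is not standard normal in general, and no stopping-time argument can make it so. Given $\widetilde A_t$, entry $a=(i,j)$ is observed with probability $x_{t,i}y_{t,j}$. This is a function of $\widetilde A_t$ and hence of $G_{a,t}$ itself, so the round carrying the count-$N$ observation is selected after, and partly because of, the draw. For example, in matching pennies, perturbing only $\widetilde A_{11,t}$ by $\varepsilon$ gives $x_{t,1}y_{t,1}=4/(4+\varepsilon)^2$. The draws at which entry $(1,1)$ is observed are therefore tilted downward. The observation noise behaves differently: $\xi_{a,N}$ is revealed only after the selection, which is why count-indexing works for $S$ but not for $G$.

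Your fallback does not recover the statement either. The variable $\max_{t\le H}|G_{a,t}|$ is not standard normal, so Lemma~\ref{proof:lem:gaussian-order-statistics} does not apply to it. The crude estimate $2\sigma\sqrt{DH}\,\E[\max_{a,t}|G_{a,t}|]$ is of order $\sigma\sqrt{DH\log(DH)}$. That carries a $\sqrt{\log H}$ factor the lemma excludes, and it would spoil the $C_{\rm mix}\sqrt T$ term in Theorem~\ref{thm:unique-rate}.

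The missing idea is simply not to reindex the posterior noise. Keep it indexed by round and use the array $\{|G_{a,t}|:a\in\mathcal A,\ 1\le t\le H\}$ of $DH$ entries. Each entry is a fixed coordinate of the i.i.d.\ family in \eqref{proof:eq:posterior-error-decomposition}, hence exactly $|\mathcal N(0,1)|$ whatever the algorithm does. The selected pairs $(a,t)$ with $\omega_{a,t}=1$ are distinct, so their values form a sub-collection of this array. Order them by increasing count and apply the rearrangement inequality. This bounds the $G$-sum by $\sqrt D\sum_{j\le H}W^{G}_{(j)}/\sqrt j$, where $W^{G}_{(j)}$ are the decreasing order statistics of the whole array. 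Selection bias is then harmless, since you dominate by the top order statistics of a fixed array, and Lemma~\ref{proof:lem:gaussian-order-statistics} requires no independence.

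Also tighten the bookkeeping. The selected counts of entry $a$ lie in $\{1,\dots,n_a\}$, not up to $n_a+1$. So at most $\sum_a n_a\le H$ terms are selected, and both arrays can be taken of size $M=DH$. The lemma with $(M,L)=(DH,H)$ gives $10\sqrt{H\log(eD)}$ per array, hence the total $2\sigma\sqrt D\cdot10\sqrt{H\log(eD)}=20\sigma\sqrt{DH\log(eD)}$. Your padding to $L=H+D$ would instead produce $\sqrt{D(H+D)}$ in place of $\sqrt{DH}$. That is not the claimed bound when $D$ is large relative to $H$.
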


  Taking expectations in \eqref{proof:eq:observed-core-gap} and summing over $t$, apply Lemma~\ref{proof:lem:selected-posterior-deviation} with $\mathcal A=I^\star\times J^\star$, $D=d^2$, and $\omega_{(i,j),t}=\1_{\{(I_t,J_t)=(i,j)\}}$ to obtain
  \begin{align}
     & \E\left[\sum_{t=1}^H\1_{\mathsf{Supp}_t}\Gap_{A^\star}(x_t,y_t)\right] \notag                                       \\
     & \quad\le\min\{\chi_{\rm core}/p^2,2/p\}\E\left[\sum_{(i,j)\in I^\star\times J^\star}\sum_{t=1}^H
                                                  \1_{\{(I_t,J_t)=(i,j)\}}|\widetilde A_{ij,t}-A_{ij}^\star|\right] \notag \\
     & \quad\le20\sigma d\min\{\chi_{\rm core}/p^2,2/p\}\sqrt{T\log(ed^2)}.
    \label{proof:eq:regret-term-marked}
  \end{align}
  For the third sum in \eqref{proof:eq:expected-regret-decomposition}, we bound $\sum_{t=1}^H\Pp(\mathsf{Near}_t\cap\mathsf{Supp}_t^c)$.
  Let $\cS_{\rm rel}=([m]\times J^\star)\cup(I^\star\times[n])$.
  For $0<\delta<1$, define
  \[
    \mathcal E_T(\delta)=\bigcap_{a\in\cS_{\rm rel}}\bigcap_{t=1}^H
    \left\{|\widetilde A_{a,t}-A_a^\star|
    \le2\sigma\sqrt{\frac{2\log(4d(m+n-d)T/\delta)}{N_a(t)}}\right\}.
  \]
  Decomposing according to whether $\mathcal E_T(\delta)$ holds, we obtain
  \begin{equation}
    \sum_{t=1}^H\Pp(\mathsf{Near}_t\cap\mathsf{Supp}_t^c)
    =\sum_{t=1}^H\Pp(\mathsf{Near}_t\cap\mathsf{Supp}_t^c\cap\mathcal E_T(\delta))
    +\sum_{t=1}^H\Pp(\mathsf{Near}_t\cap\mathsf{Supp}_t^c\cap\mathcal E_T(\delta)^c).
    \label{proof:eq:outside-confidence-decomposition}
  \end{equation}
  The following lemma bounds the failure probability and shows that positive outside margins require entries with low counts.
  \begin{lemma}
    \label{proof:lem:finite-horizon-confidence}
    For $T>mn$, $H=T-mn$, and $0<\delta<1$, define
    \[
      n_T^\star(\delta)=\max\{1,\lceil(8\sigma^2/\rho^2)\log(4d(m+n-d)T/\delta)\rceil\}.
    \]
    For the event $\mathcal E_T(\delta)$ defined above, we have
    \[
      \Pp(\mathcal E_T(\delta)^c)\le\delta,\qquad
      \mathcal E_T(\delta)\cap\mathsf{Near}_t\cap\mathsf{Supp}_t^c
      \subseteq\mathsf{D}_t(n_T^\star(\delta)).
    \]
    When $\delta=T^{-1}$,
    \begin{equation*}
      n_T^\star(T^{-1})-1\le\frac{8\sigma^2\{2+\log(4d(m+n-d))\}}{\rho^2}\log(eT).
    \end{equation*}
  \end{lemma}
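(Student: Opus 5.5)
The lemma has three parts, and I would prove them in order: a union bound over Gaussian deviations, a margin argument based on Lemma~\ref{proof:lem:local-geometry-from-uniqueness}, and elementary arithmetic. Throughout, write $K:=|\cS_{\rm rel}|=dn+md-d^2=d(m+n-d)$ and $\lambda:=\log(4KT/\delta)$.

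\emph{Failure probability.} For $a\in\cS_{\rm rel}$ and an online round $t$, split
\[
|\widetilde A_{a,t}-A^\star_a|\le\sigma\frac{|S_{a,N_a(t)}|}{N_a(t)}+\frac{\sigma|G_{a,t}|}{\sqrt{N_a(t)}}.
\]
Here $G_{a,t}:=(\widetilde A_{a,t}-\widehat A_{a,N_a(t)})\sqrt{N_a(t)}/\sigma$. By Lemma~\ref{proof:lem:conditional-posterior-law}, $G_{a,t}$ is $\mathcal N(0,1)$ conditionally on the noise sequences and the past actions.
\begin{itemize}
\item \emph{Sampling-noise part.} The index $N_a(t)$ is random, but it always lies in $\{1,\dots,T\}$. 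So I would bound the empirical-mean error by a union over all deterministic $N\le T$: each $S_{a,N}/\sqrt N$ is standard normal, so
\[
\Pp\bigl(|S_{a,N}|/N>\sqrt{2\lambda/N}\bigr)\le2e^{-\lambda}=\delta/(2KT).
\]
Summing over $a$ and $N$ gives at most $\delta/2$.
\item \emph{Posterior-sampling part.} The bound $\Pp(|G_{a,t}|>\sqrt{2\lambda})\le\delta/(2KT)$ holds conditionally, hence unconditionally. A union bound over $a\in\cS_{\rm rel}$ and $t\le H\le T$ gives at most $\delta/2$.
\end{itemize}
Outside these two bad events, each term is at most $\sigma\sqrt{2\lambda/N_a(t)}$, so their sum is at most $2\sigma\sqrt{2\lambda/N_a(t)}$. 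This is exactly the defining inequality of $\mathcal E_T(\delta)$, and therefore $\Pp(\mathcal E_T(\delta)^c)\le\delta$. The one point needing care is the random count $N_a(t)$, which is handled by taking the union over all deterministic counts rather than conditioning on it.

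\emph{Inclusion.} Let $k=n^\star_T(\delta)$ and work on $\mathcal E_T(\delta)\cap\mathsf{Near}_t\cap\mathsf{Supp}_t^c$. By definition, $\mathsf D_t(k)$ additionally requires, for each $i\notin I^\star$, that $U_i(\widetilde A_t)\le0$ or $\min_{j\in J^\star}N_{ij}(t)<k$; the column condition is analogous.
\begin{itemize}
\item Fix $i\notin I^\star$ with $N_{ij}(t)\ge k$ for all $j\in J^\star$. Since $k\ge(8\sigma^2/\rho^2)\lambda$, the event $\mathcal E_T(\delta)$ gives $|\widetilde A_{ij,t}-A^\star_{ij}|\le2\sigma\sqrt{2\lambda/k}\le\rho$ for these $j$.
\item With $C=C_t\in\mathcal L_\rho$ and $y^C\in\simplex_d$, Lemma~\ref{proof:lem:local-geometry-from-uniqueness} then gives
\[
U_i(\widetilde A_t)\le\sum_{j\in J^\star}A^\star_{ij}y^C_j+\rho-v^C\le-\Delta_{\rm row}+\rho\le-\rho<0,
\]
using $\rho\le\Delta_{\rm row}/2$.
\item The symmetric argument, with $x^C$ and $\Delta_{\rm col}$, gives $V_j(\widetilde A_t)<0$ for each $j\notin J^\star$ whose column counts on $I^\star$ are all at least $k$.
\end{itemize}
So every outside row and column satisfies its clause in $\mathsf D_t(k)$, which proves the inclusion. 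Lemma~\ref{proof:lem:margin-embedding} is only needed to interpret this: $\mathsf{Supp}_t^c$ together with all margins negative would be contradictory.

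\emph{Arithmetic for $\delta=T^{-1}$.} Either $n^\star_T-1=0$, or $n^\star_T-1\le(8\sigma^2/\rho^2)\log(4KT^2)$. In the second case,
\[
\log(4KT^2)=\log(4K)+2\log T\le\{2+\log(4K)\}\log(eT),
\]
since $\log(eT)\ge1$ and $\log T\le\log(eT)$. This is the stated bound.
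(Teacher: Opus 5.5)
Your proposal is correct and follows essentially the same route as the paper: a union bound over all deterministic counts $N\le T$ for the empirical-mean part and over rounds $t\le H$ for the posterior draws, then the outside-margin bound from Lemma~\ref{proof:lem:local-geometry-from-uniqueness} (using $\rho\le\Delta_{\rm row}/2$ and $\rho\le\Delta_{\rm col}/2$) for the inclusion, and the same elementary bound $\lceil z\rceil-1\le z$ for $\delta=T^{-1}$. The only cosmetic difference is that you split the failure probability as $\delta/2+\delta/2$, whereas the paper bounds both parts together by $(T+H)\delta/(2T)\le\delta$.
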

  Apply Lemma~\ref{proof:lem:finite-horizon-confidence} with $\delta=T^{-1}$ and set $n_T^\star=n_T^\star(T^{-1})$.
  For the second term in \eqref{proof:eq:outside-confidence-decomposition}, the failure-probability bound gives
  \begin{equation}
    \sum_{t=1}^H\Pp(\mathsf{Near}_t\cap\mathsf{Supp}_t^c\cap\mathcal E_T(T^{-1})^c)
    \le H\Pp(\mathcal E_T(T^{-1})^c)\le1,
    \label{proof:eq:outside-confidence-failure-bound}
  \end{equation}
  where the last inequality follows from $H\le T$ and $\Pp(\mathcal E_T(T^{-1})^c)\le T^{-1}$.
  For the first term in \eqref{proof:eq:outside-confidence-decomposition}, use the inclusion in Lemma~\ref{proof:lem:finite-horizon-confidence} and the following occupation bound.
  \begin{lemma}
    \label{proof:lem:outside-action-occupation}
    Suppose that at least one player has actions outside its equilibrium support, that is, $m>d$ or $n>d$.
    There exist finite nonnegative constants $C_{\rm occ}$ and $B_{\rm occ}$, depending only on $(A^\star,\sigma,m,n)$, such that, for every $H,k\ge1$,
    \[
      \sum_{t=1}^H\Pp(\mathsf{D}_t(k))
      \le d(m+n-2d)(k-1)C_{\rm occ}+B_{\rm occ}.
    \]
  \end{lemma}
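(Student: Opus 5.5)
We bound each round's probability $\Pp(\mathsf D_t(k))$ by a constant multiple of the probability that an under-observed entry in $\mathcal V_t^\times(k)$ is actually played. The total number of such plays is deterministically at most $|\cS_\times|(k-1)=d(m+n-2d)(k-1)$, since each entry can be observed while its count is below $k$ at most $k-1$ times.

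\textbf{Step 1: a good alternative round.} Condition on the noise sequences, the history $(I_r,J_r)_{r<t}$ and the realized core $C_t$, and work under $\Pp_t^{C_t}$. By Lemma~\ref{proof:lem:conditional-posterior-law}, the non-core entries are independent Gaussians. On $\mathsf D_t(k)$ some outside margin is positive, say $U_i(\widetilde A_t)>0$ for some $i\notin I^\star$; the column case is symmetric. Each such profitable outside action has an entry in $\mathcal V_t^\times(k)$. For this $i$, the mean $\E_t^{C_t}[U_i]$ equals $\sum_{j\in J^\star}\widehat A_{ij,N_{ij}(t)}y^{C_t}_j-v^{C_t}$. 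Up to the empirical-mean noise, controlled by $H_\Delta$, this is at most $-\Delta_{\rm row}$ by Lemma~\ref{proof:lem:local-geometry-from-uniqueness}.

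We then compare $\mathsf D_t(k)$ with the event $\mathsf G_t$, on which exactly one outside margin, $U_i$, lies in $(0,L]$ with $L$ small, and every other outside margin is at most $-\Delta/4$. Case split:
\begin{itemize}
\item If some other margin has probability at least $\varepsilon$ of exceeding $-\Delta/4$, this is itself a large-deviation event. We charge it through Lemma~\ref{proof:lem:gaussian-negative-mean-interval}, whose first conclusion bounds $\min_j n_j$ and hence forces that entry to be under-observed.
\item Otherwise, Lemma~\ref{proof:lem:gaussian-negative-mean-interval} gives $\Pp_t^{C_t}(0<U_i\le L)\ge c_0/((1+H_\Delta)e^{H_\Delta^2/2})$. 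The other margins are independent of $U_i$ given the core, so $\Pp_t^{C_t}(\mathsf G_t)$ is at least a constant times this quantity.
\end{itemize}
Since $\Pp_t^{C_t}(\mathsf D_t(k))\le1$, we obtain $\Pp_t^{C_t}(\mathsf D_t(k))\lesssim(1+H_\Delta)e^{H_\Delta^2/2}\,\Pp_t^{C_t}(\mathsf G_t)$, summed over candidate outside actions.

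\textbf{Step 2: on $\mathsf G_t$ the under-observed entry is played.} Only row $i$ is slightly profitable against $\bar y^{C_t}$, so the equilibrium of $\widetilde A_t$ must put weight on row $i$. We use Lemma~\ref{proof:lem:gradient-margins} together with Lemma~\ref{proof:lem:margin-embedding} applied to the support $I^\star\cup\{i\}$. With the margin at most $L$ and $L$ small relative to $\kappa_x,\kappa_y,p$, the new equilibrium is a perturbation of the core equilibrium. In it, row $i$ gets probability bounded below by a constant $p'>0$, and the columns in $J^\star$ keep probability at least $p/2$. Lemma~\ref{proof:lem:window-regression} controls the conditional size of the sampled row-$i$ entries given the window event, which keeps this perturbation step quantitative.

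Hence, on $\mathsf G_t$, the entry $(i,j)$ with $N_{ij}(t)<k$ is played with conditional probability at least $p'p/2$. This gives
\[
\Pp(\mathsf D_t(k))\le C\,\E\Bigl[(1+H)e^{H^2/2}\1\{(I_t,J_t)\in\mathcal V_t^\times(k)\}\Bigr]+(\text{Step 1 error terms}).
\]

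\textbf{Step 3: summation and the main obstacle.} Summing over $t$, the indicator sum is at most $d(m+n-2d)(k-1)$ deterministically. The weight $(1+H_\Delta)e^{H_\Delta^2/2}$ depends only on the noise sequences, so its expectation factors out and is finite by Lemma~\ref{proof:lem:gaussian-negative-mean-interval}. This produces $C_{\rm occ}$.

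The error terms are rounds where $\mathsf{Near}_t$ fails or a margin has a large deviation. Their sums are $O(1)$ by Lemma~\ref{proof:lem:general-support-nonlocal-occupation} and Lemma~\ref{proof:lem:all-count-localization}, which give $B_{\rm occ}$.

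The main obstacle is Step 2: showing that the equilibrium of the sampled game on $\mathsf G_t$ places probability bounded below on both row $i$ and the core columns, uniformly over all noise realizations. The reverse-direction comparison in Step 1, which must hold uniformly in the counts, is the other delicate point. I would first try to solve the augmented $(d+1)\times d$ system explicitly, taking the limit $L\to0$ and using the invertibility from Lemma~\ref{proof:lem:local-geometry-from-uniqueness}.
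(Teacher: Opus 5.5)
Your overall architecture matches the paper's. You compare $\Pp(\mathsf D_t(k))$ with the probability of playing an entry of $\mathcal V_t^\times(k)$, use the deterministic budget $d(m+n-2d)(k-1)$, and factor out a noise-dependent multiplier with finite mean. However, Step~1 has a genuine gap, and it lies exactly in the regime that generates the $(k-1)$ term.

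In your ``otherwise'' branch you apply the window bound of Lemma~\ref{proof:lem:gaussian-negative-mean-interval} to the positive margin $U_i$ itself. That bound holds only under the lemma's hypothesis $\Pp_t^{C}(U_i(\widetilde A_t)\le-\Delta_{\rm row}/4)<1-\varepsilon$. You never check it, and by the lemma's own first conclusion it fails once $N=\min_{j\in J^\star}N_{ij}(t)$ exceeds a noise-dependent bounded level. Beyond that level, still with $N<k$, $U_i$ has conditional mean near $-\Delta_{\rm row}$ and standard deviation $O(\sigma/\sqrt N)$. Hence $\Pp_t^{C}(\mathsf G_t)$ is exponentially small in $N$, and $1\lesssim(1+H_\Delta)e^{H_\Delta^2/2}\Pp_t^{C}(\mathsf G_t)$ is false. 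Bounding $\Pp_t^{C}(\mathsf D_t(k))$ by $1$ discards the only smallness available. Since row $i$ is then played roughly once every $e^{cN}$ rounds, this route cannot give an $O(k)$ total.

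The missing idea is a \emph{ratio} comparison. The paper restricts to cores on which every outside margin, $U_i$ included, satisfies $\Pp_t^{C}(U_u\le-\Delta_{\rm row}/4)\ge1-1/[8(m+n-2d)]$, so all conditional means are negative. It then bounds $\Pp_t^{C}(\mathsf D_t(k))\le\sum_{i\in\mathcal I}\Pp_t^{C}(U_i>0)+\sum_{j\in\mathcal J}\Pp_t^{C}(V_j>0)$ via Lemma~\ref{proof:lem:common-observation-comparison}. Next it uses the second claim of Lemma~\ref{proof:lem:gaussian-interval-probabilities}, $\Pp_t^{C}(0<U_i\le L)\ge(1-e^{-\min\{1,L/\sigma\}^2/2})\Pp_t^{C}(U_i>0)$. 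Only after this does the reveal step convert window probability into observation probability.

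The remaining ``bad'' rounds are not handled by what you cite either. A bound on $\min_jN_{uj}(t)$ from the first conclusion of Lemma~\ref{proof:lem:gaussian-negative-mean-interval} does not bound how many rounds margin $u$ stays bad, because nothing in your argument forces row $u$ to be played in those rounds. Lemma~\ref{proof:lem:general-support-nonlocal-occupation} concerns only core accuracy, and Lemma~\ref{proof:lem:all-count-localization} is only one ingredient. The paper's Lemma~\ref{proof:prop:two-sided-bad-center-occupation} closes the loop in four steps:
\begin{enumerate}
\item With conditional probability bounded below by a product of $\inf_{N\ge1}q_{a,N}(\rho/\sigma)$, uniform in the counts, all other entries entering the outside margins are $\rho$-accurate, so every other margin is negative.
\item The second conclusion of Lemma~\ref{proof:lem:gaussian-negative-mean-interval} puts $U_u$ in a small window.
\item The reveal lemma then makes an entry of row $u$ observed with probability bounded below.
\item Because each count value is observed at most once, the count bound gives a finite, $k$-independent expected number of bad rounds.
\end{enumerate}

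Two smaller points on Step~2. First, the row strategy is not a perturbation of the core one: the mass on row $i$ jumps to order one, which the paper obtains from $x_i\ge(v(B)-v^C)/\theta$ and Lemma~\ref{proof:lem:gradient-margins}. Second, in the two-sided case the lower bound also depends on the sampled outside-column entries ($M_2$ in Lemma~\ref{proof:lem:unified-reveal}). It is therefore not uniform over noise realizations and must be integrated through a multiplier such as $K_{\rm reg}$.
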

  If neither player has actions outside its equilibrium support, then $\mathsf{Supp}_t=\mathsf{Near}_t$ almost surely, and we set $C_{\rm occ}=B_{\rm occ}=0$.
  Otherwise, let $C_{\rm occ},B_{\rm occ}$ be the constants in Lemma~\ref{proof:lem:outside-action-occupation}.
  Combining Lemmas~\ref{proof:lem:finite-horizon-confidence} and~\ref{proof:lem:outside-action-occupation} with $k=n_T^\star$, we obtain
  \begin{equation}
    \begin{aligned}
       & \sum_{t=1}^H\Pp(\mathsf{Near}_t\cap\mathsf{Supp}_t^c\cap\mathcal E_T(T^{-1}))
      \le\sum_{t=1}^H\Pp(\mathsf{D}_t(n_T^\star)) \le B_{\rm occ}+d(m+n-2d)(n_T^\star-1)C_{\rm occ} \\
       & \le B_{\rm occ}
      +\frac{8\sigma^2\{2+\log(4d(m+n-d))\}}{\rho^2}d(m+n-2d)C_{\rm occ}\log(eT).
    \end{aligned}
    \label{proof:eq:outside-confidence-success-bound}
  \end{equation}
  Substituting \eqref{proof:eq:outside-confidence-success-bound} and \eqref{proof:eq:outside-confidence-failure-bound} into \eqref{proof:eq:outside-confidence-decomposition} with $\delta=T^{-1}$ and multiplying by $2\normmax{A^\star}$, we obtain
  \begin{equation}
    \begin{aligned}
       & 2\normmax{A^\star}\sum_{t=1}^H\Pp(\mathsf{Near}_t\cap\mathsf{Supp}_t^c) \\
       & \quad\le\frac{16\normmax{A^\star}\sigma^2}{\rho^2}
      \{2+\log(4d(m+n-d))\}d(m+n-2d)C_{\rm occ}\log(eT)
      \\ &\quad+2\normmax{A^\star}(1+B_{\rm occ}).
    \end{aligned}
    \label{proof:eq:regret-term-switch}
  \end{equation}
  For full support, this bound also holds, since the left-hand side is zero.

  Substituting \eqref{proof:eq:regret-term-nonlocal}, \eqref{proof:eq:regret-term-marked}, and \eqref{proof:eq:regret-term-switch} into \eqref{proof:eq:expected-regret-decomposition} gives
  \[
    \E[\mathcal R_T]\le C_{\rm mix}\sqrt T+C_{\rm out}\log(eT)+C_0,
  \]
  where
  \begin{equation}
    C_{\rm mix}=20\sigma d\min\{\chi_{\rm core}/p^2,2/p\}\sqrt{\log(ed^2)}.
    \label{proof:eq:mix-constant}
  \end{equation}
  \begin{equation}
    C_{\rm out}=\frac{16\normmax{A^\star}\sigma^2}{\rho^2}
    \{2+\log(4d(m+n-d))\}d(m+n-2d)C_{\rm occ}.
    \label{proof:eq:out-constant}
  \end{equation}
  \begin{equation}
    C_0=2\normmax{A^\star}(mn+B_{\rm loc})
    +2\normmax{A^\star}(1+B_{\rm occ}).
    \label{proof:eq:additive-constant}
  \end{equation}
  All three constants are finite and depend only on $(A^\star,\sigma,m,n)$.
  For a pure equilibrium, $d=1$ and Lemma~\ref{proof:lem:core-gap-bound} gives $\chi_{\rm core}=0$, so $C_{\rm mix}=0$.
  For a fully mixed equilibrium, $m=n=d$ and $C_{\rm out}=0$.
\end{proof}

\subsection{Proof of Lemma~\ref{proof:lem:general-support-nonlocal-occupation}}
\begin{proof}
  For each entry $a\in[m]\times[n]$ and $c>0$, define
  \[
    q_{a,N}(c)=\Pp\!
    \left(\left|\frac{S_{a,N}}
    N+\frac G{\sqrt N}\right|\le c\,\middle|\,(\xi_{b,r})_{b\in[m]\times[n],\,r\ge1}\right),
  \]
  where $G\sim\mathcal N(0,1)$ is independent of $(\xi_{b,r})_{b\in[m]\times[n],\,r\ge1}$.
  By the definition of $\mathcal L_\rho$,
  \[
    \mathsf{Near}_t^c=\bigcup_{a\in I^\star\times J^\star}
    \{|\widetilde A_{a,t}-A_a^\star|>\rho\}.
  \]
  Lemma~\ref{proof:lem:conditional-posterior-law} and \eqref{proof:eq:posterior-error-decomposition} give
  \[
    \Pp(|\widetilde A_{a,t}-A_a^\star|>\rho
    \mid(\xi_{b,r})_{b\in[m]\times[n],\,r\ge1},(I_r,J_r)_{r<t})
    =1-q_{a,N_a(t)}(\rho/\sigma).
  \]
  Hence, the union bound gives
  \[
    \Pp(\mathsf{Near}_t^c\mid(\xi_{b,r})_{b\in[m]\times[n],\,r\ge1},(I_r,J_r)_{r<t})
    \le\sum_{a\in I^\star\times J^\star}(1-q_{a,N_a(t)}(\rho/\sigma)).
  \]
  For each entry $a$, partitioning by the observation count gives
  \[
    \sum_{t=1}^H(1-q_{a,N_a(t)}(\rho/\sigma))
    =\sum_{N=1}^{\infty}(1-q_{a,N}(\rho/\sigma))
    \sum_{t=1}^H\1_{\{N_a(t)=N\}}.
  \]
  Each $q_{a,N}(\rho/\sigma)$ is a function of $(\xi_{b,r})_{b\in[m]\times[n],\,r\ge1}$, and all summands are nonnegative.
  Taking conditional expectations therefore yields
  \begin{equation}
    \begin{aligned}
       & \sum_{t=1}^H\Pp\left(\mathsf{Near}_t^c\,\middle|\,(\xi_{b,r})_{b\in[m]\times[n],\,r\ge1}\right)
      \le\sum_{a\in I^\star\times J^\star}
      \E\left[\sum_{t=1}^H(1-q_{a,N_a(t)}(\rho/\sigma))\,\middle|\,(\xi_{b,r})_{b\in[m]\times[n],\,r\ge1}\right] \\
       & \quad=\sum_{a\in I^\star\times J^\star}\sum_{N=1}^{\infty}(1-q_{a,N}(\rho/\sigma))
      \sum_{t=1}^H\Pp\left(N_a(t)=N\,\middle|\,(\xi_{b,r})_{b\in[m]\times[n],\,r\ge1}\right).
    \end{aligned}
    \label{proof:eq:nonlocal-count-decomposition}
  \end{equation}

  Write $\cS_{\rm rel}=([m]\times J^\star)\cup(I^\star\times[n])$.
  We next lower-bound probabilities of observing core entries using Lemma~\ref{proof:lem:relevant-equilibrium-embedding}.
  By Lemma~\ref{proof:lem:conditional-posterior-law},
  \begin{align*}
     & \Pp\Bigl(\max_{b\in\cS_{\rm rel}}|\widetilde A_{b,t}-A_b^\star|\le\rho
          \,\Bigm|\,(\xi_{b,r})_{b\in[m]\times[n],\,r\ge1},(I_r,J_r)_{r<t}\Bigr)
    =\prod_{b\in\cS_{\rm rel}}q_{b,N_b(t)}(\rho/\sigma)                          \\
     & \quad\ge\prod_{b\in\cS_{\rm rel}}\inf_{N\ge1}q_{b,N}(\rho/\sigma).
  \end{align*}
  For $a=(i,j)\in I^\star\times J^\star$, Lemma~\ref{proof:lem:relevant-equilibrium-embedding} gives
  \begin{align*}
     & \Pp((I_t,J_t)=a\mid(\xi_{b,r})_{b\in[m]\times[n],\,r\ge1},(I_r,J_r)_{r<t})
    =\E[x_{t,i}y_{t,j}\mid(\xi_{b,r})_{b\in[m]\times[n],\,r\ge1},(I_r,J_r)_{r<t}]                                                                                                                                                 \\
     & \quad\ge\E\Bigl[\1_{\{\max_{b\in\cS_{\rm rel}}|\widetilde A_{b,t}-A_b^\star|\le\rho\}}x_{t,i}y_{t,j}\,\Bigm|\,(\xi_{b,r})_{b\in[m]\times[n],\,r\ge1},(I_r,J_r)_{r<t}\Bigr]                                                 \\
     & \quad\ge p^2\Pp\Bigl(\max_{b\in\cS_{\rm rel}}|\widetilde A_{b,t}-A_b^\star|\le\rho\,\Bigm|\,(\xi_{b,r})_{b\in[m]\times[n],\,r\ge1},(I_r,J_r)_{r<t}\Bigr) \ge p^2\prod_{b\in\cS_{\rm rel}}\inf_{N\ge1}q_{b,N}(\rho/\sigma).
  \end{align*}
  For every core entry $a$ and count $N$, observing $a$ increases its count, so
  \[
    \sum_{t=1}^H\1_{\{N_a(t)=N\}\cap\{(I_t,J_t)=a\}}\le1.
  \]
  Since $N_a(t)$ is determined by $(I_r,J_r)_{r<t}$, the tower property and the preceding probability bound give
  \begin{align*}
    1 & \ge\sum_{t=1}^H\Pp\left(\{N_a(t)=N\}\cap\{(I_t,J_t)=a\}\,\middle|\,(\xi_{b,r})_{b\in[m]\times[n],\,r\ge1}\right) \\
      & =\E\Bigl[\sum_{t=1}^H\1_{\{N_a(t)=N\}}
             \Pp\bigl((I_t,J_t)=a\mid(\xi_{b,r})_{b\in[m]\times[n],\,r\ge1},(I_r,J_r)_{r<t}\bigr)
             \,\Bigm|\,(\xi_{b,r})_{b\in[m]\times[n],\,r\ge1}\Bigr]                         \\
      & \ge p^2\prod_{b\in\cS_{\rm rel}}\inf_{k\ge1}q_{b,k}(\rho/\sigma)
    \sum_{t=1}^H\Pp\left(N_a(t)=N\,\middle|\,(\xi_{b,r})_{b\in[m]\times[n],\,r\ge1}\right).
  \end{align*}
  Lemma~\ref{proof:lem:all-count-localization} with $c=\rho/\sigma$ gives $\inf_{k\ge1}q_{b,k}(\rho/\sigma)>0$ almost surely for every $b\in\cS_{\rm rel}$.
  Consequently,
  \[
    \sum_{t=1}^H\Pp\left(N_a(t)=N\,\middle|\,(\xi_{b,r})_{b\in[m]\times[n],\,r\ge1}\right)
    \le\frac1{p^2\prod_{b\in\cS_{\rm rel}}\inf_{k\ge1}q_{b,k}(\rho/\sigma)}.
  \]
  Substituting this count bound into \eqref{proof:eq:nonlocal-count-decomposition}, we obtain
  \[
    \sum_{t=1}^H\Pp\left(\mathsf{Near}_t^c\,\middle|\,(\xi_{b,r})_{b\in[m]\times[n],\,r\ge1}\right)
    \le\frac1{p^2\prod_{b\in\cS_{\rm rel}}\inf_{N\ge1}q_{b,N}(\rho/\sigma)}\sum_{a\in I^\star\times J^\star}\sum_{N=1}^{\infty}(1-q_{a,N}(\rho/\sigma)).
  \]
  Independence of the complete entry-noise sequences gives
  \begin{align*}
    \sum_{t=1}^H\Pp(\mathsf{Near}_t^c)
     & \le\frac1{p^2}\sum_{a\in I^\star\times J^\star}
    \E\left[\frac{\sum_{N=1}^{\infty}(1-q_{a,N}(\rho/\sigma))}{\inf_{N\ge1}q_{a,N}(\rho/\sigma)}
        \prod_{b\in\cS_{\rm rel}\setminus\{a\}}\frac1{\inf_{N\ge1}q_{b,N}(\rho/\sigma)}\right]    \\
     & =\frac1{p^2}\sum_{a\in I^\star\times J^\star}
    \E\left[\frac{\sum_{N=1}^{\infty}(1-q_{a,N}(\rho/\sigma))}{\inf_{N\ge1}q_{a,N}(\rho/\sigma)}\right]
    \prod_{b\in\cS_{\rm rel}\setminus\{a\}}\E\left[\frac1{\inf_{N\ge1}q_{b,N}(\rho/\sigma)}\right] \\
     & =:B_{\rm loc}.
  \end{align*}
  Applying Lemma~\ref{proof:lem:all-count-localization} with $c=\rho/\sigma$ gives $B_{\rm loc}<\infty$.
\end{proof}

\subsection{Proof of Lemma~\ref{proof:lem:conditional-posterior-law}}
\begin{proof}
  By \eqref{eq:posterior}, represent the draws using independent $G_{a,t}\sim\mathcal N(0,1)$, independent of observation noises and past sampling randomness, as
  \begin{equation}
    \begin{aligned}
      \widehat A_{a,N} & =A_a^\star+\frac{\sigma S_{a,N}}
      N,                                                                             \\
      \widetilde A_{a,t}
                       & =\widehat A_{a,N_a(t)}+\frac{\sigma G_{a,t}}{\sqrt{N_a(t)}}
      =A_a^\star+\frac{\sigma S_{a,N_a(t)}}{N_a(t)}
      +\frac{\sigma G_{a,t}}{\sqrt{N_a(t)}}.
    \end{aligned}
    \label{proof:eq:posterior-error-decomposition}
  \end{equation}
  Conditional on $(\xi_{a,r})_{a\in[m]\times[n],\,r\ge1},(I_r,J_r)_{r<t}$, the current draws remain independent with
  \[
    G_{a,t}\mid(\xi_{a,r})_{a\in[m]\times[n],\,r\ge1},(I_r,J_r)_{r<t}\sim\mathcal N(0,1),\qquad a\in[m]\times[n].
  \]
  This conditioning fixes $N_a(t)$ and $\widehat A_{a,N_a(t)}$, so \eqref{proof:eq:posterior-error-decomposition} gives the asserted Gaussian distributions.
  The fixed equilibrium selection rule determines $(x_t,y_t)$ from $\widetilde A_t$, and independent action sampling gives
  \begin{equation*}
    \begin{aligned}
       & \Pp((I_t,J_t)=(i,j)\mid(\xi_{a,r})_{a\in[m]\times[n],\,r\ge1},(I_r,J_r)_{r<t},\widetilde A_t)       \\
       & \qquad=\Pp(I_t=i\mid(\xi_{a,r})_{a\in[m]\times[n],\,r\ge1},(I_r,J_r)_{r<t},\widetilde A_t)          \\
       & \qquad\quad\cdot\Pp(J_t=j\mid(\xi_{a,r})_{a\in[m]\times[n],\,r\ge1},(I_r,J_r)_{r<t},\widetilde A_t) \\
       & \qquad=x_{t,i}y_{t,j}.
    \end{aligned}
  \end{equation*}
\end{proof}

\subsection{Proof of Lemma~\ref{proof:lem:selected-posterior-deviation}}
\begin{proof}
  Let $L=\sum_{a\in\mathcal A}\sum_{t=1}^H\omega_{a,t}$.
  By \eqref{proof:eq:posterior-error-decomposition},
  \begin{equation}
    \sum_{a\in\mathcal A}\sum_{t=1}^H
    \omega_{a,t}|\widetilde A_{a,t}-A_a^\star|
    \le\sigma\sum_{a\in\mathcal A}\sum_{t=1}^H
    \omega_{a,t}\frac{|S_{a,N_a(t)}|}{N_a(t)}
    +\sigma\sum_{a\in\mathcal A}\sum_{t=1}^H
    \omega_{a,t}\frac{|G_{a,t}|}{\sqrt{N_a(t)}}.
    \label{proof:eq:core-two-sums}
  \end{equation}
  If $L=0$, both sums in \eqref{proof:eq:core-two-sums} vanish, so assume $L>0$.
  Since $\omega_{a,t}\le\1_{\{(I_t,J_t)=a\}}$, the observation counts with $\omega_{a,t}=1$ are distinct for each entry, and
  \[
    \begin{gathered}
      1\le N_a(t)\le t\le H,\qquad
      L\le H,\\
      \sum_{a\in\mathcal A}\sum_{t=1}^H
      \omega_{a,t}\1_{\{N_a(t)\le r\}}
      \le Dr\qquad(r\ge1).
    \end{gathered}
  \]
  Let $N_{(j)}^\uparrow$ be the $j$th smallest element of
  \[
    \{N_a(t):a\in\mathcal A,\ 1\le t\le H,\ \omega_{a,t}=1\},
  \]
  with equal counts from different pairs included separately.
  For $1\le j\le L$, the preceding count bound gives
  \[
    \begin{gathered}
      j\le\sum_{a\in\mathcal A}\sum_{t=1}^H
      \omega_{a,t}\1_{\{N_a(t)\le N_{(j)}^\uparrow\}}
      \le DN_{(j)}^\uparrow,\\
      \frac1{\sqrt{N_{(j)}^\uparrow}}\le\sqrt{\frac Dj}.
    \end{gathered}
  \]
  Enumerate the pairs with $\omega_{a,t}=1$ as $(a_1,t_1),\ldots,(a_L,t_L)$ so that $N_{a_j}(t_j)=N_{(j)}^\uparrow$.
  Let $W_{(j)}^{S,\downarrow}$ be the $j$th largest element of
  \[
    \{|S_{a,N}|/\sqrt N:a\in\mathcal A,\ 1\le N\le H\},
  \]
  and let $W_{(j)}^{G,\downarrow}$ be the $j$th largest element of
  \[
    \{|G_{a,t}|:a\in\mathcal A,\ 1\le t\le H\}.
  \]
  Each array has $DH$ elements, with equal values included separately.
  The selected indices $(a_j,N_{(j)}^\uparrow)$ and $(a_j,t_j)$ do not repeat.
  To bound the two sums in \eqref{proof:eq:core-two-sums}, we apply the rearrangement inequality to each array and obtain
  \begin{equation}
    \begin{aligned}
      \sum_{a\in\mathcal A}\sum_{t=1}^H
      \omega_{a,t}\frac{|S_{a,N_a(t)}|}{N_a(t)}
       & =\sum_{j=1}^{L}
      \frac{|S_{a_j,N_{(j)}^\uparrow}|}{N_{(j)}^\uparrow}
      =\sum_{j=1}^{L}
      \frac{|S_{a_j,N_{(j)}^\uparrow}|}{\sqrt{N_{(j)}^\uparrow}}
      \frac1{\sqrt{N_{(j)}^\uparrow}}                                                  \\
       & \le\sum_{j=1}^{L}
      \frac{W_{(j)}^{S,\downarrow}}{\sqrt{N_{(j)}^\uparrow}}
      \le\sqrt D\sum_{j=1}^H\frac{W_{(j)}^{S,\downarrow}}{\sqrt j},                    \\
      \sum_{a\in\mathcal A}\sum_{t=1}^H
      \omega_{a,t}\frac{|G_{a,t}|}{\sqrt{N_a(t)}}
       & =\sum_{j=1}^{L}|G_{a_j,t_j}|\frac1{\sqrt{N_{(j)}^\uparrow}} \le\sum_{j=1}^{L}
      \frac{W_{(j)}^{G,\downarrow}}{\sqrt{N_{(j)}^\uparrow}}
      \le\sqrt D\sum_{j=1}^H\frac{W_{(j)}^{G,\downarrow}}{\sqrt j}.
    \end{aligned}
    \label{proof:eq:selected-rearrangement-bounds}
  \end{equation}

  By \eqref{eq:observation} and~\eqref{proof:eq:posterior-error-decomposition}, every element of both arrays has the distribution of $|\mathcal N(0,1)|$.
  Apply Lemma~\ref{proof:lem:gaussian-order-statistics} to each array with $M=DH$ and $L=H$.
  The lemma does not require independence within an array.
  Combining it with \eqref{proof:eq:selected-rearrangement-bounds} and~\eqref{proof:eq:core-two-sums} gives
  \[
    \E\!
    \left[\sum_{a\in\mathcal A}\sum_{t=1}^H
      \omega_{a,t}|\widetilde A_{a,t}-A_a^\star|\right]
    \le\sigma\sqrt D\sum_{Q\in\{S,G\}}
    \E\!\left[\sum_{j=1}^H\frac{W_{(j)}^{Q,\downarrow}}{\sqrt j}\right] \le20\sigma\sqrt{DH\log(eD)}.
  \]
\end{proof}

\subsection{Proof of Lemma~\ref{proof:lem:finite-horizon-confidence}}
\begin{proof}
  Write $\cS_{\rm rel}=([m]\times J^\star)\cup(I^\star\times[n])$, so $|\cS_{\rm rel}|=d(m+n-d)$.
  To cover adaptive counts $N_a(t)$, define
  \[
    \begin{aligned}
      \mathcal A_T
       & =\bigcap_{a\in\cS_{\rm rel},1\le N\le T}
      \{|S_{a,N}|/\sqrt N\le\sqrt{2\log(4d(m+n-d)T/\delta)}\} \\
       & \quad\cap\bigcap_{a\in\cS_{\rm rel},1\le t\le H}
      \{|G_{a,t}|\le\sqrt{2\log(4d(m+n-d)T/\delta)}\}.
    \end{aligned}
  \]
  Since $S_{a,N}/\sqrt N\sim\mathcal N(0,1)$ and $G_{a,t}\sim\mathcal N(0,1)$ for fixed $N,t$, the Gaussian tail bound and a union bound give
  \[
    \Pp(\mathcal A_T^c)
    \le d(m+n-d)(T+H)\frac{\delta}{2d(m+n-d)T}
    =\frac{T+H}{2T}\delta\le\delta,
  \]
  where the last inequality uses $H\le T$.
  On $\mathcal A_T$, the bound $1\le N_a(t)\le t\le H\le T$ and \eqref{proof:eq:posterior-error-decomposition} yield, for every $a\in\cS_{\rm rel}$ and $1\le t\le H$,
  \[
    |\widetilde A_{a,t}-A_a^\star|
    \le\frac{\sigma}{\sqrt{N_a(t)}}
    \left(\frac{|S_{a,N_a(t)}|}{\sqrt{N_a(t)}}+|G_{a,t}|\right)
    \le2\sigma\sqrt{\frac{2\log(4d(m+n-d)T/\delta)}{N_a(t)}}.
  \]
  Thus, $\mathcal A_T\subseteq\mathcal E_T(\delta)$ and $\Pp(\mathcal E_T(\delta)^c)\le\delta$.

  Next, fix $1\le t\le H$ and a realization in $\mathcal E_T(\delta)\cap\mathsf{Near}_t$.
  For $a\in\cS_{\rm rel}$ with $N_a(t)\ge n_T^\star(\delta)$, the threshold definition gives
  \begin{equation}
    |\widetilde A_{a,t}-A_a^\star|
    \le2\sigma\sqrt{\frac{2\log(4d(m+n-d)T/\delta)}{N_a(t)}} \le2\sigma\sqrt{\frac{2\log(4d(m+n-d)T/\delta)}
      {(8\sigma^2/\rho^2)\log(4d(m+n-d)T/\delta)}}
    =\rho.
    \label{proof:eq:confidence-threshold-error}
  \end{equation}
  Set $C_t=(\widetilde A_t)_{I^\star,J^\star}$.
  For an outside row $i\notin I^\star$ with $\min_{j\in J^\star}N_{ij}(t)\ge n_T^\star(\delta)$, the row margin of Lemma~\ref{proof:lem:local-geometry-from-uniqueness} gives
  \begin{align*}
    U_i(\widetilde A_t)
     & =\sum_{j\in J^\star}A^\star_{ij}y_j^{C_t}-v^{C_t}
    +\sum_{j\in J^\star}y_j^{C_t}(\widetilde A_{ij,t}-A^\star_{ij}) \\
     & \le-\Delta_{\rm row}
    +\sum_{j\in J^\star}y_j^{C_t}|\widetilde A_{ij,t}-A^\star_{ij}|
    \le-\Delta_{\rm row}+\rho\sum_{j\in J^\star}y_j^{C_t}
    =-\Delta_{\rm row}+\rho<0,
  \end{align*}
  by \eqref{proof:eq:confidence-threshold-error}, $\sum_{j\in J^\star}y_j^{C_t}=1$ and $\Delta_{\rm row}\ge2\rho$.
  Similarly, for an outside column $j\notin J^\star$ with $N_{ij}(t)\ge n_T^\star(\delta)$ for every $i\in I^\star$,
  \begin{align*}
    V_j(\widetilde A_t)
     & =v^{C_t}-\sum_{i\in I^\star}x_i^{C_t}A^\star_{ij}
    +\sum_{i\in I^\star}x_i^{C_t}(A^\star_{ij}-\widetilde A_{ij,t}) \\
     & \le-\Delta_{\rm col}
    +\sum_{i\in I^\star} x_i^{C_t}|\widetilde A_{ij,t}-A^\star_{ij}|
    \le-\Delta_{\rm col}+\rho\sum_{i\in I^\star}x_i^{C_t}
    =-\Delta_{\rm col}+\rho<0,
  \end{align*}
  by the column margin, \eqref{proof:eq:confidence-threshold-error}, $\sum_{i\in I^\star}x_i^{C_t}=1$ and $\Delta_{\rm col}\ge2\rho$.
  Taking contrapositives, we obtain
  \[
    \begin{aligned}
      U_i(\widetilde A_t)>0
       & \ \Longrightarrow\ \min_{j\in J^\star}N_{ij}(t)<n_T^\star(\delta), \\
      V_j(\widetilde A_t)>0
       & \ \Longrightarrow\ \min_{i\in I^\star}N_{ij}(t)<n_T^\star(\delta).
    \end{aligned}
  \]
  Thus, on $\mathcal E_T(\delta)\cap\mathsf{Near}_t$, each outside row satisfies $U_i(\widetilde A_t)\le0$ or $\min_{j\in J^\star}
    N_{ij}(t)<n_T^\star(\delta)$, and each outside column satisfies the analogous condition.
  Intersecting with $\mathsf{Supp}_t^c$ gives $\mathcal E_T(\delta)\cap\mathsf{Near}_t\cap\mathsf{Supp}_t^c\subseteq\mathsf D_t(n_T^\star(\delta))$.

  Finally, for $\delta=T^{-1}$, the ceiling bound $\lceil z\rceil-1\le z$ gives
  \begin{align*}
    n_T^\star(T^{-1})-1
     & \le\frac{8\sigma^2}{\rho^2}\{\log(4d(m+n-d))+2\log T\}    \\
     & \le\frac{8\sigma^2\{2+\log(4d(m+n-d))\}}{\rho^2}\log(eT).
  \end{align*}
\end{proof}

\subsection{Proof of Lemma~\ref{proof:lem:outside-action-occupation}}
\begin{proof}
  Fix integers $H,k\ge1$.
  Each observation increases $N_{ij}$ by one, starting from $N_{ij}(1)=1$, so the counts at successive observations are $1,2,\ldots,N_{ij}(H+1)-1$.
  Hence, for every $(i,j)$,
  \[
    \sum_{t=1}^H\1_{\{(I_t,J_t)=(i,j)\}}\1_{\{N_{ij}(t)<k\}}
    =\sum_{r=1}^{N_{ij}(H+1)-1}\1_{\{r<k\}}
    =\min\{N_{ij}(H+1)-1,k-1\}\le k-1.
  \]
  Summing over the $d(m+n-2d)$ entries in the two disjoint outside blocks gives
  \begin{equation}
    \begin{aligned}
      \sum_{t=1}^H\1_{\{(I_t,J_t)\in\mathcal V_t^\times(k)\}}
       & =\sum_{(i,j)\in(I^{\star c}\times J^\star)\cup(I^\star\times J^{\star c})}
      \sum_{t=1}^H\1_{\{(I_t,J_t)=(i,j)\}}\1_{\{N_{ij}(t)<k\}}                      \\
       & \le d(m+n-2d)(k-1).
    \end{aligned}
    \label{proof:eq:outside-count-budget}
  \end{equation}

  We first compare rounds and observations when exactly one player has outside actions.
  \begin{lemma}
    \label{proof:lem:one-sided-observation-comparison}
    Suppose that exactly one player has actions outside its equilibrium support, that is, $m>d$ and $n=d$, or $m=d$ and $n>d$.
    There exist finite nonnegative constants $C_{\rm st}^{\rm 1s}$ and $B_{\rm 1s}$, depending only on $(A^\star,\sigma,m,n)$, such that, for every $H,k\ge1$,
    \[
      \sum_{t=1}^H\Pp(\mathsf D_t(k))
      \le C_{\rm st}^{\rm 1s}\sum_{t=1}^H\Pp((I_t,J_t)\in\mathcal V_t^\times(k))+B_{\rm 1s}.
    \]
  \end{lemma}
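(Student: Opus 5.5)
Without loss of generality I treat the case $m>d$, $n=d$; the case $m=d$, $n>d$ follows by applying the same argument to $-(A^\star)^\trans$. With no outside columns, Lemma~\ref{proof:lem:margin-embedding} shows the following. On $\mathsf{Near}_t\cap\mathsf{Supp}_t^c$, some outside row $i\notin I^\star$ has $U_i(\widetilde A_t)\ge0$, and equality has conditional probability zero because the sampled entries are Gaussian. On $\mathsf D_t(k)$, such a row must additionally have $\min_{j\in J^\star}N_{ij}(t)<k$. I therefore get
\[
  \Pp_t^{C}(\mathsf D_t(k))\le\sum_{i\notin I^\star}\1_{\{\min_{j\in J^\star}N_{ij}(t)<k\}}\Pp_t^{C}(U_i(\widetilde A_t)>0).
\]
I work conditionally on the noise sequences, the past actions, and the core $C_t=C\in\mathcal L_\rho$. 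Given these, the outside rows $(\widetilde A_{ij,t})_{j\in J^\star}$, $i\notin I^\star$, are independent Gaussian vectors (Lemma~\ref{proof:lem:conditional-posterior-law}). Each margin $U_i=w^\trans\mathbf Z_i-h$ has $w=y^C$ with $w_j\ge p$ and $h=v^C$ bounded. Its mean under $A^\star$ is at most $-\Delta_{\rm row}$. The vectors $\mathbf Z_i$ have exactly the form $\mathbf Z_{\mathbf n}$ in Lemma~\ref{proof:lem:gaussian-negative-mean-interval}, with $\Delta=\Delta_{\rm row}$.

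The core step is a change of event. For a fixed small $L\le\Delta_{\rm row}/2$ and $\varepsilon<1/2$, I split the outside rows into \emph{settled} rows, with $\Pp_t^C(U_{i'}\le-\Delta_{\rm row}/4)\ge1-\varepsilon$, and \emph{unsettled} rows. Suppose every row other than $i$ is settled. Independence across rows then gives
\[
  \Pp_t^C(U_i>0)\le\frac{\Pp_t^C(0<U_i\le L,\ U_{i'}\le-\Delta_{\rm row}/4\ \forall i'\ne i)}{(1-\varepsilon)^{m-d-1}\,\Pp_t^C(0<U_i\le L)/\Pp_t^C(U_i>0)}.
\]
The denominator ratio is bounded below in one of two ways. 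The second part of Lemma~\ref{proof:lem:gaussian-interval-probabilities} applies when the conditional mean of $U_i$ is nonpositive. Otherwise Lemma~\ref{proof:lem:gaussian-negative-mean-interval} gives $\Pp(0<U_i\le L)\ge c_0/((1+H_\Delta)e^{H_\Delta^2/2})$, which is integrable after multiplying. Rounds in which some row is unsettled are charged separately. By the first conclusion of Lemma~\ref{proof:lem:gaussian-negative-mean-interval}, an unsettled row forces $\min_j N_{ij}<(8\sigma/3\Delta)^2(H_\Delta+\Phi^{-1}(1-\varepsilon))^2$. I plan to show that the expected number of such rounds is bounded by the $(1+H_\Delta)^re^{H_\Delta^2/2}$ moment, which produces the constant $B_{\rm 1s}$.

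On the alternative event $E_i=\{0<U_i\le L\}\cap\{U_{i'}\le-\Delta_{\rm row}/4,\ i'\ne i\}$, the settled rows are strictly unprofitable, so by Lemma~\ref{proof:lem:margin-embedding} the equilibrium of $\widetilde A_t$ is supported on $(I^\star\cup\{i\})\times J^\star$. I need to show that in any equilibrium of this $(d+1)\times d$ game, $x_{t,i}\ge c$ and $y_{t,j}\ge c'$ for all $j\in J^\star$, with constants independent of $t$. This is the main obstacle. I would first argue as follows. Since row $i$ is profitable against $y^C$, we have $x_{t,i}>0$, and $y_t=y^C+u$ with $\1^\trans u=0$ must push row $i$ down to the value. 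Lemma~\ref{proof:lem:gradient-margins} then yields a core row whose payoff rises by at least $\kappa_y\normone u$. Equalizing the column payoffs with the core rows forces $x_{t,i}$ to be bounded below by a ratio of $\kappa$-type quantities and the size of $\widetilde A_{i,J^\star}$. To keep that size under control, I would use Lemma~\ref{proof:lem:window-regression}: it bounds $\E[\|\mathbf Z_i\|_\infty\mid 0<U_i\le L]$ by $\|\mu\|_\infty$ plus constants, and $\|\mu\|_\infty$ is in turn controlled by the envelope quantities of Lemma~\ref{proof:lem:complete-table-envelope}. Continuity with $L$ small keeps $y_t$ close to $y^C\ge p$, so $y_{t,j}\ge p/2$.

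Given these lower bounds, on $E_i$ the low-count entry $(i,j)$ is observed with conditional probability at least $c\,p/2$, by Lemma~\ref{proof:lem:conditional-posterior-law}. This gives $\Pp_t^C(\mathsf D_t(k))\lesssim \Pp_t^C((I_t,J_t)\in\mathcal V_t^\times(k))$ times a random factor whose moments are finite. The remaining difficulty is to integrate over $C_t$, the noise, and the history while the random factor stays correlated with the indicators. I plan to handle this exactly as in the proof of Lemma~\ref{proof:lem:general-support-nonlocal-occupation}: condition on the full noise sequences, sum over $t$ using the tower property, and use independence of the entry sequences together with the finite moments of $H_\Delta$ to obtain $C_{\rm st}^{\rm 1s}$ and $B_{\rm 1s}$.
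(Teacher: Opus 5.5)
Your reduction to $n=d<m$, the union bound $\Pp_t^{C}(\mathsf D_t(k))\le\sum_{i\notin I^\star}\1_{\{\min_{j\in J^\star}N_{ij}(t)<k\}}\Pp_t^{C}(U_i(\widetilde A_t)>0)$, and the comparison on $E_i$ when the other rows are settled broadly follow the paper. The main gap is the rounds with an unsettled row, which you only promise to control.

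The first conclusion of Lemma~\ref{proof:lem:gaussian-negative-mean-interval} bounds $\min_{j\in J^\star}N_{ij}(t)$ while row $i$ is unsettled. So it caps how many \emph{observations} of row $i$ can occur in such rounds. It does not bound how many such \emph{rounds} occur, because nothing in your argument forces row $i$ to be played while it is unsettled. You need a lower bound on $\Pp_t^C((I_t,J_t)=(i,j_t))$, with $j_t$ the count-minimizing entry of row $i$. It must be uniform over $t$ and $C\in\mathcal L_\rho$ and fixed by the noise sequences. It must also hold when several rows are unsettled at once, so your device of conditioning on the other rows being settled is unavailable.

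The paper (Lemmas~\ref{proof:prop:two-sided-bad-center-occupation} and~\ref{proof:lem:two-sided-bad-row-envelope}) controls the other rows instead by requiring their sampled entries to lie within $\rho$ of $A^\star$. This event is conditionally independent of row $i$. By Lemma~\ref{proof:lem:all-count-localization}, its probability is at least $\prod_a\inf_{N\ge1}q_{a,N}(\rho/\sigma)$ whatever the counts. Three further ingredients give the observation bound: $\Pp_t^C(0<U_i(\widetilde A_t)\le L)\ge c_0/\bigl((1+H_\Delta)e^{H_\Delta^2/2}\bigr)$, the window-regression bound on $\max_{j\in J^\star}|\widetilde A_{ij,t}|$, and $x_{t,i}y_{t,j}\ge p\kappa_y/\bigl(2(\kappa_y+\max_{h\in J^\star}|\widetilde A_{ih,t}|)\bigr)$. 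Each entry is observed at most once at each count. Summing over $t$ with this fact and the count threshold yields a finite constant.

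Second, your settled-round coefficient is random, because Lemma~\ref{proof:lem:window-regression} brings in $\|\mu\|_\infty$, the current empirical means of row $i$. Integrating it as in Lemma~\ref{proof:lem:general-support-nonlocal-occupation} only gives $\E\bigl[K\sum_t\1_{\{(I_t,J_t)\in\mathcal V_t^\times(k)\}}\bigr]\le d(m+n-2d)(k-1)\E[K]$. That is the form of Lemma~\ref{proof:lem:two-sided-observation-comparison}, and it would suffice downstream. It is not the stated inequality, which needs a deterministic $C_{\rm st}^{\rm 1s}$ and a $k$-independent $B_{\rm 1s}$. Independence of the entry sequences does not decouple $K$ from the indicators, since the indicators depend on all entries.

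The paper splits on whether $\max_{h\in J^\star}|\widehat A_{ih,N_{ih}(t)}-A^\star_{ih}|\le\Delta_{\rm row}/2$. If so, $\|\mu\|_\infty\le\normmax{A^\star}+\Delta_{\rm row}/2$ and the coefficient is deterministic. If not, it compares $\Pp_t^C(U_i(\widetilde A_t)>0)$ with the probability of observing the erroneous entry $(i,j)$ itself. The number of counts at which that entry's error exceeds $\Delta_{\rm row}/2$ is pathwise at most $(16\sigma^2/\Delta_{\rm row}^2)\bigl(\sup_{r\ge1}(|S_{ij,r}|-\Delta_{\rm row}r/(4\sigma))_+/\sqrt r\bigr)^2$, so these rounds go into $B_{\rm 1s}$.

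A minor point: Lemma~\ref{proof:lem:margin-embedding} does not show that equilibria on $E_i$ are supported on $(I^\star\cup\{i\})\times J^\star$. Your margins are taken relative to $y^C$, not the perturbed equilibrium. You do not need this, since on $E_i$ every equilibrium satisfies $v(\widetilde A_t)-v^C\le\sum_{h\notin I^\star}x_{t,h}U_h(\widetilde A_t)\le x_{t,i}U_i(\widetilde A_t)$.
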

  When either $m>d$ and $n=d$ or $m=d$ and $n>d$, Lemma~\ref{proof:lem:one-sided-observation-comparison} and \eqref{proof:eq:outside-count-budget} give
  \[
    \sum_{t=1}^H\Pp(\mathsf D_t(k))
    \le C_{\rm st}^{\rm 1s}\sum_{t=1}^H\Pp((I_t,J_t)\in\mathcal V_t^\times(k))+B_{\rm 1s} \le d(m+n-2d)(k-1)C_{\rm st}^{\rm 1s}+B_{\rm 1s}.
  \]
  Thus, we set $C_{\rm occ}=C_{\rm st}^{\rm 1s}$ and $B_{\rm occ}=B_{\rm 1s}$.

  When both players have outside actions, we use the following comparison.
  \begin{lemma}
    \label{proof:lem:two-sided-observation-comparison}
    Suppose that $A^\star$ has a unique Nash equilibrium and both players have actions outside their equilibrium supports, that is, $m>d$ and $n>d$.
    Let $\rho,p,\Delta_{\rm row},\Delta_{\rm col}$ and $\kappa_x,\kappa_y$ be the constants of Lemmas~\ref{proof:lem:local-geometry-from-uniqueness} and~\ref{proof:lem:gradient-margins}.
    There exist a finite nonnegative constant $B_{\rm bad}$, depending only on $(A^\star,\sigma,m,n)$, and a nonnegative random variable $K_{\rm reg}$, independent of the horizon $T$ and depending only on $(A^\star,\sigma,m,n)$ and $(\xi_{a,r})_{a\in[m]\times[n],\,r\ge1}$, such that, for every $H,k\ge1$,
    \[
      \sum_{t=1}^H\Pp(\mathsf D_t(k))
      \le\E\left[K_{\rm reg}\sum_{t=1}^H\1_{\{(I_t,J_t)\in\mathcal V_t^\times(k)\}}\right]+B_{\rm bad}.
    \]
    Moreover, there exists a finite nonnegative constant $C_{\rm reg}^{\rm geom}$, depending only on $(\normmax{A^\star},\rho,p,\sigma,\Delta_{\rm row},\Delta_{\rm col},\kappa_x,\kappa_y)$, such that
    \[
      \E[K_{\rm reg}]\le C_{\rm reg}^{\rm geom}(1+\log(2mn))^2<\infty.
    \]
  \end{lemma}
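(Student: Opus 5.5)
We bound each round's probability $\Pp(\mathsf D_t(k))$ by a random multiple of the probability that an insufficiently observed cross entry is actually played, in the same round. Throughout we condition on the full noise sequences $(\xi_{a,r})$ and on the history $(I_r,J_r)_{r<t}$, and use Lemma~\ref{proof:lem:conditional-posterior-law}: the sampled entries are then independent Gaussians with explicit means and variances. We further condition on the core $C_t=C\in\mathcal L_\rho$. Under $\Pp_t^C$, the outside margins $U_i(\widetilde A_t)$ ($i\notin I^\star$) are independent Gaussians. Each is a $y^C$-weighted average of the row entries $\widetilde A_{ij,t}$, $j\in J^\star$, minus $v^C$, and its mean is at most $-\Delta_{\rm row}+\sigma\sum_j y^C_jS_{ij,N}/N$; the $V_j$ are analogous. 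Since the row-outside and column-outside entries are disjoint, $\{U_i\}$ and $\{V_j\}$ are jointly independent given $C$. On $\mathsf D_t(k)$ some margin is positive, and every positive margin belongs to an outside action having a core-pair entry with count $<k$.

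\textbf{Comparison event.} Let $\mathsf D_t(k)$ occur with, say, $U_{i_0}>0$, where $i_0$ has an entry $(i_0,j)\in\mathcal V_t^\times(k)$. We compare it with the event $\mathsf G_{i_0}$ on which $0<U_{i_0}\le L$, all other $U_i\le-\Delta_{\rm row}/4$, all $V_j\le-\Delta_{\rm col}/4$, and the entries of row $i_0$ are not too large (this is what Lemma~\ref{proof:lem:window-regression} controls in conditional expectation). On $\mathsf G_{i_0}$, the equilibrium of $\widetilde A_t$ has support $(I^\star\cup\{i_0\})\times J^\star$ with $n=d$ in the appropriate sense. Using the gradient margins of Lemma~\ref{proof:lem:gradient-margins}, the column player's mixed strategy stays in a $p/2$-neighbourhood and the row player puts weight at least some $c(L)>0$ on $i_0$. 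Hence every entry $(i_0,j)$, $j\in J^\star$, is observed with probability $\ge c\,p/2$. In particular, an entry in $\mathcal V_t^\times(k)$ is observed with probability bounded below.

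\textbf{Key step and expected obstacle.} The hard part is showing $\Pp_t^C(\mathsf D_t(k)\cap\{U_{i_0}>0\})\le K\cdot\Pp_t^C(\mathsf G_{i_0})$ with $K$ integrable. For the positive-deviation window, Lemma~\ref{proof:lem:gaussian-interval-probabilities} (second part) bounds $\Pp(0<U_{i_0}\le L)$ below by a fixed fraction of $\Pp(U_{i_0}>0)$ when the mean is negative, which holds up to noise terms. For the other margins there are two cases. If their probability of being $\le-\Delta/4$ exceeds $1-\varepsilon$, they are harmless. If not, Lemma~\ref{proof:lem:gaussian-negative-mean-interval} shows the corresponding counts are $O((H_\Delta+z_\varepsilon)^2)$, and that small positive windows for them have probability $\ge c_0/((1+H_\Delta)e^{H_\Delta^2/2})$. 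We then choose the ``active'' outside action as the one with a low-probability clearance and iterate. Because the margins are independent given $C$, the ratio factorizes into a product of per-action random constants. Their product $K_{\rm reg}$ depends only on $\xi$, and its expectation is finite by the moment bounds $\E[(1+H_\Delta)^re^{H_\Delta^2/2}]<\infty$. The degenerate rounds where many margins are simultaneously uncertain contribute only finitely many rounds, because the counts are bounded there; that gives $B_{\rm bad}$ (and $B_{\rm 1s}$).

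\textbf{Summation.} Summing over $t$ and taking the tower expectation gives $\sum_t\Pp(\mathsf D_t(k))\le\E[K_{\rm reg}\sum_t\1_{\{(I_t,J_t)\in\mathcal V_t^\times(k)\}}]+B_{\rm bad}$. Since $K_{\rm reg}$ does not depend on the realized counts, Hölder's inequality together with the deterministic budget \eqref{proof:eq:outside-count-budget} yields the bound $d(m+n-2d)(k-1)\E[K_{\rm reg}]+B$. We would then track the $\log(2mn)$ dependence from the Gaussian maxima in Lemmas~\ref{proof:lem:window-regression} and~\ref{proof:lem:complete-table-envelope} to bound $C_{\rm occ}$, which completes the proof of Lemma~\ref{proof:lem:outside-action-occupation}.
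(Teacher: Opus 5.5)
Your comparison on cores where every outside margin clears its threshold is essentially the paper's. The gap is in the remaining rounds: those where some margin fails the clearance condition $\Pp_t^C(U_u(\widetilde A_t)\le-\Delta_{\rm row}/4)\ge1-1/[8(m+n-2d)]$ or its column analogue (the paper's event $\mathsf B_t$).

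\begin{itemize}
\item On such cores the conditional mean of a margin can be positive. The clearance condition, not ``noise terms'', is what makes it negative, so even the second part of Lemma~\ref{proof:lem:gaussian-interval-probabilities} is unavailable there.
\item Your plan to ``choose the active action as the one with low-probability clearance and iterate'' does not close. Lemma~\ref{proof:lem:gaussian-negative-mean-interval} lower-bounds only a small positive window of the uncertain margin itself. It gives no lower bound on the probability that the \emph{other} uncertain margins lie below $-g$, so no comparison event of controlled probability results.
\item Even if it did, the entry observed in the switched-to row need not lie in $\mathcal V_t^\times(k)$, because that row may have all its counts at least $k$. Such rounds therefore cannot be charged to $\E[K_{\rm reg}\sum_t\1_{\{(I_t,J_t)\in\mathcal V_t^\times(k)\}}]$.
\item Your fallback, that such rounds are finitely many ``because the counts are bounded there'', is not an argument. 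With no exploration, counts can stay small forever unless you show that the offending row is actually observed with probability bounded below in those rounds.
\item Folding per-action random factors into a product $K_{\rm reg}$ makes $\E[K_{\rm reg}]$ grow exponentially in $m+n-2d$. That contradicts the required bound $C_{\rm reg}^{\rm geom}(1+\log(2mn))^2$.
\end{itemize}

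The missing idea is to bound $\sum_t\Pp(\mathsf B_t)$ by a separate, $k$-independent constant, as in Lemmas~\ref{proof:prop:two-sided-bad-center-occupation} and~\ref{proof:lem:two-sided-bad-row-envelope}. For a failing row $i$, intersect with the entrywise-accuracy event on $\cS_i^{\rm row}$. By Lemma~\ref{proof:lem:all-count-localization}, its conditional probability is at least $\prod_{a\in\cS_i^{\rm row}}\inf_Nq_{a,N}(\rho/\sigma)$, however uncertain those margins are. On this event every other margin is at most $-(\Delta_{\rm row}-\rho)$ or $-(\Delta_{\rm col}-\rho)$, and the outside-column entries are bounded. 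Combining this with the positive window from Lemma~\ref{proof:lem:gaussian-negative-mean-interval} and Lemma~\ref{proof:lem:unified-reveal}, the least-observed entry $(i,j_t)$ is observed with probability at least that product divided by $F^R_{i,\min_jN_{ij}(t)}$. Here $F^R_{i,N}=0$ once $N$ exceeds a random threshold. Since each entry is observed at most once per count, the expected number of rounds in which row $i$ fails is at most $d\,\E[\sum_NF^R_{i,N}]\prod_{a\in\cS_i^{\rm row}}\E[1/\inf_Nq_{a,N}(\rho/\sigma)]<\infty$.

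This places all dimension-exponential reciprocal probabilities in $B_{\rm bad}$. On good cores, $K_{\rm reg}\le c(1+\overline M)^4$ depends on the noise only through $\max_a\sup_N|S_{a,N}|/N$, which gives the $(1+\log(2mn))^2$ bound via Lemma~\ref{proof:lem:complete-table-envelope}.

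A smaller point in your good-core step: on $\mathsf G_{i_0}$ the equilibrium need not be supported in $(I^\star\cup\{i_0\})\times J^\star$. An outside column with a very negative entry in row $i_0$ can enter once $i_0$ carries weight. Lemma~\ref{proof:lem:unified-reveal} handles this by bounding the outside-column mass by $\theta/g$. That lemma needs a bound on the outside-column entries of every row in $I^\star\cup\{i_0\}$, not just row $i_0$.
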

  When $m>d$ and $n>d$, Lemma~\ref{proof:lem:two-sided-observation-comparison} and \eqref{proof:eq:outside-count-budget} give
  \[
    \begin{aligned}
      \sum_{t=1}^H\Pp(\mathsf D_t(k))
       & \le\E\left[K_{\rm reg}\sum_{t=1}^H\1_{\{(I_t,J_t)\in\mathcal V_t^\times(k)\}}\right]+B_{\rm bad} \\
       & \le d(m+n-2d)(k-1)\E[K_{\rm reg}]+B_{\rm bad}.
    \end{aligned}
  \]
  Thus, we set $C_{\rm occ}=\E[K_{\rm reg}]$ and $B_{\rm occ}=B_{\rm bad}$.
\end{proof}

\subsection{Proof of Lemma~\ref{proof:lem:one-sided-observation-comparison}}
\begin{proof}
  Fix $H,k\ge1$.
  We bound $\sum_{t=1}^H\Pp(\mathsf D_t(k))$ by comparing positive-margin probabilities with observation probabilities and bounding the contribution of empirical errors.
  The following lemma compares outside-margin events with observations of entries whose counts are below $k$.
  \begin{lemma}
    \label{proof:lem:common-observation-comparison}
    Suppose that $m>d$ or $n>d$.
    Fix $t,k\ge1$, $C\in\mathcal L_\rho$, and sets $\mathcal I\subseteq[m]\setminus I^\star$ and $\mathcal J\subseteq[n]\setminus J^\star$.
    For each $i\in\mathcal I$ and $j\in\mathcal J$, choose $j_i\in J^\star$ and $i_j\in I^\star$ with $N_{i j_i}(t)<k$ and $N_{i_j j}(t)<k$ using a fixed rule based on the counts.
    Then,
    \begin{align*}
       & \Pp_t^C\!\left(\mathsf D_t(k)
      \cap\bigcap_{u\notin I^\star\cup\mathcal I}\{U_u(\widetilde A_t)\le0\}
      \cap\bigcap_{v\notin J^\star\cup\mathcal J}\{V_v(\widetilde A_t)\le0\}\right)    \\
       & \quad\le\prod_{u\notin I^\star\cup\mathcal I}\Pp_t^C(U_u(\widetilde A_t)\le0)
      \prod_{v\notin J^\star\cup\mathcal J}\Pp_t^C(V_v(\widetilde A_t)\le0)
      \\ &\qquad\cdot\left[\sum_{i\in\mathcal I}\Pp_t^C(U_i(\widetilde A_t)>0)
                       +\sum_{j\in\mathcal J}\Pp_t^C(V_j(\widetilde A_t)>0)\right].
    \end{align*}
    For any $L_{\rm row},L_{\rm col}>0$ and $g_{\rm row},g_{\rm col}\ge0$, we also have
    \begin{align*}
       & \Pp_t^C((I_t,J_t)\in\mathcal V_t^\times(k))
      \\ &\quad\ge\sum_{i\in\mathcal I}\E_t^C\!\left[
                                                 x_{t,i}y_{t,j_i}\1_{\{0<U_i(\widetilde A_t)\le L_{\rm row}\}}
                                                 \prod_{u\notin I^\star\cup\{i\}}\1_{\{U_u(\widetilde A_t)\le-g_{\rm row}\}}
                                                 \prod_{v\notin J^\star}\1_{\{V_v(\widetilde A_t)\le-g_{\rm col}\}}\right]              \\
       & \qquad+\sum_{j\in\mathcal J}\E_t^C\!\left[
                                               x_{t,i_j}y_{t,j}\1_{\{0<V_j(\widetilde A_t)\le L_{\rm col}\}}
                                               \prod_{u\notin I^\star}\1_{\{U_u(\widetilde A_t)\le-g_{\rm row}\}}
                                               \prod_{v\notin J^\star\cup\{j\}}\1_{\{V_v(\widetilde A_t)\le-g_{\rm col}\}}\right].
    \end{align*}
  \end{lemma}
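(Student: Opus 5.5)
The plan is to exploit the conditional independence structure under $\Pp_t^C$. By Lemma~\ref{proof:lem:conditional-posterior-law}, conditional on all noise sequences and the past actions, the entries of $\widetilde A_t$ are independent Gaussians. Conditioning further on $C_t=C$ fixes $(x^C,y^C,v^C)$. Hence $U_u(\widetilde A_t)=\sum_{j\in J^\star}\widetilde A_{uj,t}y_j^C-v^C$ is a function of row $u$ restricted to $J^\star$ only, and $V_v(\widetilde A_t)$ is a function of column $v$ restricted to $I^\star$ only. These entry sets lie in the disjoint blocks $I^{\star c}\times J^\star$ and $I^\star\times J^{\star c}$, and they are pairwise disjoint across distinct $u$ and $v$. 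So under $\Pp_t^C$ the family $\{U_u\}_{u\notin I^\star}\cup\{V_v\}_{v\notin J^\star}$ is mutually independent. Each $U_u$ and $V_v$ is a nondegenerate Gaussian, since $y^C_j,x^C_i\ge p>0$ and the conditional variances $\sigma^2/N_a(t)$ are positive. In particular, every event $\{U_u=0\}$ and $\{V_v=0\}$ has $\Pp_t^C$-probability zero.

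For the first inequality, I will show that, up to a $\Pp_t^C$-null set, the event on the left forces some $i\in\mathcal I$ with $U_i>0$ or some $j\in\mathcal J$ with $V_j>0$. On $\mathsf D_t(k)\subseteq\mathsf{Near}_t\cap\mathsf{Supp}_t^c$, the played pair differs from $(\bar x^{C},\bar y^{C})$. If all outside margins were strictly negative, Lemma~\ref{proof:lem:margin-embedding} would give $\NE(\widetilde A_t)=\{(\bar x^C,\bar y^C)\}$, and the selection rule would play this pair, a contradiction. Hence some margin is $\ge0$, and after discarding the null set where a margin equals zero, some margin is $>0$. The intersected events make all margins outside $\mathcal I\cup\mathcal J$ nonpositive, so the positive margin must belong to $\mathcal I$ or $\mathcal J$. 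I then drop $\mathsf D_t(k)$ and apply a union bound over $i\in\mathcal I$ and $j\in\mathcal J$. In each term, the event is an intersection of $\{U_i>0\}$ (or $\{V_j>0\}$) with $\{U_u\le0\}$ and $\{V_v\le0\}$ for the indices outside $\mathcal I\cup\mathcal J$. These involve distinct independent coordinates, so each term factorizes into the stated product times $\Pp_t^C(U_i>0)$ or $\Pp_t^C(V_j>0)$. Summing gives the claim.

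For the second inequality, $\mathcal V_t^\times(k)$ is determined by $(I_r,J_r)_{r<t}$. By the choice of $j_i$ and $i_j$, the entries $(i,j_i)$ for $i\in\mathcal I$ and $(i_j,j)$ for $j\in\mathcal J$ all belong to $\mathcal V_t^\times(k)$. They are pairwise distinct: the rows $i$ differ within the first family, the columns $j$ differ within the second, and the two families lie in disjoint blocks. By the tower property and the second part of Lemma~\ref{proof:lem:conditional-posterior-law},
\[
\Pp_t^C((I_t,J_t)\in\mathcal V_t^\times(k))\ge\E_t^C\Bigl[\sum_{i\in\mathcal I}x_{t,i}y_{t,j_i}+\sum_{j\in\mathcal J}x_{t,i_j}y_{t,j}\Bigr].
\]
Multiplying each summand by the indicated indicators, which are at most one, only decreases the right-hand side and yields the stated bound. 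No disjointness of the indicator events is needed for this step.

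The only delicate point is the first claim. There, $\mathsf{Supp}_t^c$ by itself only guarantees a nonnegative margin, because an arbitrary selection rule may pick a different equilibrium when a margin is exactly zero. I handle this through the null-set argument based on the Gaussian nondegeneracy of each margin under $\Pp_t^C$. Everything else is independence bookkeeping.
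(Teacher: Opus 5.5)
Your proposal is correct and follows essentially the same route as the paper. For the first inequality, the paper likewise uses conditional independence of the outside blocks under $\Pp_t^C$ and the nondegeneracy of each margin to show that, almost surely, $\mathsf{Supp}_t^c$ forces a strictly positive margin. It factors first (as a product times $1-\prod\Pp_t^C(\cdot\le0)$) and only then applies the union bound, which is equivalent to your order. For the second inequality, it uses the same inclusion of the distinct entries $(i,j_i)$ and $(i_j,j)$ in $\mathcal V_t^\times(k)$ and drops indicators bounded by one.
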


  We use the conditioning of Lemma~\ref{proof:lem:conditional-posterior-law}, with counts and empirical means fixed.
  First, consider localized cores satisfying
  \begin{equation}
    \begin{aligned}
      \Pp_t^C(U_i(\widetilde A_t)\le-\Delta_{\rm row}/4) & \ge1-\frac1{8(m+n-2d)}
                                                         &                        & (i\notin I^\star), \\
      \Pp_t^C(V_j(\widetilde A_t)\le-\Delta_{\rm col}/4) & \ge1-\frac1{8(m+n-2d)}
                                                         &                        & (j\notin J^\star).
    \end{aligned}.
    \label{proof:eq:outside-margin-probabilities}
  \end{equation}
  Omit empty index sets and let $\mathsf B_t$ be the event that $C_t$ is localized and at least one inequality in \eqref{proof:eq:outside-margin-probabilities} fails.
  Since $\mathsf D_t(k)\subseteq\mathsf{Near}_t$, the tower property gives
  \begin{equation}
    \sum_{t=1}^H\Pp(\mathsf D_t(k))
    \le\E\left[\sum_{t=1}^H\1_{\mathsf{Near}_t\cap\mathsf B_t^c}
      \Pp_t^{C_t}(\mathsf D_t(k))\right]
    +\sum_{t=1}^H\Pp(\mathsf B_t).
    \label{proof:eq:outside-good-bad-decomposition}
  \end{equation}
  For the first term in \eqref{proof:eq:outside-good-bad-decomposition}, we work with cores satisfying \eqref{proof:eq:outside-margin-probabilities}.
  Under these inequalities, the margins have negative means, and all except any chosen margin lie below their respective thresholds with probability at least $7/8$ by a union bound.

  Suppose $n=d<m$.
  Fix $C\in\mathcal L_\rho$ satisfying \eqref{proof:eq:outside-margin-probabilities} at round $t$ and $i\notin I^\star$.
  Let $B\in\R^{m\times n}$ satisfy $B_{I^\star,J^\star}=C$ and let $(x,y)\in\NE(B)$.
  Suppose that $\theta=\max_{h\notin I^\star}U_h(B)>0$, choose a maximizing row $i_0$, and set $M_1=\max_{j\in J^\star}|B_{i_0j}|$.
  Testing against $y^C$, a core row from Lemma~\ref{proof:lem:gradient-margins}, and row $i_0$ gives
  \[
    \begin{aligned}
      v(B)-v^C & \le\theta,                \\
      v(B)-v^C & \ge\kappa_y\|y-y^C\|_1,   \\
      v(B)-v^C & \ge\theta-M_1\|y-y^C\|_1.
    \end{aligned}
  \]
  Combining the last two inequalities and then testing against $y^C$ yields
  \[
    \frac{\kappa_y\theta}{\kappa_y+M_1}
    \le v(B)-v^C\le x^\trans By^C-v^C
    =\sum_{h\notin I^\star}x_hU_h(B)
    \le\theta\sum_{h:U_h(B)>0}x_h.
  \]
  Consequently,
  \[
    \|y-y^C\|_1\le\frac\theta{\kappa_y},\qquad
    \sum_{h:U_h(B)>0}x_h\ge\frac{\kappa_y}{\kappa_y+M_1}.
  \]
  If $i_0$ is the only outside row with a positive margin and $\theta\le p\kappa_y/2$, these bounds imply
  \begin{equation}
    \begin{gathered}
      x_{i_0}\ge\frac{\kappa_y}{\kappa_y+M_1},\qquad
      y_j\ge y_j^C-\|y-y^C\|_1\ge p/2,\\
      x_{i_0}y_j\ge\frac{p\kappa_y}{2(\kappa_y+M_1)}
      \qquad(j\in J^\star).
    \end{gathered}
    \label{proof:eq:one-sided-observation-bound}
  \end{equation}
  Apply \eqref{proof:eq:one-sided-observation-bound} with $B=\widetilde A_t$ and $i_0=i$ when $0<U_i(\widetilde A_t)\le p\kappa_y/2$ and all other outside row margins satisfy $U_h(\widetilde A_t)\le-\Delta_{\rm row}/4<0$.
  This gives
  \[
    x_{t,i}y_{t,j}\ge\frac{p\kappa_y}{2(\kappa_y+\max_{h\in J^\star}|\widetilde A_{ih,t}|)}
    \qquad(j\in J^\star).
  \]
  These conditions on the other margins are independent of this row and hold with probability at least $1/2$.
  To average over the sampled row, apply Lemma~\ref{proof:lem:window-regression} with $w=y^C,h=v^C,M=\normmax{A^\star}+\rho$, using $y_j^C\ge p$, $|v^C|\le \normmax{A^\star}+\rho$, and $N_{ij}(t)\ge1$.
  Put $L=p\kappa_y/2$.
  The conditional coordinate variances are $\sigma^2/N_{ih}(t)\le\sigma^2$.
  Conditional Jensen's inequality gives
  \[
    \begin{aligned}
      \Pp_t^C((I_t,J_t)=(i,j))
       & \ge\frac{p\kappa_y}{4}\E_t^C\!\left[
                                         \frac{\1_{\{0<U_i(\widetilde A_t)\le L\}}}{\kappa_y+\max_{h\in J^\star}|\widetilde A_{ih,t}|}\right] \\
       & \ge\frac{p\kappa_y}{4}
      \frac{\Pp_t^C(0<U_i(\widetilde A_t)\le L)}
      {\kappa_y+\E_t^C[\max_{h\in J^\star}|\widetilde A_{ih,t}|\mid0<U_i(\widetilde A_t)\le L]}.
    \end{aligned}
  \]
  Here, the factor $1/2$ accounts for the independent other-margin event.
  For cores satisfying \eqref{proof:eq:outside-margin-probabilities}, the conditional mean of $U_i(\widetilde A_t)$ is negative and its standard deviation is at most $\sigma$.
  Thus, Lemma~\ref{proof:lem:gaussian-interval-probabilities} compares the interval probability with $\Pp_t^C(U_i(\widetilde A_t)>0)$.
  If $\max_{h\in J^\star}|\widehat A_{ih,N_{ih}(t)}-A^\star_{ih}|\le\Delta_{\rm row}/2$, these bounds give the comparison coefficient
  \begin{equation}
    \begin{aligned}
      C_{\rm st}^{\rm 1s} & =\frac{4}{p\kappa_y(1-e^{-\min\{1,p\kappa_y/(2\sigma)\}^2/2})}        \\
                          & \qquad\cdot\left[\kappa_y+\frac{\normmax{A^\star}+\rho+p\kappa_y/2}{p}
                                         +(1+p^{-1})(\normmax{A^\star}+\Delta_{\rm row}/2)
                                         +\sigma\sqrt{2\log(2d)}\right].
    \end{aligned}
    \label{proof:eq:one-sided-stable-coefficient}
  \end{equation}
  Hence, for every $j\in J^\star$,
  \begin{equation}
    \Pp_t^C(U_i(\widetilde A_t)>0)
    \le C_{\rm st}^{\rm 1s}\Pp_t^C((I_t,J_t)=(i,j)).
    \label{proof:eq:one-sided-stable-comparison}
  \end{equation}
  Choose one entry below count $k$ in each outside row where one exists, using the fixed count-based rule.
  Applying \eqref{proof:eq:one-sided-stable-comparison} to rows whose empirical errors are at most $\Delta_{\rm row}/2$, and using \eqref{proof:eq:sampled-embedding-equivalence} for the remaining rows, a union bound gives
  \begin{equation}
    \begin{aligned}
      \Pp_t^C(\mathsf D_t(k))
       & \le C_{\rm st}^{\rm 1s}\Pp_t^C((I_t,J_t)\in\mathcal V_t^\times(k)) \\
       & \quad+\sum_{i\notin I^\star}\sum_{j\in J^\star}
      \1_{\{|\widehat A_{ij,N_{ij}(t)}-A^\star_{ij}|>\Delta_{\rm row}/2\}}
      \Pp_t^C(U_i(\widetilde A_t)>0).
    \end{aligned}
    \label{proof:eq:one-sided-error-decomposition}
  \end{equation}

  For the second term in \eqref{proof:eq:one-sided-error-decomposition}, we control the empirical errors uniformly over counts using
  \[
    \max_{h\in J^\star}|\widehat A_{ih,N_{ih}(t)}|
    \le \normmax{A^\star}+\Delta_{\rm row}/4
    +\sigma\max_{h\in J^\star}\sup_{r\ge1}\frac{(|S_{ih,r}|-\Delta_{\rm row}r/(4\sigma))_+}{\sqrt r}.
  \]
  Replacing $\normmax{A^\star}+\Delta_{\rm row}/2$ in \eqref{proof:eq:one-sided-stable-coefficient} by this empirical-mean bound gives, for every $j\in J^\star$,
  \begin{equation}
    \begin{aligned}
      \Pp_t^C(U_i(\widetilde A_t)>0)
       & \le\Biggl[C_{\rm st}^{\rm 1s}
              +\frac{4\sigma(1+p^{-1})}{p\kappa_y(1-e^{-\min\{1,p\kappa_y/(2\sigma)\}^2/2})}
              \\[-1mm]
              &\qquad\cdot\max_{h\in J^\star}\sup_{r\ge1}\frac{(|S_{ih,r}|-\Delta_{\rm row}r/(4\sigma))_+}{\sqrt r}\Biggr]
      \Pp_t^C((I_t,J_t)=(i,j)).
    \end{aligned}
    \label{proof:eq:common-rowwise-comparison}
  \end{equation}
  Each entry has distinct counts at successive observations, so pathwise
  \begin{equation}
    \begin{aligned}
       & \sum_{t=1}^H\1_{\{(I_t,J_t)=(i,j)\}}
      \1_{\{|S_{ij,N_{ij}(t)}|>\Delta_{\rm row}
        N_{ij}(t)/(2\sigma)\}} \le\sum_{N\ge1}\1_{\{|S_{ij,N}|>\Delta_{\rm row}N/(2\sigma)\}}                                                \\
       & \quad\le\frac{16\sigma^2}{\Delta_{\rm row}^2}\left(\sup_{r\ge1}\frac{(|S_{ij,r}|-\Delta_{\rm row}r/(4\sigma))_+}{\sqrt r}\right)^2.
    \end{aligned}
    \label{proof:eq:erroneous-observation-count}
  \end{equation}
  The last inequality follows by subtracting $\Delta_{\rm row}N/(4\sigma)$ from $|S_{ij,N}|$ at every indicated count.
  The second term in \eqref{proof:eq:outside-good-bad-decomposition} is bounded by the following lemma.
  \begin{lemma}
    \label{proof:prop:two-sided-bad-center-occupation}
    Suppose that $A^\star$ has a unique Nash equilibrium and at least one player has actions outside its equilibrium support, that is, $m>d$ or $n>d$.
    There is a finite nonnegative constant $B_{\rm bad}$, depending only on $(A^\star,\sigma,m,n)$, such that, for every $H\ge1$,
    \[
      \sum_{t=1}^H\Pp(\mathsf B_t)\le B_{\rm bad}.
    \]
  \end{lemma}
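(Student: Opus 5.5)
The plan is to treat $\mathsf B_t$ like an ``outside action looks profitable'' event in a stochastic bandit. I will show that on $\mathsf B_t$ the offending outside entry is observed with conditional probability bounded below by a random, but integrable-in-reciprocal, quantity. I will also show that $\mathsf B_t$ can only occur while that entry's count lies below a random threshold with finite moments. Fix the union over outside rows $i\notin I^\star$ and outside columns $j\notin J^\star$; it suffices to bound $\sum_t\Pp(\mathsf B_t^{(i)})$, where $\mathsf B_t^{(i)}$ is the event that $C_t\in\mathcal L_\rho$ and $\Pp_t^{C_t}(U_i(\widetilde A_t)\le-\Delta_{\rm row}/4)<1-\varepsilon$, with $\varepsilon=1/(8(m+n-2d))$. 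Columns are symmetric.

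\emph{Step 1 (counts are small on $\mathsf B_t^{(i)}$).} Conditionally on the noise, the past actions and $C_t=C$, the margin $U_i(\widetilde A_t)$ equals $w^\trans\mathbf Z_{\mathbf n}-h$. Here $w=y^C\in\simplex_d$, $h=v^C$, $\mathbf Z_{\mathbf n}$ is the row $(\widetilde A_{ih,t})_{h\in J^\star}$ with $n_h=N_{ih}(t)$, and $\mu=(A^\star_{ih})_{h\in J^\star}$. The mean condition $w^\trans\mu-h\le-\Delta_{\rm row}$ holds by Lemma~\ref{proof:lem:local-geometry-from-uniqueness}. Lemma~\ref{proof:lem:gaussian-negative-mean-interval}, applied with $\Delta=\Delta_{\rm row}$ and $L=\min\{\Delta_{\rm row}/2,p\kappa_y/2\}$, then gives two conclusions on $\mathsf B_t^{(i)}$. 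First, $\min_{h\in J^\star}N_{ih}(t)<k_i:=\lceil(8\sigma/(3\Delta_{\rm row}))^2(H_i+\Phi^{-1}(1-\varepsilon))^2\rceil$, where $H_i$ is the row-$i$ random variable $H_{\Delta_{\rm row}}$ of that lemma. Second, $\Pp_t^C(0<U_i(\widetilde A_t)\le L)\ge c_0/((1+H_i)e^{H_i^2/2})$. Both hold simultaneously for all $C$, $\mathbf n$ and $(w,h)$, which is exactly what the adaptive counts require.

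\emph{Step 2 (forced observation).} On $\{0<U_i\le L\}$, I also require every other outside margin to be at most $-\rho$. I will ensure this by asking that all entries of $\cS_{\rm rel}$ outside row $i$ lie within $\rho$ of $A^\star$, as in Lemma~\ref{proof:lem:relevant-equilibrium-embedding}. Given the past and $C$, this event is conditionally independent of row $i$, and its probability is at least $Q:=\prod_{b\in\cS_{\rm rel}}\inf_N q_{b,N}(\rho/\sigma)$. On this event the equilibrium-forcing argument behind \eqref{proof:eq:one-sided-observation-bound} applies; in the two-sided case it also uses $\kappa_x$ and the negative column margins to keep $x_{I^\star}$ close to $x^C$. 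It yields $x_{t,i}y_{t,j}\ge p\kappa_y/(2(\kappa_y+\max_h|\widetilde A_{ih,t}|))$ for the entry $j=j_i$ with $N_{ij}(t)<k_i$. Conditional Jensen together with Lemma~\ref{proof:lem:window-regression} then bounds the denominator in conditional mean. Its $\|\mu\|_\infty$ term is replaced by the empirical-mean envelope $\normmax{A^\star}+\Delta_{\rm row}/4+\sigma\max_h\sup_r(|S_{ih,r}|-\Delta_{\rm row}r/(4\sigma))_+/\sqrt r$. This gives
\[
\Pp\bigl((I_t,J_t)=(i,j_i),\,N_{ij_i}(t)<k_i\bigm|\text{noise},(I_r,J_r)_{r<t}\bigr)\ge\frac{\1_{\mathsf B^{(i)}_t}}{K_i}\quad\text{(up to the tower over }C_t),
\]
with $K_i:=c\,(1+E_i)(1+H_i)e^{H_i^2/2}/Q$ for a constant $c$ and the drift envelope $E_i$.

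\emph{Step 3 (counting and integrability).} Each entry is observed at most $k_i-1$ times while its count is below $k_i$. The tower property, conditioning on the full noise sequences exactly as in the proof of Lemma~\ref{proof:lem:general-support-nonlocal-occupation}, therefore gives $\sum_t\Pp(\mathsf B_t^{(i)}\mid\text{noise})\le d\,k_i K_i$. It remains to show that $\E[k_iK_i]<\infty$. The factors $1/\inf_Nq_{b,N}$ for entries outside row $i$ are independent of row $i$'s noise, and their expectations are finite by Lemma~\ref{proof:lem:all-count-localization}. Within row $i$, $k_i$, $H_i$, $E_i$ and the row-$i$ factors $1/\inf_Nq$ are all controlled by drift suprema $\sup_N(|S_N|-\alpha N)_+/\sqrt N$ of the same per-entry sequences. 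The expected product is then bounded via $1+\sqrt{\sum h_j^2}\le\prod(1+h_j)$, independence across entries, and Lemma~\ref{proof:lem:gaussian-drift-moments} with a small $\alpha$.

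The main obstacle I expect is this last moment step. For the same entry, it combines $e^{H_i^2/2}$ with the factor $1/\inf_N q$, which itself contains $e^{D^2/2}$. That exceeds the single $e^{D^2/2}$ moment guaranteed by Lemma~\ref{proof:lem:gaussian-drift-moments}. I would first try to avoid the row-$i$ localization factor altogether, since Step 2 only needs the other rows to be localized. If a squared exponential remains for a shared entry, I would fall back on the sharper tail $2e^{-u^2/2-\alpha u}$: with one factor $e^{D_\alpha^2/2}$ and the other taken at a different drift $\alpha'$, I would apply Hölder with carefully chosen drifts.
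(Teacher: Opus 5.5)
Your architecture matches the paper's. Lemma~\ref{proof:lem:gaussian-negative-mean-interval} caps $\min_{h\in J^\star}N_{ih}(t)$ and lower-bounds the window probability uniformly over cores and count vectors. A localization event independent of row $i$'s core draws then forces an observation of the lowest-count entry. Finally, the tower property and ``each entry is observed once per count'' reduce the sum to the expectation of a noise-measurable variable.

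The gap is in Step~2, in the two-sided case $m>d$, $n>d$, which this lemma must cover. Your localized set ``$\cS_{\rm rel}$ outside row $i$'' omits $\{i\}\times J^{\star c}$. The negative column margins $V_c$ involve only core rows, so nothing controls $\widetilde A_{ic,t}$ for $c\notin J^\star$. If such an entry is very negative, the column player neutralizes row $i$ with a tiny weight on column $c$, and $x_{t,i}$ collapses.

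Concretely, take $A^\star=\begin{pmatrix}0&1\\-1&0\end{pmatrix}$, so $d=1$ and $p=\kappa_y=1$, and a sample $B=\begin{pmatrix}0&1\\\theta&-K\end{pmatrix}$. All your conditions hold: the core is exact, $0<U_2(B)=\theta\le L$, and $V_2(B)=-1$. Yet the unique equilibrium has $x_2=1/(1+\theta+K)$, whereas your claimed bound gives $x_2y_1\ge 1/(2(1+\theta))$. The appeal to $\kappa_x$ and negative column margins does not help here. Even with bounded outside entries, the one-sided window $L=p\kappa_y/2$ is not justified in the two-sided case. The reason is that $y$ can leak mass $\gamma\le\theta/g$ onto outside columns, so the estimate $\|y-y^C\|_1\le\theta/\kappa_y$ no longer applies.

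The repair is what the paper does. Localize $\cS_i^{\rm row}=((I^{\star c}\setminus\{i\})\times J^\star)\cup((I^\star\cup\{i\})\times J^{\star c})$; the core needs nothing, since it is fixed by conditioning on $C_t=C$. This yields three facts:
\begin{itemize}
\item $|B_{hc}|\le M_2=\normmax{A^\star}+\rho$ for $h\in I^\star\cup\{i\}$ and $c\notin J^\star$;
\item $V_c\le-g$ with $g=\Delta_{\rm col}-\rho$;
\item the other row margins are negative.
\end{itemize}
Testing against the embedded core strategies gives $g\gamma\le v(B)-v^C\le\theta$. Testing the gradient-margin core row and row $i$ against $y_{J^\star}/(1-\gamma)$ then gives $x_iy_j\ge\eta(M_1,M_2;g,\kappa_y)$ once $\theta\le\lambda(M_2;g,\kappa_y)$ (Lemma~\ref{proof:lem:unified-reveal}). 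So the window must shrink to $L=\min\{\lambda,\Delta_{\rm row}/2\}$.

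Because $\cS_i^{\rm row}$ is disjoint from $\{i\}\times J^\star$, the factor $\prod_{a\in\cS_i^{\rm row}}\bigl(\inf_{N}q_{a,N}(\rho/\sigma)\bigr)^{-1}$ is independent of row $i$'s core sequences. This removes the double exponential you worried about: your $Q$ over all of $\cS_{\rm rel}$ should never have included those entries. Finally, bound the empirical means at the same drift, $\max_{h\in J^\star}|\widehat A_{ih,N_{ih}(t)}|\le\normmax{A^\star}+3\Delta_{\rm row}/8+\sigma H_i$, or dominate by a single smaller drift as you suggest. The row-$i$ variable is then $O((1+H_i)^4e^{H_i^2/2})$, which is integrable by the $r=4$ case of Lemma~\ref{proof:lem:gaussian-negative-mean-interval}.
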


  Substitute \eqref{proof:eq:common-rowwise-comparison} into \eqref{proof:eq:one-sided-error-decomposition}, integrate over the good cores in \eqref{proof:eq:outside-good-bad-decomposition}, and use the tower property and \eqref{proof:eq:erroneous-observation-count}.
  Together with Lemma~\ref{proof:prop:two-sided-bad-center-occupation}, this gives
  \[
    B_{\rm 1s}=B_{\rm bad}+c\sum_{i\notin I^\star}\sum_{j\in J^\star}
    \E\left[\left(1+\sup_{r\ge1}\frac{(|S_{ij,r}|-\Delta_{\rm row}r/(4\sigma))_+}{\sqrt r}\right)^3\right]<\infty,
  \]
  where $c$ depends only on $(A^\star,\sigma,m,n)$.
  The cubic bound uses $ab^2\le(a^3+2b^3)/3$, and finiteness follows from Lemma~\ref{proof:lem:gaussian-drift-moments} with $r=3$.
  Thus, \eqref{proof:eq:outside-good-bad-decomposition} and \eqref{proof:eq:one-sided-error-decomposition} yield
  \[
    \sum_{t=1}^H\Pp(\mathsf D_t(k))
    \le C_{\rm st}^{\rm 1s}\sum_{t=1}^H\Pp((I_t,J_t)\in\mathcal V_t^\times(k))+B_{\rm 1s}.
  \]
  For $m=d<n$, apply the proof to $-A^{\star\trans}$, negating noises and draws and swapping players.
  Use the selector that swaps the original output on $-B^\trans$; counts and action probabilities are preserved, with $(\Delta_{\rm row},\kappa_y)$ replaced by $(\Delta_{\rm col},\kappa_x)$.
\end{proof}

\subsection{Proof of Lemma~\ref{proof:lem:common-observation-comparison}}
\begin{proof}
  Fix $t,k\ge1$ and $C\in\mathcal L_\rho$, as in the statement.
  On $\mathsf{Near}_t$, Lemma~\ref{proof:lem:local-geometry-from-uniqueness} and the definitions of the margins give
  \begin{align*}
    e_i^\trans\widetilde A_t\bar y^{C_t}-v^{C_t}
     & =\begin{cases}
          0,                   & i\in I^\star,    \\
          U_i(\widetilde A_t), & i\notin I^\star,
        \end{cases} \\
    v^{C_t}-(\bar x^{C_t})^\trans\widetilde A_te_j
     & =\begin{cases}
          0,                   & j\in J^\star,    \\
          V_j(\widetilde A_t), & j\notin J^\star.
        \end{cases}
  \end{align*}
  Conditional on the complete observation noises, past actions, and a localized core $C_t=C$, each outside margin is a nondegenerate Gaussian variable.
  For example, the variance of $U_i(\widetilde A_t)$ is $\sigma^2\sum_{j\in J^\star}(y_j^C)^2/N_{ij}(t)>0$.
  Hence, with probability one, no outside margin is zero on $\mathsf{Near}_t$, simultaneously for all $t\ge1$.

  If some outside margin is positive, the equilibrium conditions fail, so the embedded pair is not an equilibrium of $\widetilde A_t$.
  Otherwise, all outside margins are strictly negative, and the embedded pair is an equilibrium with value $v^{C_t}$.
  For any other equilibrium $(x,y)$, optimality gives
  \begin{align*}
    0 & =v^{C_t}-x^\trans\widetilde A_t\bar y^{C_t}
    =-\sum_{i\notin I^\star}x_iU_i(\widetilde A_t),   \\
    0 & =(\bar x^{C_t})^\trans\widetilde A_ty-v^{C_t}
    =-\sum_{j\notin J^\star}y_jV_j(\widetilde A_t).
  \end{align*}
  Strict negativity forces $x_i=0$ outside $I^\star$ and $y_j=0$ outside $J^\star$.
  Uniqueness of the restricted equilibrium in Lemma~\ref{proof:lem:local-geometry-from-uniqueness} gives $(x,y)=(\bar x^{C_t},\bar y^{C_t})$.
  Thus, on $\mathsf{Near}_t$ and almost surely,
  \begin{equation}
    \mathsf{Supp}_t^c
    \quad\Longleftrightarrow\quad
    \bigcup_{i\notin I^\star}\{U_i(\widetilde A_t)>0\}
    \cup\bigcup_{j\notin J^\star}\{V_j(\widetilde A_t)>0\}.
    \label{proof:eq:sampled-embedding-equivalence}
  \end{equation}

  By Lemma~\ref{proof:lem:conditional-posterior-law}, the outside rows restricted to $J^\star$ and outside columns restricted to $I^\star$ are conditionally independent.
  Equation~\eqref{proof:eq:sampled-embedding-equivalence} and a union bound give
  \begin{align*}
     & \Pp_t^C\!\left(\mathsf D_t(k)
    \cap\bigcap_{u\notin I^\star\cup\mathcal I}\{U_u(\widetilde A_t)\le0\}
    \cap\bigcap_{v\notin J^\star\cup\mathcal J}\{V_v(\widetilde A_t)\le0\}\right)    \\
     & \quad\le\prod_{u\notin I^\star\cup\mathcal I}\Pp_t^C(U_u(\widetilde A_t)\le0)
    \prod_{v\notin J^\star\cup\mathcal J}\Pp_t^C(V_v(\widetilde A_t)\le0)            \\
     & \qquad\cdot\left[1-\prod_{i\in\mathcal I}\Pp_t^C(U_i(\widetilde A_t)\le0)
                    \prod_{j\in\mathcal J}\Pp_t^C(V_j(\widetilde A_t)\le0)\right]    \\
     & \quad\le\prod_{u\notin I^\star\cup\mathcal I}\Pp_t^C(U_u(\widetilde A_t)\le0)
    \prod_{v\notin J^\star\cup\mathcal J}\Pp_t^C(V_v(\widetilde A_t)\le0)            \\
     & \qquad\cdot\left[\sum_{i\in\mathcal I}\Pp_t^C(U_i(\widetilde A_t)>0)
                    +\sum_{j\in\mathcal J}\Pp_t^C(V_j(\widetilde A_t)>0)\right].
  \end{align*}
  The events with exactly one positive outside margin are disjoint.
  By the definition of $\mathcal V_t^\times(k)$, the inequalities $N_{i j_i}(t)<k$ and $N_{i_j j}(t)<k$ imply $(i,j_i),(i_j,j)\in\mathcal V_t^\times(k)$.
  Lemma~\ref{proof:lem:conditional-posterior-law} therefore gives
  \begin{align*}
     & \Pp_t^C((I_t,J_t)\in\mathcal V_t^\times(k))
    =\E_t^C\!\left[\sum_{(h,c)\in\mathcal V_t^\times(k)}x_{t,h}y_{t,c}\right]                                                   \\
     & \ge\sum_{i\in\mathcal I}\E_t^C\!\left[x_{t,i}y_{t,j_i}
                                         \1_{\{0<U_i(\widetilde A_t)\le L_{\rm row}\}}
                                         \prod_{u\notin I^\star\cup\{i\}}\1_{\{U_u(\widetilde A_t)\le-g_{\rm row}\}}
                                         \prod_{v\notin J^\star}\1_{\{V_v(\widetilde A_t)\le-g_{\rm col}\}}\right]             \\
     & \quad+\sum_{j\in\mathcal J}\E_t^C\!\left[x_{t,i_j}y_{t,j}
                                            \1_{\{0<V_j(\widetilde A_t)\le L_{\rm col}\}}
                                            \prod_{u\notin I^\star}\1_{\{U_u(\widetilde A_t)\le-g_{\rm row}\}}
                                            \prod_{v\notin J^\star\cup\{j\}}\1_{\{V_v(\widetilde A_t)\le-g_{\rm col}\}}\right].
  \end{align*}
  By conditional independence, each row summand factors as
  \begin{align*}
     & \E_t^C\!\left[x_{t,i}y_{t,j_i}\1_{\{0<U_i(\widetilde A_t)\le L_{\rm row}\}}
                 \prod_{u\notin I^\star\cup\{i\}}\1_{\{U_u(\widetilde A_t)\le-g_{\rm row}\}}
                 \prod_{v\notin J^\star}\1_{\{V_v(\widetilde A_t)\le-g_{\rm col}\}}\right]       \\
     & \quad=\Pp_t^C(0<U_i(\widetilde A_t)\le L_{\rm row})
    \prod_{u\notin I^\star\cup\{i\}}\Pp_t^C(U_u(\widetilde A_t)\le-g_{\rm row})
    \prod_{v\notin J^\star}\Pp_t^C(V_v(\widetilde A_t)\le-g_{\rm col})                            \\
     & \qquad\cdot\E_t^C\!\left[x_{t,i}y_{t,j_i}\,\middle|\,
                            \substack{0<U_i(\widetilde A_t)\le L_{\rm row}\\
                              U_u(\widetilde A_t)\le-g_{\rm row}\ (u\notin I^\star\cup\{i\}),\quad
                              V_v(\widetilde A_t)\le-g_{\rm col}\ (v\notin J^\star)}\right].
  \end{align*}
  For a column $j\in\mathcal J$, the corresponding factorization is
  \begin{equation*}
    \begin{aligned}
       & \E_t^C\!\left[x_{t,i_j}y_{t,j}\1_{\{0<V_j(\widetilde A_t)\le L_{\rm col}\}}
                   \prod_{u\notin I^\star}\1_{\{U_u(\widetilde A_t)\le-g_{\rm row}\}}
                   \prod_{v\notin J^\star\cup\{j\}}\1_{\{V_v(\widetilde A_t)\le-g_{\rm col}\}}\right]   \\
       & \quad=\Pp_t^C(0<V_j(\widetilde A_t)\le L_{\rm col})
      \prod_{u\notin I^\star}\Pp_t^C(U_u(\widetilde A_t)\le-g_{\rm row})
      \prod_{v\notin J^\star\cup\{j\}}\Pp_t^C(V_v(\widetilde A_t)\le-g_{\rm col})                       \\
       & \qquad\cdot\E_t^C\!\left[x_{t,i_j}y_{t,j}\,\middle|\,
                              \substack{0<V_j(\widetilde A_t)\le L_{\rm col}\\
                                U_u(\widetilde A_t)\le-g_{\rm row}\ (u\notin I^\star),\quad
                                V_v(\widetilde A_t)\le-g_{\rm col}\ (v\notin J^\star\cup\{j\})}\right].
    \end{aligned}
  \end{equation*}
\end{proof}

\subsection{Proof of Lemma~\ref{proof:lem:two-sided-observation-comparison}}
\begin{proof}
  We first compare $\Pp_t^C(\mathsf D_t(k))$ with the probability of observing an entry in $\mathcal V_t^\times(k)$ for cores satisfying \eqref{proof:eq:outside-margin-probabilities}, then use Lemma~\ref{proof:prop:two-sided-bad-center-occupation} for the remaining rounds.
  Fix $H,k\ge1$ and the complete observation sequences and past actions, and use the product-Gaussian law conditional on $C\in\mathcal L_\rho$.

  We write $U_{mn,1}=\max_{a\in[m]\times[n]}\sup_{N\ge1}|S_{a,N}|/N$.
  For $0<L\le\Delta_{\rm col}/16$, write the row event from Lemma~\ref{proof:lem:common-observation-comparison} as
  \[
    \begin{aligned}
      \mathsf R_{i,t}(L)={} & \{0<U_i(\widetilde A_t)\le L\}
      \cap\bigcap_{u\notin I^\star\cup\{i\}}\{U_u(\widetilde A_t)\le-\Delta_{\rm row}/4\}                 \\
                            & \cap\bigcap_{v\notin J^\star}\{V_v(\widetilde A_t)\le-\Delta_{\rm col}/4\}.
    \end{aligned}
  \]
  Lemma~\ref{proof:lem:window-regression} with $M=\normmax{A^\star}+\rho$, Lemma~\ref{proof:lem:gaussian-maxima}, and the core bound give
  \begin{align}
     & \E_t^C[\max\{\normmax{A^\star}+\rho,\normmax{\widetilde A_t}\}\mid\mathsf R_{i,t}(L)]\le\overline M,
    \label{proof:eq:two-sided-conditional-entry-bound}                                                                \\
     & \overline M=\normmax{A^\star}+\rho+\frac{\normmax{A^\star}+\rho+\max(\Delta_{\rm row},\Delta_{\rm col})/16}{p}
    \notag                                                                                                            \\
     & \qquad+(3+p^{-1})(\normmax{A^\star}+\sigma U_{mn,1})
    +3\sigma\sqrt{2\log(2mn)}.\notag
  \end{align}
  Here, the Gaussian coordinate variances are at most $\sigma^2$, and conditioning on the independent other-margin event costs at most a factor two because its probability is at least $1/2$.
  The same bound holds for the column events.
  We use \eqref{proof:eq:two-sided-conditional-entry-bound} to apply the following observation bound.
  \begin{lemma}
    \label{proof:lem:unified-reveal}
    Suppose that $A^\star$ has a unique Nash equilibrium and use $\rho,p,\kappa_y$ from Lemmas~\ref{proof:lem:local-geometry-from-uniqueness} and~\ref{proof:lem:gradient-margins}.
    Let $B\in\R^{m\times n}$, $n>d$, and $C=B_{I^\star,J^\star}\in\mathcal L_\rho$.
    Suppose that $U_i(B)=\theta>0$ for $i\notin I^\star$, $U_u(B)\le0$ for $u\notin I^\star\cup\{i\}$, and $V_j(B)\le-g$ for $j\notin J^\star$, where $g>0$.
    Let $\max_{j\in J^\star}|B_{ij}|\le M_1$ and $|B_{hc}|\le M_2$ for $h\in I^\star\cup\{i\}$, $c\notin J^\star$, where $M_1,M_2\ge0$.
    For $M_1,M_2\ge0$, $g>0$, and $\kappa>0$, define
    \[
      \begin{gathered}
        \lambda(M_2;g,\kappa)=\min\left\{\frac g4,
        \frac{3p\kappa}{16[1+(M_2+\normmax{A^\star}+\rho)/g]}\right\},\\
        \eta(M_1,M_2;g,\kappa)=\frac{3p}{8[1+(M_2+\normmax{A^\star}+\rho)/g][1+4M_1/(3\kappa)]}.
      \end{gathered}
    \]
    For $g=+\infty$, the formulas use $g^{-1}=0$.
    If $\theta\le \lambda(M_2;g,\kappa_y)$, every $(x,y)\in\NE(B)$ satisfies, for $j\in J^\star$,
    \[
      x_i\ge\frac2p\eta(M_1,M_2;g,\kappa_y),\qquad y_j\ge p/2,\qquad
      x_iy_j\ge\eta(M_1,M_2;g,\kappa_y).
    \]
  \end{lemma}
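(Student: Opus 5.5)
The plan follows the one-sided argument in the proof of Lemma~\ref{proof:lem:one-sided-observation-comparison}, now with outside columns present. Fix $(x,y)\in\NE(B)$ and let $v(B)=x^\trans By$ be the value. First I bound $y_J:=\sum_{c\notin J^\star}y_c$, the mass the column player puts outside the support. Second I use this to control $\|y_{J^\star}-y^C\|_1$. Third I lower-bound $x_i$. Throughout I use that $C$ is fully mixed with $Cy^C=v^C\1$ and $(x^C)^\trans C=v^C\1^\trans$, together with the margins of Lemma~\ref{proof:lem:local-geometry-from-uniqueness} and the gradient margin $\kappa_y$ of Lemma~\ref{proof:lem:gradient-margins}.

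\emph{Step 1 (value sandwich).} Testing the column player's optimality against $\bar y^C$ gives
\[
v(B)\le\max_h e_h^\trans B\bar y^C = v^C+\theta,
\]
because only row $i$ has a positive margin. Testing the row player against $\bar x^C$ gives
\[
v(B)\ge (\bar x^C)^\trans By = v^C-\sum_{c\notin J^\star}y_cV_c(B)\ge v^C+g\,y_J .
\]
Hence $y_J\le\theta/g$.

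\emph{Step 2 (core deviation).} Write $y_{J^\star}=(1-y_J)y^C+u+y_J\,\cdot(\text{correction})$, so that $\tilde u:=y_{J^\star}-(1-y_J)y^C$ has coordinate sum zero. By Lemma~\ref{proof:lem:gradient-margins}, some core row $i'$ satisfies $(C\tilde u)_{i'}\ge\kappa_y\|\tilde u\|_1$. The payoff of row $i'$ against $y$ is
\[
(1-y_J)v^C+(C\tilde u)_{i'}+\sum_{c\notin J^\star}B_{i'c}y_c ,
\]
and this is at most $v(B)\le v^C+\theta$. Since $|v^C|\le\normmax{A^\star}+\rho$ and $|B_{i'c}|\le M_2$, we get
\[
\kappa_y\|\tilde u\|_1\le\theta+y_J(M_2+\normmax{A^\star}+\rho)\le\theta\bigl[1+(M_2+\normmax{A^\star}+\rho)/g\bigr].
\]
Then $\|y_{J^\star}-y^C\|_1\le\|\tilde u\|_1+2y_J$. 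The condition $\theta\le\lambda$ (both the $g/4$ part and the $3p\kappa/16$ part) keeps this below $p/2$. Since $y^C_j\ge p$, this gives $y_j\ge p/2$ for every $j\in J^\star$. I will tune the constants to match $\lambda$ exactly.

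\emph{Step 3 (mass on row $i$).} This is the main obstacle. I need a lower bound on $v(B)-v^C$ and then complementary slackness. Row $i$ against $y$ pays
\[
\sum_{j\in J^\star}B_{ij}y_j+\sum_{c\notin J^\star}B_{ic}y_c\ \ge\ v^C+\theta-M_1\|y_{J^\star}-y^C\|_1-y_J(M_2+\dots).
\]
Row $i'$ shows $v(B)\ge v^C+\kappa_y\|\tilde u\|_1-y_J(\dots)$. Combining the two in the style of the one-sided argument should give
\[
v(B)-v^C\ge\frac{3\theta}{4\,[1+4M_1/(3\kappa_y)]\,[1+(M_2+\dots)/g]}
\]
up to constants. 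For the complementary slackness, I compute $x^\trans B\bar y^C-v^C=x_iU_i(B)+\sum_{u\ne i}x_uU_u(B)\le x_i\theta$. I also need $v(B)\le x^\trans B\bar y^C$, and this is the delicate point: $\bar y^C$ is not optimal for the column player, so I instead use
\[
v(B)\le x^\trans By\le x^\trans B\bar y^C+(\text{error from }y-\bar y^C)
\]
and absorb the error via the bounds from Step 2. Doing so yields $x_i\ge\frac{3}{4[1+4M_1/(3\kappa_y)][1+(M_2+\dots)/g]}=\frac2p\eta$, and with $y_j\ge p/2$ this gives $x_iy_j\ge\eta$. If the absorption step fails, my fallback is to rerun Step 1 with the row player's optimality, applying the same argument to $x$ restricted to $I^\star\cup\{i\}$. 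The case $g=+\infty$ reduces to the one-sided computation already in the paper.
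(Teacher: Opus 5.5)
Your Steps 1 and 2 follow the paper's proof: you test against the embedded core strategies to get $\gamma:=\sum_{c\notin J^\star}y_c\le\theta/g$, and then use a maximizing core row from Lemma~\ref{proof:lem:gradient-margins}. Step 3, however, has a genuine gap at exactly the point where $x_i\ge\frac2p\eta$ is produced. You treat $v(B)\le x^\trans B\bar y^C$ as delicate because $\bar y^C$ is not optimal for the column player. You then propose writing $x^\trans By\le x^\trans B\bar y^C+(\text{error})$ and absorbing the error with the Step~2 estimates. That does not work. An absolute-value bound on $x^\trans B(y-\bar y^C)$ is again of order $\theta$. Its coefficient involves $(\normmax{A^\star}+\rho)/\kappa_y$ and $1+(M_2+\normmax{A^\star}+\rho)/g$, and is in general larger than the coefficient $\frac2p\eta\le\frac34$ in your lower bound on $v(B)-v^C$. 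Subtracting it therefore leaves no positive lower bound on $x_i$. Your fallback of rerunning Step~1 for $x$ is not specified enough to evaluate.

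The missing observation is that no optimality of $\bar y^C$ is needed. Since $(x,y)\in\NE(B)$, $y$ is a best response to $x$, so $v(B)=x^\trans By\le x^\trans By'$ for every $y'\in\simplex_n$, in particular $y'=\bar y^C$. The would-be error $x^\trans B(y-\bar y^C)$ is simply nonpositive. Hence $v(B)-v^C\le x^\trans B\bar y^C-v^C=\sum_{h\notin I^\star}x_hU_h(B)\le x_i\theta$. This is the paper's reverse bound, and it gives $x_i\ge(v(B)-v^C)/\theta$.

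To land on the stated $\eta$ and $\lambda$, do the bookkeeping as the paper does:
\begin{itemize}
\item In the row-$i$ and core-row inequalities, keep $v(B)-v^C$ rather than $\theta$ on the right, and eliminate $\gamma$ via $\gamma\le(v(B)-v^C)/g$.
\item Factor $y_{J^\star}=(1-\gamma)\widehat y$ with $1-\gamma\ge\frac34$; this produces the $\frac34\theta$ term.
\end{itemize}
The same factoring repairs a constant in your Step~2. The bound $\|y_{J^\star}-y^C\|_1\le\|\tilde u\|_1+2\gamma$ together with $\gamma\le\frac14$ does not give $p/2$ when $p$ is small. By contrast, $y_j=(1-\gamma)y_j^C+\tilde u_j\ge\frac34p-\frac{3p}{16}\ge\frac p2$ does, using $\|\tilde u\|_1\le 3p/16$ from the $\kappa_y$ part of $\lambda$.
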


  Set $L=\lambda(2\overline M;\Delta_{\rm col}/4,\kappa_y)\le\Delta_{\rm col}/16$.
  Equation~\eqref{proof:eq:two-sided-conditional-entry-bound} and conditional Markov's inequality give
  \[
    \Pp_t^C(\normmax{\widetilde A_t}\le2\overline M\mid\mathsf R_{i,t}(L))\ge\tfrac12,
    \qquad
    \Pp_t^C(\mathsf R_{i,t}(L))\ge\tfrac12\Pp_t^C(0<U_i(\widetilde A_t)\le L),
  \]
  where the second bound uses independence and \eqref{proof:eq:outside-margin-probabilities}.
  Apply Lemma~\ref{proof:lem:unified-reveal} on $\mathsf R_{i,t}(L)\cap\{\normmax{\widetilde A_t}\le2\overline M\}$ with $(M_1,M_2,g)=(2\overline M,2\overline M,\Delta_{\rm col}/4)$.
  Multiplying these probabilities and using Lemma~\ref{proof:lem:gaussian-interval-probabilities}, we obtain, for every $j\in J^\star$,
  \begin{equation}
    \begin{aligned}
       & \E_t^C[x_{t,i}y_{t,j}\1_{\mathsf R_{i,t}(L)}]
      \ge\frac{\eta(2\overline M,2\overline M;\Delta_{\rm col}/4,\kappa_y)}4
      \Pp_t^C(0<U_i(\widetilde A_t)\le L)                                            \\
       & \quad\ge\frac{\eta(2\overline M,2\overline M;\Delta_{\rm col}/4,\kappa_y)}4
      (1-e^{-\min\{1,L/\sigma\}^2/2})\Pp_t^C(U_i(\widetilde A_t)>0).
    \end{aligned}
    \label{proof:eq:two-sided-row-tail-comparison}
  \end{equation}
  Here, \eqref{proof:eq:outside-margin-probabilities} ensures the negative mean required for the Gaussian tail comparison.
  Applying the same argument to $-\widetilde A_t^\trans$ with equilibrium $(y_t,x_t)$ gives the column version of \eqref{proof:eq:two-sided-row-tail-comparison}, with $(\Delta_{\rm col},\kappa_y)$ replaced by $(\Delta_{\rm row},\kappa_x)$.
  Taking the reciprocal of the smaller row or column coefficient in \eqref{proof:eq:two-sided-row-tail-comparison}, set
  \[
    \begin{aligned}
      K_{\rm reg}=4\max_{(\Delta,\kappa)\in\{(\Delta_{\rm col},\kappa_y),(\Delta_{\rm row},\kappa_x)\}}
      \Biggl\{ & \frac1{\eta(2\overline M,2\overline M;\Delta/4,\kappa)}                                 \\
               & \cdot\frac1{1-e^{-\min\{1,\lambda(2\overline M;\Delta/4,\kappa)/\sigma\}^2/2}}\Biggr\}.
    \end{aligned}
  \]
  In Lemma~\ref{proof:lem:common-observation-comparison}, take $\mathcal I,\mathcal J$ to be the outside rows and columns containing entries below count $k$, and choose one such entry per row or column by the fixed rule.
  Substituting \eqref{proof:eq:two-sided-row-tail-comparison} and its column version gives the first bound below, while integration and Lemma~\ref{proof:prop:two-sided-bad-center-occupation} give the second.
  \[
    \begin{aligned}
      \Pp_t^C(\mathsf D_t(k))
       & \le K_{\rm reg}\Pp_t^C((I_t,J_t)\in\mathcal V_t^\times(k)),                                       \\
      \sum_{t=1}^H\Pp(\mathsf D_t(k))
       & \le\E\left[K_{\rm reg}\sum_{t=1}^H\1_{\{(I_t,J_t)\in\mathcal V_t^\times(k)\}}\right]+B_{\rm bad}.
    \end{aligned}
  \]
  The second line uses the tower property with the nonnegative $K_{\rm reg}$ fixed by the complete observation sequences.

  Finally, the definitions of $\lambda$ and $\eta$ and the scalar bound $(1-e^{-\min\{1,L/\sigma\}^2/2})^{-1}\le4(1+\sigma^2/L^2)$ give
  \[
    K_{\rm reg}\le c(1+\overline M)^4,
    \qquad
    \E[K_{\rm reg}]\le c\E[(1+\overline M)^4]
    \le C_{\rm reg}^{\rm geom}(1+\log(2mn))^2,
  \]
  where $c$ has only the stated geometric dependence, and the last inequality uses Lemma~\ref{proof:lem:complete-table-envelope} with $r=4$ and $(a+b)^4\le8(a^4+b^4)$.
\end{proof}

\subsection{Proof of Lemma~\ref{proof:prop:two-sided-bad-center-occupation}}
\begin{proof}
  By the definition of $\mathsf B_t$,
  \[
    \begin{aligned}
      \1_{\mathsf B_t}\le\1_{\mathsf{Near}_t}\Bigg[
                                               &\sum_{i\notin I^\star}
                                               \1_{\{\Pp_t^{C_t}(U_i(\widetilde A_t)\le-\Delta_{\rm row}/4)<1-1/[8(m+n-2d)]\}}\\
                                               &+\sum_{j\notin J^\star}
                                               \1_{\{\Pp_t^{C_t}(V_j(\widetilde A_t)\le-\Delta_{\rm col}/4)<1-1/[8(m+n-2d)]\}}\Bigg].
    \end{aligned}
  \]
  The following lemma controls a row or column whenever its margin inequality fails.
  \begin{lemma}
    \label{proof:lem:two-sided-bad-row-envelope}
    Suppose that $A^\star$ has a unique Nash equilibrium and at least one player has actions outside its equilibrium support ($m>d$ or $n>d$).
    Fix $C\in\mathcal L_\rho$ and use the conditional notation of Lemma~\ref{proof:lem:common-observation-comparison}.
    For each $i\notin I^\star$, there are nonnegative random variables $F_{i,N}^R$, depending only on $(A^\star,\sigma,m,n)$ and the observation sequences of $\{i\}\times J^\star$, such that $\E[\sum_{N\ge1}F_{i,N}^R]<\infty$.
    If $\Pp_t^C(U_i(\widetilde A_t)\le-\Delta_{\rm row}/4)<1-1/[8(m+n-2d)]$, then, uniformly over $t\ge1$ and $C\in\mathcal L_\rho$,
    \[
      \begin{aligned}
        \Bigl(\E_t^C\bigl[&\eta(\max_{j\in J^\star}|\widetilde A_{ij,t}|,\normmax{A^\star}+\rho;\Delta_{\rm col}-\rho,\kappa_y)\\
                      &\cdot\1_{\{0<U_i(\widetilde A_t)\le\min\{\lambda(\normmax{A^\star}+\rho;\Delta_{\rm col}-\rho,\kappa_y),\Delta_{\rm row}/2\}\}}\bigr]\Bigr)^{-1}
        \le F_{i,\min_{j\in J^\star}
                N_{ij}(t)}^R.
      \end{aligned}
    \]
    For columns, replace $(i,J^\star,U_i,\Delta_{\rm row},\Delta_{\rm col},\kappa_y,F_{i,N}^R)$ by $(j,I^\star,V_j,\Delta_{\rm col},\Delta_{\rm row},\kappa_x,F_{j,N}^C)$, take the maximum and minimum over $i\in I^\star$, and use the entries in $I^\star\times\{j\}$.
    The variables $F_{j,N}^C$ depend only on $(A^\star,\sigma,m,n)$ and the observation sequences of those entries, and satisfy $\E[\sum_{N\ge1}F_{j,N}^C]<\infty$.
  \end{lemma}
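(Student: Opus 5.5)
The plan is to reduce the statement to Lemma~\ref{proof:lem:gaussian-negative-mean-interval} and Lemma~\ref{proof:lem:window-regression}, applied to the single outside row $i$. Fix $i\notin I^\star$, $C\in\mathcal L_\rho$, and the conditioning of $\Pp_t^C$.

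\emph{Gaussian representation.} By \eqref{proof:eq:posterior-error-decomposition}, the vector $\mathbf Z=(\widetilde A_{ij,t})_{j\in J^\star}$ is conditionally equal to $\mathbf Z_{\mathbf n}$ from Lemma~\ref{proof:lem:gaussian-negative-mean-interval}. That lemma is used with $\mu=A^\star_{i,J^\star}$, $n_j=N_{ij}(t)$, and the sequences $(\xi_{ij,r})_{r\ge1}$. Moreover, $U_i(\widetilde A_t)=w^\trans\mathbf Z-h$ with $w=y^C$ and $h=v^C$. By the row margin in Lemma~\ref{proof:lem:local-geometry-from-uniqueness}, $w^\trans\mu-h\le-\Delta_{\rm row}$.

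I apply Lemma~\ref{proof:lem:gaussian-negative-mean-interval} with $\Delta=\Delta_{\rm row}$, $\varepsilon=1/[8(m+n-2d)]$, and $L=\min\{\lambda(\normmax{A^\star}+\rho;\Delta_{\rm col}-\rho,\kappa_y),\Delta_{\rm row}/2\}\le\Delta/2$. Because that lemma holds simultaneously for all $(w,h)$ and all index vectors, the conclusions are uniform in $C$ and $t$. Under the hypothesis $\Pp_t^C(U_i\le-\Delta_{\rm row}/4)<1-\varepsilon$, it yields two facts:
\begin{itemize}
\item[(a)] $\tfrac{3\Delta_{\rm row}}{8\sigma}\sqrt{\min_jN_{ij}(t)}<H_i+z_\varepsilon$, where $H_i$ is the $H_\Delta$ built from row $i$ and $z_\varepsilon=\Phi^{-1}(1-\varepsilon)$;
\item[(b)] $\Pp_t^C(0<U_i\le L)\ge c_0/[(1+H_i)e^{H_i^2/2}]$.
\end{itemize}

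\emph{Handling the $\eta$ factor.} From its definition, with $M_2,g,\kappa$ fixed, $\eta(M_1,\cdot)\ge c_1/(1+M_1)$ for a geometric constant $c_1>0$. Conditional Jensen applied to the convex map $u\mapsto1/(1+u)$ gives
\[
\E_t^C[\eta\,\1_{\{0<U_i\le L\}}]\ge c_1\frac{\Pp_t^C(0<U_i\le L)}{1+\E_t^C[\|\mathbf Z\|_\infty\mid0<U_i\le L]}.
\]
The conditional expectation is controlled by Lemma~\ref{proof:lem:window-regression}, using $w_j=y^C_j\ge p$, $|h|\le\normmax{A^\star}+\rho$, variances at most $\sigma^2$, and the conditional mean equal to the empirical means. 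It is at most a constant plus $(1+p^{-1})(\normmax{A^\star}+\sigma\bar U_i)$, where $\bar U_i=\max_{j\in J^\star}\sup_N|S_{ij,N}|/N$.

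Set $\alpha=3\Delta_{\rm row}/(8\sigma)$. Since $|S_N|/N\le\alpha+(|S_N|-\alpha N)_+/\sqrt N$, we get $\bar U_i\le\alpha+H_i$. Combining this with (b), the reciprocal in the statement is at most $c_2(1+H_i)^2e^{H_i^2/2}$. Here $c_2$ depends only on $(A^\star,\sigma,m,n)$.

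\emph{Definition of $F^R_{i,N}$ and summability.} I define
\[
F^R_{i,N}=c_2(1+H_i)^2e^{H_i^2/2}\,\1_{\{\alpha\sqrt N<H_i+z_\varepsilon\}}.
\]
By (a), the bound above holds with $N=\min_jN_{ij}(t)$. The variable $F^R_{i,N}$ depends only on the sequences of $\{i\}\times J^\star$. The number of $N\ge1$ with a nonzero indicator is at most $(H_i+z_\varepsilon)^2/\alpha^2$. Hence $\sum_NF^R_{i,N}\le c_3(1+H_i)^4e^{H_i^2/2}$, whose expectation is finite by the moment claim of Lemma~\ref{proof:lem:gaussian-negative-mean-interval} with $r=4$.

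\emph{Columns.} The column version follows by applying the same argument to $-A^{\star\trans}$, with the roles of $(\Delta_{\rm row},\kappa_y)$ and $(\Delta_{\rm col},\kappa_x)$ swapped.

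\emph{Expected obstacle.} The delicate point is integrability. The $e^{H^2/2}$ factor has only borderline moments, so every other random factor must be dominated polynomially by $H_i$ itself. A separate envelope such as $U_{mn,1}$ would force a Hölder step that fails. This is why the empirical-mean bound is routed through $\bar U_i\le\alpha+H_i$. I also need to check that Lemma~\ref{proof:lem:window-regression} tolerates the $(1+p^{-1})$ dependence on the empirical mean.
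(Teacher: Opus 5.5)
Your proposal is correct and follows the paper's proof essentially step for step. The paper makes the same key moves:
\begin{itemize}
\item It applies Lemma~\ref{proof:lem:gaussian-negative-mean-interval} to row $i$ with $(\mu,w,h,\Delta,\varepsilon)=(A^\star_{i,J^\star},y^C,v^C,\Delta_{\rm row},1/[8(m+n-2d)])$ and the same window $L$.
\item It bounds the empirical means by $\normmax{A^\star}+3\Delta_{\rm row}/8+\sigma H_\Delta$ (your $\bar U_i\le\alpha+H_i$), then uses Lemma~\ref{proof:lem:window-regression} and conditional Jensen.
\item It defines $F^R_{i,N}$ as the $(1+H_\Delta)^2e^{H_\Delta^2/2}$ factor times the count indicator, and gets summability from the $r=4$ moment bound.
\end{itemize}
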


  To ensure the required inequalities for the other entries, define
  \[
    \begin{aligned}
      \cS_i^{\rm row} & =((I^{\star c}\setminus\{i\})\times J^\star)
      \cup((I^\star\cup\{i\})\times J^{\star c}),                    \\
      \cS_j^{\rm col} & =(I^\star\times(J^{\star c}\setminus\{j\}))
      \cup(I^{\star c}\times(J^\star\cup\{j\})).
    \end{aligned}
  \]
  Let $B\in\R^{m\times n}$ have $C=B_{I^\star,J^\star}\in\mathcal L_\rho$ and let $(x,y)\in\NE(B)$.
  Fix $i\notin I^\star$ and suppose that $|B_{uv}-A^\star_{uv}|\le\rho$ for every $(u,v)\in\cS_i^{\rm row}$ and $0<U_i(B)\le\lambda(\normmax{A^\star}+\rho;\Delta_{\rm col}-\rho,\kappa_y)$.
  Entrywise accuracy and Lemma~\ref{proof:lem:local-geometry-from-uniqueness} give
  \[
    \begin{aligned}
      U_h(B)   & \le-(\Delta_{\rm row}-\rho)<0 &  & (h\notin I^\star\cup\{i\}),                \\
      V_j(B)   & \le-(\Delta_{\rm col}-\rho)<0 &  & (j\notin J^\star),                         \\
      |B_{hj}| & \le \normmax{A^\star}+\rho    &  & (h\in I^\star\cup\{i\},\ j\notin J^\star).
    \end{aligned}
  \]
  If $n>d$, apply Lemma~\ref{proof:lem:unified-reveal} with $M_1=\max_{j\in J^\star}|B_{ij}|$, $M_2=\normmax{A^\star}+\rho$, and $g=\Delta_{\rm col}-\rho>0$, yielding
  \[
    x_iy_j\ge\eta(\max_{h\in J^\star}|B_{ih}|,\normmax{A^\star}+\rho;\Delta_{\rm col}-\rho,\kappa_y)
    \qquad(j\in J^\star).
  \]
  If $n=d$, then $\Delta_{\rm col}=+\infty$ and $U_i(B)\le3p\kappa_y/16\le p\kappa_y/2$.
  All other outside row margins are negative, so \eqref{proof:eq:one-sided-observation-bound} applies with $i_0=i$ and $M_1=\max_{h\in J^\star}|B_{ih}|$, giving
  \[
    x_iy_j\ge\frac{p\kappa_y}{2(\kappa_y+\max_{h\in J^\star}|B_{ih}|)}
    \ge\eta(\max_{h\in J^\star}|B_{ih}|,\normmax{A^\star}+\rho;+\infty,\kappa_y)
    \qquad(j\in J^\star).
  \]
  For an outside column $j$, apply the row argument to $-B^\trans$ with equilibrium $(y,x)$.
  If $|B_{uv}-A^\star_{uv}|\le\rho$ for every $(u,v)\in\cS_j^{\rm col}$ and $0<V_j(B)\le\lambda(\normmax{A^\star}+\rho;\Delta_{\rm row}-\rho,\kappa_x)$, it yields
  \[
    x_iy_j\ge\eta(\max_{h\in I^\star}|B_{hj}|,\normmax{A^\star}+\rho;\Delta_{\rm row}-\rho,\kappa_x)
    \qquad(i\in I^\star).
  \]

  Fix $i\notin I^\star$ and choose $j_t$ minimizing $N_{ij}(t)$ over $J^\star$ by the fixed count-based rule.
  Under $\Pp_t^C$, accuracy within $\rho$ on $\cS_i^{\rm row}$ has probability at least $\prod_{a\in\cS_i^{\rm row}}\inf_{N\ge1}q_{a,N}(\rho/\sigma)$ and is independent of the row entries.
  Combining the observation-probability bound above with Lemma~\ref{proof:lem:two-sided-bad-row-envelope} therefore gives
  \[
    \1_{\{\Pp_t^C(U_i(\widetilde A_t)\le-\Delta_{\rm row}/4)<1-1/[8(m+n-2d)]\}}
    \le\frac{F_{i,\min_{j\in J^\star}N_{ij}(t)}^R}
    {\prod_{a\in\cS_i^{\rm row}}\inf_{N\ge1}q_{a,N}(\rho/\sigma)}
    \Pp_t^C((I_t,J_t)=(i,j_t)).
  \]
  The multiplier and $j_t$ are fixed before integrating over localized cores.
  The tower property and the fact that each entry is observed at most once at each count give
  \[
    \sum_{t=1}^H F_{i,\min_{j\in J^\star}N_{ij}(t)}^R\1_{\{(I_t,J_t)=(i,j_t)\}}
    \le d\sum_{N\ge1}F_{i,N}^R.
  \]
  The observation sequences in $\cS_i^{\rm row}$ and $\{i\}\times J^\star$ are independent, so the row contribution is at most
  \[
    d\left(\prod_{a\in\cS_i^{\rm row}}
    \E\left[\frac1{\inf_{N\ge1}q_{a,N}(\rho/\sigma)}\right]\right)
    \E\left[\sum_{N\ge1}
      F_{i,N}^R\right]<\infty.
  \]
  Finiteness follows from Lemmas~\ref{proof:lem:all-count-localization} and~\ref{proof:lem:two-sided-bad-row-envelope}, with empty products equal to one.
  Adding these bounds over rows and the analogous column bounds proves the assertion with a finite $B_{\rm bad}$ depending only on $(A^\star,\sigma,m,n)$.
\end{proof}

\subsection{Proof of Lemma~\ref{proof:lem:unified-reveal}}
\begin{proof}
  Fix $(x,y)\in\NE(B)$ and put $\gamma=\sum_{j\notin J^\star}y_j$.
  Testing against the embedded core strategies gives
  \[
    0\le g\gamma\le v(B)-v^C\le\theta\le g/4,\qquad 1-\gamma\ge3/4.
  \]
  Write $\widehat y=y_{J^\star}/(1-\gamma)$.
  A maximizing core row from Lemma~\ref{proof:lem:gradient-margins} and row $i$ give
  \[
    \begin{aligned}
      \tfrac34\kappa_y\|\widehat y-y^C\|_1
       & \le v(B)-v^C+(M_2+\normmax{A^\star}+\rho)\gamma,                              \\
      v(B)-v^C
       & \ge\tfrac34\theta-M_1\|\widehat y-y^C\|_1-(M_2+\normmax{A^\star}+\rho)\gamma.
    \end{aligned}
  \]
  Here, the outside-column payoff is at least $-M_2\gamma$ in either row.
  Eliminating $\gamma\le(v(B)-v^C)/g$ gives
  \begin{equation}
    \begin{aligned}
      \|\widehat y-y^C\|_1
       & \le\frac4{3\kappa_y}\left(1+\frac{M_2+\normmax{A^\star}+\rho}{g}\right)(v(B)-v^C), \\
      v(B)-v^C
       & \ge\frac{3\theta}{4[1+(M_2+\normmax{A^\star}+\rho)/g][1+4M_1/(3\kappa_y)]}
      =\frac{2\theta}{p}\eta(M_1,M_2;g,\kappa_y).
    \end{aligned}
    \label{proof:eq:reveal-distance-and-value}
  \end{equation}
  Combining the reverse bound $v(B)-v^C\le x^\trans B\bar y^C-v^C\le x_i\theta$ with \eqref{proof:eq:reveal-distance-and-value} gives
  \[
    x_i\ge\frac{v(B)-v^C}{\theta}\ge\frac2p\eta(M_1,M_2;g,\kappa_y).
  \]
  Moreover, $v(B)-v^C\le\theta\le\lambda(M_2;g,\kappa_y)$ and the first line of \eqref{proof:eq:reveal-distance-and-value} give $\|\widehat y-y^C\|_1\le p/4$.
  Thus, for every $j\in J^\star$,
  \[
    y_j=(1-\gamma)\widehat y_j\ge\tfrac34(p-p/4)\ge p/2,
    \qquad x_iy_j\ge\eta(M_1,M_2;g,\kappa_y).
  \]
\end{proof}

\subsection{Proof of Lemma~\ref{proof:lem:two-sided-bad-row-envelope}}
\begin{proof}
  Fix $i\notin I^\star$ and set
  \[
    L=\min\{\lambda(\normmax{A^\star}+\rho;\Delta_{\rm col}-\rho,\kappa_y),\Delta_{\rm row}/2\}>0.
  \]
  The convention for $\Delta_{\rm col}=+\infty$ gives $L=\min\{3p\kappa_y/16,\Delta_{\rm row}/2\}$.
  Apply Lemma~\ref{proof:lem:gaussian-negative-mean-interval} to the observation sequences of $\{i\}\times J^\star$, with
  \[
    \mu=(A^\star_{ij})_{j\in J^\star},\quad
    w=y^C,\quad h=v^C,\quad\Delta=\Delta_{\rm row},\quad
    \varepsilon=\frac1{8(m+n-2d)}.
  \]
  Lemma~\ref{proof:lem:local-geometry-from-uniqueness} gives $w^\trans\mu-h\le-\Delta_{\rm row}$.
  Let $H_*$ denote the variable $H_\Delta$ from the Gaussian lemma for these sequences and parameters.
  Its simultaneous bounds apply to all current count vectors and all $C\in\mathcal L_\rho$.
  Under the hypothesis in the statement, they give
  \begin{equation}
    \begin{gathered}
      \frac{3\Delta_{\rm row}}{8\sigma}\sqrt{\min_{j\in J^\star}
        N_{ij}(t)}
      <H_*+\Phi^{-1}(1-1/[8(m+n-2d)]),\\
      \Pp_t^C(0<U_i(\widetilde A_t)\le L)^{-1}
      \le c_0^{-1}(1+H_*)e^{H_*^2/2}.
    \end{gathered}
    \label{proof:eq:bad-row-count-and-drift}
  \end{equation}
  Put $M_{i,t}=\max_{j\in J^\star}|\widetilde A_{ij,t}|$.
  Since $\max_{j\in J^\star}|\widehat A_{ij,N_{ij}(t)}|\le \normmax{A^\star}+3\Delta_{\rm row}/8+\sigma H_*$, Lemma~\ref{proof:lem:window-regression} with $M=\normmax{A^\star}+\rho$ gives
  \begin{equation}
    \begin{aligned}
       & \E_t^C[M_{i,t}\mid0<U_i(\widetilde A_t)\le L]
      \le\frac{\normmax{A^\star}+\rho+L}{p}
      +(1+p^{-1})(\normmax{A^\star}+3\Delta_{\rm row}/8+\sigma H_*) \\
       & \qquad+\sigma\sqrt{2\log(2d)}\le c(1+H_*),
    \end{aligned}
    \label{proof:eq:bad-row-conditional-maximum}
  \end{equation}
  where $c>0$ depends only on $(A^\star,\sigma,m,n)$.
  Conditional Jensen's inequality, \eqref{proof:eq:bad-row-conditional-maximum}, and the probability bound in \eqref{proof:eq:bad-row-count-and-drift} yield
  \[
    \begin{aligned}
       & \Bigl(\E_t^C[\eta(M_{i,t},\normmax{A^\star}+\rho;\Delta_{\rm col}-\rho,\kappa_y)
                        \1_{\{0<U_i(\widetilde A_t)\le L\}}]\Bigr)^{-1}
      \le\frac{8}{3p\Pp_t^C(0<U_i(\widetilde A_t)\le L)}                                   \\
       & \qquad\cdot\left(1+\frac{2(\normmax{A^\star}+\rho)}{\Delta_{\rm col}-\rho}\right)
      \left(1+\frac4{3\kappa_y}\E_t^C[M_{i,t}\mid0<U_i(\widetilde A_t)\le L]\right)        \\
       & \quad\le K_{\rm bad}(1+H_*)^2e^{H_*^2/2},
    \end{aligned}
  \]
  where the convention for $\Delta_{\rm col}=+\infty$ uses $(\Delta_{\rm col}-\rho)^{-1}=0$ and
  \[
    K_{\rm bad}=\frac8{3pc_0}
    \left(1+\frac{2(\normmax{A^\star}+\rho)}{\Delta_{\rm col}-\rho}\right)
    \left(1+\frac{4c}{3\kappa_y}\right).
  \]
  Thus the required variables can be chosen as
  \[
    F_{i,N}^R=K_{\rm bad}(1+H_*)^2e^{H_*^2/2}
    \1_{\{3\Delta_{\rm row}\sqrt N/(8\sigma)<H_*+\Phi^{-1}(1-1/[8(m+n-2d)])\}}.
  \]
  Equation~\eqref{proof:eq:bad-row-count-and-drift} proves the comparison uniformly in $t,C$.
  Counting the integers that satisfy the condition in $F_{i,N}^R$ gives
  \[
    \begin{aligned}
      \sum_{N\ge1}
      F_{i,N}^R
       & \le K_{\rm bad}\left(\frac{8\sigma}{3\Delta_{\rm row}}\right)^2
      (1+H_*)^2e^{H_*^2/2}
      \bigl(H_*+\Phi^{-1}(1-1/[8(m+n-2d)])\bigr)_+^2                     \\
       & \le C_{\rm env}(1+H_*)^4e^{H_*^2/2},
    \end{aligned}
  \]
  where $C_{\rm env}$ is finite and depends only on $(A^\star,\sigma,m,n)$.
  Lemma~\ref{proof:lem:gaussian-negative-mean-interval} with $r=4$ consequently gives $\E[\sum_N F_{i,N}^R]<\infty$.
  For columns, apply the same construction to $(-\widetilde A_{ij,t})_{i\in I^\star}$ with weights $x^C$, value $-v^C$, and $(\Delta_{\rm row},\Delta_{\rm col},\kappa_y)$ replaced by $(\Delta_{\rm col},\Delta_{\rm row},\kappa_x)$.
\end{proof}

\subsection{Dimension Dependence of the Cumulative Duality Gap Bound}
Equation~\eqref{proof:eq:one-sided-stable-coefficient}, with the corresponding column parameters when $m=d<n$, bounds $C_{\rm occ}$ by a constant times $1+\sqrt{\log(2d)}$ in either one-sided case.
The constant depends only on $(\normmax{A^\star},\rho,p,\sigma)$ and the relevant margin and gradient bounds.
For two-sided supports, Lemma~\ref{proof:lem:two-sided-observation-comparison} gives $C_{\rm occ}\le C_{\rm reg}^{\rm geom}(1+\log(2mn))^2$.
Using $d(m+n-d)\le mn$ in the explicit coefficients of the cumulative duality gap bound yields
\[
  \begin{aligned}
    C_{\rm mix}\sqrt T
     & \le20\sqrt2\sigma\min\{\chi_{\rm core}/p^2,2/p\}\,d\sqrt{T\log(ed)}, \\
    C_{\rm out}\log(eT)
     & \le\frac{16\normmax{A^\star}\sigma^2}{\rho^2}(2+\log4)
    d(m+n-2d)(1+\log(mn))C_{\rm occ}\log(eT).
  \end{aligned}
\]
These bounds give the asserted dimension dependence with the geometric quantities and $\sigma$ fixed.
Products of reciprocal probabilities contribute only to $C_0$, whose dimension dependence is unrestricted.

\section{Proofs for Exact Asymptotics}
\label{proof:sec:proof-exact-core-asymptotics}
\subsection{Proof of Theorem~\ref{thm:exact-core-asymptotics}}
\begin{proof}
  Let $H=T-mn$.
  We run the horizon-independent recursion indefinitely.
  Since $\mathsf{Supp}_t\subseteq\mathsf{Near}_t$ and $(x_t,y_t)=(\bar x^{C_t},\bar y^{C_t})$ on $\mathsf{Supp}_t$, we decompose the cumulative duality gap as
  \begin{equation}
    \begin{aligned}
      \mathcal R_T
       & =\mathcal R_{\rm init}
      +\sum_{t=1}^H\1_{\mathsf{Near}_t}\Gap_{A^\star}(\bar x^{C_t},\bar y^{C_t})
      +\sum_{t=1}^H\1_{\mathsf{Near}_t^c}\Gap_{A^\star}(x_t,y_t)     \\
       & \quad+\sum_{t=1}^H\1_{\mathsf{Near}_t\cap\mathsf{Supp}_t^c}
      \left[\Gap_{A^\star}(x_t,y_t)-\Gap_{A^\star}(\bar x^{C_t},\bar y^{C_t})\right],
    \end{aligned}
    \label{proof:eq:exact-regret-decomposition}
  \end{equation}
  where $\mathcal R_{\rm init}$ is the cumulative duality gap of the initial $mn$ rounds and $0\le\mathcal R_{\rm init}\le2\normmax{A^\star} mn$.
  For the first sum in \eqref{proof:eq:exact-regret-decomposition}, we use the following expectation limit.
  \begin{lemma}
    \label{proof:lem:main-general-core-proxy-limit}
    Suppose that $A^\star$ has a unique Nash equilibrium.
    For the horizon-independent recursion of Algorithm~\ref{alg:ts-ne}, we have, as $t\to\infty$,
    \[
      \sqrt t\,\E\left[\1_{\mathsf{Near}_t}\Gap_{A^\star}(\bar x^{C_t},\bar y^{C_t})\right]\to\Lambda_{\rm core}.
    \]
  \end{lemma}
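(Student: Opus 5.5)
The plan is to rescale the exact gap identity of Lemma~\ref{proof:lem:core-gap-bound} and pass to the limit in distribution. The main obstacle is the uniform integrability needed to turn that limit into convergence of expectations.

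\emph{Rescaled gap.} Write $F_t=C_t-A^\star_{I^\star,J^\star}$. On $\mathsf{Near}_t$, Lemma~\ref{proof:lem:core-gap-bound} gives
\[
  \sqrt t\,\Gap_{A^\star}(\bar x^{C_t},\bar y^{C_t})=\Psi\bigl(x^{C_t},y^{C_t},\sqrt t F_t\bigr),
\]
where
\[
  \Psi(x,y,W)=\max_{j}x^\trans We_j-\min_i e_i^\trans Wy .
\]
The map $\Psi$ is continuous, positively homogeneous in $W$, and satisfies $|\Psi|\le2\normmax W$. By \eqref{proof:eq:posterior-error-decomposition},
\[
  F_{a,t}=\sigma S_{a,N_a(t)}/N_a(t)+\sigma G_{a,t}/\sqrt{N_a(t)} .
\]
So if $N_a(t)/t\to x_i^\star y_j^\star$ in probability for every $a=(i,j)\in I^\star\times J^\star$, Lemma~\ref{proof:lem:gaussian-random-index-limit} yields $\sqrt t F_t\Rightarrow\mathsf W^{\rm core}$. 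The same frequencies force $C_t\to A^\star_{I^\star,J^\star}$ in probability, so $\1_{\mathsf{Near}_t}\to1$ and, by the continuity in Lemma~\ref{proof:lem:local-geometry-from-uniqueness}, $(x^{C_t},y^{C_t})\to(x^\star_{I^\star},y^\star_{J^\star})$. Slutsky and the continuous mapping theorem then give that $\sqrt t\,\1_{\mathsf{Near}_t}\Gap$ converges in distribution to $\Psi(x^\star,y^\star,\mathsf W^{\rm core})$, whose mean is $\Lambda_{\rm core}$.

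\emph{Frequencies.} To prove $N_a(t)/t\to x_i^\star y_j^\star$, I would write $N_a(t)-1$ as a sum of conditional observation probabilities plus a martingale with bounded increments, which is $o_{\Pp}(t)$. The conditional observation probability is $x_{t,i}y_{t,j}$. This equals $x^{C_t}_iy^{C_t}_j$ on $\mathsf{Supp}_t$. The expected number of rounds outside $\mathsf{Supp}_t$ is $O(\log t)=o(t)$, by Lemma~\ref{proof:lem:general-support-nonlocal-occupation} together with Lemmas~\ref{proof:lem:finite-horizon-confidence} and~\ref{proof:lem:outside-action-occupation}. It therefore suffices that $x^{C_t}\to x^\star$ along the run, i.e.\ that every core count grows. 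I would argue this in two steps. First, conditionally on the noise sequences, each core entry is observed with probability at least $p^2\prod_{b\in\cS_{\rm rel}}\inf_N q_{b,N}(\rho/\sigma)>0$ in every round, as in the proof of Lemma~\ref{proof:lem:general-support-nonlocal-occupation}. A conditional Freedman/Azuma bound then gives $\min_a N_a(t)\ge c(\xi)t$ eventually, with $c(\xi)>0$ almost surely. Second, once the counts are linear, Lemma~\ref{proof:lem:complete-table-envelope} gives $F_t\to0$, hence $C_t\to A^\star_{I^\star,J^\star}$, and continuity finishes the argument.

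\emph{Uniform integrability (main obstacle).} I would fix a small $c>0$ and split on $E_t=\{\min_{a\in I^\star\times J^\star}N_a(t)\ge ct\}$.
\begin{itemize}
  \item[(i)] On $E_t$, $\sqrt t\normmax{F_t}\le\sigma c^{-1/2}(\sqrt{k}\,U+\max_a|G_{a,t}|)$, where $U=\max_a\sup_{N\ge k}|S_{a,N}|/N$ and $k=\lceil ct\rceil$. The second-moment bound of Lemma~\ref{proof:lem:complete-table-envelope} and Lemma~\ref{proof:lem:gaussian-maxima} then bound $\E[t\normmax{F_t}^2\1_{E_t}]$ uniformly in $t$, which gives uniform integrability of $\sqrt t\,\Gap\,\1_{E_t}$.
  \item[(ii)] On $E_t^c$, $\sqrt t\,\1_{\mathsf{Near}_t}\Gap\le2\rho\sqrt t$, so I need $\Pp(E_t^c)=o(t^{-1/2})$. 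The difficulty is that the drift lower bound $c(\xi)$ is random with only a first moment of $1/c(\xi)$ guaranteed by Lemma~\ref{proof:lem:all-count-localization}. I would first try to exploit the stronger tail behind that lemma: $1/\inf_Nq\lesssim(1+D)e^{D^2/2}$ with $D$ having tail $e^{-u^2/2-\alpha u}$. With $c\le c(\xi)/2$, this gives an exponential bound $e^{-c(\xi)^2t/8}$ on $\Pp(E_t^c\mid\xi)$, and $\Pp(c(\xi)<t^{-1/3})$ decays fast enough because of the Gaussian-exponential drift tails.
\end{itemize}
If this direct route gives only $O(t^{-1/2})$ rather than $o(t^{-1/2})$, I would fall back on the weighted bound of Lemma~\ref{proof:lem:core-gap-bound}, $\Gap\le(2/p)\sum x^{C}_iy^{C}_j|F_{ij}|$, to dominate the small-count contribution by posterior errors on observed entries. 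Combining (i) and (ii) with the distributional limit gives the claimed $\sqrt t\,\E[\1_{\mathsf{Near}_t}\Gap_{A^\star}(\bar x^{C_t},\bar y^{C_t})]\to\Lambda_{\rm core}$.
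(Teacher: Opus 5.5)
Your rescaling, the distributional limit, and part~(i) are sound and essentially match the paper. The paper splits on $\{\min_{a\in I^\star\times J^\star}N_a(t)\ge p^2t/8\}$ and uses the same second-moment bound. Your almost-sure linear-growth route to consistency, via the random drift $c(\xi)=p^2\prod_{b\in\cS_{\rm rel}}\inf_Nq_{b,N}(\rho/\sigma)$, is also a valid way to get the in-probability frequency limits.

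The gap is step~(ii). Part~(i) needs $c$ to be a fixed constant, because its second-moment bound scales like $c^{-1}$. But your estimate $\Pp(E_t^c\mid\xi)\lesssim e^{-c(\xi)^2t/8}$ holds only on $\{c(\xi)\ge2c\}$. Moreover, $\Pp(c(\xi)<2c)\ge\Pp(q_{b,1}(\rho/\sigma)<2c/p^2)>0$ does not depend on $t$. So this route does not even give $\Pp(E_t^c)\to0$, let alone $o(t^{-1/2})$.

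Shrinking the threshold to $t^{-1/3}$ destroys the uniform bound in~(i). The claimed decay is also false. Since $1/\inf_Nq_{b,N}\ge1/q_{b,1}$, and the tail of $1/q_{b,1}$ is driven by the single Gaussian $\xi_{b,1}$, that tail is of order $K^{-1}e^{-O(\sqrt{\log K})}$. Hence $\Pp(c(\xi)<t^{-1/3})$ is of order $t^{-1/3}$ up to subpolynomial factors, far larger than $t^{-1/2}$. The fallback does not fix this. On $E_t^c\cap\mathsf{Near}_t$ the gap is already at most $2\rho$; what is missing is a probability rate, and the weighted bound of Lemma~\ref{proof:lem:core-gap-bound} does not supply one.

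The missing idea is to use the \emph{deterministic} drift you already noted in your frequency paragraph. On $\mathsf{Supp}_r$, each core entry $a=(i,j)$ is observed with conditional probability $x_i^{C_r}y_j^{C_r}\ge p^2$. Also, $\sum_{r\le t}\Pp(\mathsf{Supp}_r^c)=O(\log t)$. Markov's inequality then gives $\Pp(\sum_{r\le t}\1_{\mathsf{Supp}_r^c}>t/4)\le\frac4t\sum_{r\le t}\Pp(\mathsf{Supp}_r^c)=O(\log t/t)$.

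On the complementary event, consider $L_h=\exp\{-(\log2)\sum_{r\le h}\1_{\mathsf{Supp}_r}\1_{\{(I_r,J_r)=a\}}+\frac{p^2}{2}\sum_{r\le h}\1_{\mathsf{Supp}_r}\}$. This is a supermartingale with $\E[L_h]\le1$, which yields $\Pp(N_a(t)<p^2t/8,\ \sum_{r\le t}\1_{\mathsf{Supp}_r^c}\le t/4)\le e^{-p^2t/8}$.

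Together, $\Pp(\min_aN_a(t)<p^2t/8)\le d^2e^{-p^2t/8}+O(\log t/t)=o(t^{-1/2})$. This closes~(ii) with the fixed choice $c=p^2/8$, and the same bound also gives core consistency without the random constant $c(\xi)$.
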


  By Lemma~\ref{proof:lem:main-general-core-proxy-limit}, for every $\varepsilon>0$, there is an integer $t_0\ge1$ such that, for every $t\ge t_0$,
  \[
    \left|\sqrt t\,\E\left[\1_{\mathsf{Near}_t}\Gap_{A^\star}(\bar x^{C_t},\bar y^{C_t})\right]-\Lambda_{\rm core}\right|\le\varepsilon.
  \]
  Splitting the sum at $t_0$ and comparing $\sum_{t=t_0}^Ht^{-1/2}$ with its integral, we obtain, for every $H\ge t_0$,
  \[
    \begin{aligned}
       & \frac1{\sqrt H}\left|\sum_{t=1}^Ht^{-1/2}\left[\sqrt t\,\E\left[\1_{\mathsf{Near}_t}\Gap_{A^\star}(\bar x^{C_t},\bar y^{C_t})\right]-\Lambda_{\rm core}\right]\right| \\
       & \quad\le\frac1{\sqrt H}\sum_{t=1}^{t_0-1}
      \left|\E\left[\1_{\mathsf{Near}_t}\Gap_{A^\star}(\bar x^{C_t},\bar y^{C_t})\right]-\Lambda_{\rm core}t^{-1/2}\right|
      +\frac{\varepsilon}{\sqrt H}\sum_{t=t_0}^Ht^{-1/2}                                                                                                                       \\
       & \quad\le\frac1{\sqrt H}\sum_{t=1}^{t_0-1}
      \left|\E\left[\1_{\mathsf{Near}_t}\Gap_{A^\star}(\bar x^{C_t},\bar y^{C_t})\right]-\Lambda_{\rm core}t^{-1/2}\right|+2\varepsilon.
    \end{aligned}
  \]
  For fixed $\varepsilon$, the finite sum on the right is independent of $H$.
  Letting $H\to\infty$ therefore gives
  \[
    \limsup_{H\to\infty}\frac1{\sqrt H}
    \left|\sum_{t=1}^H\E\left[\1_{\mathsf{Near}_t}\Gap_{A^\star}(\bar x^{C_t},\bar y^{C_t})\right]
    -\Lambda_{\rm core}\sum_{t=1}^Ht^{-1/2}\right|\le2\varepsilon.
  \]
  Since $\varepsilon>0$ is arbitrary and the quantity inside the upper limit is nonnegative, that quantity converges to zero.
  Thus, by the definition of $o(\sqrt H)$,
  \[
    \sum_{t=1}^Ht^{-1/2}\left[\sqrt t\,\E\left[\1_{\mathsf{Near}_t}\Gap_{A^\star}(\bar x^{C_t},\bar y^{C_t})\right]-\Lambda_{\rm core}\right]
    =o(\sqrt H).
  \]
  Since integral comparison also gives $\sum_{t=1}^Ht^{-1/2}=2\sqrt H+\mathcal O(1)$ and $H/T\to1$, we obtain
  \begin{equation}
    \begin{aligned}
      \sum_{t=1}^H\E\left[\1_{\mathsf{Near}_t}\Gap_{A^\star}(\bar x^{C_t},\bar y^{C_t})\right]
       & =\Lambda_{\rm core}\sum_{t=1}^Ht^{-1/2}
      +\sum_{t=1}^Ht^{-1/2}\left[\sqrt t\,\E\left[\1_{\mathsf{Near}_t}\Gap_{A^\star}(\bar x^{C_t},\bar y^{C_t})\right]-\Lambda_{\rm core}\right] \\
       & =\Lambda_{\rm core}\sum_{t=1}^Ht^{-1/2}+o(\sqrt H)                                                                                      \\
       & =2\Lambda_{\rm core}\sqrt H+o(\sqrt H)
      =2\Lambda_{\rm core}\sqrt T+o(\sqrt T).
    \end{aligned}
    \label{proof:eq:expected-gap-sum}
  \end{equation}

  For the second sum in \eqref{proof:eq:exact-regret-decomposition}, Lemma~\ref{proof:lem:general-support-nonlocal-occupation} gives
  \begin{equation}
    \E\left[\sum_{t=1}^H\1_{\mathsf{Near}_t^c}\Gap_{A^\star}(x_t,y_t)\right]
    \le2\normmax{A^\star}\sum_{t=1}^H\Pp(\mathsf{Near}_t^c)
    \le2\normmax{A^\star}
    B_{\rm loc}.
    \label{proof:eq:asymptotic-second-sum}
  \end{equation}
  For the third sum in \eqref{proof:eq:exact-regret-decomposition}, the bound $\Gap_{A^\star}\in[0,2\normmax{A^\star}]$ gives
  \begin{equation}
    \begin{aligned}
       & \left|\E\left[\sum_{t=1}^H\1_{\mathsf{Near}_t\cap\mathsf{Supp}_t^c}
                   \left[\Gap_{A^\star}(x_t,y_t)-\Gap_{A^\star}(\bar x^{C_t},\bar y^{C_t})\right]\right]\right|
      \le2\normmax{A^\star}\sum_{t=1}^H\Pp(\mathsf{Near}_t\cap\mathsf{Supp}_t^c).
    \end{aligned}
    \label{proof:eq:asymptotic-third-sum}
  \end{equation}
  The bounds \eqref{proof:eq:outside-confidence-success-bound} and~\eqref{proof:eq:outside-confidence-failure-bound}, applied at horizon $H+mn$, give
  \[
    \begin{aligned}
      \sum_{t=1}^H\Pp(\mathsf{Near}_t\cap\mathsf{Supp}_t^c)
       & \le\1_{\{d(m+n-2d)>0\}}+B_{\rm occ}+\frac{8\sigma^2\{2+\log(4d(m+n-d))\}}{\rho^2}
      d(m+n-2d)C_{\rm occ}\log(eT)                                                         \\
       & =\mathcal O(\log T),
    \end{aligned}
  \]
  where $C_{\rm occ}=B_{\rm occ}=0$ for full support.
  For the upper bound, we apply \eqref{proof:eq:asymptotic-second-sum}, \eqref{proof:eq:asymptotic-third-sum}, and the bound above to the decomposition \eqref{proof:eq:exact-regret-decomposition}.
  Together with $0\le\mathcal R_{\rm init}\le2\normmax{A^\star} mn$ and \eqref{proof:eq:expected-gap-sum}, this gives
  \[
    \begin{aligned}
      \E[\mathcal R_T]
       & \le2\normmax{A^\star} mn
      +\sum_{t=1}^H\E\left[\1_{\mathsf{Near}_t}\Gap_{A^\star}(\bar x^{C_t},\bar y^{C_t})\right]
      +2\normmax{A^\star}
      B_{\rm loc}+\mathcal O(\log T)             \\
       & =2\Lambda_{\rm core}\sqrt T+o(\sqrt T).
    \end{aligned}
  \]
  For the lower bound, $\mathcal R_{\rm init}$ and the second sum in \eqref{proof:eq:exact-regret-decomposition} are nonnegative.
  Hence, \eqref{proof:eq:asymptotic-third-sum} and \eqref{proof:eq:expected-gap-sum} give
  \[
    \begin{aligned}
      \E[\mathcal R_T]
       & \ge\sum_{t=1}^H\E\left[\1_{\mathsf{Near}_t}\Gap_{A^\star}(\bar x^{C_t},\bar y^{C_t})\right]
      +\E\left[\sum_{t=1}^H\1_{\mathsf{Near}_t\cap\mathsf{Supp}_t^c}
           \left[\Gap_{A^\star}(x_t,y_t)-\Gap_{A^\star}(\bar x^{C_t},\bar y^{C_t})\right]\right]     \\
       & \ge\sum_{t=1}^H\E\left[\1_{\mathsf{Near}_t}\Gap_{A^\star}(\bar x^{C_t},\bar y^{C_t})\right] \\
       & \quad-2\normmax{A^\star}\sum_{t=1}^H\Pp(\mathsf{Near}_t\cap\mathsf{Supp}_t^c)               \\
       & =2\Lambda_{\rm core}\sqrt T+o(\sqrt T).
    \end{aligned}
  \]
  Combining the upper and lower bounds, we conclude that
  \[
    \E[\mathcal R_T]=2\Lambda_{\rm core}\sqrt T+o(\sqrt T).
  \]

  Within each family, the projections $(x^\star)_{I^\star}^\trans\mathsf W^{\rm core}e_j$ and $e_i^\trans\mathsf W^{\rm core}(y^\star)_{J^\star}$ are independent centered Gaussians with variances $2\sigma^2/y_j^\star$ and $2\sigma^2/x_i^\star$, respectively.
  For $d\ge2$ and $j\ne j'$, symmetry gives
  \[
    \E\left[\max_{j\in J^\star}(x^\star)_{I^\star}^\trans\mathsf W^{\rm core}e_j\right]
    \ge\tfrac12\E\left[\left|(x^\star)_{I^\star}^\trans\mathsf W^{\rm core}(e_j-e_{j'})\right|\right]>0.
  \]
  The other expected maximum in $\Lambda_{\rm core}$ is nonnegative, whereas for $d=1$ its two scalar terms cancel.
\end{proof}

\subsection{Proof of Lemma~\ref{proof:lem:main-general-core-proxy-limit}}
\begin{proof}
  We split the target expectation according to the observation counts.
  \begin{equation}
    \begin{aligned}
      \sqrt t\,\E\left[\1_{\mathsf{Near}_t}\Gap_{A^\star}(\bar x^{C_t},\bar y^{C_t})\right]
       & =\E\left[\sqrt t\,\1_{\mathsf{Near}_t\cap\{\min_{a\in I^\star\times J^\star}
                N_a(t)\ge p^2t/8\}}\Gap_{A^\star}(\bar x^{C_t},\bar y^{C_t})\right]                                                                        \\
       & \quad+\E\left[\sqrt t\,\1_{\mathsf{Near}_t\cap\{\min_{a\in I^\star\times J^\star}N_a(t)<p^2t/8\}}\Gap_{A^\star}(\bar x^{C_t},\bar y^{C_t})\right].
    \end{aligned}
    \label{proof:eq:gap-expectation-decomposition}
  \end{equation}
  For the first term in \eqref{proof:eq:gap-expectation-decomposition}, we use the following bounds on the observation counts and Gaussian limit.
  \begin{lemma}
    \label{proof:lem:main-general-core-allocation-limit}
    Suppose that $A^\star$ has a unique Nash equilibrium.
    Let $\rho,p$ be the constants from Lemma~\ref{proof:lem:local-geometry-from-uniqueness}.
    For the horizon-independent recursion of Algorithm~\ref{alg:ts-ne}, we have, for every $a=(i,j)\in I^\star\times J^\star$ and $t\ge1$,
    \[
      \Pp\left(N_a(t)<p^2t/8,\ \sum_{r=1}^t\1_{\mathsf{Supp}_r^c}\le t/4\right)\le e^{-p^2t/8},\qquad
      \Pp\left(\sum_{r=1}^t\1_{\mathsf{Supp}_r^c}>t/4\right)\le\frac4t\sum_{r=1}^t\Pp(\mathsf{Supp}_r^c).
    \]
    Moreover, as $t\to\infty$, we have
    \[
      \begin{gathered}
        C_t\longrightarrow A^\star_{I^\star,J^\star}\text{ in probability},\\
        N_{ij}(t)/t\longrightarrow x_i^\star y_j^\star
        \text{ in }L^1\text{ and in probability}\quad((i,j)\in I^\star\times J^\star),\\
        \sqrt t(C_t-A^\star_{I^\star,J^\star})\Rightarrow\mathsf W^{\rm core}.
      \end{gathered}
    \]
  \end{lemma}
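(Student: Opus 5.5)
The first two inequalities are elementary concentration bounds, so I would prove them first. For the second, Markov's inequality applied to $\sum_{r=1}^t\1_{\mathsf{Supp}_r^c}$ at level $t/4$ gives the factor $4/t$ directly.

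For the first inequality I would use the observation-probability lower bound from the proof of Lemma~\ref{proof:lem:general-support-nonlocal-occupation}. On $\mathsf{Supp}_r$ the played pair is $(\bar x^{C_r},\bar y^{C_r})$, and by Lemma~\ref{proof:lem:local-geometry-from-uniqueness} its coordinates are at least $p$. Hence $\Pp((I_r,J_r)=a\mid\mathcal F_{r-1},\widetilde A_r)\ge p^2\1_{\mathsf{Supp}_r}$. On $\{\sum_{r\le t}\1_{\mathsf{Supp}_r^c}\le t/4\}$, the predictable sum of these conditional probabilities is therefore at least $3p^2t/4$. I would then apply a Freedman/Bernstein-type or exponential-supermartingale bound to $N_a(t)-1-\sum_r\Pp(\cdot)$. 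Specifically, the process $\exp\{-\lambda\sum_r\1_{\{(I_r,J_r)=a\}}+(1-e^{-\lambda})\sum_r\pi_r\}$ is a supermartingale. Choosing $\lambda=\log 2$ (or a constant tuned so that the final rate is $p^2/8$) and stopping on the event, the gap between $3p^2t/4$ and $p^2t/8$ leaves room for the exponent $-p^2t/8$. One detail needs care: the event $\{\sum\1_{\mathsf{Supp}^c}\le t/4\}$ is not predictable. I would handle this by working with $\min\{\sum_r\pi_r,3p^2t/4\}$ through a stopping-time argument.

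Next, $C_t\to A^\star_{I^\star,J^\star}$ in probability. By Lemma~\ref{proof:lem:general-support-nonlocal-occupation} together with \eqref{proof:eq:outside-confidence-success-bound}, $\sum_{r\le t}\Pp(\mathsf{Supp}_r^c)=O(\log t)$. So the second bound gives $\Pp(\sum\1_{\mathsf{Supp}^c}>t/4)=O(\log t/t)\to0$, and the first bound then gives $\min_aN_a(t)\ge p^2t/8$ with probability tending to one. Lemma~\ref{proof:lem:complete-table-envelope}, via the representation \eqref{proof:eq:posterior-error-decomposition}, then makes $\normmax{C_t-A^\star_{I^\star,J^\star}}=O_{\Pp}(t^{-1/2})$, which gives this convergence.

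For $N_{ij}(t)/t\to x_i^\star y_j^\star$, I would decompose $N_{ij}(t)$ into a martingale part plus $\sum_r\E[x_{r,i}y_{r,j}\mid\mathcal F_{r-1}]$. The martingale part is $O(\sqrt t)$ in $L^2$. For the compensator, on $\mathsf{Supp}_r$ we have $x_{r,i}y_{r,j}=x_i^{C_r}y_j^{C_r}$, and by the continuity in Lemma~\ref{proof:lem:local-geometry-from-uniqueness} this converges in probability to $x_i^\star y_j^\star$. Since $\sum_r\Pp(\mathsf{Supp}_r^c)=o(t)$, bounded convergence combined with Cesàro averaging gives $L^1$ convergence of the average compensator.

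Finally, the Gaussian limit. I would apply the second part of Lemma~\ref{proof:lem:gaussian-random-index-limit} to the core entries with $\nu_a=x_i^\star y_j^\star$. The independence hypothesis holds because $G_{a,t}$ is fresh at round $t$ and independent of the noise sequences and of the counts $N_a(t)$, which are determined by the past. This yields the limit $\mathcal N(0,2\sigma^2/\nu_a)$ independently over $a$, i.e., $\mathsf W^{\rm core}$.

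The main obstacle I expect is the supermartingale step with the non-predictable event in the first inequality. I would resolve it by bounding the event by $\{\sum_r\1_{\mathsf{Supp}_r}\pi'_r\ge 3p^2t/4\}$, using the predictable conditional probabilities in place of the indicators.
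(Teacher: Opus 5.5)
Your proposal is correct and follows essentially the same route as the paper: Markov's inequality for the second bound, and an exponential supermartingale with $\lambda=\log 2$ for the first, conditioning on $\widetilde A_r$ so that $x_{r,i}y_{r,j}\ge p^2\1_{\mathsf{Supp}_r}$. It then uses core consistency from the count bound and Lemma~\ref{proof:lem:complete-table-envelope}, the martingale/compensator decomposition of $N_a(t)/t$, and the second part of Lemma~\ref{proof:lem:gaussian-random-index-limit}. The non-predictability you flag is not a real obstacle and needs no stopping time: $\mathsf{Supp}_r$ is determined by $\widetilde A_r$, so the paper puts $\1_{\mathsf{Supp}_r}$ inside the exponent, obtains $\E[L_{t-1}]\le1$, and applies Markov's inequality to the event, on which $L_{t-1}\ge e^{p^2t/8}$ holds pathwise.
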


  Lemma~\ref{proof:lem:main-general-core-allocation-limit}, continuity of the core equilibrium, and Lemma~\ref{proof:lem:core-gap-bound} give
  \begin{equation}
    \sqrt t\,\1_{\mathsf{Near}_t}\Gap_{A^\star}(\bar x^{C_t},\bar y^{C_t})\Rightarrow
    \max_{j\in J^\star}(x^\star)_{I^\star}^\trans\mathsf W^{\rm core}e_j
    -\min_{i\in I^\star}e_i^\trans\mathsf W^{\rm core}(y^\star)_{J^\star},
    \label{proof:eq:core-gap-distributional-limit}
  \end{equation}
  since $\Pp(\mathsf{Near}_t)\to1$.
  To control the probabilities in Lemma~\ref{proof:lem:main-general-core-allocation-limit}, we use $\mathsf{Supp}_r\subseteq\mathsf{Near}_r$ to write, for every $h\ge0$,
  \begin{equation}
    \begin{aligned}
      \sum_{r=1}^h\Pp(\mathsf{Supp}_r^c)
       & =\sum_{r=1}^h\Pp(\mathsf{Near}_r^c)
      +\sum_{r=1}^h\Pp(\mathsf{Near}_r\cap\mathsf{Supp}_r^c)                   \\
       & \le B_{\rm loc}+\sum_{r=1}^h\Pp(\mathsf{Near}_r\cap\mathsf{Supp}_r^c)
      =\mathcal O(\log(e(h+mn))),
    \end{aligned}
    \label{proof:eq:embedding-failure-count}
  \end{equation}
  where the inequality uses Lemma~\ref{proof:lem:general-support-nonlocal-occupation}, and the last equality follows from \eqref{proof:eq:outside-confidence-success-bound} and~\eqref{proof:eq:outside-confidence-failure-bound} at horizon $h+mn$.
  A union bound and the two probability bounds in Lemma~\ref{proof:lem:main-general-core-allocation-limit} now give
  \begin{equation}
    \Pp\left(\min_{a\in I^\star\times J^\star}
    N_a(t)<p^2t/8\right)
    \le d^2e^{-p^2t/8}+\frac4t\sum_{r=1}^t\Pp(\mathsf{Supp}_r^c)\longrightarrow0.
    \label{proof:eq:good-count-tail}
  \end{equation}
  Thus, the random variable inside the first expectation in \eqref{proof:eq:gap-expectation-decomposition} has the limit in \eqref{proof:eq:core-gap-distributional-limit}.
  On $\{\min_{a\in I^\star\times J^\star}N_a(t)\ge p^2t/8\}$, Lemma~\ref{proof:lem:core-gap-bound} and \eqref{proof:eq:posterior-error-decomposition} give
  \[
    \begin{aligned}
      \sqrt t\,\1_{\mathsf{Near}_t}\Gap_{A^\star}(\bar x^{C_t},\bar y^{C_t}) & \le2\sqrt t\,\normmax{C_t-A^\star_{I^\star,J^\star}}                                                                  \\
                                                                             & \le2\sigma\sqrt t\left[
                                                                                                  \max_{a\in I^\star\times J^\star}\sup_{N\ge\max\{1,\lfloor p^2t/8\rfloor\}}\frac{|S_{a,N}|}{N}
                                                                                                  +\frac{\max_{a\in I^\star\times J^\star}|G_{a,t}|}{\sqrt{\max\{1,\lfloor p^2t/8\rfloor\}}}\right].
    \end{aligned}
  \]
  Since $t/\max\{1,\lfloor p^2t/8\rfloor\}\le16/p^2$, the second-moment bounds of Lemmas~\ref{proof:lem:complete-table-envelope} and~\ref{proof:lem:gaussian-maxima}, with $D=d^2$ and $k=\max\{1,\lfloor p^2t/8\rfloor\}$ in the former, imply
  \begin{equation}
    \begin{aligned}
       & \E\left[t\1_{\mathsf{Near}_t\cap\{\min_{a\in I^\star\times J^\star}
               N_a(t)\ge p^2t/8\}}
             \Gap_{A^\star}(\bar x^{C_t},\bar y^{C_t})^2\right]                               \\
       & \quad\le8\sigma^2t\,\E\left[\max_{a\in I^\star\times J^\star}
                                 \sup_{N\ge\max\{1,\lfloor p^2t/8\rfloor\}}\frac{|S_{a,N}|^2}{N^2}\right]
      +\frac{128\sigma^2}{p^2}\E\left[\max_{a\in I^\star\times J^\star}|G_{a,t}|^2\right]                 \\
       & \quad\le\frac{512\sigma^2}{p^2}\{\log(2d^2)+1\},
    \end{aligned}
    \label{proof:eq:localized-gap-second-moment}
  \end{equation}
  where the first inequality uses $(u+v)^2\le2u^2+2v^2$.
  The second-moment bound \eqref{proof:eq:localized-gap-second-moment} gives uniform integrability.
  Hence, the first term in \eqref{proof:eq:gap-expectation-decomposition} tends to $\Lambda_{\rm core}$.
  For the second term in \eqref{proof:eq:gap-expectation-decomposition}, the bound $\1_{\mathsf{Near}_t}\Gap_{A^\star}(\bar x^{C_t},\bar y^{C_t})\le2\normmax{A^\star}$ and \eqref{proof:eq:good-count-tail} give
  \[
    \begin{aligned}
       & \E\left[\sqrt t\,\1_{\mathsf{Near}_t\cap\{\min_{a\in I^\star\times J^\star}
               N_a(t)<p^2t/8\}}\Gap_{A^\star}(\bar x^{C_t},\bar y^{C_t})\right] \\
       & \quad\le2\normmax{A^\star} d^2\sqrt t\,e^{-p^2t/8}
      +\frac{8\normmax{A^\star}}{\sqrt t}\sum_{r=1}^t\Pp(\mathsf{Supp}_r^c)\longrightarrow0.
    \end{aligned}
  \]
  Combining the two limits in \eqref{proof:eq:gap-expectation-decomposition}, we obtain the asserted expectation limit.
\end{proof}

\subsection{Proof of Lemma~\ref{proof:lem:main-general-core-allocation-limit}}
\begin{proof}
  By \eqref{proof:eq:posterior-error-decomposition}, for $a=(i,j)$,
  \[
    \sqrt t((C_t)_{ij}-A_{ij}^\star)
    =\sigma\left\{\sqrt t\,\frac{S_{a,N_a(t)}}{N_a(t)}
    +\sqrt{\frac t{N_a(t)}}G_{a,t}\right\}.
  \]
  The construction in Lemma~\ref{proof:lem:conditional-posterior-law} makes the current Gaussian draws independent of the complete observation sequences and current counts.
  The second limit in Lemma~\ref{proof:lem:gaussian-random-index-limit} reduces the Gaussian limit to $N_{ij}(t)/t\to x_i^\star y_j^\star>0$ jointly in probability.
  To identify the observation frequencies, we decompose
  \begin{equation}
    \frac{N_a(t)}t-x_i^\star y_j^\star =\frac{1-x_i^\star y_j^\star}{t} +\frac1t\sum_{r=1}^{t-1} \bigl(\1_{\{(I_r,J_r)=a\}}-x_{r,i}y_{r,j}\bigr)+\frac1t\sum_{r=1}^{t-1} (x_{r,i}y_{r,j}-x_i^\star y_j^\star).
    \label{proof:eq:core-frequency-decomposition}
  \end{equation}
  The first term in \eqref{proof:eq:core-frequency-decomposition} is bounded in absolute value by $t^{-1}$.
  For the second term in \eqref{proof:eq:core-frequency-decomposition}, the centered action indicators have conditional mean zero and orthogonal increments.
  Hence,
  \begin{equation}
    \E\left[\left(\sum_{r=1}^{t-1}
      (\1_{\{(I_r,J_r)=a\}}-x_{r,i}y_{r,j})\right)^2\right]
    =\sum_{r=1}^{t-1}\E\left[x_{r,i}y_{r,j}(1-x_{r,i}y_{r,j})\right]\le t-1.
    \label{proof:eq:core-action-second-moment}
  \end{equation}
  We apply Cauchy--Schwarz with \eqref{proof:eq:core-action-second-moment} to the second term in \eqref{proof:eq:core-frequency-decomposition}.
  Splitting the third term in \eqref{proof:eq:core-frequency-decomposition} over $\mathsf{Supp}_r$ and its complement, we obtain
  \begin{equation}
    \begin{aligned}
      \E\left[|N_a(t)/t-x_i^\star y_j^\star|\right]
       & \le t^{-1}+t^{-1/2}+\frac1t\sum_{r=1}^{t-1}\Pp(\mathsf{Supp}_r^c)
      +\frac1t\sum_{r=1}^{t-1}
      \E\left[\1_{\mathsf{Supp}_r}|x_i^{C_r}y_j^{C_r}-x_i^\star y_j^\star|\right] \\
       & \le t^{-1}+t^{-1/2}+\frac1t\sum_{r=1}^{t-1}\Pp(\mathsf{Supp}_r^c)
      +\frac1t\sum_{r=1}^{t-1}
      \E\left[\1_{\mathsf{Near}_r}|x_i^{C_r}y_j^{C_r}-x_i^\star y_j^\star|\right].
    \end{aligned}
    \label{proof:eq:core-frequency-convergence}
  \end{equation}
  where the first inequality uses $x_{r,i}y_{r,j}=x_i^{C_r}y_j^{C_r}$ on $\mathsf{Supp}_r$.
  The second inequality uses $\mathsf{Supp}_r\subseteq\mathsf{Near}_r$, and \eqref{proof:eq:embedding-failure-count} shows that the first three terms tend to zero.

  To send the last term in \eqref{proof:eq:core-frequency-convergence} to zero, we establish core consistency.
  On $\{\min_{a\in I^\star\times J^\star}N_a(t)\ge p^2t/8\}$, \eqref{proof:eq:posterior-error-decomposition} gives
  \[
    \normmax{C_t-A^\star_{I^\star,J^\star}}
    \le\sigma\max_{a\in I^\star\times J^\star}
    \sup_{N\ge\max\{1,\lfloor p^2t/8\rfloor\}}\frac{|S_{a,N}|}{N}
    +\frac{\sigma}{\sqrt{\max\{1,\lfloor p^2t/8\rfloor\}}}
    \max_{a\in I^\star\times J^\star}|G_{a,t}|.
  \]
  We apply Lemma~\ref{proof:lem:complete-table-envelope} with $D=d^2$ and $k=\max\{1,\lfloor p^2t/8\rfloor\}$ to the first term and Gaussian tails to the second.
  For every $\varepsilon>0$, these bounds give
  \[
    \Pp\left(\normmax{C_t-A^\star_{I^\star,J^\star}}>\varepsilon\right)
    \le\Pp\left(\min_{a\in I^\star\times J^\star}N_a(t)<p^2t/8\right)+4d^2\exp\!
    \left(-\frac{\max\{1,\lfloor p^2t/8\rfloor\}\varepsilon^2}{8\sigma^2}\right).
  \]

  For the remaining probability, a union bound gives
  \[
    \Pp\left(\min_{a\in I^\star\times J^\star}N_a(t)<p^2t/8\right) \le\Pp\left(\sum_{r=1}^t\1_{\mathsf{Supp}_r^c}>t/4\right)+\sum_{a\in I^\star\times J^\star}\Pp\left(N_a(t)<p^2t/8,\ \sum_{r=1}^t\1_{\mathsf{Supp}_r^c}\le t/4\right).
  \]
  Markov's inequality and \eqref{proof:eq:embedding-failure-count} give $\Pp(\sum_{r=1}^t\1_{\mathsf{Supp}_r^c}>t/4)\le\frac4t\sum_{r=1}^t\Pp(\mathsf{Supp}_r^c)$.
  For a core entry $a=(i,j)$, we set
  \[
    L_h=\exp\left\{-(\log2)\sum_{r=1}^h\1_{\mathsf{Supp}_r}\1_{\{(I_r,J_r)=a\}}
    +\frac{p^2}2\sum_{r=1}^h\1_{\mathsf{Supp}_r}\right\},\qquad L_0=1.
  \]
  Conditional on the complete observation sequences, past matrices and actions, and $\widetilde A_r$, the action law and $x_{r,i}y_{r,j}\ge p^2$ on $\mathsf{Supp}_r$ give
  \[
    \begin{aligned}
       & \E\left[L_r/L_{r-1}\,\middle|\,(\xi_{b,s})_{b,s},(\widetilde A_s,I_s,J_s)_{s<r},\widetilde A_r\right]
      =e^{p^2\1_{\mathsf{Supp}_r}/2}
      (1-\1_{\mathsf{Supp}_r}x_{r,i}y_{r,j}/2)                                                                 \\
       & \quad\le e^{\1_{\mathsf{Supp}_r}(p^2-x_{r,i}y_{r,j})/2}\le1,
    \end{aligned}
  \]
  where $1-u\le e^{-u}$ and $x_{r,i}y_{r,j}\ge p^2$ on $\mathsf{Supp}_r$.
  Iterating conditional expectations gives $\E[L_h]\le1$.
  For $t\ge4$, on $\{N_a(t)<p^2t/8,\ \sum_{r=1}^t\1_{\mathsf{Supp}_r^c}\le t/4\}$,
  \[
    \begin{aligned}
      \sum_{r=1}^{t-1}\1_{\mathsf{Supp}_r}
       & =t-1-\sum_{r=1}^{t-1}\1_{\mathsf{Supp}_r^c}\ge3t/4-1\ge t/2, \\
      \sum_{r=1}^{t-1}\1_{\mathsf{Supp}_r}\1_{\{(I_r,J_r)=a\}}
       & \le N_a(t)-1<p^2t/8.
    \end{aligned}
  \]
  Consequently, $L_{t-1}\ge\exp\{p^2t/4-(\log2)p^2t/8\}\ge e^{p^2t/8}$.
  Markov's inequality gives
  \[
    \Pp\left(N_a(t)<p^2t/8,\ \sum_{r=1}^t\1_{\mathsf{Supp}_r^c}\le t/4\right)
    \le e^{-p^2t/8}\E[L_{t-1}]
    \le e^{-p^2t/8}.
  \]
  For $t\le3$, the event is empty because $N_a(t)\ge1>p^2t/8$.
  Substituting these bounds into the union bound yields
  \[
    \Pp\left(\min_{a\in I^\star\times J^\star}N_a(t)<p^2t/8\right)
    \le d^2e^{-p^2t/8}+\frac4t\sum_{r=1}^t\Pp(\mathsf{Supp}_r^c)\longrightarrow0.
  \]
  Hence, the posterior-error probability tends to zero, proving core consistency.
  By continuity of the equilibrium, bounded convergence in probability, and Ces\`aro averaging, the last term in \eqref{proof:eq:core-frequency-convergence} tends to zero.
  By \eqref{proof:eq:core-frequency-convergence}, $N_{ij}(t)/t\to x_i^\star y_j^\star>0$ jointly in probability.
  Thus, the second limit in Lemma~\ref{proof:lem:gaussian-random-index-limit}, with $D=d^2$ and $\nu_a=x_i^\star y_j^\star$ for $a=(i,j)$, gives the joint limit $\mathsf W^{\rm core}$, with independent entries of variance $2\sigma^2/(x_i^\star y_j^\star)$.
\end{proof}

\subsection{Proof of Corollary~\ref{cor:core-coefficient-bound}}
\begin{proof}
  We write the coefficient as the sum of two expected maxima,
  \[
    \Lambda_{\rm core}=\E\left[\max_{j\in J^\star}(x^\star)_{I^\star}^\trans\mathsf W^{\rm core}e_j\right]
    +\E\left[\max_{i\in I^\star}(-e_i^\trans\mathsf W^{\rm core}(y^\star)_{J^\star})\right].
  \]
  The centered Gaussian projections in $\Lambda_{\rm core}$ have variances
  \[
    \operatorname{Var}((x^\star)_{I^\star}^\trans\mathsf W^{\rm core}e_j)=2\sigma^2/y_j^\star,
    \qquad
    \operatorname{Var}(e_i^\trans\mathsf W^{\rm core}(y^\star)_{J^\star})=2\sigma^2/x_i^\star.
  \]
  Applying Lemma~\ref{proof:lem:gaussian-maxima} separately to the two families gives
  \[
    \begin{aligned}
      \Lambda_{\rm core}
       & \le\sigma\sqrt{\frac2{\min_{j\in J^\star}y_j^\star}}\sqrt{2\log d}
      +\sigma\sqrt{\frac2{\min_{i\in I^\star}x_i^\star}}\sqrt{2\log d}      \\
       & =2\sigma\sqrt{\log d}
      \left[(\min_{j\in J^\star}y_j^\star)^{-1/2}
        +(\min_{i\in I^\star}x_i^\star)^{-1/2}\right],
    \end{aligned}
  \]
  where $-\min_i Z_i=\max_i(-Z_i)$ and centered Gaussian variables are symmetric.
  For the uniform equilibrium, each family consists of independent $\mathcal N(0,2\sigma^2d)$ variables.
  The lower bound in Lemma~\ref{proof:lem:gaussian-maxima} therefore applies to each family.

  Consequently,
  \[
    \Lambda_{\rm core}=2\sigma\sqrt{2d}\,m_d,
    \qquad
    \frac{3\sqrt2}{8}\sigma\sqrt{d\log d}
    \le\Lambda_{\rm core}\le4\sigma\sqrt{d\log d}.
  \]
\end{proof}

\section{Proofs for the Minimax Lower Bound}
\label{proof:sec:unique-minimax-lower}
\begin{proof}[Proof of Theorem~\ref{thm:unique-minimax-lower}]
  Fix $\pi,T$ and consider $A^\mu=(R/2)I_m+\mu\1^\trans$.
  For $\|\mu\|_\infty\le R/(8m)$, the strategies and value
  \[
    x^\mu=\frac\1m,\qquad
    y^\mu=\frac\1m-\frac2R\left(\mu-\frac\1m\sum_i\mu_i\right),\qquad
    v^\mu=\frac R{2m}+\frac1m\sum_i\mu_i
  \]
  satisfy $A^\mu y^\mu=v^\mu\1$, $(x^\mu)^\trans A^\mu=v^\mu\1^\trans$, $1/(2m)\le y_i^\mu\le3/(2m)$, and $\normmax{A^\mu}<R$.
  Positivity forces every optimal strategy to equalize all opposing payoffs: $(R/2)y_i+\mu_i=v^\mu$ and $(R/2)x_j+x^\trans\mu=v^\mu$ determine $y=y^\mu$ and $x=x^\mu$.
  Thus the equilibrium is unique and fully mixed.
  The nonnegative loss
  \[
    \ell_\mu(y)=\max_i\left\{\frac R2(y_i-1/m)+\mu_i-\frac1m\sum_h\mu_h\right\}
  \]
  is defined for all $\mu$, since the expressions inside the maximum sum to zero, and on the bounded parameter set satisfies
  \begin{equation}
    \Gap_{A^\mu}(x,y)\ge\max_i e_i^\trans A^\mu y-v^\mu=\ell_\mu(y).
    \label{proof:eq:main-row-offset-gap-lower}
  \end{equation}

  Draw independent $\mu_i\sim\mathcal N(0,\tau^2)$ with
  \[
    \tau=\min\left\{\sigma\sqrt{m/T},\frac{R}{32m\sqrt{\log(em)}}\right\},
  \]
  and write $\E_\Pi$ for expectation over this prior and the learning process.
  The observations satisfy $Y_s-(R/2)\1_{\{I_s=J_s\}}=\mu_{I_s}+\xi_s$.
  Conditional on the history and the policy's current independent randomization, its parameter-independent sampling kernels cancel from the likelihood, leaving independent Gaussian coordinates with means $\widehat\mu_{i,t}$ and variances
  \[
    \left(\tau^{-2}+\frac{N_i(t-1)}{\sigma^2}\right)^{-1},\qquad
    N_i(t-1)=\sum_{s<t}\1_{\{I_s=i\}}.
  \]
  At least $\lfloor m/2\rfloor$ rows have $N_i(t-1)\le2T/m$, hence posterior variance at least $\tau^2/3$.
  Apply Lemma~\ref{proof:lem:main-shifted-gaussian-maximum} to the centered posterior coordinates and shifts $(R/2)(y_{t,i}-1/m)+\widehat\mu_{i,t}-m^{-1}\sum_h\widehat\mu_{h,t}$.
  These shifts sum to zero, and subtracting the average centered coordinate does not change the expectation.
  Since $\lfloor m/2\rfloor/m\ge1/3$ and $\log\lfloor m/2\rfloor\ge(\log m)/4$ for $m\ge4$,
  \begin{equation}
    \E_\Pi\left[\sum_{t=1}^T\ell_\mu(y_t)\right]
    \ge\frac{T\tau\sqrt{\log m}}{32\sqrt3}.
    \label{proof:eq:main-row-offset-bayes-loss-explicit}
  \end{equation}

  To restrict the prior to bounded games, use $\ell_\mu(y)\le R/2+2\|\mu\|_\infty$ and the Gaussian union bound to obtain
  \[
    \begin{aligned}
      \E_\Pi\left[\1_{\{\|\mu\|_\infty>R/(8m)\}}\sum_{t=1}^T\ell_\mu(y_t)\right]
       & \le2mT e^{-R^2/(128m^2\tau^2)}
      \left(\frac R2+\frac{R}{4m}+\frac{16m\tau^2}{R}\right) \\
       & \le\frac54mRT e^{-R^2/(128m^2\tau^2)}.
    \end{aligned}
  \]
  Here, $m\ge4$ and $\tau\le R/(32m\sqrt{\log(em)})$.
  Since $R/(8m\tau)\ge4\sqrt{\log(em)}$ and $u e^{-u^2/2}$ decreases for $u\ge1$,
  \[
    \frac{(5/4)mRT e^{-R^2/(128m^2\tau^2)}}{T\tau\sqrt{\log m}/(32\sqrt3)}
    \le1280\sqrt6\,e^{-8}m^{-6}<\frac12.
  \]
  Subtracting this contribution from \eqref{proof:eq:main-row-offset-bayes-loss-explicit} and applying \eqref{proof:eq:main-row-offset-gap-lower} gives
  \[
    \E_\Pi[\1_{\{\|\mu\|_\infty\le R/(8m)\}}\mathcal R_T(A^\mu)]
    \ge\frac{T\tau\sqrt{\log m}}{64\sqrt3}.
  \]
  Some admissible $\mu$ attains this lower bound.
  Finally,
  \[
    T\tau\sqrt{\log m}
    =\min\left\{\sigma\sqrt{mT\log m},\frac{RT}{32m}\sqrt{\frac{\log m}{\log(em)}}\right\}
    \ge\frac1{64}\min\{\sigma\sqrt{mT\log m},RT/m\}.
  \]
  Taking $A_{\pi,T}=A^\mu$ and using $4096\sqrt3\le8192$ proves the claim.
\end{proof}

\section{Proofs for the Instance-Dependent Lower Bound}
\label{proof:sec:pure-separable-lower-proof}
\begin{proof}[Proof of Theorem~\ref{thm:pure-separable-lower}]
  Count all rounds, including initialization, by
  \[
    N_i^{\rm r}(T)=\sum_{t=1}^T\1_{\{I_t=i\}},\qquad
    N_j^{\rm c}(T)=\sum_{t=1}^T\1_{\{J_t=j\}}.
  \]
  Separability gives $\Gap_{A^\star}(x,y)=\sum_{i>1}(u_1-u_i)x_i+\sum_{j>1}(v_j-v_1)y_j$.
  The action law therefore implies
  \begin{equation}
    \E_{A^\star}^\pi[\mathcal R_T]
    =\sum_{i=2}^m(u_1-u_i)\E_{A^\star}^\pi[N_i^{\rm r}(T)]
    +\sum_{j=2}^n(v_j-v_1)\E_{A^\star}^\pi[N_j^{\rm c}(T)].
    \label{proof:eq:pure-regret-counts}
  \end{equation}
  Fix $i>1$ and sufficiently small $\varepsilon>0$.
  Under $A'=A^\star+(u_1-u_i+\varepsilon)e_i\1^\trans$, row $i$ is uniquely optimal and every other row has gap at least $\varepsilon$.
  Uniform efficiency implies, for every $\alpha>0$,
  \[
    \E_{A^\star}^\pi[N_i^{\rm r}(T)]=o(T^\alpha),\qquad
    \E_{A'}^\pi[T-N_i^{\rm r}(T)]=o(T^\alpha).
  \]
  Let $p_T=\Pp_{A^\star}^\pi(N_i^{\rm r}(T)\le T/2)$ and $q_T=\Pp_{A'}^\pi(N_i^{\rm r}(T)\le T/2)$.
  For $0<\alpha<1$, Markov's inequality gives
  \[
    1-p_T\le2\E_{A^\star}^\pi[N_i^{\rm r}(T)]/T=o(T^{\alpha-1}),\qquad
    q_T\le2\E_{A'}^\pi[T-N_i^{\rm r}(T)]/T=o(T^{\alpha-1}).
  \]
  Treating payoff entries as arms, apply \citet[Lemma~1]{kaufmann2016complexity} at the fixed horizon $T$.
  The same policy, including its randomization, is used under both models.
  Only row $i$ changes, with divergence $(u_1-u_i+\varepsilon)^2/(2\sigma^2)$ per observation, so eventually
  \[
    \begin{aligned}
      \frac{(u_1-u_i+\varepsilon)^2}{2\sigma^2}\E_{A^\star}^\pi[N_i^{\rm r}(T)]
       & \ge p_T\log\frac{p_T}{q_T}+(1-p_T)\log\frac{1-p_T}{1-q_T} \\
       & \ge p_T\log(1/q_T)-\log2                                  \\
       & \ge p_T(1-\alpha)\log T-\log2.
    \end{aligned}
  \]
  If $p_T$ or $q_T$ is zero or one, use the limiting values of binary relative entropy.
  Finite Gaussian divergence excludes an infinite right-hand side.
  Divide by $\log T$ and let $T\to\infty$, $\alpha\downarrow0$, and $\varepsilon\downarrow0$ in this order to get
  \[
    \liminf_{T\to\infty}\frac{\E_{A^\star}^\pi[N_i^{\rm r}(T)]}{\log T}
    \ge\frac{2\sigma^2}{(u_1-u_i)^2}.
  \]
  Applying the same argument to $A'=A^\star-(v_j-v_1+\varepsilon)\1e_j^\trans$ gives
  \[
    \liminf_{T\to\infty}\frac{\E_{A^\star}^\pi[N_j^{\rm c}(T)]}{\log T}
    \ge\frac{2\sigma^2}{(v_j-v_1)^2}\qquad(j>1).
  \]
  Substitute both bounds into the finite nonnegative sum in \eqref{proof:eq:pure-regret-counts} to conclude.
\end{proof}

\section{Experimental Details and Additional Results}
\label{app:experimental-details}

\subsection{Computational Setup}
\label{app:computing-resources}

All experiments were conducted on a Mac running macOS with an Apple M2 Max chip and 32\,GB of RAM.
For Nash-TS and symmetric UCB, we use \texttt{highspy}, the Python interface to HiGHS \citep{huangfu2018parallelizing}, to compute equilibrium strategies for the sampled and confidence-adjusted payoff matrices, respectively.
For Figure~\ref{fig:scaled-regret}, we compute the coefficient $2\Lambda_{\rm core}$ from the equilibrium strategies by numerical integration of the Gaussian-maximum expectations in \eqref{eq:core-asymptotic-constant} using SciPy \citep{virtanen2020scipy}.

\subsection{Game Instances}
\label{app:benchmark-games}

Table~\ref{tab:games} gives the payoff matrices and equilibria of the four structured games in Section~\ref{sec:experiments}.
The first three games have unique pure, fully mixed, and boundary equilibria, respectively.
In the fourth game, the row player's equilibrium strategies form a line segment, whereas the column player's equilibrium strategy is unique.
For the random payoff games, we draw each matrix entry independently from the uniform distribution on $[-1,1]$ and use $25$ or $50$ actions per player.

\begin{table}[H]
  \caption{Payoff matrices and Nash equilibria of the four structured games.
    The row player maximizes and the column player minimizes.}
  \label{tab:games}
  \centering
  \begin{tabular}{lll}
    \toprule
    Game & $A^\star$                                                                                                                     & Nash equilibria $(x^\star,y^\star)$ \\
    \midrule
    Unique pure
         & $\frac{1}{7}\left(\begin{smallmatrix}2 & 5 & 4\\ 1 & -3 & 6\\ 0 & 7 & -5\end{smallmatrix}\right)$
         & $x^\star=y^\star=(1,0,0)$                                                                                                                                           \\[1.6ex]
    Unique fully mixed
         & $\frac{1}{3}\left(\begin{smallmatrix}0 & -1 & 3\\ 1 & 0 & -1\\ -3 & 1 & 0\end{smallmatrix}\right)$
         & $x^\star=y^\star=(\nicefrac{1}{5},\nicefrac{3}{5},\nicefrac{1}{5})$                                                                                                 \\[1.6ex]
    Unique boundary mixed
         & $\frac{1}{4}\left(\begin{smallmatrix}1 & -1 & 4\\ -1 & 3 & 4\\ -2 & -2 & 0\end{smallmatrix}\right)$
         & $x^\star=y^\star=(\nicefrac{2}{3},\nicefrac{1}{3},0)$                                                                                                               \\[1.6ex]
    Multiple equilibria
         & $\frac{1}{10}\left(\begin{smallmatrix}1 & -2\\ -4 & 3\\ -10 & 9\end{smallmatrix}\right)$
         & $\begin{gathered}
               x^\star\in\{\lambda x_1^\star+(1-\lambda)x_2^\star:
               \lambda\in[0,1]\},\\
               y^\star=(\nicefrac{1}{2},\nicefrac{1}{2}),\\
               \text{where }x_1^\star=(\nicefrac{7}{10},\nicefrac{3}{10},0)\text{ and }x_2^\star=(\nicefrac{19}{22},0,\nicefrac{3}{22}).
             \end{gathered}$                                        \\
    \bottomrule
  \end{tabular}
\end{table}

\subsection{Baseline Algorithms and Hyperparameters}

\subsubsection{PMO-LB}

We implement the update rule of \citet{hait2026near} with doubling epoch lengths.
We start from uniform strategies without observing all payoff entries first.
At the end of each epoch, we compute the empirical payoff matrix using only observations from that epoch and solve its log-barrier-regularized game to obtain the next strategies.
We set entries not observed during the epoch to zero.
At epoch $s$, the regularization parameter is
\[
  \gamma_s
  =c_\gamma\max\{m,n\}2^{-s/2}
  \sqrt{\log\!
    \left(8\max\{m,n\}^2s^2/\delta\right)},
\]
where we fix $\delta=1/20$.
We use the Gaussian observations without clipping.
Since the original analysis assumes observations in $[-1,1]$, its guarantee does not directly apply to this implementation.

For each game, we select $c_\gamma$ from $\{2^{k/2}:k=-24,\ldots,16\}$ by minimizing the mean cumulative duality gap at $T=10^5$ over $20$ validation runs with $\sigma=1/5$.
For the structured games, these runs use trial indices $1000,\ldots,1019$.
For the random games, they use independent matrices with seeds $1000,\ldots,1019$ and one run per matrix.
Thus, the validation and evaluation runs use disjoint seeds.
Table~\ref{tab:pmo-hyperparameters} lists the selected values.

\begin{table}[ht]
  \caption{Selected regularization scales for PMO-LB.}
  \label{tab:pmo-hyperparameters}
  \centering
  \begin{tabular}{lc}
    \toprule
    Game                  & $c_\gamma$  \\
    \midrule
    Unique pure           & $1/64$      \\
    Unique fully mixed    & $2^{-11/2}$ \\
    Unique boundary mixed & $1/64$      \\
    Multiple equilibria   & $1/64$      \\
    Random $25\times25$   & $2^{-9}$    \\
    Random $50\times50$   & $2^{-21/2}$ \\
    \bottomrule
  \end{tabular}
\end{table}

To solve each regularized game, we parameterize the strategies by softmax logits with the last logit fixed to zero and solve the stationarity equations using SciPy's \texttt{least\_squares}.
We initialize the solver at the preceding solution.
The function, step, and gradient tolerances are $10^{-12}$, with at most $10^4$ function evaluations per solve.
We require the maximum absolute stationarity residual to be at most $10^{-8}\max\{1,\normmax{\widehat A},\gamma_s\max\{m,n\}\}$, where $\widehat A$ is the epoch's empirical payoff matrix.

\subsubsection{Symmetric UCB}
\label{app:symmetric-ucb}

We apply the UCB rule of \citet{odonoghue2021matrix} to both players using the shared empirical payoff matrix and observation counts.
As in Nash-TS, we first observe every payoff entry once in row-major order and include these rounds in the observation budget and cumulative duality gap.
Before online round $s$, let $N_{ij}(s)$ denote the observation count of entry $(i,j)$ and let $\widehat A(s)$ denote the empirical payoff matrix.
We define the confidence radius by
\[
  B_{ij,s}=c\sigma\sqrt{\frac{2\log(2T^2mn)}{N_{ij}(s)}}
\]
and fix $c=1$ for all games without tuning.
The players select
\[
  \begin{aligned}
    x_s & \in\argmax_{x\in\simplex_m}\min_{y\in\simplex_n}
    x^\trans(\widehat A(s)+B_s)y,                          \\
    y_s & \in\argmin_{y\in\simplex_n}\max_{x\in\simplex_m}
    x^\trans(\widehat A(s)-B_s)y.
  \end{aligned}
\]
We then sample $I_s\sim x_s$ and $J_s\sim y_s$ independently and update the empirical mean and count of the observed entry.
The column-player update is the UCB rule applied to $-(A^\star)^\trans$.
The single-learner guarantee of \citet{odonoghue2021matrix} does not directly give a cumulative duality gap bound for this symmetric implementation.
The initialization and scaling by $\sigma$ are specific to our implementation.

\FloatBarrier
\subsection{Additional Experimental Results}

\subsubsection{Comparison with VMG}
\label{app:vmg}

We compare Nash-TS with the value-incentivized matrix-game algorithm (VMG) of \citet[Algorithm~1]{yang2025incentivize} using the same games and experimental settings as in Section~\ref{sec:experiments}.
We use tabular one-hot features, the model class $[-1,1]^{m\times n}$, and a zero initial payoff estimate.

We set the KL regularization parameter to $\beta=0$ to evaluate the unregularized game.
For $\alpha_{\rm VMG}$, we use the coefficient in their Theorem~1 with feature dimension $mn$, failure probability $1/20$, and horizon $10^5$ iterations, without tuning.

At each iteration, VMG computes equilibrium strategies for its current payoff estimate and collects two payoff observations using different strategy pairs.
Following \citet{yang2025incentivize}, we report the cumulative gap of these equilibrium strategies, summed once per iteration, as a function of the number of payoff observations.
Figure~\ref{fig:vmg-cumulative-regret} shows that Nash-TS attains a smaller final mean cumulative gap on all six games under this evaluation.

\begin{figure}[htbp]
  \centering
  \includegraphics[width=\textwidth]{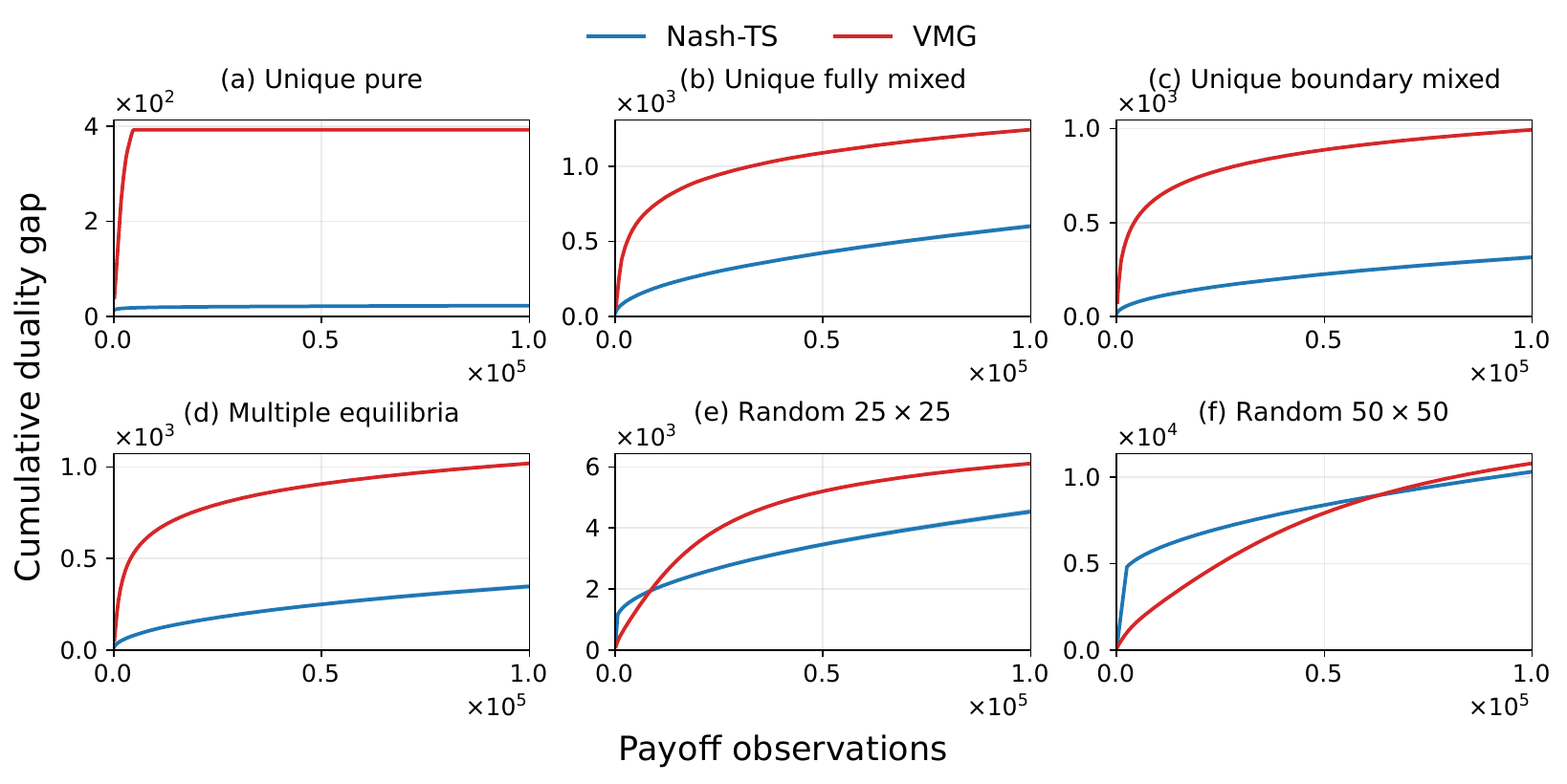}
  \caption{Cumulative duality gap of Nash-TS and VMG as a function of the number of payoff observations.
    Each curve is the mean over $100$ trials, and the shaded areas represent standard errors.
    Nash-TS sums the gaps of the played strategy pairs.
    VMG sums the gaps of its equilibrium strategies once per iteration, with two observations per iteration.
    These equilibrium strategies differ from the strategy pairs used to collect observations.}
  \label{fig:vmg-cumulative-regret}
\end{figure}

\FloatBarrier
\subsubsection{Comparison with Greedy}
\label{app:greedy}

To examine the effect of posterior sampling, we compare Nash-TS with a Greedy baseline.
We use the same one-observation-per-entry initialization as Nash-TS, including these rounds in the observation budget and cumulative duality gap.
After initialization, Greedy computes an equilibrium of the empirical payoff matrix at each round.
At each online round, Greedy samples the two actions independently from this equilibrium and updates the observed entry.
It uses no explicit exploration after initialization.
We use the same games and experimental settings as in Section~\ref{sec:experiments}.

\begin{figure}[htbp]
  \centering
  \includegraphics[width=\textwidth]{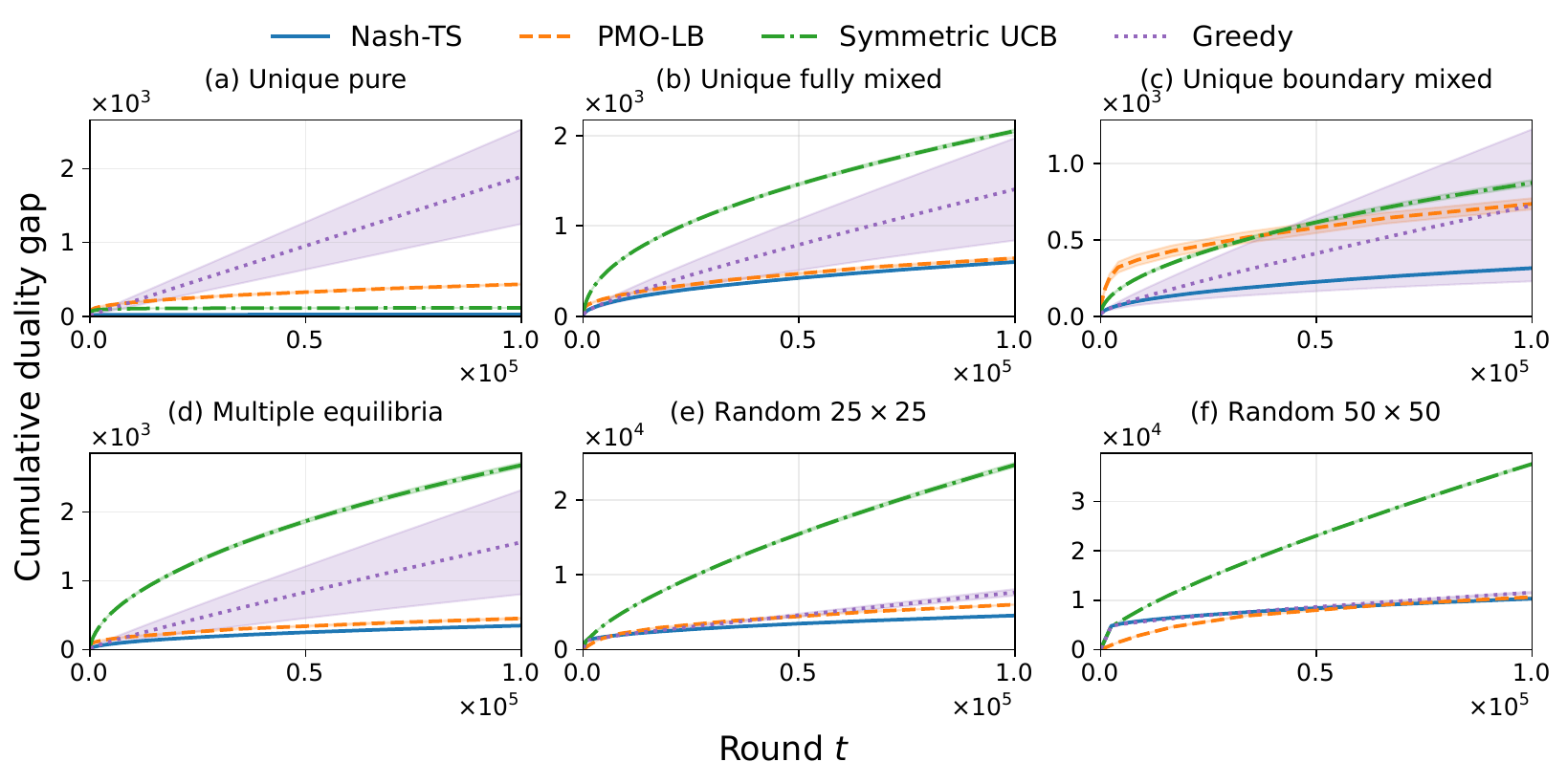}
  \caption{Cumulative duality gap of the Greedy baseline and the three methods in Figure~\ref{fig:regret}.
    Each curve is the mean over $100$ trials or independently generated matrices, with one-standard-error bands.}
  \label{fig:regret-greedy}
\end{figure}

Figure~\ref{fig:regret-greedy} compares Greedy with Nash-TS, PMO-LB, and symmetric UCB.
Greedy has substantially greater variability, consistent with premature commitment to noisy initial estimates.

\end{document}